%% file: main.tex
\documentclass[11pt,letterpaper]{article}

\input{header.tex}
\usepackage{abstract}
\newcommand{\FormatAuthor}[3]{
\begin{tabular}{c}
#1 \\ {\small\texttt{#2}} \\ {\small #3}
\end{tabular}
}

\begin{document}
\pagenumbering{roman}

\title{Efficient Recovery of Latent Coordinate Structure\texorpdfstring{\\}{ }from Sparse Observations of the Hypercube}
\author{
\begin{tabular}[h!]{cc}
  \FormatAuthor{Rares-Darius Buhai}{rares-darius.buhai@epfl.ch}{EPFL}&
  \FormatAuthor{Davide Mazzali}{davide.mazzali@epfl.ch}{EPFL}
\end{tabular}\\\\
\begin{tabular}[h!]{c}
  \FormatAuthor{Weronika Wrzos-Kaminska}{weronika.wrzos-kaminska@epfl.ch}{EPFL}
\end{tabular}
}
\date{}
\hypersetup{pdfauthor={Rares-Darius Buhai, Davide Mazzali, Weronika Wrzos-Kaminska}}
\maketitle

\begin{abstract}
   Recovering latent geometric structure from graph observations is a well-studied problem in statistical inference. Kapralov, Trevisan, and Wrzos-Kaminska (2026) introduced the problem of recovering the coordinate structure of the Boolean hypercube from a small random sample of its edges. More specifically, there are $n=2^d$ vertices, each corresponding to a distinct ``feature'' vector in $\{\pm 1\}^d$. Between each pair whose feature vectors are at Hamming distance one, an edge is observed independently with probability $p$. As long as the expected degree $pd$ is $\gtrsim \log d = \log \log n$, we give a polynomial-time algorithm that, given only the graph of observed edges, correctly recovers the entire feature vector of all but a vanishing fraction of vertices. This matches the information-theoretic guarantee of Kapralov, Trevisan, and Wrzos-Kaminska in polynomial rather than exponential time, resolving the algorithmic question left open by their work.

   Our algorithm combines a degree-$4$ sum-of-squares certificate for the structure of balanced near-minimum cuts with a rounding scheme originally developed for tensor decomposition by Ma, Shi, and Steurer (2016). The analysis relies on two novel ingredients: a sum-of-squares version of the Friedgut-Kalai-Naor theorem in Boolean Fourier analysis and a spectral concentration result for the observed subgraph of the hypercube. Finally, we provide a justification for why higher-degree sum-of-squares might be needed by showing a limitation of the basic SDP relaxation of this problem.
\end{abstract}
\thispagestyle{empty}

\clearpage
\setcounter{tocdepth}{2}

\makeatletter
\@starttoc{toc}
\makeatother
\thispagestyle{empty}

\clearpage

\pagenumbering{arabic}

\input{introduction.tex}

\input{technical-overview.tex}

\input{preliminaries.tex}

\input{sos-fkn.tex}

\input{spectral-concentration.tex}

\input{coordinate-recovery.tex}

\input{degree-2-lower-bound.tex}

\section*{Acknowledgements}
\addcontentsline{toc}{section}{Acknowledgements}
We thank Michael Kapralov and Pravesh K. Kothari for helpful discussions on this project.

The authors used AI systems as research assistants during this project, specifically several versions of OpenAI's GPT Pro models.
GPT 5.4 and 5.5 suggested the main ideas behind two components of the paper: the spectral approximation in \Cref{lem:overview-spectral-concentration} and the degree-$2$ lower bound construction in \Cref{thm:overview-degree-2-lower-bound}.
The authors selected, simplified, wrote formally, and independently verified these arguments, and integrated them into the overall framework of the paper.
Elsewhere, models from GPT 5.4 through 6 were used to stress-test proofs, suggest local improvements, identify and explain related work, edit exposition, adjust formatting, and clean the LaTeX source files.
The authors assume responsibility for all content.

\bibliographystyle{alphaurl}
\bibliography{references}

\end{document}

%% file: header.tex
\usepackage[utf8]{inputenc}
\usepackage[T1]{fontenc}
\usepackage[full]{textcomp}
\usepackage[american]{babel}
\usepackage{amssymb,mathtools,amsthm,bbm,turnstile}
\usepackage[letterpaper,top=1.2in,bottom=1.2in,left=1in,right=1in]{geometry}
\usepackage{newpxtext}
\usepackage[varg,bigdelims]{newpxmath}

\usepackage{bm}
\let\mathbb\varmathbb
\usepackage{microtype}
\usepackage{xcolor}
\usepackage[hyperindex=true]{hyperref}
\definecolor{linkborder}{rgb}{0.8,0.8,1}
\hypersetup{
  pdfpagemode=UseOutlines,
  bookmarksnumbered=true,
  bookmarksopen=true,
  bookmarksopenlevel=2,
  pdfstartview=FitH,
  pdfborder={0 0 1},
  linkbordercolor=linkborder,
  citebordercolor=linkborder,
  urlbordercolor=linkborder
}
\makeatletter
\AddToHook{cmd/maketitle/before}{\hypersetup{pdftitle={\@title}}\pdfbookmark[1]{Abstract}{abstract}}
\makeatother
\usepackage{etoolbox}
\apptocmd{\thebibliography}{\addcontentsline{toc}{section}{\refname}}{}{}
\usepackage[capitalise,nameinlink]{cleveref}
\usepackage{thmtools}

\allowdisplaybreaks
\numberwithin{equation}{section}

\newtheorem{theorem}{Theorem}[section]
\newtheorem{definition}[theorem]{Definition}
\newtheorem{lemma}[theorem]{Lemma}
\newtheorem{claim}[theorem]{Claim}
\newtheorem{remark}[theorem]{Remark}
\newtheorem{fact}[theorem]{Fact}
\newtheorem{proposition}[theorem]{Proposition}

\newcommand{\ind}{\mathbbm{1}}
\newcommand{\cA}{\mathcal{A}}
\newcommand{\cP}{\mathcal{P}}
\newcommand{\N}{\mathbb{N}}
\newcommand{\R}{\mathbb{R}}
\DeclareMathOperator*{\E}{\ensuremath{\mathbb{E}}}
\DeclareMathOperator*{\Var}{\ensuremath{Var}}
\newcommand{\e}{\ensuremath{\mathrm{e}}}
\newcommand{\diag}{\ensuremath{\mathrm{diag}}}
\newcommand{\spn}{\ensuremath{\mathrm{span}}}
\newcommand{\image}{\ensuremath{\mathrm{Im}}}
\newcommand{\poly}{\mathrm{poly}}
\newcommand{\polylog}{\mathrm{polylog}}
\newcommand{\bl}{\ensuremath{\mathrm{BL}}}
\newcommand{\Paren}[1]{\left(#1\right)}
\newcommand{\Abs}[1]{\left\lvert#1\right\rvert}
\newcommand{\Set}[1]{\left\{#1\right\}}
\newcommand{\norm}[1]{\lVert#1\rVert}
\newcommand{\Norm}[1]{\left\lVert#1\right\rVert}
\newcommand{\Iprod}[1]{\left\langle#1\right\rangle}
\newcommand{\proves}[2]{\sststile{#2}{#1}}

%% file: introduction.tex
\section{Introduction}
\label{sec:introduction}

A central question in the study of structured random graph models is whether their latent structure can be recovered from an observed sample.
For example, in stochastic block models this structure is a latent partition of the vertices~\cite{MR718088-Holland83, MR3827065-Abbe17}; in random geometric graphs it is a latent geometry on the vertices~\cite{MR132566-Gilbert61, MR1986198-Penrose03}.
The goal is to recover the underlying partition or geometry, respectively.

In geometric models, recovery often takes the form of estimating latent vertex positions from adjacency information.
Latent-space and random dot-product models represent these positions by vectors whose pairwise relations determine edge probabilities~\cite{latentspace, randomdotproductgraphsurvey}.
Such vectors can also be interpreted as latent features when their coordinates encode attributes of the vertices.
Recovering the underlying positions, up to symmetries of the model, is the subject of work on latent position estimation~\cite{latentpositionestimation, LS26, FZ26} and geometric reconstruction~\cite{latentdistanceestimation, DDHM22, HJM24}.

In this paper, we study recovery of the latent coordinate structure in random subgraphs of the Boolean hypercube.
The $d$-dimensional hypercube $Q_d$ has vertex set $\{\pm1\}^d$, of size $n=2^d$, and edges between pairs at Hamming distance one.
For $p\in[0,1]$, we consider the distribution on graphs obtained by retaining each edge of $Q_d$ independently with probability $p$. This distribution, denoted by $Q_d(p)$, is known as the \emph{percolated hypercube} and has been studied extensively by the combinatorics community~\cite{MR534014-Erdos79, MR671140-Ajtai82}.

Kapralov, Trevisan, and Wrzos-Kaminska~\cite{kapralov2026recovering} introduced the recovery problem on this distribution, which we now describe. After a graph is sampled from $Q_d(p)$, its vertices are arbitrarily renamed. The original Boolean vectors are now hidden, and serve as latent feature vectors specifying the vertices' positions in the underlying hypercube geometry. In particular, an observed edge reveals that its endpoints differ in exactly one coordinate, but does not identify that coordinate or the vectors themselves. We write $G=(V,E)$ for the observed graph, where $V$ denotes the $n$-element set of renamed vertices, and we call $z:V\to\{\pm1\}^d$ the bijective embedding that assigns to each vertex its latent feature vector in $Q_d$. Kapralov, Trevisan, and Wrzos-Kaminska raised the following question.

\begin{center}
\fbox{\parbox{0.9\linewidth}{\centering\itshape
Given only the observed graph $G=(V,E)$, can one recover the hidden Boolean vectors $z(v)$ associated with its vertices?
}}
\end{center}

We remark that, even when $p=1$, the observed graph cannot distinguish between embeddings related by an automorphism of the hypercube. These automorphisms consist precisely of coordinate permutations and sign flips applied uniformly to all vertices. Recovery is therefore understood up to a single common transformation of this form.

As in the work of~\cite{kapralov2026recovering}, we focus on the setting where $p\geq C\log d/d$ for a sufficiently large absolute constant $C$. The expected degree can then be as small as $\Theta(\log d)=\Theta(\log\log n)$, making this a very sparse recovery problem.
As observed in~\cite{kapralov2026recovering}, it is thus impossible to exactly recover the latent feature vector of every vertex: when $pd =\Theta(\log d)$, a $1/d^{O(1)}$ fraction of vertices become isolated with high probability. This means that even if the vectors $z(v)$ of every other vertex were known, any assignment of the remaining vectors to the isolated vertices would be consistent with the observed graph.
The main result of~\cite{kapralov2026recovering} shows that recovering the Boolean vector of all but a $1/d^{O(1)}$ fraction of vertices is nevertheless information-theoretically possible.

\begin{theorem}[Theorem 1.1 of~\cite{kapralov2026recovering}]
\label{thm:intro-ktw-recovery}
For a sufficiently large absolute constant $C$ and $p\geq C\log d/d$, there exists an $\exp(\tilde{O}(n))$-time algorithm that takes as input an arbitrarily renamed sample $G=(V,E)$ from $Q_d(p)$, and outputs an embedding $\widehat z:V\to\{\pm1\}^d$.
With probability $1-o_d(1)$, the recovered vectors $\widehat z(v)$ agree exactly with the hidden vectors $z(v)$ on all but a $1/d^{O(1)}$ fraction of vertices.
\end{theorem}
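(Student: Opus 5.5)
The natural approach is the exponential-time brute-force route that \cite{kapralov2026recovering} take. We enumerate over labelings of the observed edges by coordinates, or more precisely over partitions of $V$ into $d$ balanced cuts, and output the one that best explains $G$. Concretely, consider all assignments $\widehat z:V\to\{\pm1\}^d$ that are bijective onto the cube. Among them, we pick one minimizing the number of observed edges $\{u,v\}\in E$ with $\widehat z(u),\widehat z(v)$ at Hamming distance different from one. Call this count the \emph{violation}. The number of candidates is at most $(2^d)! = \exp(O(n\log n))=\exp(\tilde O(n))$, so the running time bound is immediate. The true embedding $z$ has violation zero, so the output also has violation zero: it is a bijection mapping every observed edge to a cube edge. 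The content of the theorem is therefore a uniqueness statement. With probability $1-o_d(1)$, every bijection $\widehat z$ preserving all observed edges agrees with $z$, after composing with some hypercube automorphism, on all but a $1/d^{O(1)}$ fraction of vertices.

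To prove uniqueness, we view each coordinate $i$ of $\widehat z$ as a balanced cut $S_i$ of $V$. An observed edge preserved by $\widehat z$ crosses exactly one of the $S_i$, so $\sum_i |E(S_i,\bar S_i)| = |E| \approx p d n/2$. Each true coordinate cut has about $pn/2$ edges.

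\textbf{Step 1.} We show that w.h.p.\ every balanced cut of $G$ with at most $(1+\epsilon)pn/2$ edges is $\delta$-close to a dictator cut $\{z_j=\pm1\}$. The input here is a percolated isoperimetric inequality combined with a FKN/Friedgut-type stability statement. In $Q_d$, a balanced set with edge boundary $(1+\epsilon)n/2$ is close to a dictator, and the observed boundary concentrates around $p$ times the true boundary uniformly over all sets. Uniformity needs a union bound over $2^n$ sets, which requires concentration at scale $\exp(-\Omega(pn))$; the regime $p\ge C\log d/d$ is what supplies this.

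\textbf{Step 2.} Since the $d$ cuts have total edge count $\approx pdn/2$ and each is at least $\approx pn/2$ by a percolated edge-isoperimetric lower bound, an averaging argument shows that all but a few are near-minimal. Step 1 then matches each of them to a dictator. Distinctness of the columns of a bijection forces this matching to be injective, giving the permutation and signs.

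\textbf{Step 3.} We upgrade the closeness from a $\delta$ fraction to $1/d^{O(1)}$ per coordinate via a local correction argument. A vertex whose observed degree is $\Omega(\log d)$ and whose neighbors are mostly correct must itself be correct, since neighbors pin down its coordinates. The bad set therefore contracts to roughly the set of low-degree vertices, which has size $n/d^{O(1)}$ by Chernoff.

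The main obstacle is Step 1 together with the per-vertex error control of Step 3. The issue is getting an error that is $1/\mathrm{poly}(d)$ simultaneously over all $d$ coordinates, rather than just a constant, in the sparse regime. The union bound over cuts is delicate near $pd\approx C\log d$. I would handle it by counting cuts by their number of boundary vertices rather than over all $2^n$ subsets, using the fact that a cut with boundary $k$ edges is determined by few local choices.
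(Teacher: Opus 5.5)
Your reduction is sound. Enumerating all $(2^d)!=\exp(\tilde O(n))$ bijections and outputting one under which every observed edge maps to a cube edge is a valid algorithm, and the theorem then reduces to your uniqueness statement for $\phi=\widehat z\circ z^{-1}$, whose coordinates $g_1,\dots,g_d$ are balanced and pairwise uncorrelated. (The paper only cites this theorem; its overview sketches the route through \Cref{lem:overview-cut-value-concentration,lem:overview-refined-cut-comparison,lem:overview-inefficient-quartic-proxy}.)

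The genuine gap is the passage from constant to $d^{-O(1)}$ error. In the regime $pd=C\log d$, uniform cut concentration holds only with a constant multiplicative error $\epsilon$, plus an additive $d^{-O(1)}$ term coming from atypical-degree vertices. Your justification of Step 1 is also off: $\exp(-\Omega(pn))=\exp(-\Omega(n\log d/d))$ is nowhere near $2^{-n}$, and $G$ is not even a cut sparsifier. Write $e_j=\frac1p\langle g_j,L'g_j\rangle-2$. Step 2 then only knows $e_j\ge-2\epsilon-d^{-O(1)}$ and $\sum_j e_j\approx0$. This is consistent with one coordinate carrying constant excess while the others compensate with tiny negative excess. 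It gives only that all but $O(d/K)$ coordinates are $O(K\epsilon)$-close to dictators, which says nothing about individual vertices, since $d\epsilon\gg1$.

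Step 3 cannot repair this by local reasoning. All you know locally is that a vertex in $B_j=\{g_j\neq s\chi_i\}$ has at most two observed neighbors outside $B_j$: at most one of its observed edges crosses $g_j$, and at most one crosses $\chi_i$. In this sparse regime, large structured sets satisfy that property up to a sparse exceptional set. An example is a subcube of constant codimension $m$ avoiding coordinate $i$, whose typical vertex has only about $pm\ll1$ observed edges leaving it. So the bad set does not contract to the low-degree vertices.

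The missing idea is the refined comparison of \Cref{lem:overview-refined-cut-comparison}. Compare each cut $g$ with its nearest dictator through $g\circ\chi_i$, using $\langle g,Lg\rangle-\langle\chi_i,L\chi_i\rangle=\langle g\circ\chi_i,(2L_{(i)}-L)(g\circ\chi_i)\rangle$ and its analogue for $L'$, where $L_{(i)}$ is the Laplacian of the cube edges not in direction $i$. The concentration error is then only $\epsilon\langle g\circ\chi_i,L(g\circ\chi_i)\rangle+d^{-O(1)}\le\epsilon\bigl[(\langle g,Lg\rangle-2)+4(1-\langle g,\chi_i\rangle^2)\bigr]+d^{-O(1)}$. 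FKN bounds $1-\max_i\langle g,\chi_i\rangle^2$ by $O(\langle g,Lg\rangle-2)$, so the $\epsilon$-term is absorbed.

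In your setting this gives $e_j\ge\frac12(\langle g_j,Lg_j\rangle-2)-d^{-O(1)}$ for every $j$. Every observed edge crosses exactly one of the cuts $g_j$, so $\sum_j e_j=\frac{4|E|}{pn}-2d$, which is exponentially small with high probability. Summing forces $\langle g_j,Lg_j\rangle-2\le d^{-O(1)}$ for all $j$ simultaneously. FKN then makes every $g_j$ $d^{-O(1)}$-close to a dictator, $\E[g_jg_{j'}]=0$ makes the matching a permutation, and \Cref{lem:overview-latent-vector-recovery} finishes. Step 3 becomes unnecessary. Without this comparison, or an equivalent use of the fact that the relevant cuts are symmetric differences with dictators, your argument stops at constant per-coordinate error.
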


Again, agreement is meant up to a common permutation of the coordinates and a common choice of sign for each coordinate. The work of~\cite{kapralov2026recovering} left efficient recovery as an open problem. Specifically, they raised the following natural question.

\begin{center}
\fbox
{
\parbox{0.9\linewidth}
{\centering\itshape
Can one achieve the same recovery guarantees as \Cref{thm:intro-ktw-recovery} with an algorithm that runs in time polynomial in the size of the graph?
}
}
\end{center}

Here, we answer their question in the affirmative.

\begin{theorem}[Main result]
\label{thm:intro-main-recovery}
For a sufficiently large absolute constant $C$ and $p\geq C\log d/d$, there exists a $\poly(n)$-time algorithm that takes as input an arbitrarily renamed sample $G=(V,E)$ from $Q_d(p)$, and outputs an embedding $\widehat z:V\to\{\pm1\}^d$.
With probability $1-o_d(1)$, the recovered vectors $\widehat z(v)$ agree exactly with the hidden vectors $z(v)$ on all but a $1/d^{O(1)}$ fraction of vertices.
\end{theorem}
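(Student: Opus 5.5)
We follow the route indicated in the abstract. The algorithm first recovers, for each coordinate $i\in[d]$, the dictator cut $S_i=\{v: z(v)_i=+1\}$ up to a vanishing error. It then reads off $\widehat z(v)_i$ from membership in the recovered cuts, followed by a local cleanup. The structural fact driving everything is that in $Q_d$ the balanced cuts of minimum size $n/2$ are exactly the $d$ dictator cuts, and the Friedgut--Kalai--Naor (FKN) theorem makes this robust. If $f:\{\pm1\}^d\to\{\pm1\}$ is balanced and cuts at most $(1+\varepsilon)n/2$ edges, then its Fourier weight above level one is $O(\varepsilon)$, so $f$ is $O(\varepsilon)$-close to some $\pm x_i$. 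We work throughout with the normalized Laplacian quadratic form $f^\top L_G f/(pd)$ of the observed graph in place of the hypercube's.

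\textbf{Step 1 (spectral concentration).} We show that with probability $1-o_d(1)$, after removing a $1/d^{O(1)}$ fraction of vertices (low- or high-degree ones, handled by a pruning step), $L_G/p$ spectrally approximates $L_{Q_d}$ restricted to the relevant low-degree polynomial space, up to a multiplicative factor $1\pm\delta$ with $\delta=o(1)$. This must hold in a form usable inside SoS, namely as a PSD inequality $L_G/p \succeq (1-\delta)L_{Q_d} - \text{(error)}$ on the span of vectors of interest. I expect this step to be the main obstacle. With $pd\sim\log d$, naive matrix Chernoff fails because degrees fluctuate by constant factors. I would try a decomposition by coordinate direction, $L_G=\sum_i L_G^{(i)}$, with each $L_G^{(i)}$ a perfect matching percolated at rate $p$. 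I would then bound the deviation only against test vectors of the form $f$ with $\|f\|_\infty\le 1$, rather than all vectors, using a union bound over a net combined with the hypercube's expansion.

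\textbf{Step 2 (SoS certificate).} We set up a degree-$4$ SoS relaxation over $f\in\{\pm1\}^V$ with the constraints $\sum_v f_v=0$ and $f^\top L_G f\le (1+\varepsilon)\cdot 2pn$. Using a degree-$4$ SoS proof of FKN together with Step 1, we derive that the pseudo-distribution satisfies
\[
\tilde{\mathbb{E}}\sum_{i} \langle f,\chi_i\rangle^4/n^4 \ge 1-O(\varepsilon+\delta),
\]
where $\chi_i(v)=z(v)_i$. The SoS FKN would go through the identity $\sum_i \hat f_i^2\ge 1-O(\varepsilon)$ for the level-one weight. It would then use a level-one inequality bounding $\sum_{i<j}\hat f_i^2\hat f_j^2$ by the $L^4$ hypercontractive deviation of $f$, which is certifiable at degree $4$ since $f_v^2=1$.

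\textbf{Step 3 (rounding and cleanup).} The pseudo-moment tensor $\tilde{\mathbb{E}} f^{\otimes 4}$ is now close to $\sum_i \chi_i^{\otimes 4}$ in the appropriate norm, up to sign. Applying the Ma--Shi--Steurer rounding for overcomplete or orthogonal tensor decomposition, a random contraction followed by top-eigenvector extraction, repeated $\mathrm{poly}(n)$ times, yields vectors correlating $1-o(1)$ with each $\pm\chi_i$. Rounding signs gives cuts $\widehat S_i$. We then repair: for each vertex, majority votes over neighbors force consistency, since a neighbor across a direction-$i$ edge should differ exactly in coordinate $i$. This is iterated a constant number of times so that every vertex with a typical neighborhood is correctly labeled. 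The exceptional vertices, those with too few edges or misclassified neighborhoods, form a $1/d^{O(1)}$ fraction by the degree tail bound $\Pr[\deg<pd/10]\le d^{-\Omega(C)}$. Distinctness of the outputs, needed for $\widehat z$ to be an embedding, is achieved by arbitrarily reassigning the leftover vectors to the unrecovered vertices.
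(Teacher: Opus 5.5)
Your plan has a genuine gap in precision. Steps 1--2 compare $\frac1p\langle f,L_Gf\rangle$ with $\langle f,L_{Q_d}f\rangle$ directly, so the final error is linear in the multiplicative spectral constant. You get $\langle f,L_{Q_d}f\rangle\le 2+O(\varepsilon+\delta)$, hence a quartic proxy of $1-O(\varepsilon+\delta)$, hence cuts with agreement only $1-O(\varepsilon+\delta)$. Exact recovery on all but a $d^{-O(1)}$ fraction, via a union bound over the $d$ coordinates, needs per-cut error $d^{-c}$ with $c>1$. At $pd=C\log d$ you cannot even get $\delta=o(1)$ where it matters:
\begin{itemize}
\item Matrix Bernstein does give tiny $\delta$ on the span of characters of degree $\le d/100$, but a feasible Boolean $f$ is not in that span.
\item On the whole space, testing against $\ind_x$ shows that every unpruned vertex must have degree in $(1\pm\delta)pd$.
\item A $d^{-\Theta(\delta^2C)}$ fraction of vertices violate this, so pruning only a $d^{-O(1)}$ fraction forces $\delta=\Omega(C^{-1/2})$. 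You must prune only that much for the additive error on Boolean vectors to be negligible.
\end{itemize}
This is exactly the naive comparison that the paper notes gives only $\langle f,L_{Q_d}f\rangle\le 2+O(\epsilon)+d^{-O(1)}$.

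The majority-vote repair in Step 3 does not close the gap. The rounding guarantee controls only $\ell_2$ distance, so the error set of a recovered cut can be any set of density $O(\delta)$. For instance, it can be a codimension-$t$ subcube, not involving coordinate $i$, on which $\chi_i$ is flipped. Almost every vertex of that subcube has all of its sampled neighbors inside it, so voting reproduces the error and iterating changes nothing. The misclassified vertices are therefore not just the atypical-degree ones. Separately, a union bound over a net of $\|f\|_\infty\le 1$ test vectors proves a statement about individual vectors, not the Loewner-order inequality you need inside the SoS proof.

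The missing idea is how the paper extracts inverse-polynomial precision from only a constant-factor bound $(1-\epsilon)pL-Z\preceq L'\preceq(1+\epsilon)pL+Z$, where $Z=d^2P_{>0}\Pi_R P_{>0}$ and $|R|\le n/d^{C_2}$. It applies this bound not to $f$ but to $f\circ\chi_i$, through the identity $\langle f,Lf\rangle-\langle\chi_i,L\chi_i\rangle=\langle f\circ\chi_i,(2L_{(i)}-L)(f\circ\chi_i)\rangle$ and its analogue for $L'$. Here $L_{(i)}$ is two disjoint copies of $Q_{d-1}$, so the spectral lemma applies to it too. The resulting error $\epsilon\langle f\circ\chi_i,L(f\circ\chi_i)\rangle$ is at most $\epsilon[(\langle f,Lf\rangle-2)+4(1-\langle f,\chi_i\rangle^2)]$, which is small exactly when $f$ is close to $\chi_i$. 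Multiplying by $\langle f,\chi_i\rangle^2$ and summing over $i$ gives the degree-$4$ inequality
\[
(1-\epsilon)\langle f,Lf\rangle\le\tfrac1p\langle f,L'f\rangle-2\epsilon+4\epsilon\Bigl(1-\sum_i\langle f,\chi_i\rangle^4\Bigr)+d^{-40}.
\]
The SoS FKN bound $1-\sum_i\langle f,\chi_i\rangle^4\le O(\langle f,Lf\rangle-2)$ then lets the $\epsilon$-term be absorbed for a small constant $\epsilon$, yielding $\langle f,Lf\rangle-2\le O(d^{-11})$. The precision is thus set by the constraint slack and $|R|/n$, not by $\epsilon$. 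Ma--Shi--Steurer rounding followed by sign rounding then already gives per-cut error $d^{-10}$, and no local cleanup is needed.
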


Our algorithm replaces the exponential-time search of~\cite{kapralov2026recovering} with a sum-of-squares approach combining a degree-$4$ certificate and a polynomial-time rounding procedure.
The main challenge is to show that the relaxation retains enough of the hidden coordinate structure for this rounding procedure to recover the Boolean vectors.
Our analysis relies on two new ingredients that may be of independent interest: a degree-$4$ sum-of-squares proof of the Friedgut-Kalai-Naor theorem~\cite{MR1942632-Friedgut02}, and a spectral concentration theorem for the percolated hypercube.

\subsection{Related work}

\paragraph{Structure recovery from geometric graphs.} Several works have studied random dot-product graphs, where each vertex has a latent feature vector in Euclidean space and the probability of an edge between two vertices is proportional to the inner product of their vectors. Under suitable assumptions, approximate recovery of the latent vectors up to an orthogonal transformation has been shown to be possible~\cite{latentpositionestimation, randomdotproductgraphsurvey}. Related work studies latent vector recovery in generalized random dot-product models under local differential privacy~\cite{pmlr-v258-vishwanath25a}. Other works consider graphs whose edge probabilities depend on the distances between their feature vectors. These works show how to use such observations to estimate pairwise distances~\cite{latentdistanceestimation}, to approximately recover the latent vectors~\cite{DDHM22}, and to reconstruct the underlying manifold geometry~\cite{HJM24}. The results above establish nontrivial guarantees for fixed-dimensional latent geometry in regimes with polylogarithmic or larger expected average degree in the number of vertices $n$, including dense regimes with degree $\Theta(n)$.

Li and Schramm~\cite{LS26} and Fernandez and Zhu~\cite{FZ26} study vector recovery in the Gaussian (mixture) block model, where each vertex is assigned an independent sample from a Gaussian (mixture) distribution and an edge is observed between two vertices if the inner product of their latent vectors exceeds a threshold. In their setting, the ambient dimension is allowed to grow with the number of vertices, as opposed to being fixed as in the works above. Their spectral algorithm recovers the feature vector of the vertices up to rotation, and their result gives vanishing average error when the expected degree exceeds the dimension by a polylogarithmic factor.

Our work shares some similarities with the results above, but also has important differences. In particular, our model works in ambient dimension $d=\log_2 n$ that grows with the number of vertices, while the expected average degree can be as small as $\Theta(\log d)=\Theta(\log\log n)$, well below the ambient dimension. Our algorithm recovers the latent feature vectors exactly, as opposed to approximately, on almost all vertices, up to a common hypercube symmetry. Our model contains one feature vector for every point in our vector space, i.e., the Boolean hypercube, and the randomness comes from the independent edge sampling. These differences in geometry, observations, and recovery criteria make the guarantees incomparable. Both works~\cite{LS26,FZ26} also study clustering, where mixture memberships partition the vertices into two groups.

As we shall discuss, the recovery problem that we study benefits from a community interpretation~\cite{kapralov2026recovering}: each coordinate in our model gives a bipartition, or \emph{coordinate cut}. The striking difference with the Gaussian mixture block model is that  communities arising in our model overlap heavily: almost every vertex belongs to $\Theta(d)$ of the sets $\{v\in V:z(v)_i=1\}$.

\paragraph{Percolated hypercube.}
Work on random subgraphs of the hypercube covers connectivity and component structure~\cite{MR534014-Erdos79, MR671140-Ajtai82}, the percolation phase transition and largest components~\cite{MR1139488-Bollobas92, MR2260845-Borgs06, MR3612867-Hofstad17, MR4060356-Hulshof20}, and expansion, isoperimetry, diameter, and mixing~\cite{MR4515692-Erde23, MR4777865-Diskin24, anastos2025diameter}.
Kapralov, Trevisan, and Wrzos-Kaminska~\cite{kapralov2026recovering} introduced recovery of the hidden hypercube coordinate structure from such graphs.

\paragraph{Sum-of-squares Boolean function analysis.}
Our degree-$4$ sum-of-squares proof of the Friedgut-Kalai-Naor theorem contributes to a line of sum-of-squares proofs in Fourier analysis of Boolean functions, including hypercontractivity~\cite{MR2961513-Barak12}, the Kahn-Kalai-Linial (KKL) theorem~\cite{OZ13}, Majority Is Stablest~\cite{DMN16}, and reverse hypercontractivity~\cite{MR3533303-Kauers16}.

\subsection{Organization}

In \Cref{sec:technical-overview}, we give a technical overview of our results.
\Cref{sec:preliminaries} collects notation and background.
We prove the degree-$4$ sum-of-squares version of the Friedgut-Kalai-Naor theorem in \Cref{sec:sos-fkn} and the spectral concentration theorem for the percolated hypercube in \Cref{sec:spectral-concentration}.
In \Cref{sec:coordinate-recovery}, we combine these ingredients with the rounding theorem to prove our main recovery theorem.
Finally, in \Cref{sec:degree-2-lower-bound}, we prove the lower bound for the degree-$2$ relaxation.

%% file: technical-overview.tex
\section{Technical overview}
\label{sec:technical-overview}

\input{overview-coordinate-recovery.tex}

\input{overview-quartic-proxy.tex}

\input{overview-degree-2-lower-bound.tex}

%% file: overview-coordinate-recovery.tex
\subsection{Recovery via Coordinate Cuts}
\label{sec:overview-coordinate-recovery}
We now reduce recovery of the latent Boolean vectors to simultaneous recovery of the \(d\) coordinate cuts.
The coordinate structure of $Q_d$ is naturally encoded by the dictator functions $\chi_1,\dots,\chi_d$, or equivalently by the coordinate cuts, which we define as follows: for each $i \in [d]$, the $i$-th coordinate cut is $(S_i,\{\pm1\}^d\setminus S_i)$, where $S_i=\{x \in  \{\pm 1\}^d:x_i=1\}$. We will interchangeably identify cuts with functions $f:\{\pm1\}^d\to\{\pm1\}$, with the two signs representing the two sides of the cut; the $i$-th coordinate cut is then represented by the $i$-th dictator $\chi_i(x)=x_i$.

Let $G=(V,E)$ be an arbitrarily renamed sample from $Q_d(p)$, and let $z : V \to \{\pm 1\}^d$ be the embedding of the vertices to their corresponding vectors. Under $z$, a cut $f$ in $Q_d$ naturally corresponds to a cut $\hat{f}$ in $G$ defined by $ \hat{f}(v) = f(z(v))$.
In particular, $\hat{\chi}_i(v)=\chi_i(z(v))=z_i(v)$ for each $i\in[d]$ and $v\in V$.
The key point is that the $d$ coordinate cuts in $G$, namely $\hat{\chi}_1, \dots, \hat{\chi}_d$, determine $z$ up to unavoidable symmetries (i.e. permuting the coordinates and flipping their signs, as discussed in \cref{sec:introduction}).
Thus, recovering the coordinate cuts from observing $G$ is equivalent to recovering the hidden Boolean vector of the vertices.
To measure accuracy, we define the agreement between two cuts $f,g$ as
\[
\max\left\{\Pr_{{x}}[f({x})=g({x})],\Pr_{{x}}[f({x})=-g({x})]\right\},
\]
where ${x}$ is uniform on $\{\pm1\}^d$, or equivalently we could have the same definition for $\hat{f},\hat{g}$ by averaging over $v$ uniform on $V$. We have the following elementary observation, showing that recovering the coordinate cuts with high agreement amounts to recovering the hidden Boolean vector of most vertices.

\begin{lemma}[Latent vector recovery from coordinate cuts]
\label[lemma]{lem:overview-latent-vector-recovery}
    Let $G=(V,E)$ be an arbitrarily renamed sample from $Q_d(p)$, and let $z : V \to \{\pm 1\}^d$ be the embedding of the vertices to their corresponding vectors. If $f_1,\dots,f_d :V\to \{\pm 1\}$ are cuts in $G$ such that there is a permutation $\pi$ of $[d]$ for which each $f_{\pi(i)}$ has agreement at least $1-\varepsilon$ with $\hat{\chi}_i$, then there is a signing $s \in \{\pm 1\}^d$ such that the embedding $\widehat{z}(v) = (s_1 f_{\pi(1)}(v), \dots, s_d f_{\pi(d)}(v))$ exactly equals $z(v)$ on all but at most a $d\varepsilon$ fraction of vertices.
\end{lemma}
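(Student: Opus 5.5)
The argument is a direct union bound; no earlier results are needed.

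First I fix the signs coordinatewise. For each $i\in[d]$, agreement of $f_{\pi(i)}$ with $\hat{\chi}_i$ at least $1-\varepsilon$ means, by the definition of agreement, that there is a sign $s_i\in\{\pm1\}$ with
\[
\Pr_{v}\bigl[s_i f_{\pi(i)}(v) = \hat{\chi}_i(v)\bigr] \ge 1-\varepsilon ,
\]
where $v$ is uniform on $V$. I choose $s_i$ to be the sign attaining the maximum in the definition. Since $z$ is a bijection onto $\{\pm1\}^d$, averaging over $v$ uniform on $V$ is the same as averaging over $x$ uniform on the hypercube, so this is well defined. I then set $s=(s_1,\dots,s_d)$.

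Next I define the bad sets. For each $i$, let
\[
B_i=\{v\in V: s_i f_{\pi(i)}(v)\neq z_i(v)\}.
\]
Because $\hat{\chi}_i(v)=z_i(v)$, the display above gives $|B_i|\le \varepsilon n$.

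Finally I apply the union bound. If $v\notin \bigcup_i B_i$, then every coordinate of $\widehat z(v)=(s_1f_{\pi(1)}(v),\dots,s_df_{\pi(d)}(v))$ equals the corresponding coordinate of $z(v)$, so $\widehat z(v)=z(v)$. Hence the set of vertices where $\widehat z(v)\neq z(v)$ is contained in $\bigcup_i B_i$. This union has size at most $\sum_i|B_i|\le d\varepsilon n$, which is a $d\varepsilon$ fraction of the vertices.

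There is no real obstacle here. The only point needing care is the bookkeeping: the permutation $\pi$ and the signs $s_i$ are chosen once, globally, before the union bound. The signs therefore do not depend on $v$, and the resulting $\widehat z$ is a single common hypercube symmetry applied to the cuts, as the statement requires.
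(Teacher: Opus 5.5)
Your proposal is correct and is essentially the paper's proof: you choose each sign $s_i$ to attain the maximum in the definition of agreement, so that $s_i f_{\pi(i)}(v) = z_i(v)$ on at least a $1-\varepsilon$ fraction of vertices. A union bound over the $d$ coordinates then bounds the fraction of vertices where $\widehat z(v)\neq z(v)$ by $d\varepsilon$.
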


\begin{proof}
For each $i\in[d]$, choose $s_i\in\{\pm1\}$ to be a sign attaining
the maximum in the definition of agreement between $f_{\pi(i)}$ and $\hat{\chi}_i$.
Then $s_i f_{\pi(i)}(v)=z_i(v)$ on at least a $1-\varepsilon$
fraction of vertices.
By a union bound, all $d$ equalities hold simultaneously
on at least a $1-d\varepsilon$ fraction of vertices.
\end{proof}

We remark that this is how~\cite{kapralov2026recovering} also approached the problem. Thus, in the remainder of the paper we can focus on recovering the coordinate cuts with high agreement $1-\varepsilon$ for $\varepsilon \le 1/d^c$ for some $c\ge 2$, giving recovery for all but a $1/d^{c-1}=1/\poly(d)$ fraction of vertices (as stated in \Cref{thm:intro-main-recovery}). Note that when $\varepsilon \ge 1/d$, no nontrivial conclusion can be drawn from the lemma above.

As anticipated, our algorithm is based on the sum-of-squares hierarchy of semidefinite programs.
We start by formulating a set of polynomial constraints over variables $\{f(v)\}_{v \in V}$, imposing that $f$ represents a balanced cut with small value in $G=(V,E)$:\footnote{Our formal algorithm does not require knowledge of $p$, but instead estimates $p$ based on the average degree of the observed graph. For the sake of clarity, we pretend $p$ is known throughout \Cref{sec:overview-coordinate-recovery}.}
\begin{equation*}
\cP = \begin{cases}
f(v)^2 = 1\,, \forall v \in V\,,\\
\sum_{v \in V} f(v) = 0\,,\\
\sum_{\{u, v\} \in E} \frac{1}{4} (f(u) - f(v))^2 \leq \left(1+\frac{1}{d^{O(1)}}\right) p 2^{d-1}.
\end{cases}
\end{equation*}
The first constraint imposes that $f(v) \in \{\pm 1\}$ for all $v \in V$, the second constraint imposes that $f$ is a balanced cut, and the third constraint imposes that $f$ has small cut value.
One can verify that this is indeed the natural constraint system to consider. Let $\hat \chi_1, \ldots, \hat \chi_d : V\to \Set{\pm 1}$ be the coordinate cuts of $G$ according to the latent embedding $z$. These are the ideal solutions that we would like to recover, which are unknown to the algorithm. It is easy to check that with high probability each $\hat \chi_i$ satisfies~$\cP$.

\begin{lemma}[Coordinate cut value in $G$ -- see \Cref{lem:coordinate-cut-concentration}]
    \label{lem:overview-coordinate-cut-concentration}
    With probability at least $1-o_d(1)$, for every $i \in [d]$ we have that $\hat{\chi}_i$ satisfies $\cP$, and in particular
    \begin{equation*}
         \sum_{\{u, v\} \in E(G)} \frac{1}{4} (\hat{\chi}_i(u) - \hat{\chi}_i(v))^2 = \left(1 \pm \frac{1}{d^{O(1)}}\right) p 2^{d-1}\,.
    \end{equation*}
\end{lemma}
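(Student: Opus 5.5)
The argument is a concentration bound for each coordinate followed by a union bound over $i\in[d]$.

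\textbf{The easy constraints.} Since each $\hat\chi_i$ takes values in $\{\pm1\}$, the constraint $f(v)^2=1$ holds. Since $z$ is a bijection onto $\{\pm1\}^d$, exactly half of the vertices have $z_i(v)=1$, so $\sum_v \hat\chi_i(v)=0$. Both hold deterministically.

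\textbf{Reducing the cut value to a binomial.} The quantity $\frac14(\hat\chi_i(u)-\hat\chi_i(v))^2$ equals $1$ exactly when the edge $\{u,v\}$ crosses the $i$-th coordinate cut, and $0$ otherwise. An observed edge joins vertices at Hamming distance one, so it crosses cut $i$ if and only if it is a direction-$i$ edge of $Q_d$. The hypercube has exactly $2^{d-1}$ direction-$i$ edges, and each is retained independently with probability $p$. Hence the cut value
\[
X_i=\sum_{\{u,v\}\in E}\tfrac14(\hat\chi_i(u)-\hat\chi_i(v))^2
\]
is distributed as $\mathrm{Bin}(2^{d-1},p)$, with mean $\mu=p2^{d-1}$. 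The renaming of vertices plays no role here.

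\textbf{Chernoff bound.} Apply the multiplicative Chernoff bound with deviation $\delta=d^{-c}$ for a fixed constant $c$:
\[
\Pr\bigl[|X_i-\mu|>\delta\mu\bigr]\le 2\exp(-\delta^2\mu/3).
\]
We have $\mu\ge C(\log d/d)\,2^{d-1}$, which is exponentially large in $d$. Therefore $\delta^2\mu\ge 2^{d-1}C\log d/d^{2c+1}$ is also exponentially large, and the failure probability is at most $\exp(-2^{\Omega(d)})$.

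\textbf{Union bound.} Taking a union bound over the $d$ coordinates gives total failure probability at most $d\exp(-2^{\Omega(d)})=o_d(1)$. On the complementary event, every $X_i=(1\pm d^{-c})p2^{d-1}$, which is the claimed two-sided estimate. The upper side of this estimate gives the third constraint of $\cP$, provided the $d^{O(1)}$ in $\cP$ is read as $d^{c}$ with the same $c$. Any fixed polynomial slack is available, since the concentration is far stronger than needed.

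\textbf{Main subtlety.} The only point needing care is the one the footnote flags: the formal algorithm uses an estimated $\hat p$ taken from the average degree rather than the true $p$. The total edge count is $\mathrm{Bin}(d2^{d-1},p)$, which concentrates to within a factor $1\pm d^{-c}$ by the same Chernoff argument. The resulting multiplicative errors can then be absorbed by slightly enlarging the slack constant in $\cP$.
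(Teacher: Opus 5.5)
Your proposal is correct and follows the same route as the paper: the cut value of $\hat\chi_i$ is $\mathrm{Bin}(2^{d-1},p)$, a Chernoff bound controls each coordinate, and a union bound over $i\in[d]$ finishes. The only difference is in the parameters: the paper (in \Cref{lem:coordinate-cut-concentration}) uses an additive deviation of order $\sqrt{p2^{d-1}\log d}$, getting normalized error $2^{-d/3}$ and failure probability $d^{-100}$, whereas you use a relative deviation $d^{-c}$ with failure probability $\exp(-2^{\Omega(d)})$ — either choice suffices.
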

\begin{proof}[Proof sketch]
We recall that coordinate cuts correspond to dictator functions, which are the eigenvectors of the Laplacian of $Q_d$ with the smallest positive eigenvalue. In particular, they have eigenvalue $2$, i.e.,
\begin{equation*}
    \langle \chi_i, L_{Q_d} \chi_i\rangle =2 \, ,
\end{equation*}
where $L_{Q_d}$ is the unnormalized Laplacian of $Q_d$.\footnote{For functions $g,h:\Omega \to \R$, we write $\langle g,h\rangle$ for their inner product with respect to the uniform measure on $\Omega$, i.e., $\langle g,h\rangle = \E_{x\sim \mathrm{Unif}(\Omega)} g(x)h(x)$, and define the associated norm by $\|g\|^2 = \langle g,g\rangle$. In our case, $\Omega$ is either $\{\pm1\}^d$ or its arbitrary renaming $V$.}
On the other hand,
\begin{equation*}
    \langle \chi_i, L_{Q_d} \chi_i\rangle =  \frac{4}{2^d}\sum_{\{x,y\} \in E(Q_d)} \frac{1}{4}({\chi}_i(x) - {\chi}_i(y))^2 \, .
\end{equation*}
By rearranging, we get that each ${\chi}_i$ has cut value $2^{d-1}$ in $Q_d$. Therefore, the expected cut value of each $\hat{\chi}_i$ in $G$ is
\begin{equation*}
    \E_G\left[
  \frac14\sum_{\{u,v\}\in E(G)}
  (\hat{\chi}_i(u)-\hat{\chi}_i(v))^2
\right]
= p2^{d-1}\,.
\end{equation*}
Because the edges are added to $G$ independently, a Chernoff bound gives the claim.
\end{proof}

Then, our goal is to prove, via a \emph{degree-$4$ sum-of-squares proof}, a quartic inequality that acts as a proxy to the statement that any $f$ that satisfies $\cP$ must be close to one of the ideal solutions $\hat \chi_1, \ldots, \hat \chi_d$.

\begin{lemma}[Quartic proxy -- see \Cref{claim:sos-quartic-proxy-certificate}]
\label{lem:overview-quartic-proxy-observed}
With probability $1-o_d(1)$, there is a degree-$4$ sum-of-squares proof that $\cP$ implies
\begin{equation*}
    \sum_{i\in[d]}\langle f,\hat{\chi}_i\rangle^4 \geq 1-\frac{1}{d^{O(1)}}\,.
\end{equation*}
\end{lemma}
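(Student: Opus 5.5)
The plan is to transfer the constraints on $f$ from the observed graph $G$ to the full hypercube $Q_d$, and then invoke a sum-of-squares version of Friedgut--Kalai--Naor (FKN) there. Write $F = f\circ z^{-1}$ as a function on $\{\pm1\}^d$ and expand it formally in the Fourier basis, $F=\sum_S \hat F(S)\chi_S$. Each $\hat F(S)=\langle f,\hat\chi_S\rangle$ is a linear form in the variables $f(v)$, so every Fourier coefficient and every level weight $W_k=\sum_{|S|=k}\hat F(S)^2$ is a low-degree polynomial in the SoS variables. Booleanity gives, at degree $2$, the identity $\sum_S \hat F(S)^2=\frac1n\sum_v f(v)^2 = 1$. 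Balance gives $\hat F(\emptyset)=0$.

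\textbf{Step 1: spectral transfer.} The goal is to show that the cut constraint forces $W_{\ge 2}\le \delta$ with $\delta=d^{-\Omega(1)}$, via a degree-$2$ SoS argument. I would use the spectral concentration result for the percolated hypercube announced in the abstract, stated as a quadratic-form inequality
\[
\tfrac{1}{p}L_G \succeq (1-\eta)\,L_{Q_d} - \eta' \cdot(\text{something controlled by }\|f\|^2)
\]
with $\eta,\eta'=d^{-\Omega(1)}$. Because this is a PSD matrix inequality, it immediately yields a degree-$2$ SoS certificate. Under $z$, the Laplacian quadratic form of $Q_d$ equals $\sum_S 2|S|\hat F(S)^2$. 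The cut-value constraint says $\frac{1}{p}\cdot\frac{4}{n}\sum_{E}\frac14(f(u)-f(v))^2\le 2(1+d^{-O(1)})$, so we get
\[
\sum_S |S|\hat F(S)^2\le 1+d^{-\Omega(1)}.
\]
Combined with $\sum_S\hat F(S)^2=1$ and $\hat F(\emptyset)=0$, this gives $W_{\ge2}\le \sum_S(|S|-1)\hat F(S)^2\le d^{-\Omega(1)}$, and hence $W_1\ge 1-d^{-\Omega(1)}$.

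\textbf{Step 2: SoS FKN.} Here I need a degree-$4$ SoS proof that, for $F$ with $F^2=1$ pointwise, small $W_{\ge2}$ forces $\sum_i\hat F(\{i\})^4\ge 1-O(W_{\ge 2})$, or some polynomial loss. I would follow the standard FKN route. Let $\ell=\sum_i \hat F(\{i\})x_i$ be the linear part of $F$. Since $F^2=1$,
\[
\E[(\ell^2-1)^2] = \E[(\ell^2-F^2)^2] = \E[(\ell-F)^2(\ell+F)^2].
\]
I would bound this in SoS using $(\ell+F)^2\le 2\ell^2+2$, hypercontractivity of the degree-$\le d$ remainder $F-\ell$ restricted to high levels, and the fact that $\E[(F-\ell)^2]=W_{\ge2}$, so that $\E[(F-\ell)^2\ell^2]$ is small. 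On the other side, a direct computation gives
\[
\E[(\ell^2-1)^2]=(W_1-1)^2+2\Big(W_1^2-\sum_i\hat F(\{i\})^4\Big).
\]
Therefore $W_1^2-\sum_i \hat F(\{i\})^4\lesssim$ small, and with $W_1\ge 1-\delta$ we get $\sum_i\hat F(\{i\})^4\ge 1-d^{-\Omega(1)}$. Since $\hat F(\{i\})=\langle f,\hat\chi_i\rangle$, this is exactly the claim.

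\textbf{Main obstacle.} The hard part is making Step 2 genuinely degree $4$ in the variables $f(v)$. Pointwise identities like $F(x)^2=1$ are available, but the quantity $\E[(F-\ell)^2\ell^2]$ is quartic and must be bounded using only $W_{\ge 2}$, which is quadratic. I would first try to write $\E[(F-\ell)^2\ell^2]\le \E[(F-\ell)^2]\cdot\max\ell^2$ via the pointwise trick: replace $\ell^2$ by $F^2+(\ell-F)(\ell+F)$ and iterate. This gives a bound of the form $W_{\ge2}+\E[(F-\ell)^4]$-type terms. The fourth-moment term would then be handled by an SoS $(2,4)$-hypercontractive inequality. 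That inequality is unavailable for arbitrary-degree functions, so I would instead exploit $F^2=1$: $(F-\ell)^4\le 8(F^4+\ell^4)$ together with SoS degree-$4$ hypercontractivity for the linear form $\ell$, namely $\E\ell^4\le 3W_1^2$, to close a self-bounding inequality. The constants have to be tracked so that the error stays $O(\sqrt{\delta})$ or similar, which is still $d^{-\Omega(1)}$.
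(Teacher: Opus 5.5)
Your proposal has a genuine gap in Step~1, and Step~2 also does not close as sketched.

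\textbf{Step~1.} It rests on a matrix inequality $\frac{1}{p}L' \succeq (1-\eta)L - (\text{error})$, writing $L=L_{Q_d}$ and $L'=L_{G^*}$, with $\eta = d^{-\Omega(1)}$ and an error term that is $d^{-\Omega(1)}$ on Boolean vectors. No such inequality holds when $pd = C\log d$, because vertex degrees fluctuate by $\Theta(\sqrt{pd})$ at a constant fraction of vertices. For example, take $T$ inside one parity class (so $T$ is independent in $Q_d$) with $|T|\ge c2^d$ and $\deg_{G^*}(v)\le pd-\sqrt{pd}$ for $v\in T$, and let $g$ be $\chi_1$ with its sign flipped on $T$. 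Then $\langle g,Lg\rangle\approx 4cd$, while $\frac1p\langle g,L'g\rangle$ is smaller by roughly $4c\sqrt{d/p}$. This forces $\eta\gtrsim 1/\sqrt{pd}=1/\sqrt{C\log d}$.

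What is actually available, \Cref{lem:spectral-concentration}, has a constant $\epsilon$. Applied directly to $f$, it gives only $\langle f,Lf\rangle\le 2+O(\epsilon)+d^{-O(1)}$, and hence only $\sum_i\langle f,\chi_i\rangle^4\ge 1-O(\epsilon)$. That is useless, since recovery needs error $\ll 1/d$.

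The missing idea is the bootstrapping in \Cref{lem:sos-cut-approximation}:
\begin{itemize}
\item By \Cref{lem:cut-symmetric-difference-identity}, the excess $\langle f,Lf\rangle-2$ (and its analogue for $L'$) is the quadratic form of $f\circ\chi_i$ against $2L_{(i)}-L$. You compare against the coordinate cut $\chi_i$, whose $G^*$-value concentrates to within $2^{-d/3}$.
\item The constant-factor spectral error is then charged only to $\epsilon\langle f\circ\chi_i,L(f\circ\chi_i)\rangle\le\epsilon\bigl[(\langle f,Lf\rangle-2)+4(1-\langle f,\chi_i\rangle^2)\bigr]$. This is small exactly when $f$ is near $\chi_i$.
\item Multiplying by $\langle f,\chi_i\rangle^2$ and summing over $i$ is where degree $4$ enters. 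It gives $(1-\epsilon)\langle f,Lf\rangle\le\frac1p\langle f,L'f\rangle-2\epsilon+4\epsilon(1-\sum_i\langle f,\chi_i\rangle^4)+d^{-40}$.
\item SoS FKN bounds $1-\sum_i\langle f,\chi_i\rangle^4$ by $O(\langle f,Lf\rangle-2)$. For small $\epsilon$ this term is absorbed into the left-hand side, giving an inverse-polynomial bound on $\langle f,Lf\rangle-2$.
\end{itemize}
So FKN is needed inside the cut comparison, not only at the end.

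\textbf{Step~2.} Write $A=\E[(F-\ell)^2\ell^2]$. Your bounds give $\E(F-\ell)^4\le 2A+2\E(F-\ell)^2$. Combined with Cauchy--Schwarz and $\E\ell^4\le 3W_1^2$, they give at best $A^2\lesssim A+\delta$, i.e.\ $A=O(1)$. The estimate $(F-\ell)^4\le 8(F^4+\ell^4)$ is likewise $O(1)$. These moment manipulations lose the smallness of $F-\ell$ once it is multiplied by $\ell^2$, because the high-degree part of $F-\ell$ is uncontrolled.

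The paper's fix uses that $\ell^2-1$ has degree at most $2$, so $\ell^2-1=\Pi_{\le2}[(\ell-F)\ell]+\Pi_{\le2}[(\ell-F)F]$.
\begin{itemize}
\item For the second term, drop the projection (Parseval) and use $F^2=1$ to get $\E(\ell-F)^2\le\delta$.
\item For the first term, the level-$\le2$ part involves only the products $\widehat F(\emptyset)\widehat F(\{i\})$, $\widehat F(\{i,j\})\widehat F(\{j\})$ and $\widehat F(\{i,j,k\})\widehat F(\{k\})$. Cauchy--Schwarz and Parseval bound these by $O(1-W_1)=O(\delta)$, with no hypercontractivity.
\end{itemize}
The resulting linear dependence on $\delta$ is essential. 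With $x=\langle f,Lf\rangle-2$, an FKN loss of order $\sqrt{x}$ would turn the absorption inequality above into $(1-\epsilon)x\lesssim\epsilon\sqrt{x}+d^{-\Omega(1)}$. That yields only $x\lesssim\epsilon^2$, a constant.
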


Ordinarily, this inequality forces $f$ to be close to one of the $\hat \chi_i$'s up to a sign flip:
\[
1-\frac{1}{d^{O(1)}}
\leq \sum_{i\in[d]}\langle f,\hat\chi_i\rangle^4
\leq
  \left(\sum_{i\in[d]}\langle f,\hat\chi_i\rangle^2\right)
  \max_{j\in[d]}\langle f,\hat\chi_j\rangle^2
\leq \max_{i\in[d]}\langle f,\hat\chi_i\rangle^2 \, ,
\]
where the last inequality follows because $\sum_i \langle f,\hat\chi_i\rangle^2 \le 1$ by Parseval.
Thus, for some $i \in [d]$, we have $|\langle f,\hat\chi_i\rangle| \geq 1-1/d^{O(1)}$, which implies agreement $1-1/d^{O(1)}$ between $f$ and $\hat\chi_i$.
However, this reasoning is not captured by the sum-of-squares proof system, so at this point we invoke instead a rounding theorem of~\cite{MR3631006-Ma16}.

\begin{theorem}[Rounding from quartic proxy -- see \Cref{thm:sos-rounding}]
\label{thm:overview-sos-rounding}
    Let $a_1, \dots, a_d$ be orthonormal vectors in $\R^n$, and let $\cA$ be a system of polynomial constraints in variables $w_1, \dots, w_n$ that implies \smash{$\sum_{i=1}^d \langle w, a_i\rangle^4 \ge 1-\varepsilon$} via a degree-$r$ sum-of-squares proof. If each $a_i$ satisfies the system $\cA$, there is an $(n+d)^{O(r)}$-time algorithm that takes as input only $\cA$ and outputs with high probability $d$ vectors $a_1',\dots, a_d'$ such that there is a permutation $\pi$ of $[d]$ satisfying $\min\{\|a_i - a_{\pi(i)}'\|^2,\|a_i + a_{\pi(i)}'\|^2\} \le O(\varepsilon)$ for all $i\in[d]$.
\end{theorem}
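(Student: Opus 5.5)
The plan is to follow the Ma--Shi--Steurer rounding strategy for orthogonal tensor decomposition. We first solve the degree-$r$ sum-of-squares relaxation of $\cA$ to obtain a pseudo-expectation $\tilde{\mathbb{E}}$, and then extract one component at a time by random contraction of its degree-$4$ moment tensor. The main difficulty is that $\tilde{\mathbb{E}}$ might concentrate on a few of the $a_i$. To prevent this, we do not take an arbitrary feasible pseudo-distribution. We take one minimizing a strictly convex ``spreading'' objective, for instance $\|\tilde{\mathbb{E}}\, w^{\otimes 2}\|_F$ or the Frobenius norm of the degree-$4$ moment matrix, subject to $\tilde{\mathbb{E}}$ satisfying $\cA$ at degree $r$. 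This is a semidefinite program of size $(n+d)^{O(r)}$. Since each $a_i$ satisfies $\cA$, the uniform mixture $\frac1d\sum_i \delta_{\pm a_i}$ is feasible, which gives an upper bound on the optimal objective.

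The key analytic step is to show that at the minimizer the moment tensor $T=\tilde{\mathbb{E}}\, w^{\otimes 4}$ is close to $\frac1d\sum_i a_i^{\otimes 4}$ in a usable sense. The sum-of-squares proof gives $\tilde{\mathbb{E}}\sum_i\langle w,a_i\rangle^4 \ge 1-\varepsilon$. We combine this with $\tilde{\mathbb{E}}\|w\|^2\le 1$ (assuming $\cA$ also implies the normalization; otherwise we add it) and with the minimality of the spreading objective. The aim is to bound the ``mass outside'' $\sum_i a_i^{\otimes 4}$ by $O(\varepsilon)$, and to show that the weights $\tilde{\mathbb{E}}\langle w,a_i\rangle^4$ are nearly uniform, each about $1/d$. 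For the uniformity we would use a first-order optimality argument: if some weight were much larger, mixing toward the feasible uniform distribution would decrease the objective.

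Rounding then proceeds as follows. Sample a random Gaussian vector $g$ and form the reweighted matrix $M_g=\tilde{\mathbb{E}}[\langle g,w\rangle^2\, w w^\top]$, which is the contraction $T(g,g,\cdot,\cdot)$. Its top eigenvector is then close to some $a_i$ with probability $\Omega(1/\poly(d))$, because the Gaussian coefficients $\langle g,a_i\rangle^2$ have a noticeable gap between the largest and the second largest with that probability. For each candidate we test $\tilde{\mathbb{E}}\langle w,u\rangle^4$ against the threshold $\approx 1/d$ to certify it, and we repeat $\poly(n,d)$ times, collecting distinct candidates. Distinctness is decided up to sign by inner products, since the $a_i$ are orthonormal. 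This yields all $d$ vectors with error $\min\{\|a_i-a'_{\pi(i)}\|^2,\|a_i+a'_{\pi(i)}\|^2\}=O(\varepsilon)$. Alternatively, we could reweight the pseudo-distribution by $\langle g,w\rangle^{2k}$ and read off $\tilde{\mathbb{E}}\, ww^\top$, as in Ma--Shi--Steurer.

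The main obstacle is the error accounting. We must ensure that the additive $O(\varepsilon)$ defect in $T$ does not get amplified by the random contraction, and in particular that it does not scale with $d$. A naive spectral-norm bound on the error tensor only gives $O(\varepsilon)$ in Frobenius norm, while the signal eigengap after contraction is on the order of $1/d$ times a Gaussian gap factor. We would first try to bound the error in the injective (SoS) norm using the degree-$4$ certificate itself. We would then run the eigenvector step on the whitened tensor $dT$ so that the gap is $\Omega(1)$, and only afterwards convert the eigenvector error to an $O(\varepsilon)$ squared distance via the certified fourth-moment test.
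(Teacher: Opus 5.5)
The paper does not prove this statement from first principles. It is the special case of Ma--Shi--Steurer's Theorem~5.2 (\Cref{thm:sos-rounding}) with $P$ a rescaling, so that $b_i=a_i$ are unit vectors with $\|\sum_i b_ib_i^\top\|=1$. That application uses the fact that the constraints prove the normalization $\|P(w)\|^4=1$; in the application this comes from Booleanity, and you rightly note it must be assumed or added. The paper then applies the elementary identity $\|vv^\top-uu^\top\|_{\mathrm{op}}^2=1-\langle u,v\rangle^2$, which turns the $O(\sqrt{\varepsilon})$ Hausdorff bound into $\min\{\|u-v\|^2,\|u+v\|^2\}=O(\varepsilon)$; orthonormality of the $a_i$ makes the matching a permutation. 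You instead set out to reprove the MSS rounding itself, and your sketch stops where that proof actually begins. The error accounting you call ``the main obstacle'' is the whole technical content of the theorem, and you only say what you would try.

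The gap is real, not merely unwritten. With $\|w\|^2=1$, the certificate controls the off-diagonal part of the degree-$4$ moments only in aggregate: $\sum_{i\neq j}\tilde\E\langle w,a_i\rangle^2\langle w,a_j\rangle^2\le\varepsilon$ and $\tilde\E\|w-\sum_i\langle w,a_i\rangle a_i\|^2\le\varepsilon$. Nothing you establish stops this $\varepsilon$ from concentrating on, say, the pair $a_1,a_2$. In that case the error part of $M_g=\tilde\E[\langle g,w\rangle^2ww^\top]$ has spectral norm of order $\varepsilon$, while the signal $\sum_i\tilde\E\langle w,a_i\rangle^4\langle g,a_i\rangle^2a_ia_i^\top$ has eigenvalues and eigengaps of order $1/d$, up to Gaussian factors.

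Replacing $T$ by $dT$ multiplies signal and error by the same factor, so the error-to-gap ratio that Davis--Kahan needs stays of order $\varepsilon d$. Your fourth-moment test fails for the same reason: telling $\approx 1/d$ apart from $(1-\Omega(1))/d$ requires injective-norm error $\ll 1/d$.

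The spreading step does not fix this. For the objective $\|\tilde\E\,ww^\top\|_F$, first-order optimality toward the uniform mixture gives only $\sum_i\bigl(\tilde\E\langle w,a_i\rangle^2-\tfrac1d\bigr)^2\le\varepsilon/d$. That pins down individual weights only when $\varepsilon\ll 1/d$, and it says nothing about how the degree-$4$ error is distributed. For $\varepsilon$ up to a small constant, the missing ingredient is exactly MSS's analysis: it adds spectral analogues of maximum-entropy constraints in order to show that randomly reweighted moment matrices keep a usable spectral gap.

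The paper only needs the regime $\varepsilon\le d^{-c}$ (it uses $\varepsilon=O(d^{-11})$). There $\varepsilon d$ is small, and your cruder route may well be completable. As a proof of the theorem as stated, however, the key step is missing.
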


Equipped with this rounding algorithm, if we can establish \Cref{lem:overview-quartic-proxy-observed}, then we can easily conclude~\Cref{thm:intro-main-recovery}.

\begin{proof}[Proof sketch of \Cref{thm:intro-main-recovery}]
We recall that the coordinate cuts $\hat{\chi}_i$ are orthonormal vectors in $\R^V$ with respect to the uniform measure on $V$, and they are feasible solutions for $\cP$ by \Cref{lem:overview-coordinate-cut-concentration}. Moreover, \Cref{lem:overview-quartic-proxy-observed} gives a degree-$4$ sum-of-squares derivation of $\sum_{i\in[d]}\langle f,\hat{\chi}_i\rangle^4 \geq 1-{1}/{d^{O(1)}}$ from $\cP$. We can then apply the rounding algorithm in \Cref{thm:overview-sos-rounding}: we have a polynomial-time algorithm that takes as input $\cP$ and outputs with high probability $d$ real-valued vectors $a'_1,\dots, a'_d$ such that $a'_i$ is $1/d^{O(1)}$ close to $\hat{\chi}_i$. A simple threshold rounding allows us to turn the vectors $a'_1,\dots, a'_d$ into cuts $f_1,\dots,f_d$ so that each $f_i$ has agreement at least $1-1/d^{O(1)}$ with $\hat{\chi}_i$. Since $\cP$ can be written from $G$ alone, without any knowledge of the latent $\{\pm 1\}^d$ vectors of the vertices $V$, we are done.
\end{proof}

%% file: overview-quartic-proxy.tex
\subsection{From balanced low-value cut to the quartic proxy}
\label{sec:overview-quartic-proxy}

The main challenge left is proving \Cref{lem:overview-quartic-proxy-observed}, which we now turn to. We note that the third constraint of $\cP$ can be rewritten as $ \langle f, L_G f\rangle \leq (1+1/d^{O(1)})2p$, where $L_G$ is the unnormalized Laplacian of $G$. Additionally, observe that the sum-of-squares proof in \Cref{lem:overview-quartic-proxy-observed} appears only in the analysis, and we can therefore work with $G$ expressed in its true hypercube coordinates, which we call $G^*$. By virtue of these two observations, we define the constraint system

\begin{equation}
\label{eq:overview-cut-constraints}
\cP^* = \begin{cases}
f(x)^2 = 1\,, \forall x \in \{\pm 1\}^d\,,\\
\sum_{x \in \{\pm 1\}^d} f(x) = 0\,,\\
\langle f, L_{G^*} f\rangle \leq \left(1+\frac{1}{d^{O(1)}}\right)2p \,,
\end{cases}
\end{equation}

and proving \Cref{lem:overview-quartic-proxy-observed} amounts to proving the following claim.

\begin{lemma}[Quartic proxy -- see \Cref{claim:sos-quartic-proxy-certificate}]
\label{lem:overview-quartic-proxy}
With probability $1-o_d(1)$, there is a degree-$4$ sum-of-squares proof that $\cP^*$ implies
\begin{equation}
\label{eq:overview-quartic-proxy}
    \sum_{i\in[d]}\langle f,{\chi}_i\rangle^4 \geq 1-\frac{1}{d^{O(1)}}\,.
\end{equation}
\end{lemma}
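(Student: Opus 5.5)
The argument has two stages. First, a spectral concentration step reduces the random Laplacian constraint to a statement about the full hypercube. Second, a degree-$4$ sum-of-squares version of Friedgut--Kalai--Naor turns that statement into the quartic inequality \eqref{eq:overview-quartic-proxy}.

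\textbf{Spectral concentration.} We want to show that, with probability $1-o_d(1)$, the matrix inequality
\[
L_{G^*} \succeq (1-\delta)\,p\, L_{Q_d} - \eta\, p\, I
\]
holds on balanced functions, with $\delta,\eta \le 1/d^{O(1)}$. Assuming this, the constraint $\langle f, L_{G^*}f\rangle \le (1+1/d^{O(1)})2p$, combined with $\sum_x f(x)^2 = 2^d$ (which follows from $f(x)^2=1$), gives a degree-$2$ sum-of-squares derivation of
\[
\langle f, L_{Q_d} f\rangle \le 2 + \frac{1}{d^{O(1)}}.
\]
Here a matrix inequality $A\succeq B$ translates into an SoS identity, since $f^\top(A-B)f$ is a sum of squares.

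This is the step I expect to be the main obstacle. The expected degree $pd$ is only $\Theta(\log d)$, so a naive spectral bound on $L_{G^*}-pL_{Q_d}$ fails: its operator norm is of order $\sqrt{pd}$ or larger, which swamps $2p$. Exact concentration cannot hold in any case, because there are isolated vertices. My first attempt would be to demand only a one-sided, relative bound that is measured in the $L_{Q_d}$ geometry. I would decompose
\[
L_{G^*}=\sum_i L^{(i)}_{G^*},
\]
where $L^{(i)}_{G^*}$ collects the edges in direction $i$, compare each piece to $p\,L^{(i)}_{Q_d}$, and control the fluctuations with a matrix Chernoff bound applied after conjugating by $(L_{Q_d}+pI)^{-1/2}$. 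The additive $\eta p I$ term is there to absorb low-degree vertices.

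\textbf{Level-$1$ weight.} On the hypercube, write
\[
\langle f, L_{Q_d}f\rangle=\sum_S 2|S|\hat f(S)^2 .
\]
Here $\hat f(\emptyset)=0$ by balance and $\sum_S\hat f(S)^2=1$. Both facts are linear identities in the degree-$2$ monomials of $f$, so they are available to SoS. Since every nonempty $S$ has $2|S|\ge 2$, with strict excess $2$ whenever $|S|\ge 2$, we obtain
\[
W_1:=\sum_i\langle f,\chi_i\rangle^2 \ge 1-\frac{1}{d^{O(1)}}
\]
by a degree-$2$ SoS argument.

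\textbf{Degree-$4$ FKN.} It remains to prove, in degree $4$, that a Boolean $f$ with $W_1\ge1-\epsilon$ satisfies
\[
\sum_i\hat f(i)^4\ge 1-O(\epsilon).
\]
Put $\ell=\sum_i\hat f(i)\chi_i$. The first ingredient is the identity
\[
\E[\ell^4]=3W_1^2-2\sum_i\hat f(i)^4 .
\]
This is polynomial in the variables, with degree $4$ in $f$. The second ingredient uses Booleanity: since $f^2=1$,
\[
\E[(\ell^2-1)^2]=\E[(\ell^2-f^2)^2]=\E[(\ell-f)^2(\ell+f)^2].
\]
We bound the right-hand side by $O(\epsilon)$ using $\E[(\ell-f)^2]=1-W_1\le\epsilon$ together with a degree-$4$ hypercontractive or Cauchy–Schwarz style bound of the kind in the SoS hypercontractivity literature. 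Expanding $\E[(\ell^2-1)^2]$ then gives
\[
\E[\ell^4]\le 2W_1-1+O(\epsilon).
\]
Combining this with the identity yields
\[
2\sum_i\hat f(i)^4\ge 3W_1^2-2W_1+1-O(\epsilon)\ge 2-O(\epsilon).
\]
Tracking that each step uses only products of the constraints and squares of degree at most $4$ establishes \eqref{eq:overview-quartic-proxy} with $\epsilon=1/d^{O(1)}$.
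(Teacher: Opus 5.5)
\textbf{Verdict: there is a genuine gap.} Your spectral step fails, and not merely because it is left unproven. On balanced vectors, $L_{G^*}\succeq(1-\delta)pL_{Q_d}-\eta pI$ with $\delta,\eta\le d^{-O(1)}$ is false with high probability at the sparse end $p=C\log d/d$, which the lemma must cover.

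Take a vertex $v$ isolated in $G^*$. Each vertex is isolated with probability $(1-p)^d\approx d^{-C}$, so such vertices exist with high probability. Let $g=\mathbf{1}_v-2^{-d}\mathbf{1}$, which is balanced. Then $g^\top L_{G^*}g=0$, $g^\top L_{Q_d}g=d$ and $\|g\|_2^2<1$. This forces $\eta\ge(1-\delta)d$, and an additive error of that size destroys your bound on $\langle f,L_{Q_d}f\rangle$.

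Moving the additive error onto the atypical-degree vertices, as the paper does, does not rescue inverse-polynomial multiplicative accuracy. When $pd=\Theta(\log d)$, a $d^{-\Theta(\gamma^2C)}$ fraction of vertices has degree off by a factor $1\pm\gamma$. So an exceptional set of size $2^d/d^{O(1)}$ forces $\gamma$, and hence the multiplicative error, to be a constant.

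Your matrix Chernoff plan cannot even give constant accuracy here. Every hypercube edge has effective resistance $\approx 2/d$, so the rescaled summands have norm $\approx 2/(pd)$ against a dimension factor $\log n=\Theta(d)$. That would require $p=\Omega(1)$, which is why the paper uses matrix Bernstein only on the subspace of Fourier degree at most $d/100$.

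What is true is \Cref{lem:spectral-concentration}: a \emph{constant} multiplicative error $\epsilon$ plus the additive term $d^2P_{>0}\Pi_RP_{>0}$. That term is negligible only on $\pm1$-valued vectors. Applied directly to $f$, it gives only $\langle f,L_{Q_d}f\rangle\le 2+O(\epsilon)+d^{-O(1)}$.

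Your closing reduction (level-$1$ weight, then FKN) matches the paper's last step. The gap is in obtaining $\langle f,L_{Q_d}f\rangle\le 2+d^{-O(1)}$, which is the content of \Cref{lem:sos-cut-approximation}. The missing idea is a bootstrapping argument in which the constant error multiplies a quantity that is itself controlled by the excess. It goes as follows.
\begin{itemize}
\item Write $\langle f,Lf\rangle-2=\langle f\circ\chi_i,(2L_{(i)}-L)(f\circ\chi_i)\rangle$, and similarly for $L'$, using that $\frac1p\langle\chi_i,L'\chi_i\rangle$ concentrates around $2$.
\item Apply the constant-error comparison to the Boolean function $f\circ\chi_i$, both on $Q_d$ and on the two subcubes $\{x_i=\pm1\}$.
\item Bound the resulting error energy by $\langle f\circ\chi_i,L(f\circ\chi_i)\rangle\le(\langle f,Lf\rangle-2)+4(1-\langle f,\chi_i\rangle^2)$.
\item Multiply by $\langle f,\chi_i\rangle^2$ and sum over $i$. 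In degree $4$ this gives $(1-\epsilon)\langle f,Lf\rangle\le\frac1p\langle f,L'f\rangle-2\epsilon+4\epsilon\bigl(1-\sum_i\langle f,\chi_i\rangle^4\bigr)+d^{-40}$.
\item SoS FKN bounds $1-\sum_i\langle f,\chi_i\rangle^4$ by $O(\langle f,Lf\rangle-2)$, and for small $\epsilon$ this term is absorbed into the left side.
\end{itemize}
So FKN, with error \emph{linear} in its parameter, is needed inside the cut comparison, not only at the end.

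This exposes a second weak point, in your FKN step. The term $\E[(\ell-f)^2\ell^2]$ carries the whole difficulty, and neither tool you suggest handles it. Cauchy--Schwarz against $\E\ell^4$ gives at best $O(\sqrt{\epsilon})$, since $\E(\ell-f)^4$ is typically of order $\epsilon$, not $\epsilon^2$. SoS hypercontractivity does not apply, because $\ell-f$ is not low-degree.

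The paper instead uses that $\ell^2-1$ has Fourier degree at most $2$. It replaces $(\ell-f)(\ell+f)$ by its projection $\Pi_{\le 2}$ and bounds the $f$-part by $\E(\ell-f)^2$. It then computes $\Pi_{\le2}((\ell-f)\ell)$ explicitly. Its coefficients are bilinear in the level-$1$ and level-$\{0,2,3\}$ Fourier coefficients, so Cauchy--Schwarz bounds its squared norm by $O(1-W_1)$.
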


The above cannot be obtained by simply rewriting the argument of~\cite{kapralov2026recovering}. This is because
a low-degree sum-of-squares proof is an efficiently checkable certificate: it certifies that every balanced cut of near-minimum value is close to a coordinate cut.
Thus, to obtain such a proof, we need to replace the non-algorithmic parts of the earlier analysis by statements that can be verified in polynomial time. One concrete way to view our result is the following, which is a corollary of our proof.

\begin{proposition}[Efficient certificate]
\label[proposition]{prop:overview-efficient-certificate}
Given a subgraph $G$ of the $d$-dimensional hypercube (with vertices expressed in their true hypercube coordinates) and a parameter $p$, there is a $\poly(n)$-time algorithm that outputs either \textsc{yes} or \textsc{unsure} so that:
\begin{itemize}
    \item if the algorithm outputs \textsc{yes}, then every balanced cut with value $\leq (1+1/d^{O(1)})p2^{d-1}$ in $G$ has agreement at least $1-1/d^{O(1)}$ with a coordinate cut of the hypercube;
    \item if $G$ is sampled from $Q_d(p)$, then the algorithm outputs \textsc{yes} with probability $1-o_d(1)$ over the input.
\end{itemize}
\end{proposition}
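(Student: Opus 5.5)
The algorithm will be the one implicit in \Cref{lem:overview-quartic-proxy}. Given $G$ in its true coordinates and $p$, it sets up the constraint system $\cP^*$ from \eqref{eq:overview-cut-constraints}. It then searches for a degree-$4$ sum-of-squares proof that $\cP^*$ implies $\sum_{i\in[d]}\langle f,\chi_i\rangle^4 \ge 1-\delta$, where $\delta = d^{-c}$ is the relevant polynomial threshold. It outputs \textsc{yes} if such a proof exists and \textsc{unsure} otherwise.

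Searching for a degree-$4$ certificate of fixed form is a semidefinite feasibility problem. It has $n^{O(1)}$ variables, namely Gram matrices over monomials of degree $\le 2$ in the $n$ variables $f(x)$, together with multipliers for the $n+2$ constraints. Its coefficients are rational and polynomially bounded, so it can be decided to accuracy $1/\poly(n)$ in $\poly(n)$ time by the ellipsoid method. The slack lost to numerical error can be absorbed by slightly weakening $\delta$, say by testing the threshold $1-\delta/2$ and allowing an additive $\delta/2$ of error. Equivalently, and more cleanly, one computes $\min \tilde{\E}\sum_i\langle f,\chi_i\rangle^4$ over degree-$4$ pseudo-expectations consistent with $\cP^*$, and answers \textsc{yes} iff the value is $\ge 1-\delta$. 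By SDP duality, this value being large is what certifies the bound.

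\textbf{Soundness.} Suppose the algorithm outputs \textsc{yes}, and let $f$ be any genuine balanced cut of value $\le (1+1/d^{O(1)})p2^{d-1}$ in $G$. Then $f$ satisfies $\cP^*$, because the cut value equals $2^{d-1}\langle f, L_{G^*} f\rangle/(2)$ times the normalization used in the overview. A sum-of-squares proof is sound for actual points, so the inequality $\sum_i \langle f,\chi_i\rangle^4\ge 1-\delta$ holds for this $f$. Taking the pseudo-expectation route instead, the point mass at $f$ is itself a feasible pseudo-distribution, giving the same conclusion. The elementary argument from the overview then applies: using Parseval, $\sum_i\langle f,\chi_i\rangle^2\le\|f\|^2=1$, so $\max_i\langle f,\chi_i\rangle^2\ge 1-\delta$. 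Hence $|\langle f,\chi_i\rangle|\ge 1-\delta$ for some $i$, which means agreement at least $1-\delta/2$ with $\chi_i$.

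\textbf{Completeness.} This follows directly from \Cref{lem:overview-quartic-proxy}. With probability $1-o_d(1)$ over $G\sim Q_d(p)$, a degree-$4$ proof with $\delta=d^{-O(1)}$ exists, so the algorithm outputs \textsc{yes}.

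The main obstacle is therefore not in this corollary but in \Cref{lem:overview-quartic-proxy} itself, which supplies all the content. Within the corollary, the only delicate point is the bit-complexity of exact SoS feasibility. I would handle it through the approximate pseudo-expectation formulation with a constant-factor slack in $\delta$. For this to work, the certificate from \Cref{lem:overview-quartic-proxy} must have a slightly stronger bound than the one being tested, or coefficients of polynomial bit-size, so that a small margin survives numerical error.
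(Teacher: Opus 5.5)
Your proposal is correct and is essentially the argument the paper has in mind when it calls this proposition a corollary of its proof. The algorithm searches for the quartic-proxy certificate of \Cref{lem:overview-quartic-proxy}, or equivalently solves the degree-$4$ relaxation to test for it. Soundness follows by evaluating that certificate on an actual balanced near-minimum cut and then applying the elementary bound $\max_i\langle f,\chi_i\rangle^2 \ge \sum_i \langle f,\chi_i\rangle^4$ from the overview; completeness is exactly \Cref{lem:overview-quartic-proxy}. Your caveat about numerical error is the right one, and the paper handles it the same way: the certificate has polynomial bit complexity (footnote to \Cref{fact:pseudoexpectation-computation}), and there is slack between the proven bound $1-O(d^{-11})$ and the threshold you test.
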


The first item is a soundness guarantee for every input graph: the algorithm never certifies a graph unless the desired structural conclusion is true.
The second item is a completeness guarantee for the random model: with high probability, samples from $Q_d(p)$ admit such a certificate. The structural theorem of~\cite{kapralov2026recovering} also implicitly gives such a certification algorithm, but it runs in $\exp(\tilde{O}(n))$ time: the algorithm has to check every cut in $G$ in order to maintain the soundness guarantee.

\subsubsection{Proof outline of the quartic proxy lemma}
\label{sec:overview-quartic-proof-outline}
We start by rewriting the proof of \cite{kapralov2026recovering} in a way that mimics ours. In particular, we show that $\cP^*$ implies \Cref{eq:overview-quartic-proxy}, albeit not via a sum-of-squares proof. The remainder of this subsection can thus be thought of as an outline of the proof route that we take to eventually prove \Cref{lem:overview-quartic-proxy}. This will allow us to highlight the sources of ``inefficiency'' in the proof of \cite{kapralov2026recovering}, which we discuss and address below.

The crucial idea underlying the work of \cite{kapralov2026recovering} is to apply a theorem of Friedgut, Kalai, and Naor: a Boolean function with most of its Fourier mass on the degree-$1$ characters is very close to one of the dictators.

\begin{theorem}[FKN -- see \cite{MR1942632-Friedgut02}]
Let $\delta \ge 0$ and let $f : \{\pm 1\}^d \to \{\pm 1\}$ be a Boolean function such that \smash{$\sum_{i=1}^d \langle f, \chi_i\rangle^2 \ge 1-\delta$}. Then, there exists some $i \in [d]$ such that $f$ has agreement at least $1-O(\delta)$ with either $\chi_i$ or $-\chi_i$, i.e. $\langle f, \chi_i\rangle^2 \ge 1-O(\delta)$.
\end{theorem}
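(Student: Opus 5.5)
The plan is to follow the standard Fourier-analytic proof of FKN, which reduces to a variance bound for the square of the linear part. Write $f=\ell+h$ with $\ell=\sum_i a_i\chi_i$, $a_i=\langle f,\chi_i\rangle$, and $h$ the remaining Fourier mass, so $\|h\|^2\le\delta$ and $\sum_i a_i^2\ge 1-\delta$. We may assume $\delta$ is smaller than a suitable absolute constant, since otherwise the conclusion $\langle f,\chi_i\rangle^2\ge 1-O(\delta)$ is trivial by choosing the hidden constant large. Because $f^2=1$, we have $\ell^2=1-2\ell h-h^2$. Since $f^2=1$ also gives $\E[\ell^2]\ge 1-\delta$, we can express
\[
\ell^2-\E[\ell^2]=\sum_{i\ne j}a_ia_j\chi_i\chi_j
\]
and try to show this quantity is small in $L^1$, and then in $L^2$.

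The key steps are as follows. First, bound $\E|1-\ell^2|$. Using $|1-\ell^2|=|1-|\ell||\,(1+|\ell|)$ and $|1-|\ell||\le|f-\ell|=|h|$, Cauchy–Schwarz gives $\E|1-\ell^2|\le\|h\|\,\|1+|\ell|\|\le O(\sqrt\delta)$. Second, upgrade this from $L^1$ to $L^2$. The function $q=\ell^2-\E\ell^2$ is a degree-$2$ polynomial, so Bonami hypercontractivity gives $\|q\|_2\le C\|q\|_1$. This yields $\|q\|_2^2=2\sum_{i\ne j}a_i^2a_j^2=O(\delta)$.

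Third, turn this into a statement about the largest coefficient. From the second step,
\[
\Big(\sum_i a_i^2\Big)^2-\sum_i a_i^4=O(\delta),
\]
so $\sum_i a_i^4\ge(1-\delta)^2-O(\delta)$. Then $\max_i a_i^2\ge\sum_i a_i^4/\sum_i a_i^2\ge 1-O(\delta)$, which is exactly the conclusion $\langle f,\chi_i\rangle^2\ge 1-O(\delta)$. Agreement with $\pm\chi_i$ follows from $\Pr[f=\pm\chi_i]=(1\pm a_i)/2$.

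The main obstacle is the second step. A naive argument only gives $O(\sqrt\delta)$ through the $L^1$ route, which would lose a square root. Getting linear dependence on $\delta$ requires a more careful estimate. I would first try the approach of Friedgut–Kalai–Naor (or the one in O'Donnell's book): apply hypercontractivity to $q$, combined with the sharper bound $\E|1-\ell^2|\lesssim\E[(1-|\ell|)^2]^{1/2}$ together with a bootstrapping argument showing $\E(1-|\ell|)^2=O(\delta)$. Alternatively, I would use the cleaner route of bounding $\E[(\ell^2-1)^2]$ directly via $\ell^2-1=-2\ell h-h^2$ together with hypercontractive control of $\|\ell\|_4$, which sidesteps the square-root loss.
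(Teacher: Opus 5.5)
Your Steps 1--3 already form a complete and correct proof, and the worry in your final paragraph is unfounded. Recentering costs nothing: $|1-\E\ell^2| = 1-\sum_i a_i^2 \le \delta$, so $\|q\|_1 \le \E|1-\ell^2| + \delta \le 2\sqrt{\delta}+\delta$. Then $\|q\|_2 \le C\|q\|_1$ gives $2\sum_{i\ne j}a_i^2a_j^2 = \|q\|_2^2 \le C^2(2\sqrt\delta+\delta)^2 = O(\delta)$ for $\delta\le 1$. Squaring the $L^2$ norm exactly undoes the square root in the $L^1$ bound. So there is no loss, and neither the sharper estimate nor the bootstrapping you propose is needed; you can drop the last paragraph.

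Compared with the paper, the endgame is the same. The paper does not reprove the classical statement. It proves the quartic form $\sum_i\langle f,\chi_i\rangle^4 \ge 1-O(\delta)$ (\Cref{thm:sos-fkn}) by controlling $\mathrm{Var}(\ell^2) = 2\sum_{i\ne j}a_i^2a_j^2$. It then uses $\max_j \langle f,\chi_j\rangle^2 \ge \sum_i \langle f,\chi_i\rangle^4$, as in your Step 3.

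The difference is how the variance is bounded. Instead of an $L^1$ estimate followed by hypercontractivity, the paper directly bounds $\E(\ell^2-1)^2$, which dominates $\mathrm{Var}(\ell^2)$. It writes $\ell^2-1 = (\ell-f)(\ell+f)$ and uses that $\ell^2-1$ has degree at most $2$, so the product can be replaced by its projection $\Pi_{\le 2}$. It bounds $\|\Pi_{\le 2}((\ell-f)f)\|^2 \le \|\ell-f\|^2 \le \delta$ using $f^2=1$. It bounds $\|\Pi_{\le 2}((\ell-f)\ell)\|^2$ by explicit Fourier bookkeeping, in which only levels $0,2,3$ of $f$ enter, together with Cauchy--Schwarz and Parseval.

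Your route is shorter if Bonami is taken as given. It is essentially the textbook argument, with the $L^1$--$L^2$ comparison for degree-$2$ polynomials replacing Paley--Zygmund. The paper's route is fully elementary and uses neither hypercontractivity nor anti-concentration. More importantly for the paper, it consists only of polynomial identities and inequalities of degree at most $4$ in the values $f(x)$, so it yields a degree-$4$ sum-of-squares certificate. Your Step 1 (absolute values, $|1-|\ell||\le|h|$) and Step 2 ($L^1$ norms) have no such low-degree polynomial counterpart.
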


By the FKN theorem, it suffices to show that every $f$ satisfying $\cP^*$ has almost all its Fourier mass on the degree-$1$ characters. For a balanced Boolean function, this follows from having cut value close to the minimum in the full hypercube, as the following lemma shows.

\begin{lemma}[Near-minimum cuts are almost degree-$1$ -- see \Cref{lem:cut-energy-fourier-concentration}]
\label{lem:overview-cut-energy-fourier-concentration}
    Let $\eta \ge 0$, and let $f:\{\pm 1\}^d \to \{\pm 1\}$ be a balanced cut. If $\langle f, L_{Q_d} f\rangle \le 2(1+\eta)$, then \smash{$\sum_{i=1}^d \langle f, \chi_i\rangle^2 \ge 1-2\eta$}.
\end{lemma}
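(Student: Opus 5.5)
The plan is to expand both sides in the Fourier basis and compare the weights that each Fourier level contributes to the Laplacian quadratic form. The character $\chi_S(x)=\prod_{i\in S}x_i$ is an eigenvector of $L_{Q_d}$. We have $(L_{Q_d}\chi_S)(x)=\sum_{i}(\chi_S(x)-\chi_S(x^{\oplus i}))$, where $x^{\oplus i}$ is $x$ with coordinate $i$ flipped. Each $i\in S$ contributes $2\chi_S(x)$ and each $i\notin S$ contributes $0$. Hence $L_{Q_d}\chi_S = 2|S|\chi_S$. This matches the normalization in the proof sketch of \Cref{lem:overview-coordinate-cut-concentration}, where dictators have eigenvalue $2$. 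Writing $f=\sum_S \widehat f(S)\chi_S$ with $\widehat f(S)=\langle f,\chi_S\rangle$, orthonormality of the characters under the uniform measure gives
\[
\langle f, L_{Q_d} f\rangle = \sum_{S\subseteq[d]} 2|S|\,\widehat f(S)^2 .
\]

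Next I use the two hypotheses on $f$. Since $f$ is Boolean, Parseval gives $\sum_S \widehat f(S)^2=\|f\|^2=1$. Since $f$ is balanced, $\widehat f(\emptyset)=\E f=0$. Set $W_1=\sum_{i}\langle f,\chi_i\rangle^2$. Then the mass on levels $|S|\ge 2$ is exactly $1-W_1$.

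Now I lower bound the energy level by level. Level $1$ contributes $2W_1$, and each level $|S|\ge2$ has weight $2|S|\ge 4$, so
\[
\langle f, L_{Q_d} f\rangle \ge 2W_1 + 4(1-W_1) = 4-2W_1 .
\]
Combining this with the hypothesis $\langle f, L_{Q_d} f\rangle\le 2(1+\eta)$ gives $4-2W_1\le 2+2\eta$, that is, $W_1\ge 1-\eta$. This is in fact stronger than the claimed bound $1-2\eta$, so the lemma follows.

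The only step needing care is the normalization: checking that the eigenvalue is exactly $2|S|$ under the paper's convention $\langle g,h\rangle=\E_x g(x)h(x)$ and the unnormalized Laplacian. That check is the formula for $L_{Q_d}\chi_S$ in the first paragraph. Everything else is immediate, so I do not expect a real obstacle here.
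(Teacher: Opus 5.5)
Your proof is correct and is the same argument as the paper's: expand $\langle f, L_{Q_d} f\rangle=\sum_S 2|S|\langle f,\chi_S\rangle^2$, use $\langle f,\chi_\emptyset\rangle=0$ and Parseval, and give every level $|S|\ge 2$ weight at least $4$. Your sharper conclusion $\sum_i\langle f,\chi_i\rangle^2\ge 1-\eta$ is exactly what the formal version (\Cref{lem:cut-energy-fourier-concentration}) gives. The overview sketch loses a factor of $2$ only because it uses the cruder bound $\langle f,L_{Q_d}f\rangle\ge 2+\sum_{|S|\neq1}\langle f,\chi_S\rangle^2$.
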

\begin{proof}[Proof sketch]
    Writing the eigendecomposition of $L_{Q_d}$, we have $\langle f, L_{Q_d} f\rangle = 2\sum_{S \subseteq [d]} |S| \langle f, \chi_S\rangle^2 \geq 2 + \sum_{|S| \neq 1} \langle f, \chi_S\rangle^2$. On the other hand, $\langle f, L_{Q_d} f\rangle \le (1+\eta)2$. Combining these facts gives $\sum_{|S|\neq 1} \langle f, \chi_S\rangle^2 \le 2 \eta$ and hence the claim.
\end{proof}

It therefore remains to show that every $f$ satisfying $\cP^*$ also satisfies
\(
\langle f,L_{Q_d}f\rangle \leq 2\left(1+\frac{1}{d^{O(1)}}\right).
\)
We begin with a general comparison between cut values in the sampled graph and the full hypercube. For each $i\in[d]$, we denote by \smash{$L_{Q_d}^{(i)}$} and \smash{$L_{G^*}^{(i)}$} the unnormalized Laplacians of the graphs obtained from $Q_d$ and $G^*$, respectively, by removing the edges across the coordinate cut $\chi_i$.

\begin{lemma}[Cut value concentration -- see proof of \Cref{claim:sos-cut-value-concentration}]
\label{lem:overview-cut-value-concentration}
With probability $1-o_d(1)$, for some small constant $\epsilon>0$ and simultaneously for every Boolean function $g:\{\pm1\}^d\to\{\pm1\}$, we have
\[
\begin{aligned}
\left|\langle g,L_{Q_d}g\rangle
  -\frac1p\langle g,L_{G^*}g\rangle\right| \leq \epsilon\langle g,L_{Q_d}g\rangle
  +\frac{1}{d^{O(1)}}.
\end{aligned}
\]
The same bound holds simultaneously for every $i\in[d]$ with $L_{Q_d}, L_{G^*}$ replaced by \smash{$L_{Q_d}^{(i)}, L_{G^*}^{(i)}$}, respectively.
\end{lemma}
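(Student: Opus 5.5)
We prove the bound for every Boolean $g$ by a union bound over cuts grouped by their value in $Q_d$, in the spirit of the expansion/isoperimetry arguments for percolated hypercubes. For a cut $g$, let $\partial g\subseteq E(Q_d)$ be its boundary edges. Then $\langle g,L_{Q_d}g\rangle = 4|\partial g|/2^d$ and $\frac1p\langle g,L_{G^*}g\rangle = \frac{4}{p2^d}|\partial g\cap E(G^*)|$, where $|\partial g\cap E(G^*)|\sim\mathrm{Bin}(|\partial g|,p)$. The claim is therefore that, simultaneously for all $g$,
\[
\bigl||\partial g\cap E(G^*)| - p|\partial g|\bigr|\le \epsilon p|\partial g| + p2^d d^{-O(1)}.
\]
The additive slack handles small boundaries, and the multiplicative slack handles large ones.

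A naive union bound over all $2^{2^d}$ cuts fails, since the Chernoff tail $\exp(-\Omega(\epsilon^2 p|\partial g|))$ only beats $2^{n}$ when $p|\partial g|\gtrsim n$. That holds at cut value $|\partial g|\approx 2^{d-1}$ only if $p$ is a constant. So we will not union bound over $g$ directly.

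Instead, we exploit that $\langle g,L g\rangle$ is determined by the edge set $\partial g$, and that the graph is locally sparse. First we decompose $\partial g$ into its connected pieces on the vertex side. Write $\partial g$ as determined by the set $A=g^{-1}(1)$, and split $A$ into the connected components of $Q_d[A]$ plus its complement's components. A component $C$ with edge boundary $b$ in $Q_d$ can be counted by a standard lattice-animal bound: the number of connected edge boundaries of size $b$ containing a given vertex is at most $e^{O(b\log d/d)}$. This uses the hypercube edge-isoperimetry $b\ge |C|(d-\log_2|C|)$, so a component of size $s$ has $b\gtrsim s\,d$ unless $s$ is huge, while the number of connected sets of size $s$ through a vertex is at most $(ed)^s$. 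Thus per component the entropy is $O(s\log d)\le O(b\log d/d)$, while the Chernoff exponent is $\Omega(\epsilon^2 pb)=\Omega(\epsilon^2 C b\log d/d)$. This beats the entropy when $C$ is large compared to $\epsilon^{-2}$, and with additional room to absorb the factor $2^d$ of starting vertices once $b\gtrsim d^{O(1)}\cdot d/(\epsilon^2C\log d)$.

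We then sum over components, with the deviations bounded per component. The main obstacle is that the deviation bound must be additive across components, and there may be up to $2^d$ small components whose individual deviations each exceed their $\epsilon$-share. We handle this by bucketing small components, those with $b\le d^{K}$, and controlling their aggregate. Concretely, we bound the total count of edges of $G^*$ (or of non-edges) lying in boundaries of small sets through a single global quantity: for every vertex set $C$ with $|C|\le d^{K'}$, $|E_{G^*}(C,\bar C)|\ge(1-\epsilon)p|\partial C|-O(|C|)$ and the matching upper bound. These local bounds follow from the union bound above at scale $|C|$, with the $O(|C|)$ error at most an $O(1/(pd))$ fraction of $p|\partial C|$, hence $\le\epsilon$ for large $C$. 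Summing over components then gives the multiplicative error.

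Large components are few, at most $2^d d/b$ of them, and each concentrates with probability $1-e^{-\Omega(b\epsilon^2p)}$. Components too large to enumerate, those with $b\ge 2^d/d^{O(1)}$, are instead handled spectrally. By \Cref{lem:overview-spectral-concentration}, as referenced in the acknowledgements, $\|\frac1pL_{G^*}-L_{Q_d}\|$ is at most $\epsilon$ times a suitable normalization after removing the effect of high-degree deviations. Boundaries of this size are absorbed into the $\epsilon\langle g,L_{Q_d}g\rangle$ term, and the residual becomes the $d^{-O(1)}$ term.

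For the $L^{(i)}$ versions, we run the same argument on the graph with the edges in direction $i$ removed. This graph is $Q_{d-1}\times$ two copies, so the isoperimetry and counting bounds are unchanged up to constants, and a union bound over $i\in[d]$ costs only a factor $d$, absorbed by the $1-o_d(1)$ failure probability.
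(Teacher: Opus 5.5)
Your combinatorial argument has two genuine gaps.

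\textbf{Small components.} The local bound you need is $|E_{G^*}(C,\bar C)|\ge(1-\epsilon)p|\partial C|-O(|C|)$ for \emph{every} $C$ with $|C|\le d^{K'}$. This bound is false with high probability. At $p=C\log d/d$ a vertex is isolated with probability $(1-p)^d\approx d^{-C}$, so $G^*$ has about $2^d d^{-C}$ isolated vertices. Taking $C=\{x\}$ for such an $x$ gives $0\ge(1-\epsilon)pd-O(1)$, which fails; vertices of degree $\ge 2pd$ break the upper bound in the same way. No union bound at scale $|C|$ can produce this bound: there are $2^d$ singletons, and each fails with probability only polynomially small in $d$. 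The fix is the paper's. Single out the set $R$ of vertices whose degree deviates from $pd$ by more than $\gamma pd$, prove $|R|\le 2^d/d^{C_2}$, and charge their contribution to the additive term instead of asking every small set to concentrate.

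\textbf{Large components.} Counting connected sets by $(ed)^s$ beats the Chernoff exponent $\epsilon^2 pb$ only when $b\gtrsim sd/(\epsilon^2 C)$. Via isoperimetry this holds only for $s\le 2^{(1-c)d}$ with $c\approx 1/(\epsilon^2C)$. So it says nothing about the giant components of near-balanced cuts, where $b\approx 2^{d-1}\ll sd$. Your sentence ``each concentrates with probability $1-e^{-\Omega(b\epsilon^2p)}$'' treats components as fixed sets. But $g$ ranges over all Boolean functions, chosen after $G^*$ is revealed, so you still need a union bound over all candidate components, and the entropy defeats it. Your spectral fallback only starts at $b\ge 2^d/d^{O(1)}$. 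Sets such as perturbed subcubes of codimension $k$, with $k$ between a large multiple of $\log d$ and $cd$ (size $2^{d-k}$, boundary $k2^{d-k}$), therefore fall between the two regimes.

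\textbf{The spectral lemma alone suffices, and is the paper's proof.} \Cref{lem:spectral-concentration} gives $-(\epsilon pL_{Q_d}+Z)\preceq L_{G^*}-pL_{Q_d}\preceq \epsilon pL_{Q_d}+Z$ with $Z=d^2P_{>0}\Pi_RP_{>0}$. For any Boolean $g$,
\[
\langle g,Zg\rangle=\frac{d^2}{2^d}\sum_{x\in R}\bigl(g(x)-\E_{x'}g(x')\bigr)^2\le \frac{4d^2|R|}{2^d}\le 4d^{2-C_2}.
\]
Dividing by $p\ge C\log d/d$ leaves an additive error of at most $4d^{3-C_2}$. The $L^{(i)}$ versions follow by applying the same lemma to the two subcubes $\{x_i=\pm1\}$ and taking a union bound over $i$. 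No decomposition of $g$ into components and no union bound over cuts is needed.

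The real work is inside that lemma, which you cite rather than prove, and it avoids exactly the entropy problem above. Its union bound over connected sets (\Cref{claim:connected-set-count}) controls the bilinear discrepancy $\bigl||E(S,T)|-p|H_d(S,T)|\bigr|$ for disjoint $S,T\subseteq U$. It does so only up to the volume-scale budget $\eta pd\sqrt{|S||T|}$, which does beat the $(2ed)^{|W|}$ entropy. \Cref{lem:converse-expander-mixing} turns this into $\|\Pi_U(L_{G^*}-pL_{Q_d})\Pi_U\|\le 2\gamma pd$. That error is affordable only on Fourier levels above $d/100$. The low levels are handled by matrix Bernstein on a $2^{d/10}$-dimensional space, and the cross terms by a Schur complement. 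The boundary-scale precision $\epsilon p|\partial C|$ for individual sets, which your argument demands, is never required.
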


Applied directly to a function $f$ satisfying $\cP^*$, this estimate gives only
\[
\langle f,L_{Q_d}f\rangle \leq 2+O(\epsilon)+\frac{1}{d^{O(1)}}.
\]
Since $\epsilon$ is a constant, this does not give the inverse-polynomial error we need. To obtain a sharper bound, we compare the cut value of $f$ with that of a coordinate cut. The following identity expresses this difference as a quadratic form in $f\circ\chi_i$.

\begin{lemma}[Cut value difference -- see \Cref{lem:cut-symmetric-difference-identity}]
    \label{lem:overview-cut-symmetric-difference}
    Let $f$ be a function that satisfies $\cP^*$, and let $i \in [d]$. Then, denoting by $f \circ \chi_i$ the entrywise product between $f$ and $\chi_i$, we have
    \begin{equation*}
        \langle f, L_{H} f\rangle - \langle \chi_i, L_{H} \chi_i\rangle = \left\langle f \circ \chi_i, \left(2 L_{H}^{(i)} - L_{H}\right) (f \circ \chi_i)\right\rangle
    \end{equation*}
    for both $H=Q_d$ and $H=G^*$.
\end{lemma}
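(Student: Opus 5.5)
The plan is to prove the identity by expanding both sides edge by edge, splitting the edges of $H$ according to whether they cross the coordinate cut $\chi_i$. Only the constraint $f(x)^2=1$ of $\cP^*$ will be used. Balance and the cut-value constraint play no role, and the argument treats $H=Q_d$ and $H=G^*$ (indeed any subgraph $H$ of $Q_d$) identically. Since the identity is used later inside a sum-of-squares argument, I will check that it is a polynomial identity of degree $2$ in the variables $f(x)$, modulo the axioms $f(x)^2=1$.

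First I fix normalizations consistent with the proof sketch of \Cref{lem:overview-coordinate-cut-concentration}: for any $h$,
\[
\langle h, L_H h\rangle = \frac{1}{2^d}\sum_{\{x,y\}\in E(H)} (h(x)-h(y))^2 .
\]
This is consistent with $\langle \chi_i, L_{Q_d}\chi_i\rangle = 2$, since exactly $2^{d-1}$ edges each contribute $4$. Let $E_i(H)$ be the edges of $H$ in direction $i$, which are exactly the edges crossing the cut $\chi_i$, and let $E_{\neq i}(H)=E(H)\setminus E_i(H)$. Then $L_H^{(i)}$ is the Laplacian of the edge set $E_{\neq i}(H)$, and $L_H - L_H^{(i)}$ is the Laplacian of $E_i(H)$. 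Hence
\[
2L_H^{(i)}-L_H = L_{E_{\neq i}} - L_{E_i},
\]
and the right-hand side of the identity equals $2^{-d}\bigl[\sum_{E_{\neq i}} (g(x)-g(y))^2 - \sum_{E_i}(g(x)-g(y))^2\bigr]$ with $g=f\circ\chi_i$.

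Next I evaluate $g$ on each edge type.
\begin{itemize}
\item On an edge of $E_{\neq i}$ we have $\chi_i(x)=\chi_i(y)$, so $(g(x)-g(y))^2=(f(x)-f(y))^2$.
\item On an edge of $E_i$ we have $\chi_i(y)=-\chi_i(x)$, so $g(x)-g(y)=\chi_i(x)(f(x)+f(y))$, and its square is $(f(x)+f(y))^2$.
\end{itemize}
The left-hand side is $2^{-d}\bigl[\sum_{E_{\neq i}}(f(x)-f(y))^2+\sum_{E_i}(f(x)-f(y))^2\bigr] - 2^{-d}\cdot 4|E_i(H)|$. The last term is $\langle\chi_i,L_H\chi_i\rangle$, because $\chi_i$ differs by $2$ exactly across the edges of $E_i(H)$.

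Subtracting, the $E_{\neq i}$ terms cancel. For each edge of $E_i$ what remains is
\[
(f(x)-f(y))^2+(f(x)+f(y))^2-4 = 2f(x)^2+2f(y)^2-4,
\]
which vanishes under $f(x)^2=f(y)^2=1$. This proves the identity, and it holds as a degree-$2$ polynomial identity modulo the Booleanity axioms, so it is usable inside degree-$4$ SoS proofs. There is no real obstacle here. The only care needed is bookkeeping: keeping the normalization of $L_H$ and the sign of the direction-$i$ edges straight.
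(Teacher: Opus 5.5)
Your proof is correct and follows essentially the same route as the paper. Both arguments decompose $L_H$ into single-edge Laplacians, split the edges by whether they cross coordinate $i$, and check the identity edge by edge using only the Booleanity axioms, which gives a degree-$2$ certificate valid for any subgraph of $Q_d$. The only cosmetic difference is that the paper first rewrites every edge term as $\frac{1}{2^d}(2-2f(x)f(y))$ via $f(x)^2=1$, whereas you invoke Booleanity only on the crossing edges.
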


We apply \Cref{lem:overview-cut-value-concentration} to the Boolean function $g=f\circ\chi_i$, using the bounds for both $L_{Q_d}$ and \smash{$L_{Q_d}^{(i)}$}. Together with \Cref{lem:overview-cut-symmetric-difference}, this compares the excess cut values in $Q_d$ and $G^*$ with error at most $O(\epsilon)\langle f\circ\chi_i,L_{Q_d}(f\circ\chi_i)\rangle+1/d^{O(1)}$. With some additional work, we obtain the following lemma.

\begin{lemma}[Cut value in $Q_d$ vs in $G^*$ -- see \Cref{lem:cut-comparison-coordinate-correlation}]
\label{lem:overview-refined-cut-comparison}
With probability $1-o_d(1)$, for some small constant $\epsilon>0$, simultaneously for every function $f$ that satisfies $\cP^*$, and every $i\in[d]$, we have
\[
(1-\epsilon)\langle f,L_{Q_d}f\rangle \leq \frac1p\langle f,L_{G^*}f\rangle-2\epsilon+4\epsilon(1-\langle f,\chi_i\rangle^2)+\frac{1}{d^{O(1)}}.
\]
\end{lemma}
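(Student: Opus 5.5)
The plan is to apply the cut value difference identity (\Cref{lem:overview-cut-symmetric-difference}) to $H=Q_d$ and to $H=G^*$, and subtract the two. The coordinate-cut terms are handled by \Cref{lem:overview-coordinate-cut-concentration}. The remaining terms are quadratic forms in the Boolean function $g=f\circ\chi_i$, which \Cref{lem:overview-cut-value-concentration} controls. Finally, I re-express the error term $\langle g,L_{Q_d}g\rangle$ through $f$ and its degree-$1$ coefficient $\langle f,\chi_i\rangle$.

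\emph{Subtracting the two identities.} Fix $i\in[d]$, and write $L_H^{\times}=L_H-L_H^{(i)}$ for the Laplacian of the edges across coordinate $i$, so that $2L_H^{(i)}-L_H=L_H^{(i)}-L_H^{\times}$. We have $\langle\chi_i,L_{Q_d}\chi_i\rangle=2$, and with probability $1-o_d(1)$ we have $\frac1p\langle\chi_i,L_{G^*}\chi_i\rangle=2\pm d^{-O(1)}$ for all $i$. Subtracting the identity for $G^*$ (scaled by $1/p$) from the identity for $Q_d$ gives
\[
\langle f,L_{Q_d}f\rangle-\tfrac1p\langle f,L_{G^*}f\rangle
=2\Big(\langle g,L^{(i)}_{Q_d}g\rangle-\tfrac1p\langle g,L^{(i)}_{G^*}g\rangle\Big)-\Big(\langle g,L_{Q_d}g\rangle-\tfrac1p\langle g,L_{G^*}g\rangle\Big)\pm d^{-O(1)}.
\]
Since $g$ is Boolean, \Cref{lem:overview-cut-value-concentration} applies to both brackets, in its full and its coordinate-removed versions. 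Together with $\langle g,L^{(i)}_{Q_d}g\rangle\le\langle g,L_{Q_d}g\rangle$, this bounds the right-hand side by $3\epsilon\langle g,L_{Q_d}g\rangle+d^{-O(1)}$.

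\emph{Computing $\langle g,L_{Q_d}g\rangle$ in terms of $f$.} Along any edge not in direction $i$, the factor $\chi_i$ is constant, so $|g(x)-g(y)|=|f(x)-f(y)|$. This gives $\langle g,L^{(i)}_{Q_d}g\rangle=\langle f,L^{(i)}_{Q_d}f\rangle$. Along an edge in direction $i$, $\chi_i$ flips sign, so $g$ is cut exactly when $f$ is not. Since all direction-$i$ edges together contribute $\langle\chi_i,L^{\times}_{Q_d}\chi_i\rangle=2$, this gives $\langle g,L^{\times}_{Q_d}g\rangle=2-\langle f,L^{\times}_{Q_d}f\rangle$. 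The Fourier expansion gives $\langle f,L^{\times}_{Q_d}f\rangle=4\sum_{S\ni i}\langle f,\chi_S\rangle^2\ge4\langle f,\chi_i\rangle^2$. Combining,
\[
\langle g,L_{Q_d}g\rangle=\langle f,L_{Q_d}f\rangle+2-2\langle f,L^{\times}_{Q_d}f\rangle\le\langle f,L_{Q_d}f\rangle+2-8\langle f,\chi_i\rangle^2 .
\]

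\emph{Conclusion.} Substituting this bound, replacing $3\epsilon$ by $\epsilon$ (the constant in \Cref{lem:overview-cut-value-concentration} is at our disposal), and weakening $-8\epsilon\langle f,\chi_i\rangle^2$ to $-4\epsilon\langle f,\chi_i\rangle^2$ gives
\[
\langle f,L_{Q_d}f\rangle\le\tfrac1p\langle f,L_{G^*}f\rangle+\epsilon\langle f,L_{Q_d}f\rangle+2\epsilon-4\epsilon\langle f,\chi_i\rangle^2+d^{-O(1)}.
\]
Since $2\epsilon-4\epsilon\langle f,\chi_i\rangle^2=-2\epsilon+4\epsilon(1-\langle f,\chi_i\rangle^2)$, this is exactly the claimed inequality after moving $\epsilon\langle f,L_{Q_d}f\rangle$ to the left. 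All events used (\Cref{lem:overview-coordinate-cut-concentration} for all $i$, and \Cref{lem:overview-cut-value-concentration} uniformly over Boolean $g$ and all $i$) hold simultaneously with probability $1-o_d(1)$, so the bound holds for all $f$ and $i$ at once.

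\emph{Main obstacle.} The delicate step is the sign bookkeeping in the subtraction. The term $-L_H^{\times}$ enters with a negative sign, so the one-sided errors from \Cref{lem:overview-cut-value-concentration} must be combined so that every error is charged against $\langle g,L_{Q_d}g\rangle$ and not against something larger. One must also check that the exact direction-$i$ edge accounting yields the negative $\langle f,\chi_i\rangle^2$ term, since that term is what makes the bound useful later.
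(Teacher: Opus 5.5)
Your proof is correct and follows the paper's route. You subtract the two cut-difference identities, apply the concentration bound to $g=f\circ\chi_i$ for both the full and the coordinate-removed Laplacians, bound $\langle g,L_{Q_d}g\rangle$, and finish with coordinate-cut concentration; the only cosmetic difference is that you bound $\langle g,L_{Q_d}g\rangle$ through Fourier coefficients, whereas the paper applies Jensen to $\langle g,(L-L_{(i)})g\rangle$.

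There is one arithmetic slip. In fact $\langle f,L^{\times}_{Q_d}f\rangle=2\sum_{S\ni i}\langle f,\chi_S\rangle^2$, not $4\sum_{S\ni i}\langle f,\chi_S\rangle^2$, as your own value $\langle\chi_i,L^{\times}_{Q_d}\chi_i\rangle=2$ confirms. Your intermediate bound with $-8\langle f,\chi_i\rangle^2$ is therefore false: at $f=\chi_i$ it would give $0\le -4$. The corrected bound $\langle g,L_{Q_d}g\rangle\le\langle f,L_{Q_d}f\rangle+2-4\langle f,\chi_i\rangle^2$ is exactly \Cref{lem:symmetric-difference-energy-bound}, and it yields the claimed inequality directly, with no weakening step needed.
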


We now combine this refined comparison with the FKN theorem to obtain the desired cut-value bound in $Q_d$, and hence the quartic proxy in \Cref{eq:overview-quartic-proxy}.

\begin{lemma}[Inefficient quartic proxy]\label{lem:overview-inefficient-quartic-proxy}
With probability $1-o_d(1)$, every function $f$ that satisfies $\cP^*$ must also satisfy \smash{$\sum_{i=1}^d \langle f, \chi_i\rangle^4 \ge 1-1/d^{O(1)}$}.
\end{lemma}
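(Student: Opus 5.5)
We condition on the high-probability event of \Cref{lem:overview-refined-cut-comparison}; no further randomness is needed. Fix $f$ satisfying $\cP^*$. The plan is to bootstrap: first get a constant-accuracy bound on $\langle f,L_{Q_d}f\rangle$, use FKN to find a coordinate $i$ with $\langle f,\chi_i\rangle^2$ close to $1$, feed this back into the refined comparison to sharpen the cut bound, and iterate to inverse-polynomial accuracy.

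\textbf{Step 1 (crude bound).} Write $\delta_0 = 1/d^{O(1)}$ for the slack in the third constraint of $\cP^*$, so $\frac1p\langle f,L_{G^*}f\rangle \le 2+2\delta_0$. Apply \Cref{lem:overview-refined-cut-comparison} with any $i$, bounding $4\epsilon(1-\langle f,\chi_i\rangle^2)\le 4\epsilon$. This gives
\[
(1-\epsilon)\langle f,L_{Q_d}f\rangle\le 2+2\epsilon+O(\delta_0),
\]
so $\langle f,L_{Q_d}f\rangle\le 2(1+\eta_0)$ with $\eta_0=O(\epsilon)$. By \Cref{lem:overview-cut-energy-fourier-concentration}, $\sum_j\langle f,\chi_j\rangle^2\ge 1-2\eta_0$. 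FKN then yields an index $i$ with $1-\langle f,\chi_i\rangle^2\le C\eta_0$.

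\textbf{Step 2 (sharpening).} Suppose $t:=1-\langle f,\chi_i\rangle^2$. Plugging this $i$ into \Cref{lem:overview-refined-cut-comparison} gives
\[
(1-\epsilon)\langle f,L_{Q_d}f\rangle\le 2-2\epsilon+4\epsilon t+O(\delta_0),
\]
hence $\langle f,L_{Q_d}f\rangle\le 2+\frac{4\epsilon}{1-\epsilon}t+O(\delta_0)$. Applying \Cref{lem:overview-cut-energy-fourier-concentration} with $\eta=\frac{2\epsilon}{1-\epsilon}t+O(\delta_0)$ and then FKN gives a new bound
\[
t'\le C\Bigl(\tfrac{4\epsilon}{1-\epsilon}t+O(\delta_0)\Bigr)
\]
for some index, which for $t<$ a small constant must be the same $i$. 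Since $\epsilon$ is a sufficiently small constant, $4C\epsilon/(1-\epsilon)<1/2$, so the map is a contraction. A cleaner route than iterating is to apply it once self-consistently: with $t=1-\langle f,\chi_i\rangle^2$ for the fixed $i$, FKN applied to the same $i$ gives $t\le \frac12 t+O(\delta_0)$, hence $t=O(\delta_0)$. The uniqueness point should be justified by noting that at most one $j$ can have $\langle f,\chi_j\rangle^2>1/2$ by Parseval.

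\textbf{Step 3 (conclusion).} We now have $\langle f,\chi_i\rangle^2\ge 1-O(\delta_0)$, hence $\sum_j\langle f,\chi_j\rangle^4\ge\langle f,\chi_i\rangle^4\ge 1-O(\delta_0)=1-1/d^{O(1)}$.

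\textbf{Main obstacle.} The delicate step is the self-consistency in Step 2: FKN's constant $C$ must beat the $\epsilon$ from the concentration lemma. This is why the lemma needs $\epsilon$ to be a sufficiently small constant, which it is presumably made by taking $C$ in $p\ge C\log d/d$ large. It also requires that the index produced by FKN is the same $i$ used in the comparison; the Parseval argument above settles this once $t$ is below a small constant, which Step 1 guarantees.
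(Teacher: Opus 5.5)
Your argument is correct and is essentially the paper's: you combine the refined comparison, the cut-energy bound and FKN, and absorb the $O(\epsilon)$-weighted excess using that $\epsilon$ is small relative to the FKN constant. You absorb in terms of $t=1-\langle f,\chi_i\rangle^2$ where the paper absorbs in terms of $\langle f,L_{Q_d}f\rangle-2$, which is equivalent; the paper also takes $i$ to be the coordinate maximizing $\langle f,\chi_i\rangle^2$ from the outset, so FKN's bound transfers to $i$ automatically and your crude Step~1 and the Parseval uniqueness check become unnecessary.
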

\begin{proof}[Proof sketch]
Let $\chi_i$ be the coordinate cut with the largest agreement with $f$. By \Cref{lem:overview-refined-cut-comparison} and the third constraint in $\cP^*$ (see \Cref{eq:overview-cut-constraints}), we have
\begin{align*}
(1-\epsilon)\langle f, L_{Q_d} f\rangle & \leq \frac{1}{p}\langle f, L_{G^*} f\rangle -2\epsilon + 4\epsilon (1-\langle f,\chi_i\rangle^2) + \frac{1}{d^{O(1)}} \\
 &\leq 2\left(1+\frac{1}{d^{O(1)}}\right)  -2\epsilon + 4\epsilon (1-\langle f,\chi_i\rangle^2) \,.
\end{align*}
We can apply \Cref{lem:overview-cut-energy-fourier-concentration} with $\eta=\frac{1}{2}(\langle f, L_{Q_d} f\rangle-2)$ to get the bound $\sum_{j=1}^d\langle f,\chi_j\rangle^2 \ge 1-2\eta$. Then, by the FKN theorem and our choice of $i$, we have $\langle f, \chi_i\rangle^2 \ge 1- O(\eta) = 1-O(\langle f, L_{Q_d} f\rangle-2)$. Hence, we continue from above and write
\begin{align*}
(1-\epsilon)\langle f, L_{Q_d} f\rangle 
\leq 2\left(1+\frac{1}{d^{O(1)}}\right)  -2\epsilon + \epsilon \cdot O((\langle f, L_{Q_d} f\rangle-2))\,.
\end{align*}
Subtracting $2(1-\epsilon)$ from both sides and choosing $\epsilon$ sufficiently small, we can absorb the term proportional to $\epsilon(\langle f,L_{Q_d}f\rangle-2)$ into the left-hand side. This gives the desired cut-value bound:
\begin{align*}
\langle f, L_{Q_d} f\rangle  \leq 2\left(1+\frac{1}{d^{O(1)}}\right) \,.
\end{align*}
Finally, a simple application of \Cref{lem:overview-cut-energy-fourier-concentration} with $\eta=1/d^{O(1)}$ gives that $\sum_{j =1}^d\langle f, \chi_j \rangle^2 \ge 1-1/d^{O(1)}$. Then, one final application of the FKN theorem gives $\langle f, \chi_i \rangle^2 \ge 1-1/d^{O(1)} $, which trivially implies $\sum_{j=1}^d \langle f, \chi_j\rangle^4 \ge 1-1/d^{O(1)}$.
\end{proof}

\paragraph{Argument SoS-ization and sources of inefficiency.} In the way we have written the above arguments, one can see that most of the steps are easily written as degree-$2$ or degree-$4$ sum-of-squares proofs. Even the basic Fourier-analytic statements that we need, such as Parseval's identity and the Fourier expansion of cut values, are captured by the sum-of-squares proof system quite straightforwardly. There are, however, two crucial steps for which it is not a priori clear that a low-degree sum-of-squares equivalent can be found: the proof of the cut value approximation in \Cref{lem:overview-cut-value-concentration}, and the application of the FKN theorem in the proof of \Cref{lem:overview-inefficient-quartic-proxy}. We discuss these issues and how to address them in \Cref{sec:overview-spectral-concentration} and \Cref{sec:overview-sos-fkn}, respectively.

\input{overview-spectral-concentration.tex}

\input{overview-sos-fkn.tex}

\input{overview-sos-certificate.tex}

%% file: overview-spectral-concentration.tex
\subsubsection{Cut value concentration}
\label{sec:overview-spectral-concentration}
As written, \Cref{lem:overview-cut-value-concentration} resembles the form of the intermediate lemmas used by~\cite{kapralov2026recovering}. Naively, a certifiable version of this lemma would require checking exponentially many cuts. We instead certify it using a \emph{spectral} concentration theorem for the percolated hypercube.

\paragraph{Obstruction: atypical-degree vertices. }We note that, in the sparse regime $p = \Theta(\log d/d)$, the sampled graph $G$ is not a spectral sparsifier of the full cube. In fact, it is not even a cut sparsifier. For example, for a fixed vertex $v$, the probability that $v$ is isolated in $G$ is
\[
\Pr[\deg_G(v)=0] = (1-p)^d \approx \frac{1}{d^{O(1)}},
\]
and the probability that $v$ has, say, $\sqrt{d}$ incident edges in $G$ is
\[
\Pr[\deg_G(v) \ge \sqrt{d}] \ge p^{\sqrt{d}} \approx \frac{1}{2^{o(d)}},
\]
so there are vertices of each type with probability $1-o_d(1)$. Both types of atypical vertices create local errors: these singleton cuts deviate by much more than a $(1+\epsilon)$-factor from the expected value $pd$. This challenge appears also in the approach of \cite{kapralov2026recovering}. On a high level, they address it by taking advantage of the structure of the cuts for which they need to show concentration: they are cuts $f \circ \chi_i$ where $f$ is a balanced function of small cut value in the sample $G$. In particular, such cuts are not significantly affected by the local errors introduced by the atypical-degree vertices. However, to obtain an \emph{efficient} certificate, we cannot leverage the combinatorial properties of the cuts for which we need concentration.

\paragraph{Workaround: allowing additive error.}  To sidestep the obstruction, we allow an additive error supported on a small set of atypical-degree vertices. We can then show that the Laplacian of the sampled graph is spectrally close to its expectation $pL_{Q_d}$, up to such additive error. This yields a polynomial-time checkable condition that certifies the needed cut estimates.

\begin{lemma}[Spectral concentration -- see \Cref{lem:spectral-concentration}]
\label{lem:overview-spectral-concentration}
For some small constant $\epsilon>0$, with probability $1-o_d(1)$ we have
\begin{equation*}
(1-\epsilon) p L_{Q_d} -  Z \preceq L_{G^*} \preceq  (1+\epsilon) p L_{Q_d}  + Z\, ,
\end{equation*}
where $Z =d^{2} P_{> 0} \Pi_R P_{> 0}$ is the additive error that we incur, $P_{> 0} = I - \frac{1}{2^d} \chi_\emptyset^{\otimes 2}$ is the projection orthogonal to the bottom eigenvector, and $\Pi_R$ is the matrix with $1$ on the diagonal for every vertex whose subsampled degree is $\Omega(\epsilon pd)$-far from its expected degree $p d$.
\end{lemma}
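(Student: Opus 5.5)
We plan to prove the two-sided bound by splitting $L_{G^*}-pL_{Q_d}$ into a degree part and an adjacency part, and handling each separately. Write $L_{G^*}=D_{G^*}-A_{G^*}$ and $pL_{Q_d}=pdI-pA_{Q_d}$. Then
\[
L_{G^*}-pL_{Q_d}=(D_{G^*}-pdI)-(A_{G^*}-pA_{Q_d}).
\]
Let $R$ be the set of vertices whose degree deviates from $pd$ by more than $c\epsilon pd$, for a small constant $c$.

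\textbf{Degree part.} On the typical vertices $\bar R$ we have $|D_{G^*}-pdI|\preceq c\epsilon pd\, I$. To convert this into a multiple of $pL_{Q_d}$, we use the Poincaré inequality on the hypercube: on the space orthogonal to the constant vector, $L_{Q_d}\succeq 2I$. This gives $pd\,I\preceq \frac d2\, pL_{Q_d}$, which loses a factor $d$. So the degree part must instead be handled in a form that does not pay this factor.

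\textbf{Combined local form.} We work with the quadratic form directly,
\[
\langle g,L_{G^*}g\rangle=\sum_{\text{edges }uv\in G}(g_u-g_v)^2 .
\]
For each direction $i$, the edges in direction $i$ form a perfect matching $M_i$ of $Q_d$. The sampled Laplacian is $\sum_i \sum_{e\in M_i}\xi_e L_e$, with independent $\xi_e\sim\mathrm{Bern}(p)$. Its deviation from expectation, $\sum_e(\xi_e-p)L_e$, is a sum of independent random matrices, and each has norm at most $2$.

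Since the expectation $pL_{Q_d}$ has smallest nonzero eigenvalue $2p$, matrix Chernoff relative to $pL_{Q_d}$ would need $p\gtrsim \log n$. That requirement is false here, because $\log n=d$. The fix is to conjugate by $(pL_{Q_d}+\lambda I)^{-1/2}$ and handle high and low degree vertices separately.

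\textbf{Concrete route.} Decompose $g=g\,\ind_{\bar R}+g\,\ind_R$ and bound three pieces.

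First, the edges with an endpoint in $R$. Each such edge contributes at most $2(g_u^2+g_v^2)$. Degrees are at most $d$, so these edges contribute $O(d)\sum_{v\in R}g_v^2$ on both sides. This is absorbed by $Z$, since $d^2\ge O(d)$. The $P_{>0}$ projection only matters for the constant direction, where both Laplacians vanish.

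Second, the induced graph on $\bar R$. Here we want a bound of the form
\[
\bigl\|\bigl(A_{G^*}-pA_{Q_d}\bigr)_{\bar R}\bigr\|\lesssim \epsilon p\,\cdot(\text{something small}),
\]
in an appropriate norm. I would use a Fourier/level decomposition. For functions at Fourier level $k$, $\langle g,pL_{Q_d}g\rangle=2pk\|g\|^2$. So the tolerance we can afford scales with $k$, and for $k\ge \epsilon^{-1}$ a crude bound $\|A_{G^*}-pA_{Q_d}\|_{\bar R}\lesssim \sqrt{pd}$ suffices provided $\sqrt{pd}\le \epsilon pk$. This is where the degree truncation enters: the adjacency of a graph with bounded degrees $\approx pd$ has centered spectral norm $O(\sqrt{pd})$ on typical vertices. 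I would establish that bound via a trace or moment method on $Q_d$, or via the Bandeira–van Handel bound with max variance $pd$.

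Third, the low levels $k<\epsilon^{-1}\cdot\sqrt{d/p}$. Here we exploit that low-degree functions are smooth. For them, $\langle g,L_{G^*}g\rangle$ is concentrated direction by direction: $g$ restricted along direction $i$ has derivative $\partial_i g$, with $\sum_i\|\partial_i g\|^2=\Theta(k)\|g\|^2$. The random weighting
\[
\sum_i\sum_{e\in M_i}(\xi_e-p)(\partial_ig)_e^2
\]
is controlled by hypercontractivity, since $(\partial_i g)^2$ has degree $\le 2k$, together with a union bound over an $\epsilon$-net of the low-degree space. The dimension of that space is $d^{O(k)}$, which is affordable against Chernoff tails of size $\exp(-\epsilon^2 pd\cdot\ldots)$ only when $k=O(1)$.

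Cross terms between pieces are handled by Cauchy–Schwarz with an $\epsilon$-slack.

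\textbf{Main obstacle.} The intermediate levels, $1/\epsilon\lesssim k\lesssim \sqrt{d/p}$, are the hardest step. There the net is too large and the $\sqrt{pd}$ norm bound is too weak. My first attempt would be a refined norm bound: show that after removing $R$, the matrix
\[
(pL_{Q_d}+p I)^{-1/2}\,(L_{G^*}-pL_{Q_d})_{\bar R}\,(pL_{Q_d}+pI)^{-1/2}
\]
has norm $\le\epsilon$. The plan is a high trace moment $\mathrm{tr}(M^{2q})$ with $q\approx \log n$, expanded over closed walks in $Q_d$, using the fact that the resolvent decays exponentially in Hamming distance so that long walks are suppressed. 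The $d^2$ slack in $Z$ is generous, and I would use it to absorb any residual additive losses of order $d$.
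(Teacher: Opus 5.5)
There is a genuine gap. Write $L=L_{Q_d}$, $L'=L_{G^*}$ and $U=\bar R$. Your plan covers only levels $k=O(1)$ and levels $k\gtrsim\epsilon^{-1}\sqrt{d/p}$. Everything in between, which is almost all of $[1,d]$ when $pd=\Theta(\log d)$, is left to a speculative trace-moment argument. The missing idea is that a single threshold $r=\lfloor d/100\rfloor$ serves both sides, so no intermediate range arises.

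\emph{Below $r$.} The span of $\{\chi_S\}_{|S|\le r}$ has dimension $k\le n^{1/10}$. After conjugating by $(P_{\le r}LP_{\le r})^{\dagger/2}$, each edge term $(\xi_e-p)L_e$ therefore has norm at most $2k/n$. Matrix Bernstein then gives $\pm P_{\le r}(L'-pL)P_{\le r}\preceq\gamma p\,P_{\le r}LP_{\le r}$ with failure probability $2n\exp(-\gamma^2p^2n^{0.9}/8)$, with no trimming, nets or hypercontractivity (\Cref{lem:low-degree-spectral-approx}). Your conclusion that only $k=O(1)$ is affordable comes from per-vertex tails $\exp(-\epsilon^2pd)$. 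For a function in a subspace of dimension $D\ll n$, no edge carries more than a $2D/n$ fraction of its energy, so the tails are $\exp(-\Omega(\gamma^2pn/D))$.

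\emph{Above $r$.} Here $\epsilon pL\succeq\epsilon pd/100$, so you only need $\|\Pi_U(L'-pL)\Pi_U\|\le2\gamma pd$ with $\gamma=\epsilon/1000$. That is a small constant times $pd$, not $\sqrt{pd}$. Bandeira--van Handel would not give you $\sqrt{pd}$ anyway: with $\log n\asymp d$ it yields order $\sqrt{pd}+\sqrt d$, and it needs independent entries, which degree trimming destroys. The paper obtains $2\gamma pd$ in two steps (\Cref{lem:good-degree-cut-concentration,lem:converse-expander-mixing}):
\begin{itemize}
\item a discrepancy bound $\bigl||E(S,T)|-p|H_d(S,T)|\bigr|\le\eta pd\sqrt{|S||T|}$ for all disjoint $S,T\subseteq U$, proved by splitting into connected components and handling three cases: imbalanced pairs deterministically from the degree bound on $U$; small pieces by edge-isoperimetry; large connected pieces by Bernstein against the count $n(\mathrm{e}d)^t$ of connected sets of size $t$, which is where $pd\ge C\log d$ enters;
\item the Bilu--Linial converse expander mixing lemma, which turns this discrepancy bound into the operator-norm bound.
\end{itemize}

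Your handling of $R$ and of cross terms also fails as stated.

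\emph{Edges at $R$.} An edge $uv$ with $u\in R$, $v\notin R$ contributes $g_v^2$ with $v\notin R$, which $d^2\Pi_R$ does not control. Cauchy--Schwarz gives $\tfrac{d^2}{2}\|g_R\|^2+32\|g_U\|^2$. The $\|g_U\|^2$ term can be absorbed only above level $r$, where $\|g\|^2\le\tfrac{100}{d}\langle g,Lg\rangle$. This is \Cref{lem:high-degree-spectral-lb}, and it is why trimming is used only on the high part.

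\emph{Low--high cross terms.} These cannot be handled by Cauchy--Schwarz with $\|L'-pL\|\le4d$. For $g_{\mathrm{low}}$ at level $1$, the $\epsilon pL$ slack is at most $2\epsilon p\|g_{\mathrm{low}}\|^2+2\epsilon pd\|g_{\mathrm{high}}\|^2$. This dominates $c\|g_{\mathrm{low}}\|\|g_{\mathrm{high}}\|$ only for $c\lesssim\epsilon p\sqrt d\ll d$. The paper needs two further estimates:
\begin{itemize}
\item $P_{\le r}(L'-pL)^2P_{\le r}\preceq4p\,P_{\le r}LP_{\le r}$, i.e.\ $\|(L'-pL)g_{\mathrm{low}}\|^2\le4p\langle g_{\mathrm{low}},Lg_{\mathrm{low}}\rangle$, proved by matrix Bernstein over vertices split by the sign of $\chi_{[d]}$ to get independence;
\item $\|P_{\le r}\Pi_RP_{\le r}\|\le d^{-10}$, i.e.\ low-degree functions put almost no mass on $R$, proved by Bernstein with bounded dependence applied to the Fourier coefficients of $\ind_R$.
\end{itemize}
These control the cross terms coming from both $L'-pL$ and $d^2\Pi_R$. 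A Schur-complement step then combines them with the low- and high-level bounds.
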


The proof of this fact is technical and we defer it to \Cref{sec:spectral-concentration}. We present here a very high-level overview.

\begin{proof}[Proof sketch]
By an application of Bernstein's inequality for symmetric matrices, one shows that $L_{G^*}$ is $(1\pm \epsilon)$-spectrally close to $p L_{Q_d}$ on the subspace of functions of Fourier degree at most $d/100$ (i.e. functions in the span of $\{\chi_S\}_{|S| \le d/100}$). Hence, we only need to focus on the subspace of functions of higher degree.
To do so, we show that the subgraph induced by well-behaved vertices is spectrally close to the corresponding subgraph in $Q_d$ up to additive error $\approx \log d$. More precisely, let $R$ be the set of vertices whose degree in $G^*$ deviates by more than $\Omega(\epsilon pd)$ from $pd$, and let $U = \{\pm 1\}^d \setminus R$. Then, one can show that the Laplacian of $G^*[U]$ is $\pm (\log d) I$ close to the Laplacian of $Q_d[U]$ rescaled by $p$ (where the induced subgraphs retain the original degrees). Now we observe that functions of degree higher than $d/100$ have quadratic form at least $\Omega(pd)$ with $p L_{Q_d}$. Hence, we can tolerate the $\log d$ error by choosing $C$ large enough in the condition $p \ge C \log d /d$. We note that the idea of ``removing'' vertices deviating from the mean is present in some spectral algorithms for stochastic block models in the constant-degree regime (e.g.,~\cite{sparse-sbm}), a setting that is morally similar to ours.
What is left is the contribution of the high-degree functions on the subgraph induced on $R$. We incur this as the additive error $Z$. The factor of $d^2$ can be thought of as coming from the maximum eigenvalue of the Laplacian matrices involved, which is $2d$ in our case. However, more nuances are involved in the formal argument, such as handling cross terms of mixed high- and low-degree functions.
\end{proof}

The spectral approximation in \Cref{lem:overview-spectral-concentration} yields
the following degree-$2$ sum-of-squares version of
\Cref{lem:overview-cut-value-concentration}.

\begin{lemma}[Efficient cut value concentration -- see the proof of \Cref{claim:sos-cut-value-concentration}]
\label{lem:overview-sos-cut-value-concentration}
With probability $1-o_d(1)$, there are degree-$2$ sum-of-squares proofs that the Booleanity constraints $\{g(x)^2=1, \,\forall x\in\{\pm1\}^d\}$ imply
\[
\begin{aligned}
\pm \left(\langle g,L_{Q_d}g\rangle
    -\frac{1}{p}\langle g,L_{G^*}g\rangle\right) \leq \epsilon\langle g,L_{Q_d}g\rangle
    +\frac{1}{d^{O(1)}}\,.
\end{aligned}
\]
Both choices of sign are included.
The same conclusion holds simultaneously for all $i\in[d]$ when $L_{Q_d},L_{G^*}$ are replaced by \smash{$L_{Q_d}^{(i)},L_{G^*}^{(i)}$}, respectively.
\end{lemma}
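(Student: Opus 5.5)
The plan is to derive both inequalities from the spectral sandwich of \Cref{lem:overview-spectral-concentration}. The key observation is that a matrix inequality $A \preceq B$ is itself a degree-$2$ sum-of-squares certificate: since $B-A$ is PSD, we can write $\langle g,(B-A)g\rangle=\sum_k \langle g,w_k\rangle^2$, which is a sum of squares of linear forms in $g$.

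From the right-hand inequality $L_{G^*}\preceq (1+\epsilon)pL_{Q_d}+Z$ we obtain, in degree $2$,
\[
\frac1p\langle g,L_{G^*}g\rangle-\langle g,L_{Q_d}g\rangle\le \epsilon\langle g,L_{Q_d}g\rangle+\frac1p\langle g,Zg\rangle .
\]
The left-hand inequality gives the other sign in the same way. It therefore remains to bound $\frac1p\langle g,Zg\rangle$ by $1/d^{O(1)}$ in degree $2$ using Booleanity.

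\textbf{Bounding the error term.} Write $P_{>0}g = g-\langle g,\chi_\emptyset\rangle\chi_\emptyset$. Then
\[
\langle g,Zg\rangle=d^2\,\E_x[\ind_R(x)(P_{>0}g)(x)^2].
\]
Using $(a-b)^2\le 2a^2+2b^2$, which is a degree-$2$ SoS fact, we get
\[
\langle g,Zg\rangle\le 2d^2\,\E_x[\ind_R(x)g(x)^2]+2d^2\frac{|R|}{2^d}\langle g,\chi_\emptyset\rangle^2 .
\]
The first term equals $2d^2|R|/2^d$ by the constraints $g(x)^2=1$, applied linearly. For the second term, $\langle g,\chi_\emptyset\rangle^2\le \E_x g(x)^2=1$ by Cauchy--Schwarz, and the gap $\E_x g(x)^2-\langle g,\chi_\emptyset\rangle^2$ is a variance, which is a PSD quadratic form. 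So
\[
\frac1p\langle g,Zg\rangle\le \frac{4d^2}{p}\cdot\frac{|R|}{2^d}
\]
holds with a degree-$2$ proof.

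\textbf{Probabilistic step (main obstacle).} We need $|R|/2^d\le d^{-c}$ for a large enough constant $c$ with probability $1-o_d(1)$; since $1/p\le d$, this makes the error $d^{3-c}$. For a fixed vertex, the degree is $\mathrm{Bin}(d,p)$ with mean $pd\ge C\log d$. A Chernoff bound gives a deviation probability of at most $\exp(-\Omega(\epsilon^2 pd))\le d^{-\Omega(\epsilon^2 C)}$, so $\E|R|/2^d\le d^{-\Omega(\epsilon^2C)}$. Markov's inequality then gives $|R|/2^d\le d^{-\Omega(\epsilon^2C)+1}$ with probability $1-1/d$. The delicate point is that $C$ must be taken large relative to $1/\epsilon^2$ and to the exponent $2$ in $Z=d^2P_{>0}\Pi_RP_{>0}$. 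The threshold defining $R$ must also be consistent with the one used in \Cref{lem:overview-spectral-concentration}.

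\textbf{The restricted Laplacians.} For the statement with $L^{(i)}_{Q_d},L^{(i)}_{G^*}$, I would use the version of \Cref{lem:overview-spectral-concentration} for the edge set with direction $i$ removed. This is a percolated graph of the same type: it is $2^{d-1}$-fold pairs of $(d-1)$-cubes, or equivalently the cube minus one direction, with all degrees reduced by at most one. The same spectral argument applies with an error matrix $Z^{(i)}$ of identical form, where $R$ can be taken to contain the previous $R$ up to threshold slack. Alternatively, I could write $L^{(i)}_H=L_H-L_H^{[i]}$, where $L^{[i]}_H$ is the Laplacian of the direction-$i$ matching. I would then bound the matching part separately: $\frac1p L^{[i]}_{G^*}$ versus $L^{[i]}_{Q_d}$ is diagonal-block, with each $2\times2$ block either $L_{K_2}$ or $0$. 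That route needs more care, so I would take the first, by a union bound over $i\in[d]$ with failure probability $o(1/d)$ each. This is obtained by strengthening the Markov step to $1/d^2$ at the cost of one more power of $d$ in $c$.
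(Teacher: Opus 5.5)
Your proposal is correct and follows the paper's argument essentially step for step. You read the spectral sandwich of \Cref{lem:spectral-concentration} as a degree-$2$ certificate, you bound the error term by $\frac{d^2}{p}\cdot\frac{4|R|}{2^d}$ using Booleanity and Jensen, and you handle $L^{(i)}_{Q_d},L^{(i)}_{G^*}$ by applying the spectral lemma separately to the two $(d-1)$-dimensional subcubes $\{x_i=\pm1\}$.

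The one point to tighten is the union bound over $i$. What must fail with probability $o(1/d)$ per subcube is the entire spectral sandwich, not just the Markov bound on $|R|$. This is exactly what the $1-d^{-C_1}$ guarantee of \Cref{lem:spectral-concentration} provides once $C_1$ is taken large.
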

\begin{proof}[Proof sketch]
We simply apply \Cref{lem:overview-spectral-concentration}, observing that the quadratic form of $g$ with $Z$ is at most $1/\poly(d)$ with high probability. To get the same conclusion for $L_{Q_d}^{(i)}$, we note that it consists of two disjoint $(d-1)$-dimensional hypercubes. Hence, we separately apply \Cref{lem:overview-spectral-concentration} to these two copies of $Q_{d-1}$.
\end{proof}

%% file: overview-sos-fkn.tex
\subsubsection{FKN theorem}
\label{sec:overview-sos-fkn}
The FKN theorem plays a crucial role both in
\cite{kapralov2026recovering} and in the inefficient argument of
\Cref{lem:overview-inefficient-quartic-proxy}. Under the hypothesis that a Boolean-valued function
$f$ has almost all of its Fourier mass on the degree-$1$ characters,
the theorem guarantees that $f$ is close to one of the dictators
$\pm\chi_{i^*}$. This conclusion singles out an unknown coordinate
$i^*$ and is therefore not naturally expressible as a single
low-degree polynomial inequality in the sum-of-squares proof system.

Instead, we use an intermediate step in the proof of the FKN
theorem: large degree-$1$ Fourier mass forces the
quartic quantity $\sum_{i\in[d]}\langle f,\chi_i\rangle^4$
to be large; see, for example,~\cite{MR3443800-ODonnell14}. This
quartic conclusion serves as a polynomial proxy for the existence of a
single highly correlated dictator. We prove that this proxy admits a
degree-$4$ sum-of-squares derivation from the following system of
polynomial constraints:
\[\cA= \begin{cases}
f(x)^2 = 1\,, \forall x \in \{\pm 1\}^d\,, \\
\sum_{i \in [d]} \langle f,\chi_i\rangle^2\ge 1-\delta\,,\\
\delta \ge 0\, .
\end{cases}\]

\begin{theorem}[Sum-of-squares FKN theorem -- see \Cref{thm:sos-fkn}]
\label{thm:overview-sos-fkn}
There is a degree-$4$ sum-of-squares proof that $\cA$ implies $\sum_{i \in [d]} \langle f,\chi_i\rangle^4\ge 1-O(\delta)$.
\end{theorem}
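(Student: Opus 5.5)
The plan is to SoS-ize the standard variance argument behind FKN (as in O'Donnell's book). Throughout, $a_i=\langle f,\chi_i\rangle$ and $a_\emptyset=\langle f,\chi_\emptyset\rangle$ are linear forms in the variables $\{f(x)\}$. Set $\ell=\sum_i a_i\chi_i$, $W=\sum_i a_i^2$, and $h=f-\ell$. I would record first that Parseval, $\|f\|^2=\sum_S\langle f,\chi_S\rangle^2$, holds as a polynomial identity in the variables $f(x)$. Together with $f(x)^2=1$ this gives $\|h\|^2=1-W\le\delta$ as a degree-$2$ consequence of $\cA$.

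\textbf{Step 1 (reduce to a variance bound).} Expanding $\ell^2$ in the Fourier basis gives the polynomial identity
\[
\E_x\big[(\ell(x)^2-W)^2\big]=2\sum_{i\ne j}a_i^2a_j^2=2\Big(W^2-\sum_i a_i^4\Big),
\]
so $\sum_i a_i^4= W^2-\tfrac12\E[(\ell^2-W)^2]$. Since $\cA$ gives $W\ge 1-\delta$, $W\le 1$ (from Parseval and $f^2=1$), and $\delta\ge0$, we can derive $W^2\ge 1-2\delta$ in degree $4$ by multiplying $W-(1-\delta)\ge0$ by $W+(1-\delta)$; the latter is nonnegative after adding a small multiple of $\delta$, or in the regime $\delta\le 1$, which is the only nontrivial one. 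It then suffices to derive $\E[(\ell^2-W)^2]\le O(\delta)$. Because $\E[(\ell^2-W)^2]\le\E[(\ell^2-1)^2]$ (the mean minimizes squared deviation, and this is an exact identity plus the square $(1-W)^2$), it is enough to show $\E[(\ell^2-1)^2]\le O(\delta)$.

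\textbf{Step 2 (use Booleanity pointwise).} Using $f(x)^2=1$, we have pointwise
\[
\ell^2-1=\ell^2-f^2=-(f-\ell)(f+\ell)=-h(2f-h).
\]
Hence $(\ell^2-1)^2=h^2(2f-h)^2$, and by the SoS inequality $(u+v)^2\le 2u^2+2v^2$ this is at most $8h^2+2h^4$. Everything has degree $4$ in the $f$-variables. The term $\E[8h^2]\le 8\delta$ is immediate. What remains is to control $\E[h^4]$, or equivalently $\E[h^2\ell^2]$ after rewriting $h^2 = 1 - 2f\ell + \ell^2$ (since $\E[h^2(1-2f\ell+\ell^2)]$ can be traded against $\E[h^2\ell^2]$ using $f^2=1$).

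\textbf{Step 3 (main obstacle).} The crux is the term $\E[h^2\ell^2]$, which should be $O(\delta)$ but is not bounded by $\|h\|^2$ alone. In the classical proof this is handled by Bonami hypercontractivity for the degree-$1$ function $\ell$, i.e.\ $\E\ell^4\le 9W^2$, combined with $|f|=1$. I would try to write this step as follows. Use the SoS hypercontractivity of~\cite{MR2961513-Barak12} for low-degree polynomials, in the explicit form
\[
\E\ell^4=3W^2-2\sum_i a_i^4.
\]
Then expand $\E[h^2\ell^2]=\sum_{i,j}a_ia_j\E[h^2\chi_i\chi_j]$ and use $h^2=1-2f\ell+\ell^2$ to express everything through the quantities $\E[f\ell^3]$, $\E\ell^4$, and $W$. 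The point is that $\E[f\ell^3]$ can be bounded via Cauchy--Schwarz in SoS form,
\[
2f\ell^3\le \lambda f^2\ell^2+\lambda^{-1}\ell^4=\lambda\ell^2+\lambda^{-1}\ell^4,
\]
with $\lambda$ tuned so that all constant-order terms cancel and what remains is linear in $1-W\le\delta$ and in $W^2-\sum a_i^4$. The latter quantity can be absorbed into the left-hand side of Step 1, because it appears there with a fixed positive coefficient.

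I expect getting this cancellation exact, so that the leftover error is $O(\delta)$ rather than $O(\sqrt\delta)$ or $O(1)$, to be the delicate part. If the direct expansion fails, the fallback is to bound the degree-$2$ Fourier projection of $\ell^2=1-2fh+h^2$ directly. The contribution of $fh$ is controlled by $\|fh\|^2=\|h\|^2\le\delta$, which needs only degree $2$. The $h^2$ contribution would then be handled through the same hypercontractive estimate on $\ell$.
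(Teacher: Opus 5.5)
Your Step~1 is sound and matches the end of the paper's argument: the identity $\E(\ell^2-1)^2=\E(\ell^2-W)^2+(1-W)^2$ together with $\tfrac12\E(\ell^2-W)^2=W^2-\sum_i a_i^4$. One small fix: $W+1-\delta\ge 0$ is not derivable from $\cA$. You do not need it, since $W^2-(2W-1)=(W-1)^2$ gives $W^2\ge 2W-1\ge 1-2\delta$ directly and keeps $\delta$ linear.

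The genuine gap is in Steps~2--3. The pointwise bound $(\ell^2-1)^2\le 8h^2+2h^4$ makes no progress. With $u=f\ell$, Booleanity gives $h^2=(1-u)^2$ and $\ell^2-1=u^2-1$, and $(1-u)^2\le 8+2(1+u)^2$ then yields $h^4\le 8h^2+2(\ell^2-1)^2$. So bounding $\E h^4$ by $O(\delta)$ is equivalent, up to constant factors and additive $O(\delta)$ terms, to the bound on $\E(\ell^2-1)^2$ you set out to prove. Step~3 does not break this circularity. Making $\E[h^2\ell^2]=W-2\E[f\ell^3]+\E\ell^4$ small requires a \emph{lower} bound $\E[f\ell^3]\ge 1-O(\delta)$. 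An AM--GM bound $\pm 2f\ell^3\le\lambda\ell^2+\lambda^{-1}\ell^4$ only gives $\E[h^2\ell^2]\le(1+\lambda)W+(1+\lambda^{-1})\E\ell^4$, whose leading terms are positive constants for every $\lambda>0$. No choice of $\lambda$ produces the cancellation you are hoping for.

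Your fallback is the right skeleton and is essentially what the paper does. Since $\ell^2$ has Fourier degree at most $2$, $\ell^2-1=\Pi_{\le 2}\big((\ell-f)(\ell+f)\big)$. The piece $\Pi_{\le2}\big((\ell-f)f\big)$ has squared norm at most $\E(\ell-f)^2f^2\le\delta$ by Parseval (a degree-$4$ step, not degree $2$). The remaining piece is where all the work is: $\Pi_{\le 2}(h^2)$ in your notation, or equivalently $\Pi_{\le2}(h\ell)$ since $h^2=hf-h\ell$. You leave it to an unspecified hypercontractive estimate, and none of the candidates work:
\begin{itemize}
\item Bounding it by $\|h^2\|^2=\E h^4$ is circular again.
\item The identity $\E\ell^4=3W^2-2\sum_i a_i^4$ involves only $\ell$ and cannot see $h$.
\item Invoking SoS Bonami--Beckner on a polynomial in the variables is exactly what the paper avoids, because it pushes the degree above $4$.
\end{itemize}

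The missing idea is explicit Fourier bookkeeping. Write $a_S=\langle f,\chi_S\rangle$. Then $h=\sum_{|S|\ne1}a_S\chi_S$ has no level-$1$ mass and $\ell$ is purely level $1$. Hence $\Pi_{\le2}(h\ell)$ has zero constant term, coefficient $a_\emptyset a_i+\sum_{j\ne i}a_{\{i,j\}}a_j$ on $\chi_i$, and coefficient $\sum_{k\ne i,j}a_{\{i,j,k\}}a_k$ on $\chi_{\{i,j\}}$. Degree-$4$ SoS Cauchy--Schwarz together with $\sum_j a_j^2\le 1$ bounds the sum of squares of these coefficients by $O\big(\sum_{|S|\ne1}a_S^2\big)=O(1-W)\le O(\delta)$, with no hypercontractivity at all. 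This gives $\E(\ell^2-1)^2\le O(\delta)$ and, combined with your Step~1, the theorem.
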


Our argument follows the outline of the proof of the FKN theorem in O'Donnell's book~\cite{MR3443800-ODonnell14}, but we crucially alter it in order not to use the Paley-Zygmund anti-concentration inequality, which is not amenable to sum-of-squares proofs. Instead, we manipulate the variance of the linear part of $f$ through a series of degree-$2$ or degree-$4$ polynomial inequalities. Another difference from the proof of O'Donnell is that we do not need to employ the Bonami-Beckner lemma. While we also had a version that uses the sum-of-squares proof of Bonami-Beckner~\cite{MR2961513-Barak12}, it leads to a proof with degree higher than $4$, so we do not include it here. The details of our proof are deferred to \Cref{sec:sos-fkn}.

%% file: overview-sos-certificate.tex
\subsubsection{Patching the proof outline with sum-of-squares proofs}
\label{sec:overview-sos-certificate}

We can now patch the proof outline presented in \Cref{sec:overview-quartic-proof-outline} by replacing the non-sum-of-squares steps with their efficiently certifiable counterparts that we presented in \Cref{sec:overview-spectral-concentration} and \Cref{sec:overview-sos-fkn}. Specifically, to complete the proof of \Cref{lem:overview-quartic-proxy}, we combine
the spectral certificate with our sum-of-squares FKN theorem.
We multiply each coordinate-specific cut comparison by
$\langle f,\chi_i\rangle^2$ and sum over $i$, obtaining
a degree-$4$ inequality involving
$\sum_i\langle f,\chi_i\rangle^4$; see \Cref{lem:cut-comparison-quartic-proxy}.
The sum-of-squares FKN theorem then bounds the quartic
deficit by the excess cut value in the full hypercube.
Choosing $\epsilon$ sufficiently small allows us to absorb
this error and obtain the desired inverse-polynomial bound.
The full argument appears in the proofs of
\Cref{lem:sos-cut-approximation,claim:sos-quartic-proxy-certificate}.

%% file: overview-degree-2-lower-bound.tex
\subsection{Limitations of the degree-2 relaxation}
\label{sec:overview-degree-2-lower-bound}

One may wonder whether the Fourier concentration underlying our algorithm can already be certified by the degree-$2$ relaxation. We show that this degree-$2$ relaxation does not provide the Fourier concentration needed by our analysis.
More precisely, in the regime $p=\Theta(\log d/d)$, its feasible region
contains a pseudoexpectation whose level-$1$ Fourier mass is at most
$1-\Omega(1/\log d)$.

\begin{theorem}[Degree-$2$ lower bound -- see \Cref{thm:degree-2-lower-bound}]
\label{thm:overview-degree-2-lower-bound}
For a large enough constant $C$ and \smash{$p = \frac{C \log d}{d}$}, with probability $1-o_d(1)$ over \smash{$G^* \sim Q_d(p)$}, there exists a degree-$2$ pseudoexpectation operator $\tilde\E$ that satisfies the system $\cP^*$ in \Cref{eq:overview-cut-constraints}, but $\tilde\E \sum_{i \in [d]}\langle f, \chi_i \rangle^2 \leq 1-\Omega(1/\log d)$.
\end{theorem}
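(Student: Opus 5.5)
Since a degree-$2$ pseudoexpectation on Boolean variables $f(x)$ is determined by its first moments $\tilde\E f(x)$ and a PSD moment matrix $M(x,y)=\tilde\E f(x)f(y)$ with unit diagonal, I will build $\tilde\E$ directly from a PSD matrix $M$ on $\{\pm1\}^d$. I take $\tilde\E f(x)=0$ for all $x$ and require $M\mathbf{1}=0$, so the balance constraint holds in the strong sense ($\tilde\E[q\cdot\sum_x f(x)]=0$ for every linear $q$). The cut constraint then becomes the linear condition $\frac{1}{2^d}\mathrm{tr}(L_{G^*}M)\le(1+d^{-O(1)})2p$. The level-$1$ mass is $\tilde\E\sum_i\langle f,\chi_i\rangle^2=\frac{1}{4^d}\sum_i\chi_i^\top M\chi_i$. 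The first step is to check that these conditions suffice for $\tilde\E$ to satisfy $\cP^*$ at degree $2$.

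The construction mixes the ideal solution with a local perturbation that exploits isolated or low-degree vertices of $G^*$. Start from $M_0=\frac{1}{d}\sum_i\chi_i\chi_i^\top$. It has unit diagonal, is orthogonal to $\mathbf{1}$, and has cut value exactly the average coordinate-cut value, which is $(1\pm d^{-O(1)})2p$ by \Cref{lem:overview-coordinate-cut-concentration}. To push level-$1$ mass down by $\Omega(1/\log d)$ without raising the cut value, I would use vectors that are cheap in $G^*$ but far from level $1$. The natural candidates are indicators of vertices of low degree in $G^*$, which occur with density $d^{-O(1)}$ when $p=C\log d/d$, and more generally functions supported on sets $B$ with few boundary edges. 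The parameter $1/\log d$ suggests the following: a vertex $x$ retains about $pd=C\log d$ edges, so the vector $e_x$ (suitably normalized, projected off $\mathbf{1}$) costs about $C\log d$ in $L_{G^*}$ per unit of diagonal, whereas $M_0$ costs about $2p$ per unit but uses diagonal $1$. So I would instead take $M=(1-\alpha)M_0+\alpha N$, where $N$ is a PSD matrix with unit diagonal and $N\mathbf{1}=0$ that has \emph{smaller} cut value than $2p$ per unit trace, or at least not larger, while having level-$1$ mass $o(1)$.

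The key step, and the main obstacle, is constructing such an $N$. I would aim for $N$ close to $\frac{1}{pd}$-scaled versions of $L_{G^*}^{+}$-like matrices, or more concretely a Gram matrix of vectors $u_x=\frac{1}{\sqrt{k}}\sum_{y\sim_{G^*}x}\dots$. The goal is that edges of $G^*$ contribute correlation $\approx 1-\Theta(1/(pd))$ between endpoints, as in the standard degree-$2$ embedding of a sparse graph as a low-distortion vector configuration. Concretely, I would try the vector embedding $x\mapsto \frac{1}{\sqrt d}\big(x_1 w_1,\dots,x_d w_d\big)$ with random weights depending on the percolation. In this embedding, edges of $G^*$ in direction $i$ are cheap when direction $i$ is ``heavily sampled'' locally, and the $\Theta(\log d)$ fluctuation of local edge counts around $pd$ buys the $1/\log d$ saving. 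Equivalently, I would reweight coordinates per vertex-neighborhood so that the cut value matches $2p$ while the vectors no longer align with global dictators. I expect the verification of PSD-ness together with the exact unit diagonal to be the hardest part. I would handle the diagonal by renormalizing row by row, and control the resulting error via the spectral bound of \Cref{lem:overview-spectral-concentration}, which also certifies that the cut-value change is within $\epsilon p$ on the bulk and $d^{-O(1)}$ on the atypical set $R$.

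Finally, I would compute the level-$1$ mass of $M$: it equals $(1-\alpha)+\alpha\cdot o(1)$. Choosing $\alpha=\Theta(1/\log d)$ subject to the cut budget then yields $\tilde\E\sum_i\langle f,\chi_i\rangle^2\le 1-\Omega(1/\log d)$, with all high-probability events holding together with probability $1-o_d(1)$.
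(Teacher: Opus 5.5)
Your reduction to a unit-diagonal PSD matrix $M$ with $M\mathbf{1}=0$ is fine. The plan built on it, however, has a genuine gap.

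\textbf{The required $N$ cannot exist.} You mix $M_0$, which sits at the cut budget with at most $d^{-O(1)}$ slack, with an $N$ of cut value at most $2p$ and level-$1$ mass $o(1)$. For any PSD $N$ with unit diagonal and $N\mathbf{1}=0$, set $m_1=4^{-d}\sum_i\chi_i^\top N\chi_i$. Expanding $L$ in characters gives $2^{-d}\mathrm{tr}(LN)=\sum_S 2|S|\,4^{-d}\chi_S^\top N\chi_S\ge 4-2m_1$. The degree-$2$ spectral certificate $L'\succeq(1-\epsilon)pL-Z$, where $2^{-d}\mathrm{tr}(ZN)=d^2|R|/2^d\le d^{2-C_2}$, then yields $\tfrac{1}{p}\cdot 2^{-d}\mathrm{tr}(L'N)\ge(1-\epsilon)(4-2m_1)-o(1)$. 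So staying within budget forces $m_1\ge 1-O(\epsilon)-o(1)$: every degree-$2$ solution in budget has level-$1$ mass near $1$. Mixing with $M_0$ only dilutes whatever deficit $N$ has, so the whole theorem is pushed into constructing $N$, which you leave open.

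\textbf{The spectral lemma is the wrong verification tool.} The same computation explains why it cannot certify your construction's cut value. Its error $\epsilon pL$ is a constant relative error, while the effect you must exhibit is a relative saving of order $1/(pd)=\Theta(1/\log d)$. The cut value has to be computed exactly.

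\textbf{The missing idea} is to first create slack strictly below the coordinate-cut value, and then spend it on a higher-degree character. In $Q_d$ one has $L\chi_i=2\chi_i$, so the embedding $x\mapsto x/\sqrt d$ underlying $M_0$ is a critical point of the cut value under perturbations $q(x)\perp x$. In $G^*$ it is not: the vector with entries $(L'\chi_i)(x)=2\,\ind(\{x,x^{\sim i}\}\in E)\,x_i$ has a component orthogonal to $x$ of norm $\Theta(\sqrt{pd})$ at typical vertices.

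The paper exploits this as follows.
\begin{itemize}
\item It takes $F'(x)=x/\sqrt d+\frac{r}{\sqrt2}q(x)$, where $q(x)$ is a unit vector orthogonal to $x$ pointing against this tangential component. It is made odd in $x$ by symmetrizing the edge indicators over $x$ and $-x$, so that $\sum_xF(x)=0$.
\item Then $\|F'(x)\|$ is the same for all $x$, so the unit diagonal needs only a global rescaling, not row-by-row renormalization.
\item In the exact expansion $\langle F',L'F'\rangle=\frac{1}{d}\langle x,L'x\rangle+\frac{r^2}{2}\langle q,L'q\rangle+\frac{\sqrt2 r}{\sqrt d}\langle x,L'q\rangle$, the cross term is $\le-\Omega(r\sqrt p)$ and the quadratic cost is $O(r^2pd)$.
\item Choosing $r=\Theta(1/(d\sqrt p))$ gives $\tfrac{1}{p}\langle X,L'\rangle\le 2-\Omega(1/(pd))$.
\end{itemize}
Your coordinate reweighting $x_iw_i(x)$ has this shape, with $w_i$ lowered on directions whose edge is present. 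But it is this first-order versus second-order balance that produces the $1/\log d$.

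Crucially, this $X$ loses only $O(r^2)$ level-$1$ mass. The $\Omega(1/(pd))$ deficit comes from a second step, $X^*=(1-\alpha)X+\alpha\,\chi_{\{1,2\}}\chi_{\{1,2\}}^\top$ with $\alpha=\Theta(1/(pd))$. The slack pays for it because $\tfrac{1}{p}\langle\chi_{\{1,2\}},L'\chi_{\{1,2\}}\rangle=O(1)$. In your final step the roles of the two mixture components are reversed.
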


The proof of this fact is technical and we defer it to \Cref{sec:degree-2-lower-bound}. We present here a very high-level overview.

\begin{proof}[Proof sketch]
The first step is to construct an SDP solution $X \succeq 0$ so that it acts orthogonally to the constant direction, its diagonal is all-ones, and its cut value in $G^*$ is small, i.e., $\frac{1}{p}\langle X, L_{G^*}\rangle \le 2-\Omega(1/\log d)$. Then, such $X$ satisfies $\cP^*$. Now, we can take a convex combination of $X$ with a character of constant degree but larger than $1$, say $\chi_{\{1,2\}}$. Note that \smash{$\frac{1}{p}\langle \chi_{\{1,2\}}^{\otimes 2}, L_{G^*} \rangle \approx \langle \chi_{\{1,2\}}^{\otimes 2}, L_{Q_d} \rangle = 4 $}. Thus, the matrix $X' = (1-\alpha) \cdot X + \alpha \cdot \chi_{\{1,2\}}^{\otimes 2}$ with $\alpha = \Omega(1/\log d)$ satisfies $\cP^*$, because $\frac{1}{p}\langle X', L_{G^*} \rangle \le (1-\alpha)(2-\Omega(1/\log d)) + O(\alpha) \le 2$. It remains to note that $X'$ has $\alpha=\Omega(1/\log d)$ Fourier mass on a character of degree $2$, and hence has at most $1-\Omega(1/\log d)$ mass on the coordinate cuts $\chi_i$.
\end{proof}

This result is in contrast with the algorithmic result in \Cref{thm:intro-main-recovery}: \Cref{lem:overview-quartic-proxy} shows that a degree-$4$ pseudoexpectation operator satisfies \smash{$\tilde\E \sum_{i \in [d]}\langle f, \chi_i \rangle^4 \ge 1-1/d^{O(1)}$}, while \Cref{thm:overview-degree-2-lower-bound} says that there are degree-$2$ pseudoexpectations that do not even satisfy the weaker inequality \smash{$\tilde\E \sum_{i \in [d]}\langle f, \chi_i \rangle^2 \ge 1-1/d^{O(1)}$}. In particular, one cannot hope to apply our approach to the degree-$2$ relaxation and recover the coordinate cuts up to an error better than $\Omega(1/\log d)$. We recall that when the recovery error for coordinate cuts is larger than $1/d$, we cannot draw any conclusion about the number of correctly recovered vectors, since it is possible that the errors spread and affect every single vertex.

%% file: preliminaries.tex
\section{Preliminaries}
\label{sec:preliminaries}

\paragraph{Notation.}
For $f,g:\{\pm 1\}^d \to \R$, we define $\langle f, g\rangle = \frac{1}{2^d} \sum_{x \in \{\pm 1\}^d} f(x) g(x) = \E_{x} f(x) g(x)$.
Correspondingly, we define \smash{$\Norm{f}^2 = \langle f, f\rangle$}.
We denote by $f \circ g$ the function $(f \circ g)(x) = f(x) g(x)$.
For $S \subseteq [d]$, we denote by $\chi_S : \{\pm 1\}^d \to \mathbb{R}$ the character function $\chi_S(x) = \prod_{ i \in S} x_i$.
We also denote, for convenience, $\chi_{\{i\}}$ by $\chi_i$.
For $x \in \{\pm 1\}^d$ and $i \in [d]$, we denote by
\(x^{\sim i}\) the vector obtained from \(x\) by flipping its \(i\)-th coordinate, i.e., $(x^{\sim i})_i =- x_i$, and $(x^{\sim i})_j = x_j $ for $j \neq i$.
For $i \in [d]$, we define the projection matrices  \smash{$ P_{\le i} = \frac{1}{2^d}\sum_{|S| \leq i}\chi_S^{\otimes 2}$} and \smash{$P_{\geq i} = \frac{1}{2^d}\sum_{|S| \geq i}\chi_S^{\otimes 2}$}.
The projection matrices $P_{< i}$ and $P_{>i}$ are defined analogously.

\subsection{Sum-of-squares proofs}

In this section we include a brief introduction to sum-of-squares proofs, taken from the corresponding section in~\cite{buhai2023beyond}.
For a more detailed exposition, see the lecture notes~\cite{barak2016proofs} or the monograph~\cite{MR4059250-Fleming19}.

\begin{definition}[Sum-of-squares proofs]
\label{def:sos-proof}
Let $p(x)$ and $q_1(x), ..., q_m(x)$ be polynomials over $x \in \mathbb{R}^n$ and let $\mathcal{A} = \{q_1(x) \geq 0, ..., q_m(x) \geq 0\}$ be a system of polynomial inequalities.
A \textit{sum-of-squares proof of degree $t$} that $p(x) \geq 0$ under $\mathcal{A}$ is an identity of the form
\[p(x) = \sum_{S \subseteq [m]} \left(\sum_{i=1}^{m_S} r_{S,i}(x)^2\right) \prod_{j\in S} q_j(x)\]
for polynomials $r_{S, i}(x)$, such that $\max_{S, i} \operatorname{deg}(r_{S, i}(x)^2 \prod_{j \in S} q_j(x)) \leq t$.
\end{definition}

If there exists a sum-of-squares proof of degree $t$ that $p(x) \geq 0$ under $\mathcal{A}$, we write $\mathcal{A} \sststile{t}{x} p(x) \geq 0$.
We also use the notation $\mathcal{A} \sststile{t}{x} p(x) \geq q(x)$ if $\mathcal{A} \sststile{t}{x} p(x) - q(x) \geq 0$ and $\mathcal{A} \sststile{t}{x} p(x) \leq q(x)$ if $\mathcal{A} \sststile{t}{x} q(x) - p(x) \geq 0$.
If $\mathcal{A}=\emptyset$, we omit it altogether and write $\sststile{t}{x} p(x) \geq 0$.
We also sometimes omit $\mathcal{A}$ if it is clear from context what axioms are assumed.
We note that if $\mathcal{A} \sststile{t}{x} p(x) \geq q(x)$ and $\mathcal{A} \sststile{t}{x} q(x) \geq r(x)$, then $\mathcal{A} \sststile{t}{x} p(x) \geq r(x)$, which allows writing chains of inequalities of the form $\mathcal{A} \sststile{t}{x} p(x) \geq s(x) \geq r(x)$.

\subsubsection{Pseudo-distributions and pseudo-expectations}

\begin{definition}[Pseudo-distributions]
A \textit{pseudo-distribution} $D$ of degree $t$ is a function from $\mathbb{R}^n$ to $\mathbb{R}$ with finite support such that $\sum_{x \in \operatorname{supp}(D)} D(x) = 1$ and $\sum_{x \in \operatorname{supp}(D)} D(x) p(x)^2 \geq 0$ for all polynomials $p(x)$ with $\operatorname{deg}(p^2) \leq t$.
\end{definition}

\begin{definition}[Pseudo-expectations]
Given a pseudo-distribution $D$ of degree $t$, the associated \textit{pseudo-expectation} $\tilde{\mathbb{E}}_{D(x)}$ is defined by $\tilde{\mathbb{E}}_{D(x)} f(x) = \sum_{x \in \operatorname{supp}(D)} D(x) f(x)$ for a function $f(x)$.
\end{definition}

We now define the notion of a pseudo-distribution that satisfies a set of polynomial inequalities.

\begin{definition}[Constrained pseudo-distributions]
A pseudo-distribution $D$ of degree $t$ \textit{satisfies} the set of polynomial inequalities $\mathcal{A} = \{q_1(x) \geq 0, ..., q_m(x)\geq 0\}$ if, for all $S \subseteq [m]$, $\tilde{\mathbb{E}}_{D(x)} r(x)^2 \prod_{j \in S} q_j(x) \geq 0$ for all polynomials $r(x)$ such that $\operatorname{deg}(r^2 \prod_{j \in S}q_j) \leq t$.

The pseudo-distribution $D$ \textit{approximately satisfies} $\mathcal{A}$ up to error $\eta$ if, under the same conditions as in the previous case, $\tilde{\mathbb{E}}_{D(x)} r(x)^2 \prod_{j \in S} q_j(x) \geq -\eta \|r^2\| \prod_{j \in S} \|q_j\|$, where $\|p(x)\|$ denotes the $2$-norm of the vector of coefficients of the polynomial $p(x)$.
\end{definition}

The connection between pseudo-distributions and sum-of-squares proofs is made in \Cref{fact:pseudoexpectation-soundness}, which shows that if a pseudo-distribution satisfies a set of polynomial inequalities, then it also satisfies any other polynomial inequalities derived from this set through sum-of-squares proofs.

\begin{fact}
\label{fact:pseudoexpectation-soundness}
If $D$ is a pseudo-distribution of degree $t$ that satisfies $\mathcal{A}$ and if $\mathcal{A} \sststile{s}{x} p(x) \geq 0$, then ${\tilde{\mathbb{E}}_{D(x)} r(x)^2 p(x) \geq 0}$ for all polynomials $r(x)$ such that $\operatorname{deg}(r^2) + s \leq t$.
If $D$ approximately satisfies $\mathcal{A}$ up to error $\eta$ and $p(x) = \sum_{S \subseteq [m]} \sum_{i=1}^{m_S} a_{S,i}(x)^2 \prod_{j \in S} q_j(x)$ is a degree-$s$ sum-of-squares proof, then for all polynomials $r(x)$ such that $\operatorname{deg}(r^2) + s \leq t$,
\[\tilde{\mathbb{E}}_{D(x)} r(x)^2 p(x) \geq -\eta \sum_{S \subseteq [m]} \sum_{i=1}^{m_S} \Norm{(r a_{S,i})^2} \prod_{j \in S} \Norm{q_j}\,.\]
\end{fact}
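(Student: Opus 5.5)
The first claim is the standard soundness argument. Let $p(x)=\sum_{S}\sum_i a_{S,i}(x)^2\prod_{j\in S}q_j(x)$ be the degree-$s$ identity witnessing $\mathcal{A}\sststile{s}{x} p(x)\ge 0$, and fix a polynomial $r$ with $\deg(r^2)+s\le t$. I will multiply the identity by $r(x)^2$ and apply $\tilde{\mathbb{E}}_{D(x)}$. By linearity of the pseudo-expectation this gives
\[
\tilde{\mathbb{E}}_{D(x)} r(x)^2p(x)=\sum_{S\subseteq[m]}\sum_{i=1}^{m_S}\tilde{\mathbb{E}}_{D(x)}\bigl(r(x)a_{S,i}(x)\bigr)^2\prod_{j\in S}q_j(x).
\]
Each summand has the form $\tilde{\mathbb{E}}_{D(x)}\, \rho(x)^2\prod_{j\in S}q_j(x)$ with $\rho=ra_{S,i}$.

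The only point that needs care is the degree bookkeeping, since the constraint-satisfaction definition only applies when $\deg(\rho^2\prod_{j\in S} q_j)\le t$. The proof has degree $s$, so $\deg(a_{S,i}^2\prod_{j\in S}q_j)\le s$. Degrees add under multiplication over $\mathbb{R}$ (there is no cancellation of leading forms, since $r^2$ has a nonzero top-degree part). Hence
\[
\deg\Bigl(r^2a_{S,i}^2\prod_{j\in S}q_j\Bigr)\le \deg(r^2)+s\le t.
\]
The definition of $D$ satisfying $\mathcal{A}$ therefore makes every summand nonnegative, and so $\tilde{\mathbb{E}}_{D(x)} r(x)^2p(x)\ge 0$. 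The case $S=\emptyset$ is covered by $D$ being a degree-$t$ pseudo-distribution, since then the term is simply $\tilde{\mathbb{E}}\,\rho^2$ with $\deg(\rho^2)\le t$.

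For the approximate statement, I will run the same expansion. The degree check above shows that the approximate-satisfaction hypothesis applies to each term with $\rho=ra_{S,i}$, giving
\[
\tilde{\mathbb{E}}_{D(x)}(ra_{S,i})^2\prod_{j\in S}q_j\ \ge\ -\eta\,\Norm{(ra_{S,i})^2}\prod_{j\in S}\Norm{q_j}.
\]
Summing over $S$ and $i$ yields exactly the claimed lower bound. No step here is a real obstacle; the only thing to verify carefully is the degree condition for every term, including the empty-product term.
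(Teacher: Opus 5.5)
Your proposal is correct and is the standard soundness argument: expand $r^2p=\sum_{S,i}(ra_{S,i})^2\prod_{j\in S}q_j$, use linearity of $\tilde{\mathbb{E}}_{D(x)}$, and apply the exact or approximate satisfaction condition termwise after checking $\deg\bigl((ra_{S,i})^2\prod_{j\in S}q_j\bigr)\le\deg(r^2)+s\le t$. The paper states this fact without proof, importing it from the cited sum-of-squares references; your degree check only needs $\deg(fg)\le\deg f+\deg g$, so the remark about no cancellation of leading forms is unnecessary.
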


Finally, \Cref{fact:pseudoexpectation-computation} shows that there exists an algorithm with time complexity $(n+m)^{O(t)}$ to compute a pseudo-distribution of degree $t$ that approximately satisfies $\mathcal{A}$ up to error $2^{-n^{\Theta(t)}}$.

\begin{fact}
\label{fact:pseudoexpectation-computation}
For $x \in \mathbb{R}^n$, if $\mathcal{A} = \{q_1(x) \geq 0, ..., q_m(x) \geq 0\}$ is feasible and explicitly bounded,\footnote{Explicit boundedness means that $\mathcal{A}$ contains a constraint of the form $x_1^2 + ... + x_n^2 \leq B$.
In our applications it is possible to add such a constraint with $B$ large enough such that the constraint is always satisfied by the intended solution.}
then there exists an algorithm that runs in time $(n+m)^{O(t)}$ and computes a pseudo-distribution of degree $t$ that approximately satisfies $\mathcal{A}$ up to error $2^{-n^{\Theta(t)}}$.\footnote{In our applications the sum-of-squares proofs have polynomial bit complexity, so the error incurred through \Cref{fact:pseudoexpectation-soundness} is negligible.}
\end{fact}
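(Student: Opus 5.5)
The plan is to write the search for a degree-$t$ pseudo-distribution as a semidefinite feasibility problem over moment matrices, and then solve a slightly relaxed version with the ellipsoid method. The relaxation is what produces both the \emph{approximate} satisfaction and the doubly-exponentially small error.

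\textbf{Step 1: encode the problem as an SDP.} A degree-$t$ pseudo-expectation is determined by its values on monomials of degree at most $t$. These are $N\le (n+1)^{t}$ unknowns $y_\alpha=\tilde{\mathbb{E}}\,x^\alpha$.
\begin{itemize}
\item The requirement $\tilde{\mathbb{E}}\, r^2\ge 0$ for $\deg r\le t/2$ becomes positive semidefiniteness of the moment matrix $M(y)$, indexed by monomials of degree at most $t/2$.
\item Normalization becomes $y_0=1$.
\item For each $S\subseteq[m]$ with $\deg\prod_{j\in S}q_j\le t$, the constraint becomes positive semidefiniteness of the localizing matrix $M_{S}(y)$, whose $(\beta,\gamma)$ entry is $\tilde{\mathbb{E}}\, x^{\beta+\gamma}\prod_{j\in S}q_j$.
\end{itemize}
We may assume each $q_j$ is non-constant, since constant constraints are trivial. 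Then only subsets of size at most $t$ matter, so there are $m^{O(t)}$ localizing matrices, each of dimension $n^{O(t)}$. The whole SDP therefore has size $(n+m)^{O(t)}$.

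\textbf{Step 2: bounding ball and inner ball.} The explicit boundedness constraint $\sum_i x_i^2\le B$ lies in $\mathcal{A}$. Its localizing matrices, combined with $M(y)\succeq 0$, bound every moment by $|y_\alpha|\le B^{O(t)}$. This gives a bounding ball of radius $2^{\mathrm{poly}(n^{t})}$, assuming polynomial bit complexity for $B$ and the coefficients of the $q_j$.

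Feasibility of $\mathcal{A}$ gives a point $x^\ast$. Its Dirac moment vector $y^\ast$ satisfies every constraint. However, $y^\ast$ typically lies on the boundary of the feasible set, for instance when some matrix is singular or equality constraints are present. So there is no inner ball around it, and this is the main obstacle for the ellipsoid method.

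I would handle it by relaxing every PSD constraint to $M(y)+\eta I\succeq 0$ and $M_S(y)+\eta' I\succeq 0$. These are intersected with the affine subspace $\{y_0=1\}$, and we run the ellipsoid method inside that subspace. Every $y$ within distance $\eta/\mathrm{poly}$ of $y^\ast$ then satisfies the relaxed constraints, because the entries of each matrix are linear in $y$ with bounded coefficients. So the relaxed set contains a ball of radius $\eta\cdot 2^{-\mathrm{poly}(n^t)}$.

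A separation oracle is available: compute a negative eigenvector of the violating matrix to sufficient precision. The ellipsoid method then finds a feasible $y$ in $\mathrm{poly}\bigl((n+m)^{t},\log(1/\eta)\bigr)$ iterations. With $\eta=2^{-n^{\Theta(t)}}$ we have $\log(1/\eta)=n^{\Theta(t)}$, so the total running time is $(n+m)^{O(t)}$.

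\textbf{Step 3: translate slack into the coefficient-norm error.} The relaxation gives $\tilde{\mathbb{E}}\, r^2\prod_{j\in S}q_j\ge-\eta'\,\|\mathrm{coef}(r)\|^2$. This must be compared with $\|r^2\|\prod_{j\in S}\|q_j\|$, which is the quantity in the definition of approximate satisfaction.

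\begin{itemize}
\item The norms $\|q_j\|$ and $\|r^2\|$ could be small relative to $\|\mathrm{coef}(r)\|^2$. The ratio is bounded by $2^{\mathrm{poly}(n^t)}$ using equivalence of norms on the finite-dimensional space of degree-$t$ polynomials and the bounded bit complexity of the $q_j$.
\item We also need a lower bound on $\|q_j\|$; a constraint with $q_j\equiv 0$ is trivial and can be dropped.
\item Choosing $\eta'$ smaller by that factor keeps the error at $2^{-n^{\Theta(t)}}$.
\end{itemize}

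Finally, the definition asks for a finitely supported signed function $D$ rather than a moment vector. Any linear functional on polynomials of degree at most $t$ is realized by a signed combination of point evaluations at finitely many points. This holds because point evaluations span the dual space, for example via a unisolvent grid. Solving the corresponding linear system produces the desired pseudo-distribution $D$ in time $(n+m)^{O(t)}$.
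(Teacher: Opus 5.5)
The paper does not prove this fact. It imports it from the standard sum-of-squares references, and your outline (moment SDP, Dirac moments of a feasible point, slack to create an inner ball, ellipsoid method, conversion to the coefficient-norm error, finite-support realization) is the standard argument behind it. It has one concrete gap. You relax the moment-matrix constraint itself to $M(y)+\eta I\succeq 0$, and nothing afterwards restores exact positivity. Your last step realizes an arbitrary linear functional as a signed, finitely supported $D$. The paper's definition of a degree-$t$ pseudo-distribution, however, requires $\sum_x D(x)p(x)^2\ge 0$ \emph{exactly}; only the constraints of $\mathcal{A}$ are allowed to hold approximately. As written, your output satisfies only $\tilde{\mathbb{E}}\,p^2\ge-\eta\|\mathrm{coef}(p)\|^2$, so it is not a pseudo-distribution. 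You also cannot just drop the slack on $M(y)$ and keep the inner ball centered at $y^\ast$, because the Dirac moment matrix $M(y^\ast)$ has rank one.

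The missing ingredient is a reference functional whose moment matrix is quantitatively positive definite.
\begin{itemize}
\item \emph{Reference functional.} Let $y^\mu$ be the moments of the standard Gaussian on $\R^n$. Then $M(y^\mu)$ is a positive definite integer matrix of dimension $N=n^{O(t)}$ with entries at most $t!$. Since $\det M(y^\mu)\ge 1$, we get $\lambda_{\min}(M(y^\mu))\ge\kappa:=(N\cdot t!)^{-(N-1)}=2^{-n^{O(t)}}$.
\item \emph{Mixing step.} Given the ellipsoid output $y$, return $\hat y=(1-\lambda)y+\lambda y^\mu$ with $\lambda=\eta/\kappa$. Then $\hat y_0=1$ and $M(\hat y)\succeq(\lambda\kappa-(1-\lambda)\eta)I\succeq 0$.
\item \emph{Error accounting.} Each localizing matrix degrades by at most $\lambda\|M_S(y^\mu)\|\le\eta\,2^{n^{O(t)}}$. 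Taking $\eta=2^{-n^{Ct}}$ with $C$ large keeps the final error at $2^{-n^{\Theta(t)}}$, and $\log(1/\eta)=n^{O(t)}$ leaves the running time unchanged.
\end{itemize}

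Two smaller points. In Step~3, appealing to equivalence of norms gives no explicit constant, but none is needed. Parseval on the torus $\{z\in\mathbb{C}^n:|z_i|=1\}$ gives $\|r^2\|\ge\|r\|^2$ for coefficient norms, so bit complexity is only needed to lower-bound $\prod_{j\in S}\|q_j\|$. Also, the bounding-ball estimate should be redone for the relaxed set. The same iteration gives $\tilde{\mathbb{E}}\|x\|^{2k}\le B\,\tilde{\mathbb{E}}\|x\|^{2k-2}+\eta' n^{O(t)}$, which suffices.
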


\subsubsection{Sum-of-squares proofs for basic inequalities}

We state now some widely used sum-of-squares proofs that we use in the current paper.

\begin{fact}[Sum-of-squares almost triangle inequality, see Lemma A.2 in~\cite{kothari2017outlier}]
\label[fact]{fact:sos-triangle}
Let $x, y \in \R$ be variables. For any even $k \geq 2$,
\[\proves{x,y}{k} \Set{(x+y)^k \leq 2^{k-1} x^k + 2^{k-1}y^k}\,.\]
\end{fact}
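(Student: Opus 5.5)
We can write the difference $2^{k-1}x^k + 2^{k-1}y^k - (x+y)^k$ explicitly as a nonnegative combination of squares of monomials. To get there, we change variables so that the symmetric structure of convexity becomes visible. Set $u = (x+y)/2$ and $v = (x-y)/2$, so that $x = u+v$ and $y = u-v$. This substitution is linear and invertible, so squares of polynomials in $u,v$ of degree $k/2$ are squares of polynomials in $x,y$ of degree $k/2$. Consequently a degree-$k$ sum-of-squares identity in $(u,v)$ pulls back to a degree-$k$ sum-of-squares identity in $(x,y)$.

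In the new variables, $(x+y)^k = 2^k u^k$. By the binomial theorem, the odd powers of $v$ cancel in the sum:
\[
x^k + y^k = (u+v)^k + (u-v)^k = 2\sum_{\substack{0\le j\le k\\ j \text{ even}}} \binom{k}{j} u^{k-j} v^j .
\]
Multiplying by $2^{k-1}$ and subtracting $2^k u^k$, which is exactly the $j=0$ term, gives
\[
2^{k-1}x^k + 2^{k-1}y^k - (x+y)^k = 2^k \sum_{\substack{2\le j\le k\\ j \text{ even}}} \binom{k}{j} \left(u^{(k-j)/2} v^{j/2}\right)^2 .
\]
Here we use that $k$ is even, so $k-j$ is even whenever $j$ is even. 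Each summand is therefore a positive multiple of the square of a monomial of degree $k/2$.

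Substituting back $u = (x+y)/2$ and $v = (x-y)/2$ turns each monomial into a polynomial in $x,y$ of degree $k/2$. This yields an explicit degree-$k$ sum-of-squares certificate in the sense of \Cref{def:sos-proof}, with no axioms, of the inequality $(x+y)^k \le 2^{k-1}x^k + 2^{k-1}y^k$.

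We do not expect a real obstacle here. The only points to check are the parity bookkeeping, namely that odd-$j$ terms cancel and that both exponents $k-j$ and $j$ are even, and the observation that the degree is preserved under the linear change of variables.
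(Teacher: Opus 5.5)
Your proof is correct. With $u=(x+y)/2$ and $v=(x-y)/2$, the odd-$j$ terms of $(u+v)^k+(u-v)^k$ cancel. The $j=0$ term of $2^{k-1}\bigl((u+v)^k+(u-v)^k\bigr)$ is exactly $2^k u^k=(x+y)^k$. Each remaining term $2^k\binom{k}{j}u^{k-j}v^j$, for even $j\ge 2$, is a positive multiple of the square of the degree-$k/2$ polynomial $\bigl(\tfrac{x+y}{2}\bigr)^{(k-j)/2}\bigl(\tfrac{x-y}{2}\bigr)^{j/2}$. For example, at $k=4$ the identity reads $8x^4+8y^4-(x+y)^4=6(x^2-y^2)^2+(x-y)^4$. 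This is an axiom-free degree-$k$ proof in the sense of \Cref{def:sos-proof}.

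The paper gives no proof of its own and simply imports the fact from Lemma A.2 of \cite{kothari2017outlier}. A common route elsewhere in the literature expands $(x+y)^k=\sum_j\binom{k}{j}x^jy^{k-j}$ and bounds each term by an SoS weighted AM--GM inequality, $x^jy^{k-j}\le \tfrac{j}{k}x^k+\tfrac{k-j}{k}y^k$, then sums. That route generalizes directly to $r$ summands with constant $r^{k-1}$, but it relies on SoS certificates for AM--GM of Hurwitz type. Your change of variables instead produces the certificate in closed form with explicit coefficients, at the cost of being tailored to two summands.

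For how the paper actually uses the fact, your identity suffices. The fact is used only with $k=2$, inside the proof of \Cref{thm:sos-fkn}, with $x,y$ replaced by polynomials in $f$. An explicit polynomial identity like yours stays valid under such substitutions, with the degree scaling accordingly.
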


\begin{fact}[Sum-of-squares Cauchy-Schwarz, see Lemma A.1 in~\cite{MR3631006-Ma16}]
\label[fact]{fact:sos-cauchy-schwarz}
    Let $x, y \in \R^n$ be vectors of variables. Then
    \[ \proves{x, y}{4} \left\{ \langle x, y \rangle^2 \leq \norm{x}^2 \norm{y}^2 \right\} \,.\]
    Furthermore, if $y \in \R^n$ is a constant vector, then
    \[ \proves{x}{2} \left\{ \langle x, y \rangle^2 \leq \norm{x}^2 \norm{y}^2 \right\} \,.\]
\end{fact}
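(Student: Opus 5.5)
The plan is to prove both parts with an explicit polynomial identity that writes the gap $\norm{x}^2\norm{y}^2 - \langle x,y\rangle^2$ as a sum of squares, namely Lagrange's identity. No axioms are needed, so by \Cref{def:sos-proof} it suffices to exhibit polynomials $r_k$ with
\[
\norm{x}^2\norm{y}^2 - \langle x,y\rangle^2 = \sum_k r_k(x,y)^2
\]
and to bound the degree of each $r_k^2$.

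The identity I would verify is
\[
\norm{x}^2\norm{y}^2 - \langle x,y\rangle^2 = \frac{1}{2}\sum_{i,j=1}^n \left(x_i y_j - x_j y_i\right)^2 .
\]
To check it, expand the right-hand side. Each summand equals $x_i^2y_j^2 + x_j^2y_i^2 - 2x_iy_ix_jy_j$. Summing over all ordered pairs $(i,j)$, the first two groups each give $\sum_i x_i^2\sum_j y_j^2 = \norm{x}^2\norm{y}^2$. The cross terms give $-2\left(\sum_i x_iy_i\right)^2 = -2\langle x,y\rangle^2$. After the factor $\frac12$ this is exactly the left-hand side. Diagonal terms with $i=j$ vanish on both sides, so there is no bookkeeping issue there.

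For the general statement, where $x$ and $y$ are both variables, take $r_{ij} = \frac{1}{\sqrt2}(x_iy_j - x_jy_i)$. Each $r_{ij}$ has degree $2$ in $(x,y)$, so each square has degree $4$. This gives $\proves{x,y}{4}\{\langle x,y\rangle^2 \le \norm{x}^2\norm{y}^2\}$.

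For the second statement, where $y$ is a constant vector, the same identity applies. Now each $x_iy_j - x_jy_i$ is a linear form in $x$ with real coefficients, so each square has degree $2$, giving the degree-$2$ proof.

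There is no real obstacle here. The only step requiring care is confirming the normalization factor $\frac12$ from the double counting over ordered pairs, together with the degree count.
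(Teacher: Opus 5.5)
Your proof is correct. Lagrange's identity $\norm{x}^2\norm{y}^2-\langle x,y\rangle^2=\frac12\sum_{i,j}(x_iy_j-x_jy_i)^2$ is, to my recollection, the standard certificate behind the cited Lemma A.1 of Ma, Shi, and Steurer; the paper gives no proof of its own and only cites that lemma. Your degree count is also right: degree $4$ in general, and degree $2$ when $y$ is constant, since each $x_iy_j-x_jy_i$ is then linear in $x$.
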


\begin{fact}[Sum-of-squares Jensen inequality for square function]
\label[fact]{fact:sos-jensen}
Let $x_1, \ldots, x_n$ be variables. Then
\[\proves{x_1, \ldots, x_n}{2} \Set{\Paren{\frac{1}{n} \sum_{i=1}^n x_i}^2 \leq \frac{1}{n} \sum_{i=1}^n x_i^2}\,.\]
\end{fact}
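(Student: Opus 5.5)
The plan is to exhibit the required identity explicitly: write the difference $\frac{1}{n}\sum_{i} x_i^2 - \big(\frac{1}{n}\sum_i x_i\big)^2$ as a nonnegative combination of squares of linear forms. This is a degree-$2$ sum-of-squares proof in the sense of \Cref{def:sos-proof}, with $\mathcal{A}=\emptyset$.

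The cleanest certificate is the pairwise-difference identity
\[
\frac{1}{n}\sum_{i=1}^n x_i^2 - \Paren{\frac{1}{n}\sum_{i=1}^n x_i}^2 \;=\; \frac{1}{2n^2}\sum_{i=1}^n\sum_{j=1}^n (x_i - x_j)^2 .
\]
To verify it, I expand the right-hand side. Each square contributes $x_i^2 + x_j^2 - 2x_ix_j$. Summing over all ordered pairs $(i,j)$ gives
\[
2n\sum_i x_i^2 - 2\Paren{\sum_i x_i}^2 .
\]
Dividing by $2n^2$ yields exactly $\frac{1}{n}\sum_i x_i^2 - \frac{1}{n^2}\big(\sum_i x_i\big)^2$, which is the left-hand side.

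Each term $(x_i-x_j)^2$ is the square of a linear polynomial, so each has degree $2$, and the coefficients $\frac{1}{2n^2}$ are positive. The coefficients can be absorbed into the squares as $\big((x_i-x_j)/(\sqrt{2}\,n)\big)^2$. Hence the identity is a valid degree-$2$ sum-of-squares proof of the claimed inequality.

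An equivalent certificate, if preferred, is the variance form
\[
\frac{1}{n}\sum_i x_i^2 - \bar{x}^2 \;=\; \frac{1}{n}\sum_i (x_i-\bar{x})^2, \qquad \bar{x}=\frac{1}{n}\sum_j x_j .
\]
Here each $x_i-\bar{x}$ is again a linear form in the variables, so the same argument applies.

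There is no real obstacle. The only point to check is the coefficient bookkeeping in the expansion, namely that the cross terms sum to $-2\big(\sum_i x_i\big)^2$ once the diagonal terms $i=j$ (which vanish) are included. This is routine.
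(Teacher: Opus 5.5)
Your certificate is correct: expanding gives $\sum_{i,j}(x_i-x_j)^2 = 2n\sum_i x_i^2 - 2(\sum_i x_i)^2$, and it is in substance the paper's proof. The paper simply cites the constant-vector case of SoS Cauchy--Schwarz (\Cref{fact:sos-cauchy-schwarz}) with $y=(1,\ldots,1)$, and the standard certificates for that fact specialize at $y=(1,\ldots,1)$ to exactly your pairwise-difference and variance identities. Those certificates are Lagrange's identity $\|x\|^2\|y\|^2-\langle x,y\rangle^2=\tfrac12\sum_{i,j}(x_iy_j-x_jy_i)^2$ and the projection form $\|y\|^2\bigl\|x-\tfrac{\langle x,y\rangle}{\|y\|^2}y\bigr\|^2$. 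So writing the identity out, as you do, just makes the fact self-contained instead of relying on the cited lemma.
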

\begin{proof}
By~\Cref{fact:sos-cauchy-schwarz},
\begin{align*}
\proves{x_1, \ldots, x_n}{2}
\Paren{\frac{1}{n} \sum_{i=1}^n x_i}^2
&\leq \Paren{\frac{1}{n} \sum_{i=1}^n x_i^2} \Paren{\frac{1}{n} \sum_{i=1}^n 1} = \frac{1}{n} \sum_{i=1}^n x_i^2\,.
\end{align*}
\end{proof}

\begin{fact}
\label[fact]{fact:sos-boolean-bounded}
Let $x$ be a variable. We have
\[\Set{x^2 = 1} \proves{x}{2} \Set{-1 \leq x \leq 1}\,.\]
\end{fact}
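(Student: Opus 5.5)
The bound $-1\le x\le 1$ splits into the two one-sided inequalities $1-x\ge 0$ and $1+x\ge 0$. For each, I will exhibit an explicit degree-$2$ identity that expresses the left-hand side as a sum of squares plus a polynomial multiple of the axiom $x^2-1=0$.

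For the upper bound, write
\[
1-x \;=\; \tfrac12(1-x)^2 + \tfrac12(1-x^2).
\]
Expanding confirms this: $\tfrac12(1-2x+x^2)+\tfrac12-\tfrac12x^2 = 1-x$. The first term is a square of a degree-$1$ polynomial with a nonnegative coefficient. The second term is a constant multiple of $1-x^2$. Under the axiom $x^2=1$ this term equals zero; formally, the equality constraint is used as the two inequalities $x^2-1\ge 0$ and $1-x^2\ge 0$, multiplied by the square $(1/\sqrt2)^2$. Every term has degree at most $2$, so this is a degree-$2$ sum-of-squares proof of $x\le 1$ in the sense of \Cref{def:sos-proof}.

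The lower bound follows by replacing $x$ with $-x$:
\[
1+x \;=\; \tfrac12(1+x)^2 + \tfrac12(1-x^2).
\]
This gives $x\ge -1$ with a proof of the same degree, and combining the two one-sided bounds yields the claim.

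There is no real obstacle here. The only point requiring attention is the convention that an equality axiom $q=0$ enters a sum-of-squares proof as the pair of inequalities $q\ge 0$ and $-q\ge 0$, each multiplied by a square. This is what licenses using $1-x^2$ with a positive coefficient while keeping the total degree at $2$.
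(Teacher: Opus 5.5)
Your proof is correct and is the same as the paper's: it uses the identities $1\pm x = \tfrac12(1\pm x)^2 - \tfrac12(x^2-1)$, where the last term vanishes under the axiom $x^2=1$. Your remark that an equality axiom enters as the pair of inequalities $x^2-1\ge 0$ and $1-x^2\ge 0$ correctly justifies using the multiple of $1-x^2$ within a degree-$2$ proof.
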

\begin{proof}
We have
\[\Set{x^2 = 1} \proves{x}{2} 1 + x = \frac{1}{2}(1+x)^2 - \frac{1}{2} (x^2-1) \geq 0\]
and
\[\Set{x^2 = 1} \proves{x}{2} 1 - x = \frac{1}{2}(1-x)^2 - \frac{1}{2} (x^2-1) \geq 0\,.\]
\end{proof}

\begin{fact}
\label[fact]{fact:sos-bounded-squares}
Let $x$ be a variable. For any constant $a \geq 0$, we have
\[\Set{-a \leq x \leq a} \proves{x}{2} \Set{x^2 \leq a^2}\,.\]
\end{fact}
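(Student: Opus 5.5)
We give an explicit degree-$2$ certificate. The plan is to write $a^2 - x^2$ as a nonnegative combination of the two hypothesis polynomials $a - x \ge 0$ and $a + x \ge 0$, plus a square if needed. The starting point is the observation
\[
(a-x)(a+x) = a^2 - x^2,
\]
but a product of two axioms is allowed only if its degree is at most $2$. That holds here: each factor has degree $1$, so the product has degree $2$. By \Cref{def:sos-proof}, taking $S$ to be both constraints and $r_{S,1} = 1$ gives a valid degree-$2$ proof.

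If one prefers to avoid products of axioms, there is an alternative that uses each axiom only linearly. Assuming $a > 0$, use the identity
\[
a^2 - x^2 = \frac{1}{2a}\Bigl((a-x)^2 (a+x) + (a+x)^2 (a-x)\Bigr).
\]
The right-hand side expands to $\frac{1}{2a}(a-x)(a+x)\cdot 2a$, so the identity holds. However, each term has degree $3$, which is too high. So the product-of-axioms certificate is the one to use.

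For the degenerate case $a = 0$, the hypotheses force $x \le 0$ and $x \ge 0$. The product $(0 - x)(0 + x) = -x^2$ again gives $0 - x^2 \ge 0$ directly from the same certificate, so no separate argument is needed.

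There is no real obstacle. The only point to check is that \Cref{def:sos-proof} permits products over subsets $S$ of the constraints, which it does explicitly. The product has degree $2$, matching the claimed degree.
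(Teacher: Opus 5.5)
Your certificate $a^2 - x^2 = (a-x)(a+x)$ is exactly the paper's proof: it is the product of the two axioms with multiplier $1$, which the definition of sum-of-squares proofs explicitly permits, and it has degree $2$. The degree-$3$ digression and the separate remark about $a=0$ are unnecessary but harmless.
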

\begin{proof}
We have
\[\Set{-a \leq x \leq a} \proves{x}{2} a^2-x^2=(a-x)(a+x) \geq 0\,.\]
\end{proof}

\subsubsection{Sum-of-squares proofs for basic analysis of Boolean functions}

Below are some sum-of-squares versions of some basic Fourier-analytic results.

\begin{claim}[Sum-of-squares Fourier expansion]
\label[claim]{claim:sos-fourier-expansion}
Let $\Set{f(x)}_{x \in \Set{\pm 1}^d}$ be variables. For all $x \in \Set{\pm 1}^d$, we have
\[\proves{f}{2} \Set{f(x) = \sum_{S \subseteq [d]} \langle f, \chi_S\rangle \chi_S(x)}\,.\]
\end{claim}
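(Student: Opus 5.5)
The statement is a polynomial identity that is linear in the variables $\{f(y)\}_y$. I therefore plan to prove it as a syntactic identity: the difference of the two sides will turn out to be the zero polynomial. An identity $p = 0$ is a degree-$2$ sum-of-squares proof, trivially, since each of $p \geq 0$ and $-p \geq 0$ is witnessed by the empty sum of squares. Note that no axioms (such as Booleanity) are needed, which matches the statement having no hypotheses.

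First, I expand the right-hand side using the definition $\langle f,\chi_S\rangle = 2^{-d}\sum_{y} f(y)\chi_S(y)$. This gives
\[
\sum_{S\subseteq[d]} \langle f,\chi_S\rangle\chi_S(x) = \sum_{y\in\{\pm1\}^d} f(y)\cdot \frac{1}{2^d}\sum_{S\subseteq[d]}\chi_S(x)\chi_S(y).
\]
Next, I evaluate the inner coefficient. Writing $\chi_S(x)\chi_S(y) = \prod_{i\in S} x_i y_i$, the sum over $S$ factors as
\[
\sum_{S\subseteq[d]}\chi_S(x)\chi_S(y) = \prod_{i=1}^d (1 + x_i y_i).
\]
Each factor $1 + x_i y_i$ equals $2$ if $x_i = y_i$ and $0$ otherwise, so the coefficient is $2^{-d}\cdot 2^d \cdot \ind[x=y] = \ind[x=y]$. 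The right-hand side therefore collapses to $f(x)$ as a formal linear polynomial in the variables $f(y)$, and the identity follows.

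There is no real obstacle here. The only point to check is that the argument uses nothing beyond linearity, with the coefficients being the real numbers $\chi_S(x)$ and $\chi_S(y)$ for fixed points $x, y$. Consequently the equality holds coefficient-wise and is accepted by the proof system at degree $2$, indeed at any degree.
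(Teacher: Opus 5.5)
Your proposal is correct and matches the paper's proof: both expand $\langle f,\chi_S\rangle$, swap the two sums, and use $\sum_{S}\chi_S(y)\chi_S(x) = 2^d\,\ind[y=x]$. You also justify that last identity via the factorization $\prod_{i}(1+x_iy_i)$, and you make explicit that the result is a formal identity in the variables, hence a valid degree-$2$ proof — a step the paper leaves implicit.
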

\begin{proof}
We have
\begin{align*}
\proves{f}{2} \sum_{S \subseteq [d]} \langle f, \chi_S\rangle \chi_S(x)
&= \sum_{S \subseteq [d]} \Paren{\frac{1}{2^d} \sum_{x' \in \Set{\pm 1}^d} f(x') \chi_S(x')} \chi_S(x)\\
&= \frac{1}{2^d} \sum_{x' \in \Set{\pm 1}^d} f(x') \sum_{S \subseteq [d]} \chi_S(x') \chi_S(x)\\
&= \frac{1}{2^d}\sum_{x' \in \Set{\pm 1}^d} f(x') \cdot 2^d \cdot \bm{1}[x' = x]\\
&= f(x)\,.
\end{align*}
\end{proof}

\begin{claim}[Sum-of-squares Parseval]
\label[claim]{claim:sos-parseval}
Let $\Set{f(x)}_{x \in \Set{\pm 1}^d}$ be variables. We have
\[\proves{f}{2} \sum_{S \subseteq [d]} \langle f, \chi_S\rangle^2 = \E_{ x} f( x)^2\,.\]
\end{claim}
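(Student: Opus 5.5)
The claim is a polynomial identity in the variables $\{f(x)\}_{x\in\{\pm1\}^d}$ with no constraints used. So the plan is to verify that the two sides agree as polynomials and that the identity is witnessed by a degree-$2$ certificate. Such an identity is a valid degree-$2$ sum-of-squares proof of both inequalities, since the difference of the two sides is identically the zero polynomial.

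I would mirror the proof of \Cref{claim:sos-fourier-expansion}. First, each $\langle f,\chi_S\rangle = \frac{1}{2^d}\sum_{x} f(x)\chi_S(x)$ is a linear form in the variables, so $\langle f,\chi_S\rangle^2$ is a quadratic form. Expanding the square gives
\[
\sum_{S\subseteq[d]}\langle f,\chi_S\rangle^2
= \frac{1}{4^d}\sum_{x,x'\in\{\pm1\}^d} f(x)f(x')\sum_{S\subseteq[d]}\chi_S(x)\chi_S(x').
\]
Next, I would use the orthogonality relation
\[
\sum_{S}\chi_S(x)\chi_S(x')=\prod_{i\in[d]}(1+x_ix'_i)=2^d\,\bm{1}[x=x'],
\]
which involves only the constants $x,x'$ and no variables. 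The double sum then collapses to
\[
\frac{1}{2^d}\sum_x f(x)^2=\E_x f(x)^2 .
\]

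Alternatively, one could substitute the expansion from \Cref{claim:sos-fourier-expansion} into $\E_x f(x)^2$ and use $\E_x\chi_S(x)\chi_T(x)=\bm{1}[S=T]$. The direct computation above is cleaner.

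Every step is an equality of polynomials of degree at most $2$, so the proof has degree $2$. There is no real obstacle here. The only point needing care is that the manipulations move only constant coefficients, so no higher-degree terms ever appear.
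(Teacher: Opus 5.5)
Your proposal is correct and is essentially the paper's argument: both verify the degree-$2$ polynomial identity by character orthogonality. The only difference is that you apply the completeness relation $\sum_S \chi_S(x)\chi_S(x') = 2^d\,\bm{1}[x=x']$ directly, whereas the paper substitutes the Fourier expansion from \Cref{claim:sos-fourier-expansion} into $\E_x f(x)^2$ and uses $\E_x \chi_S(x)\chi_{S'}(x) = 0$ for $S \neq S'$, which is exactly the alternative you mention.
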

\begin{proof}
Using \Cref{claim:sos-fourier-expansion} and the fact that $\E_{ x} \chi_S( x) \chi_{S'}( x) = 0$ for $S \neq S' \subseteq [d]$, we have
\begin{align*}
\proves{f}{2}
\E_{ x} f( x)^2
&= \E_{ x} \Paren{\sum_{S \subseteq [d]} \langle f, \chi_S\rangle \chi_S( x)}^2
= \sum_{S \subseteq [d]} \langle f, \chi_S\rangle^2\,.
\end{align*}
\end{proof}

\begin{claim}[Sum-of-squares Parseval for Boolean-valued functions]
\label[claim]{claim:sos-boolean-parseval}
Let $\Set{f(x)}_{x \in \Set{\pm 1}^d}$ be variables. We have
\[\Set{f(x)^2 = 1, \forall x \in \{\pm 1\}^d} \proves{f}{2} \Set{\sum_{S \subseteq [d]} \langle f, \chi_S\rangle^2 = 1}\,.\]
\end{claim}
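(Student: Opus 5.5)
This follows directly from \Cref{claim:sos-parseval}. We combine it with the Booleanity axioms, and we do not need anything beyond degree $2$.

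First, \Cref{claim:sos-parseval} gives a degree-$2$ identity with no axioms:
\[
\proves{f}{2} \sum_{S \subseteq [d]} \langle f, \chi_S\rangle^2 = \frac{1}{2^d}\sum_{x \in \{\pm1\}^d} f(x)^2 .
\]
Being an identity, it yields inequalities in both directions, and each direction is a valid degree-$2$ sum-of-squares proof.

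Second, we use the constraints $f(x)^2 = 1$ to replace the right-hand side. An equality axiom $q(x)=0$ is encoded as the pair of inequalities $q \ge 0$ and $-q \ge 0$. Multiplying each of these by the constant (degree-$0$) square $\frac{1}{2^d}$ and summing over $x$ gives
\[
\frac{1}{2^d}\sum_{x} f(x)^2 - 1 = \sum_x \frac{1}{2^d}\bigl(f(x)^2 - 1\bigr),
\]
which is a nonnegative combination of the axioms $f(x)^2-1 \ge 0$. The reverse direction uses the axioms $1-f(x)^2\ge 0$ in the same way. Both certificates have degree $2$.

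Finally, we chain the two equalities to obtain $\sum_S \langle f,\chi_S\rangle^2 = 1$, in both directions, with a degree-$2$ proof. This chaining is allowed by the transitivity remark following \Cref{def:sos-proof}.

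No step here is a real obstacle. The only point needing care is bookkeeping: each equality must be presented as two inequality derivations, and all multipliers must be constants, so that the degree stays at $2$.
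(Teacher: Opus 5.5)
Your proposal is correct and follows the paper's proof: apply \Cref{claim:sos-parseval} to get $\sum_{S} \langle f,\chi_S\rangle^2 = \E_x f(x)^2$, then use the Booleanity axioms (with constant multipliers) to replace $\E_x f(x)^2$ by $1$, all at degree $2$. Your explicit handling of each equality as two inequality derivations is just a more detailed version of the paper's one-line chain.
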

\begin{proof}
Using \Cref{claim:sos-parseval}, we have
\begin{align*}
\Set{f(x)^2 = 1, \forall x \in \{\pm 1\}^d}
&\proves{f}{2} \sum_{S \subseteq [d]} \langle f, \chi_S\rangle^2 = \E_{\bm x} f(\bm x)^2 = 1\,.
\end{align*}
\end{proof}

\subsection{Sum-of-squares rounding}

We will make use of the following rounding algorithm guarantee from Theorem 5.2 in \cite{MR3631006-Ma16}.

\begin{theorem}[See Theorem 5.2 in~\cite{MR3631006-Ma16}]
\label{thm:sos-rounding}
For every $\ell \in \N$, there exists an $(n+d+d')^{O(\ell)}$-time algorithm with the following property:
Let $\epsilon > 0$ be smaller than some constant.
Let $d, d' \in \N$ be numbers.
Let $P : \R^d \to \R^{d'}$ be a polynomial with degree $\deg P \leq \ell$.
Let $\Set{a_1, \ldots, a_n} \subseteq \R^{d}$ be a set of vectors such that $b_1 = P(a_1)$, $\ldots$, $b_n = P(a_n) \in \R^{d'}$ all have norm at least $1-\epsilon$ and $\Norm{\sum_{i=1}^n b_i b_i^\top} \leq 1+\epsilon$.
Let $\cA$ be a system of polynomial inequalities in variables $u=(u_1, \ldots, u_d)$ such that the vectors $a_1, \ldots, a_n$ satisfy $\cA$ and
\[\cA \proves{u}{\ell} \Set{\sum_{i=1}^n \langle b_i, P(u)\rangle^4 \geq (1-\epsilon) \Norm{P(u)}^4}\,.\]
Then, the algorithm on input $\cA$ and $P$ outputs with high probability a set of unit vectors $\Set{b_1', \ldots, b_n'} \subseteq \R^{d'}$ such that
\[d_H\Paren{\Set{b_1^{\otimes 2}, \ldots, b_n^{\otimes 2}}, \Set{(b'_1)^{\otimes 2}, \ldots, (b'_n)^{\otimes 2}}} \leq O(\sqrt{\epsilon})\,,\]
where the Hausdorff distance $d_H(X, Y)$ is the maximum of
\[
\max_{x \in X} \min_{y \in Y} \Norm{x-y} \qquad \text{and} \qquad \max_{y \in Y} \min_{x \in X} \Norm{x-y}.
\]
\end{theorem}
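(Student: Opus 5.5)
The statement is imported from \cite{MR3631006-Ma16}, so my plan is to reconstruct their argument rather than find a new one. The algorithm has three parts. First, using \Cref{fact:pseudoexpectation-computation}, compute a degree-$O(\ell)$ pseudo-distribution $D$ over $u$ that satisfies $\cA$. Among all such pseudo-distributions, choose one that minimizes the convex ``spread'' objective $\norm{\tilde\E_D P(u)^{\otimes 2}}_F^2$ (or $\norm{\tilde\E_D P(u)^{\otimes 4}}_F$). This is still an SDP with a convex objective, so it is solvable in time $(n+d+d')^{O(\ell)}$. Second, sample a standard Gaussian $g\in\R^{d'^2}$ and form the $d'\times d'$ matrix $M_g=\tilde\E_D\,\langle g,P(u)^{\otimes2}\rangle\, P(u)P(u)^\top$. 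Output the top eigenvector of $M_g$ as a candidate $b'$. Third, repeat with polynomially many independent $g$'s and prune the candidates to get the output set.

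\textbf{Step 1: the fourth-moment tensor is close to a mixture of the $b_i^{\otimes 4}$.} Write $v=P(u)$ and $T=\tilde\E_D v^{\otimes4}$. By \Cref{fact:pseudoexpectation-soundness}, the certificate gives $\tilde\E\sum_i\langle b_i,v\rangle^4\ge(1-\epsilon)\tilde\E\norm{v}^4$. Together with $\norm{\sum_i b_ib_i^\top}\le1+\epsilon$ (near-orthonormality), this says $T$ puts all but an $O(\epsilon)$ fraction of its mass on $\spn\{b_i^{\otimes 4}\}$. Concretely, I would show via degree-$4$ SoS Cauchy--Schwarz (\Cref{fact:sos-cauchy-schwarz}) that
\[
\Big\|T-\sum_i w_i\, b_i^{\otimes4}\Big\|\le O(\sqrt\epsilon)\,\tilde\E\norm{v}^4 \quad\text{for } w_i=\tilde\E\langle b_i,v\rangle^4,
\]
measured in the spectral norm of the $d'^2\times d'^2$ flattening. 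The proof expands $\tilde\E\norm{v-\sum_i\langle b_i,v\rangle b_i}^2\norm{v}^2$, which is small, and then controls the cross terms.

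\textbf{Step 2: a random contraction isolates one component.} Contracting with $g$ gives
\[
M_g\approx\sum_i w_i\langle g,b_i^{\otimes2}\rangle\, b_ib_i^\top+E_g ,
\]
where $\norm{E_g}\lesssim\sqrt\epsilon\,\sqrt{\log d'}$ with high probability over $g$. Gaussian anticoncentration says that, with probability $1/\poly$, the coefficient $w_{i}\langle g,b_{i}^{\otimes2}\rangle$ of a fixed heavy index $i$ exceeds all others by a $(1+\Omega(1))$ factor. On that event, the top eigenvector $b'$ of $M_g$ satisfies $\norm{b'^{\otimes2}-b_i^{\otimes2}}\le O(\sqrt\epsilon)$ by Davis--Kahan. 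Every candidate is then certified or discarded by checking that $\tilde\E\langle b',v\rangle^4$ is large relative to $\tilde\E\norm v^4$. Because of Step~1, this test rejects anything far from all the $b_i^{\otimes2}$, which gives one direction of the Hausdorff bound.

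\textbf{Step 3 (main obstacle): coverage of every $b_i$.} The difficulty is that an arbitrary $D$ might concentrate all its weight on a few indices, so the other $b_i$ would never be found. This is where the spread objective and the hypothesis that every $a_i$ satisfies $\cA$ come in. The uniform distribution over $\{a_i\}$ is feasible, and its value of the objective is about $\norm{\frac1n\sum_i b_i^{\otimes 2}(b_i^{\otimes2})^\top}_F^2\approx 1/n$. Hence the minimizer also has value $O(1/n)$. Combined with Step~1, this forces every weight $w_i$ to be $\Omega(1/n)$ up to $O(\sqrt\epsilon)$ corrections: otherwise the mass sits on few orthogonal rank-one terms and the Frobenius norm is too large. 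With all $w_i$ bounded below, a union bound over $\poly(n)$ Gaussian trials shows that each $i$ is the isolated top coefficient in some trial. Finally, prune the candidates by merging any two within $O(\sqrt\epsilon)$ of each other, which yields the other direction of the Hausdorff bound.

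I expect the quantitative comparison in Step~3 to be the delicate part. It must turn a small Frobenius norm into a uniform lower bound on the $w_i$, while losing only $O(\sqrt\epsilon)$ and not a factor depending on $n$.
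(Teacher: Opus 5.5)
There is a genuine gap: the error bounds are at the wrong scale, because the spread constraint is used only for coverage and never inside the analysis of the random contraction.

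The paper gives no proof here; it imports Theorem~5.2 of \cite{MR3631006-Ma16} as a black box. Your architecture matches the one used there: a ``spread'' pseudo-distribution satisfying $\cA$, a Gaussian contraction $M_g=\tilde\E\,\langle g,P(u)^{\otimes 2}\rangle P(u)P(u)^\top$, its top eigenvector, and many repetitions. The failure is in the error analysis.

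Normalize $\tilde\E\norm{v}^4=1$. Step~3 needs $D$ to be spread, so all weights $w_i$ are of order $1/n$. The signal part $\sum_i w_i\langle g,b_i^{\otimes 2}\rangle b_ib_i^\top$ of $M_g$ then has top eigenvalue and eigengap only of order $\sqrt{\log n}/n$. Your error bounds are measured against the total mass: $O(\sqrt\epsilon)$ for $T-\sum_i w_i b_i^{\otimes 4}$ in Step~1, and $\sqrt{\epsilon\log d'}$ for $E_g$ in Step~2. They are therefore too large by a factor of about $n$. Davis--Kahan gives nothing unless $\epsilon$ is at most about $1/n^2$ (up to logarithmic factors). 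The theorem, however, allows any $\epsilon$ below an absolute constant, with error $O(\sqrt\epsilon)$ independent of $n$.

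The same scale problem affects your rejection test, which compares $\tilde\E\langle b',v\rangle^4\approx 1/n$ against an $O(\sqrt\epsilon)$ uncertainty. It also affects the ``$\Omega(1/n)$ up to $O(\sqrt\epsilon)$ corrections'' in Step~3. If you do not spread $D$, you might isolate a heavy component, but you lose coverage.

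The missing idea is the role \cite{MR3631006-Ma16} give to their spectral analog of a maximum-entropy constraint. This is a spectral-norm bound of order $1/n$ on a moment matrix of $P(u)$, which the uniform distribution on the $a_i$ satisfies. It is used \emph{inside} the spectral-gap analysis of the random contraction. There it bounds the error part of $M_g$ at the scale of the individual weights, not at the scale of $\tilde\E\norm{v}^4$.

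Two further steps also fail as written.

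First, the bound on $E_g$ does not follow from Step~1, even at your scale. For $E_g=\sum_k g_kE_k$, the norm of the Gaussian series is governed by $\norm{\sum_k E_k^2}^{1/2}$. That is the norm of the $d'\times d'^3$ flattening of the error tensor. A spectral bound on the $d'^2\times d'^2$ flattening does not control it; for a random symmetric tensor the two differ by a factor of order $\sqrt{d'}$. Controlling $E_g$ needs a separate argument, and that argument again has to use the spread condition.

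Second, as you suspected, Step~3 does not go through. For orthonormal $b_i$, the uniform distribution on all but one $a_i$ has objective value $1/(n-1)$ rather than $1/n$. Your conversion from objective value to weights loses an additive $O(\sqrt\epsilon)$, and even a relative $O(\epsilon)$ slack would already erase this distinction. So once $\epsilon\gtrsim 1/n$, comparing objective values cannot force every component to carry weight. The Hausdorff guarantee nevertheless requires finding every $b_i$.
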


\subsection{Concentration inequalities}
We will make use of the following concentration inequalities.
\begin{lemma}[Bernstein's inequality, see e.g., Theorem 2.8.4 in \cite{Vershynin}]
\label{lem:bernstein}
Let $K>0$ and let $X_1, \dots, X_N$ be independent mean-zero random variables such that $|X_i| \le K$ for all $i \in [N]$. Then, for every $t \ge 0$, one has
\begin{equation*}
    \Pr\left[\left|\sum_{i=1}^N X_i\right| \ge t\right] \le 2 \exp\left(-\frac{t^2/2}{\sum_{i=1}^N \E[X_i^2] + Kt/3}\right) \, .
\end{equation*}
\end{lemma}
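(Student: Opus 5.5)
This is the classical Bernstein inequality, and I would prove it by the exponential moment (Chernoff) method, treating the upper tail first and obtaining the lower tail by symmetry. Write $S=\sum_{i=1}^N X_i$ and $\sigma^2=\sum_{i=1}^N \E[X_i^2]$.

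\emph{Step 1: moment generating function of a single term.} For $0\le\lambda<3/K$, I would show
\[
\E e^{\lambda X_i}\le \exp\Paren{\frac{\lambda^2\E[X_i^2]/2}{1-\lambda K/3}}.
\]
Expand $e^{\lambda X_i}=1+\lambda X_i+\sum_{k\ge2}\lambda^kX_i^k/k!$ and take expectations, so the linear term vanishes because $\E X_i=0$. For $k\ge 2$ use $|X_i|^k\le X_i^2K^{k-2}$ together with the elementary bound $k!\ge 2\cdot 3^{k-2}$. This gives
\[
\E e^{\lambda X_i}\le 1+\frac{\lambda^2\E[X_i^2]}{2}\sum_{k\ge2}\Paren{\frac{\lambda K}{3}}^{k-2}=1+\frac{\lambda^2\E[X_i^2]/2}{1-\lambda K/3}.
\]
Applying $1+u\le e^u$ finishes this step.

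\emph{Step 2: Chernoff bound for the sum.} By independence the moment generating functions multiply, so for $0\le\lambda<3/K$,
\[
\Pr[S\ge t]\le e^{-\lambda t}\E e^{\lambda S}\le \exp\Paren{-\lambda t+\frac{\lambda^2\sigma^2/2}{1-\lambda K/3}}.
\]

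\emph{Step 3: choice of $\lambda$.} Take $\lambda=t/(\sigma^2+Kt/3)$, which indeed satisfies $\lambda<3/K$. Then $1-\lambda K/3=\sigma^2/(\sigma^2+Kt/3)$, so the quadratic term equals $\tfrac12\lambda t$. The exponent is therefore $-\tfrac12\lambda t=-\dfrac{t^2/2}{\sigma^2+Kt/3}$.

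\emph{Step 4: lower tail.} Apply the same argument to the variables $-X_i$, which satisfy identical hypotheses, to bound $\Pr[S\le -t]$ by the same quantity. A union bound over the two tails then gives the factor $2$.

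The only step needing care is the series estimate in Step 1, namely the inequality $k!\ge 2\cdot3^{k-2}$. It follows by induction on $k$: it holds with equality at $k=2$ and $k=3$, and for $k\ge 4$ each additional factor $k$ is at least $3$. Everything else is routine algebra.
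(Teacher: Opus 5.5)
Your proof is correct and is the standard Chernoff argument, which is how the cited Theorem 2.8.4 of Vershynin is proved; the paper gives no proof and only cites that result. One small fix: $\lambda = t/(\sigma^2 + Kt/3)$ satisfies $\lambda < 3/K$ only when $\sigma^2 > 0$, so you should set aside the case $\sigma^2 = 0$ at the start, where every $X_i = 0$ almost surely and the bound is trivial for $t > 0$.
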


\begin{lemma}[Matrix Bernstein's inequality, see e.g., Theorem 1.6.2 in \cite{Tropp15}]
\label{lem:matrix-bernstein}
Let $K>0$ and let $X_1, \dots, X_N$ be independent mean-zero $n \times n$ symmetric random matrices such that $\|X_i\| \le K$ for all $i \in [N]$. Then, for every $t \ge 0$, one has
\begin{equation*}
    \Pr\left[\left\|\sum_{i=1}^N X_i\right\| \ge t\right] \le 2n \exp\left(-\frac{t^2/2}{\|\sum_{i=1}^N \E[X_i^2]\| + Kt/3}\right) \, .
\end{equation*}
\end{lemma}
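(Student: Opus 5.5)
This is the standard matrix Bernstein inequality, and the plan follows the matrix Laplace transform method of Ahlswede--Winter as refined by Tropp. Write $Y=\sum_{i=1}^N X_i$ and $\sigma^2=\|\sum_{i=1}^N \E[X_i^2]\|$. Since $\|Y\|=\max\{\lambda_{\max}(Y),\lambda_{\max}(-Y)\}$ and the family $\{-X_i\}$ satisfies the same hypotheses, it suffices to show $\Pr[\lambda_{\max}(Y)\ge t]\le n\exp\left(-\frac{t^2/2}{\sigma^2+Kt/3}\right)$. A union bound over $Y$ and $-Y$ then supplies the factor $2n$.

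\textbf{Step 1 (matrix Laplace transform).} For $\theta>0$, Markov's inequality applied to $e^{\theta\lambda_{\max}(Y)}=\lambda_{\max}(e^{\theta Y})\le \operatorname{tr} e^{\theta Y}$ gives
\[
\Pr[\lambda_{\max}(Y)\ge t]\le e^{-\theta t}\,\E\operatorname{tr}\exp(\theta Y).
\]

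\textbf{Step 2 (subadditivity of matrix cumulant generating functions).} Using independence together with Lieb's concavity theorem, namely that $A\mapsto \operatorname{tr}\exp(H+\log A)$ is concave on positive definite $A$, I will peel off one summand at a time via Jensen's inequality. This yields
\[
\E\operatorname{tr}\exp\Big(\sum_i \theta X_i\Big)\le \operatorname{tr}\exp\Big(\sum_i \log \E e^{\theta X_i}\Big).
\]
This is the main obstacle. Golden--Thompson only handles two matrices, so it does not give the sharp variance term $\|\sum_i\E X_i^2\|$. I would invoke Lieb's theorem as a black box rather than reprove it.

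\textbf{Step 3 (per-summand mgf bound).} For scalars $x\le K$ we have
\[
e^{\theta x}\le 1+\theta x+g(\theta)x^2,\qquad g(\theta)=\frac{e^{\theta K}-\theta K-1}{K^2}\le \frac{\theta^2/2}{1-\theta K/3}\quad\text{for }0<\theta<3/K,
\]
where the last inequality follows by comparing Taylor coefficients using $k!\ge 2\cdot 3^{k-2}$. By the transfer rule for the functional calculus (a scalar inequality on the spectrum lifts to a semidefinite inequality), $e^{\theta X_i}\preceq I+\theta X_i+g(\theta)X_i^2$. Taking expectations and using $\E X_i=0$ gives $\E e^{\theta X_i}\preceq I+g(\theta)\E X_i^2\preceq \exp(g(\theta)\E X_i^2)$. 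Since the logarithm is operator monotone, $\log\E e^{\theta X_i}\preceq g(\theta)\E X_i^2$.

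\textbf{Step 4 (conclusion).} Because $\operatorname{tr}\exp$ is monotone with respect to $\preceq$, Steps 2 and 3 give
\[
\E\operatorname{tr}e^{\theta Y}\le \operatorname{tr}\exp\Big(g(\theta)\sum_i\E X_i^2\Big)\le n\,e^{g(\theta)\sigma^2}.
\]
Combining with Step 1,
\[
\Pr[\lambda_{\max}(Y)\ge t]\le n\exp\left(-\theta t+\frac{\theta^2\sigma^2/2}{1-\theta K/3}\right).
\]
Choosing $\theta=t/(\sigma^2+Kt/3)$, which lies in $(0,3/K)$, makes the exponent equal to $-\frac{t^2/2}{\sigma^2+Kt/3}$, as required.
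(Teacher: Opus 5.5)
Your proposal is correct and is exactly the standard argument from Tropp's monograph: the matrix Laplace transform, Lieb-based subadditivity of matrix cumulant generating functions, the per-summand bound $g(\theta)\le \frac{\theta^2/2}{1-\theta K/3}$, and the choice $\theta = t/(\sigma^2+Kt/3)$; the paper does not reprove this but simply cites that reference. One small point: your $\theta$ lies in $(0,3/K)$ only when $t>0$ and $\sigma^2>0$, so dispatch the two trivial cases separately ($t=0$, where the bound exceeds $1$, and $\sigma^2=0$, where every $X_i=0$ almost surely).
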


\begin{lemma}[Bernstein's inequality with bounded dependence, Theorem 2.3 in \cite{Janson04}]
\label{lem:bernstein-bounded-dependence}
Let $K>0, \Delta \ge 1$ and let $X_1, \dots, X_N$ be mean-zero random variables such that their dependency graph has maximum degree at most $\Delta$ and $|X_i| \le K$ for all $i \in [N]$. Then, for every $t \ge 0$, one has
\begin{equation*}
    \Pr\left[\left|\sum_{i=1}^N X_i\right| \ge t\right] \le 2 \exp\left(-\frac{t^2/4}{(\Delta+1)\left(\sum_{i=1}^N \E[X_i^2] + Kt/3\right)}\right) \, .
\end{equation*}
\end{lemma}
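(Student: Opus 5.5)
Since this lemma is quoted from~\cite{Janson04}, the plan is to rederive it from the scalar Bernstein bound via Janson's \emph{coloring plus convexity} trick.

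\textbf{Step 1: split into independent classes.} Greedily color the dependency graph with $k\le \Delta+1$ colors. Each color class $C_j$ is an independent set in the dependency graph, so by the defining property of a dependency graph the variables $\{X_i\}_{i\in C_j}$ are mutually independent. Write $S=\sum_i X_i=\sum_{j=1}^k S_j$ with $S_j=\sum_{i\in C_j}X_i$. Set $V_j=\sum_{i\in C_j}\E[X_i^2]$ and $V=\sum_j V_j$.

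\textbf{Step 2: bound the moment generating function.} For each single variable, the standard Bernstein estimate gives, for $0\le\theta<3/K$,
\[\log\E e^{\theta X_i}\le \frac{\theta^2\E[X_i^2]}{2(1-\theta K/3)}.\]
By independence within $C_j$ this extends to $S_j$ with $V_j$ in place of $\E[X_i^2]$. Choose convex weights $w_j>0$ with $\sum_j w_j=1$. Jensen's inequality for $\exp$, applied to $\lambda S=\sum_j w_j(\lambda S_j/w_j)$, gives
\[\E e^{\lambda S}\le \sum_j w_j\,\E e^{\lambda S_j/w_j}\le \max_j \exp\!\Big(\frac{\lambda^2V_j/w_j^2}{2(1-\lambda K/(3w_j))}\Big).\]

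\textbf{Step 3: choose the weights.} The naive choice $w_j=1/k$ already gives
\[\E e^{\lambda S}\le \exp\!\Big(\frac{k^2\lambda^2 V}{2(1-k\lambda K/3)}\Big),\]
using $V_j\le V$. Chernoff optimization then yields the tail bound
\[\Pr[|S|\ge t]\le 2\exp\!\Big(-\frac{t^2/2}{k^2V+kKt/3}\Big).\]
This is the right shape, but it has $k^2$ rather than $k$ in front of the variance. To remove the extra factor of $k$, I would take
\[w_j\propto\sqrt{V_j+a}\quad\text{with } a=\frac{Kt}{3k}.\]
Then $V_j/w_j^2\le (\sum_l\sqrt{V_l+a})^2$, and by Cauchy--Schwarz
\[\Big(\sum_l\sqrt{V_l+a}\Big)^2\le k(V+ka)=k\Big(V+\frac{Kt}{3}\Big).\]
The $K/w_j$ term is controlled because every $w_j\gtrsim\sqrt{a}/\sum_l\sqrt{V_l+a}$. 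After the Chernoff step, the loss from this control is what produces the constant $4$ in place of $2$ and the factor $(\Delta+1)(V+Kt/3)$ in the denominator.

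\textbf{Main obstacle.} The hard part is this weight-balancing step: keeping the exponent's denominator linear in $k$ while the sub-exponential term $\lambda K/w_j$ does not blow up on color classes with tiny variance. The padding by $a$ is my first attempt. If its constants do not close, I would fall back to Janson's fractional-cover argument directly, which handles exactly this issue.
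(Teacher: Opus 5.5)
You prove the lemma outright, whereas the paper gives no argument of its own and simply quotes Janson's Theorem~2.3. Your mechanism is to split the sum along independent sets of the dependency graph, decouple the exponential moment by convexity, and apply the independent-sum Bernstein bound within each piece. This is the mechanism underlying Janson's work, specialized to a proper $(\Delta+1)$-coloring rather than a fractional cover.

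Your proposal is correct as it stands. The weight-balancing step you flagged as the main obstacle does close, with exactly the constant $1/4$. Put $D=k(V+Kt/3)$, so that $T:=\sum_l\sqrt{V_l+a}$ satisfies $T^2\le D$ by your Cauchy--Schwarz step, and take $\lambda=t/(3D)$. Since $w_j\ge\sqrt a/T$ and $D\ge kKt/3$,
\[
\frac{\lambda K}{3w_j}\le\frac{\lambda KT}{3\sqrt a}\le\frac13\sqrt{\frac{kKt}{3D}}\le\frac13 .
\]
Hence $\log\E e^{\lambda S_j/w_j}\le\frac34\lambda^2T^2\le\frac{t^2}{12D}$ for every $j$, and therefore $\Pr[S\ge t]\le\exp\bigl(-\frac{t^2}{3D}+\frac{t^2}{12D}\bigr)=\exp\bigl(-\frac{t^2}{4D}\bigr)$. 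Applying the same bound to $-X_1,\dots,-X_N$ and using $k\le\Delta+1$ gives the statement, so no fallback is needed.

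The citation buys brevity and generality, since Janson works with fractional covers. Your argument extends to that setting without change: take an exact fractional cover $\{(I_j,c_j)\}$, convex weights $w_j\propto c_j\sqrt{V_j+a}$, and $k$ replaced by $\sum_j c_j$. Cauchy--Schwarz then becomes $(\sum_jc_j\sqrt{V_j+a})^2\le(\sum_jc_j)(\sum_jc_j(V_j+a))$, and the same computation yields the fractional chromatic number in place of $\Delta+1$.

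For the paper's single application (\Cref{lem:bad-degree-projection-bound}, with $K=1$, $\Delta=d$, $t=n^{3/4}$ and $\sum_i\E[X_i^2]\le n$), even your naive bound with $k^2V$ in the denominator would have sufficed.
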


%% file: sos-fkn.tex
\section{Sum-of-squares FKN theorem}
\label{sec:sos-fkn}

Below we present a degree-$4$ sum-of-squares version of the Friedgut-Kalai-Naor theorem~\cite{MR1942632-Friedgut02}.
Our proof follows the outline of the proof of FKN given in Chapter 9 of O'Donnell~\cite{MR3443800-ODonnell14}.

\begin{theorem}[Sum-of-squares FKN theorem]
\label{thm:sos-fkn}
Consider the following system of polynomial inequalities in variables $\{f(x)\}_{x \in \{\pm 1\}^d}$ and $\delta$:
\[
\cP = \begin{cases}
\sum_{i \in [d]} \langle f,\chi_i\rangle^2\ge 1-\delta\,, \\
f(x)^2 = 1\,, \forall x \in \{\pm 1\}^d\,, \\
\delta \geq 0\,.
\end{cases}
\]
Then $\cP \proves{f, \delta}{4} \Set{\sum_{i \in [d]} \langle f, \chi_i\rangle^4 \geq 1 - O(\delta)}$.
Furthermore, $\delta$ only appears in the sum-of-squares proof as part of degree-$1$ monomials.
\end{theorem}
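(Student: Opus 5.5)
The plan is to run O'Donnell's variance argument for the linear part of $f$, keeping every step as an explicit polynomial identity or inequality of degree at most $4$ in the variables $\{f(x)\}$. Write $a_i=\langle f,\chi_i\rangle$, $\ell=\sum_i a_i\chi_i$, $W=\sum_i a_i^2$ and $h=f-\ell$. All of these are linear in the variables $f$.

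First I would record the easy consequences of the axioms.
\begin{itemize}
\item By \Cref{claim:sos-boolean-parseval}, $\cP\proves{f,\delta}{2} W\le 1$.
\item Also $\cP\proves{f,\delta}{2}\|h\|^2=1-W\le\delta$, with $\delta$ entering only linearly.
\item Hence $1-W^2=(1-W)(1+W)\le 2\delta$ in degree $4$, using $\delta\ge0$ times the square $1$, and $(1-W)\cdot W$ with $W$ a sum of squares.
\end{itemize}
Next I would expand the fourth moment of $\ell$ exactly. By direct Fourier computation this is a polynomial identity:
\[
\E\ell^4=3W^2-2\textstyle\sum_i a_i^4 .
\]
Equivalently, with $q=\ell^2-W=\sum_{i\neq j}a_ia_j\chi_{\{i,j\}}$, we have $\E q^2=2(W^2-\sum_i a_i^4)$. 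So it suffices to derive $\E q^2\le O(\delta)$ in degree $4$; together with $W^2\ge 1-2\delta$ this gives $\sum_ia_i^4\ge 1-O(\delta)$.

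To bound $\E q^2$, I would replace Paley--Zygmund with an exact use of Booleanity. Since $\E q=0$ and $q$ is orthogonal to constants, and since $f(x)^2=1$ may be multiplied by the degree-$2$ polynomial $q(x)$ while staying within degree $4$, we get
\[
\E q^2=\E[q\,\ell^2]-\E[q\,f^2]=\E\big[q(\ell-f)(\ell+f)\big]=-\E\big[q\,h\,(\ell+f)\big].
\]
Apply SoS AM--GM pointwise, $-2qh(\ell+f)\le \tau q^2+\tau^{-1}h^2(\ell+f)^2$, and average over $x$. Then
\[
\E q^2\le \tfrac{\tau}{2}\E q^2+\tfrac{1}{2\tau}\E\big[h^2(\ell+f)^2\big],
\]
so with $\tau=1$ we get $\E q^2\le \E[h^2(\ell+f)^2]\le 2\E[h^2f^2]+2\E[h^2\ell^2]$. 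The first term equals $2\E h^2\le2\delta$ using the axiom $f(x)^2=1$ in degree $4$. Every step so far has degree at most $4$, and $\delta$ appears only through the linear inequality $\|h\|^2\le\delta$ and $1-W\le\delta$, which gives the ``furthermore'' clause.

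The main obstacle is the cross term $\E[h^2\ell^2]$. Classically it is controlled by hypercontractivity, which would push the degree above $4$. What I would try first is to exploit that $h$ has no degree-$1$ Fourier mass and that $\ell^2=W+q$:
\[
\E[h^2\ell^2]=W\,\E h^2+\E[h^2q].
\]
The first piece is $\le\delta$. For the second piece, $\E[h^2 q]$, I would use the same device again. Pointwise $h^2=(f-\ell)^2=1-2f\ell+\ell^2$ under Booleanity, so
\[
\E[h^2q]=\E[q\ell^2]-2\E[f\ell q]=\E q^2-2\E[f\ell q]+W\E q.
\]
The term $\E[f\ell q]$ is in turn rewritten via $f=\ell+h$ as $\E[\ell^2 q]+\E[h\ell q]$. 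This yields a linear relation among $\E q^2$, $\E[h\ell q]$ and $\E[h^2\ell^2]$. I expect that absorbing with suitable constants in AM--GM gives $\E q^2\le O(\delta)$, though I would double-check carefully that the coefficients really close and do not cancel to a tautology. If they do not, the fallback is a degree-$4$ SoS estimate of $\E[h^2\ell^2]$ through the identity $h(2f-h)=1-\ell^2$, which expresses $\E[(1-\ell^2)^2]$ in terms of $h$.
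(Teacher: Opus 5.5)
There is a genuine gap. Your reduction is sound as far as it goes. You have $\sum_i a_i^4 = W^2-\tfrac12\E q^2$ and $W^2\ge 1-2\delta$; for the furthermore clause, write $2\delta-(1-W^2)=2(\delta-(1-W))+(1-W)^2$ rather than multiplying the axiom by $W$. You also have $\E q^2=-\E[q\,h(\ell+f)]$ from Booleanity multiplied by $q$, and the $hf$ piece is handled by $\E[h^2f^2]=\E h^2\le\delta$.

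The whole difficulty, however, is the $h\ell$ piece, and your plan for it does not close. Your two derivations are the same identity. Expanding $\E[q(f^2-1)]=0$ with $f=\ell+h$ gives $\E q^2+2\E[h\ell q]+\E[h^2q]=0$. This is both your formula for $\E q^2$ and your formula for $\E[h^2q]$, so substituting one into the other returns $\E q^2=\E q^2$, which is exactly the tautology you suspected. The remaining unknown $\E[h\ell q]$ can then only be traded, by the same unprojected AM--GM, against $\E q^2$ and $\E[h^2\ell^2]$ again. The fallback is circular too: $1-\ell^2=h(f+\ell)$ gives $\E[(1-\ell^2)^2]=\E[h^2(f+\ell)^2]$, which is precisely the quantity you had already reduced to.

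The missing idea is to use the Fourier support of $q$. This is what the paper does by applying $\Pi_{\le 2}$ to $\ell^2-1=(\ell-f)(\ell+f)$. Let $\Pi_{=2}$ denote projection onto Fourier level exactly $2$. Since $q$ lives on level $2$, $\E[qg]=\E[q\,\Pi_{=2}g]$ for every $g$; this is a linear identity in Fourier coefficients. So apply AM--GM to $-\E[q\,\Pi_{=2}(h(\ell+f))]$ instead, giving
$\E q^2\le 2\E[(\Pi_{=2}(h\ell))^2]+2\E[(\Pi_{=2}(hf))^2]$.

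The second term is at most $\E[h^2f^2]=\E h^2\le\delta$ by Parseval. For the first, $h$ has no level-$1$ mass, so only the level-$3$ coefficients of $f$ reach level $2$:
$\Pi_{=2}(h\ell)=\sum_{j<k}\bigl(\sum_{i\notin\{j,k\}}\langle f,\chi_{\{i,j,k\}}\rangle a_i\bigr)\chi_{\{j,k\}}$.
Degree-$4$ SoS Cauchy--Schwarz then bounds its squared norm by $3W\sum_{|S|=3}\langle f,\chi_S\rangle^2\le 3\delta$. Hence $\E q^2\le 8\delta$ and $\sum_i a_i^4\ge 1-6\delta$, with $\delta$ still appearing only linearly.

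The same projection also yields your intermediate target $\E[h^2\ell^2]\le 4\delta$. So that target was true, but it is unreachable by your route without this step. With this ingredient your argument becomes essentially the paper's proof, slightly streamlined by working with $q$ rather than $\ell^2-1$.
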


\begin{proof}[Proof of \Cref{thm:sos-fkn}]
Define the function $\ell : \Set{\pm 1}^d \to \mathbb{R}$ as the projection of $f$ onto Fourier level $1$: $\ell(x) = \sum_{i \in [d]} \langle f, \chi_i\rangle \chi_i(x)$.
Because $\cP \proves{f,\delta}{2} \sum_{i \in [d]} \langle f, \chi_i\rangle^2 \geq 1-\delta$, it follows that $\ell$ is close to $f$: specifically, using \Cref{claim:sos-fourier-expansion} and \Cref{claim:sos-boolean-parseval}, we get
\begin{align*}
\cP
\proves{f, \delta}{2} \E_{ {x}} (f( {x})-\ell( {x}))^2 &= \E_{ {x}} \Paren{\sum_{\substack{S \subseteq [d] \\ |S| \neq 1}} \langle f, \chi_S\rangle \chi_S( {x})}^2
= \E_{  x} \sum_{\substack{S, S' \subseteq [d]\\ |S|, |S'| \neq 1}} \langle f, \chi_S \rangle \langle f, \chi_{S'}\rangle \chi_S(  x) \chi_{S'}(  x)\\
&= \sum_{\substack{S \subseteq [d] \\ |S| \neq 1}} \langle f, \chi_S\rangle^2
=
1 - \sum_{i \in [d]} \langle f, \chi_i\rangle^2 \leq \delta\,.
\end{align*}

Next, we prove that the function $\ell^2$ is close to the constant function $1$. Using that $\cP \proves{f,\delta}{2} f(x)^2 = 1$ for all $x \in \Set{\pm 1}^d$, we have for all $x \in \Set{\pm 1}^d$ that
\begin{align*}
\cP \proves{f,\delta}{2}
\ell(x)^2 - 1= \ell(x)^2 - f(x)^2 = (\ell(x) - f(x))(\ell(x)+f(x))\,.
\end{align*}
Let $\Pi_{\leq 2}$ be the linear operator that projects onto Fourier level at most $2$.
Note that, since $\ell$ has degree at most $1$, $\ell^2$ has degree at most $2$.
Then, using the sum-of-squares triangle inequality in \Cref{fact:sos-triangle},
\begin{align*}
\cP \proves{f,\delta}{4} \E_{  x}(\ell(  x)^2 - 1)^2
&= \E_{  x} ((\Pi_{\leq 2}((\ell - f)(\ell + f)))(  x))^2\\
&= \E_{  x} ((\Pi_{\leq 2}((\ell - f)\ell))(  x) + (\Pi_{\leq 2}((\ell - f)f))(  x))^2\\
&\leq 2\E_{  x} ((\Pi_{\leq 2}((\ell - f)\ell))(  x))^2 + 2\E_{  x} ((\Pi_{\leq 2}((\ell - f) f))(  x))^2\,.
\end{align*}
We now remove the projection in the second term; this produces a valid upper bound because, for $h = (\ell -f)f$, we have by Parseval (\Cref{claim:sos-parseval}) that
\[\cP \proves{f,\delta}{4} \E_{  x} ((\Pi_{\leq 2} h)(x))^2 = \sum_{S \subseteq [d], |S| \leq 2} \langle h, \chi_S\rangle^2 \leq \sum_{S \subseteq [d]} \langle h, \chi_S\rangle^2 = \E_{  x} (h(x))^2\,.\]
Hence,
\begin{align*}
\cP \proves{f,\delta}{4} \E_{  x}(\ell(  x)^2 - 1)^2
&\leq 2\E_{  x} ((\Pi_{\leq 2}((\ell - f)\ell))(  x))^2 + 2\E_{  x} (\ell(  x) - f(  x))^2f(  x)^2\\
&= 2\E_{  x} ((\Pi_{\leq 2}((\ell - f)\ell))(  x))^2 + 2\E_{  x} (\ell(  x) - f(  x))^2\\
&\leq 2\E_{  x} ((\Pi_{\leq 2}((\ell - f)\ell))(  x))^2 + 2\delta\,.
\end{align*}
Let us now bound the first term.
We have for all $x \in \Set{\pm 1}^d$ that
\begin{align*}
\cP \proves{f,\delta}{2}
&
(\Pi_{\leq 2}((\ell - f)\ell))(x)\\
&= -\sum_{\substack{S \subseteq [d]\\ |S| \neq 1}} \sum_{i \in [d]} \langle f, \chi_S\rangle \langle f, \chi_i\rangle \chi_{S \Delta \{i\}}(x) \ind[|S \Delta \{i\}| \leq 2]\\
&= -\sum_{i \in [d]} \Paren{\langle f, \chi_{\emptyset}\rangle \langle f, \chi_i\rangle + \sum_{\substack{j \in [d]\\ j \neq i}} \langle f, \chi_{\{i,j\}}\rangle \langle f, \chi_{j}\rangle} \chi_{i}(x)\\*
&\quad - \sum_{\substack{i, j \in [d]\\i < j}} \Paren{\sum_{\substack{k \in [d]\\ k \neq i,j}} \langle f, \chi_{\{i,j,k\}}\rangle \langle f, \chi_k\rangle} \chi_{\{i,j\}}(x)\,,
\end{align*}
so by sum-of-squares Parseval (\Cref{claim:sos-parseval}),
\begin{align*}
\cP \proves{f,\delta}{4}
&\E_{  x} ((\Pi_{\leq 2}((\ell - f)\ell))(  x))^2\\
&\leq \sum_{i \in [d]} \Paren{\langle f, \chi_{\emptyset}\rangle \langle f, \chi_i\rangle + \sum_{\substack{j \in [d]\\ j \neq i}} \langle f, \chi_{\{i,j\}}\rangle \langle f, \chi_{j}\rangle}^2 + \sum_{\substack{i, j \in [d]\\i < j}} \Paren{\sum_{\substack{k \in [d]\\ k \neq i,j}} \langle f, \chi_{\{i,j,k\}}\rangle \langle f, \chi_k\rangle}^2\,.
\end{align*}
Applying the sum-of-squares triangle inequality (\Cref{fact:sos-triangle}) and Cauchy-Schwarz (\Cref{fact:sos-cauchy-schwarz}),
\begin{align*}
\cP \proves{f,\delta}{4}
&\E_{  x} ((\Pi_{\leq 2}((\ell - f)\ell))(  x))^2\\
&\leq 2 \langle f, \chi_{\emptyset}\rangle^2 \sum_{i \in [d]} \langle f, \chi_i\rangle^2 + 2 \sum_{i \in [d]} \Paren{\sum_{\substack{j \in [d]\\j \neq i}}\langle f, \chi_{\{i,j\}}\rangle^2}\Paren{\sum_{\substack{j \in [d]\\j \neq i}} \langle f, \chi_j\rangle^2}\\
&\quad + \sum_{\substack{i,j \in [d]\\i < j}} \Paren{\sum_{\substack{k \in [d]\\k \neq i,j}}\langle f, \chi_{\{i,j,k\}}\rangle^2} \Paren{\sum_{\substack{k \in [d]\\k\neq i,j}} \langle f, \chi_k\rangle^2}\,.
\end{align*}
Finally, using that by \Cref{claim:sos-boolean-parseval} $\cP \proves{f,\delta}{2} \sum_{S \subseteq [d]} \langle f, \chi_S\rangle^2 = 1$,
\begin{align*}
\cP \proves{f,\delta}{4}
&\E_{  x} ((\Pi_{\leq 2}((\ell - f)\ell))(  x))^2\\
&\leq 2 \langle f, \chi_{\emptyset}\rangle^2 + 2 \sum_{i \in [d]} \Paren{\sum_{\substack{j \in [d]\\j \neq i}}\langle f, \chi_{\{i,j\}}\rangle^2}
+ \sum_{\substack{i,j \in [d]\\i < j}} \Paren{\sum_{\substack{k \in [d]\\k \neq i,j}}\langle f, \chi_{\{i,j,k\}}\rangle^2}\\
&\leq O\Paren{1 - \sum_{i \in [d]} \langle f, \chi_i\rangle^2}\\
&\leq O(\delta)\,.
\end{align*}
Hence, overall,
\[\cP \proves{f,\delta}{4} \E_{  x}(\ell(  x)^2 - 1)^2 \leq O(\delta)\,.\]

We have proved that the function $\ell^2$ is close to the constant function $1$.
Let us consider now the variance term $\E_{ {x}} (\ell( {x})^2-\E_{  x'} \ell(  x')^2 )^2$.
This term is upper bounded by the version with the mean replaced by $1$:
\begin{align*}
\cP \proves{f, \delta}{4} \E_{ {x}} (\ell( {x})^2-\E_{  x'} \ell(  x')^2 )^2  \leq \E_{ {x}} (\ell( {x})^2-1)^2\,,
\end{align*}
as it is equivalent after expanding all terms to $0 \leq (\E_{  x} \ell(  x)^2 - 1)^2$.

To close the proof, we derive the following relation on the variance term:
\begin{align*}
\cP
\proves{f, \delta}{4} \frac{1}{2} \E_{ {x}} (\ell( {x})^2 - \E_{ {x}'} \ell( {x}')^2)^2
&= \frac{1}{2} \E_{  x} \ell(  x)^4 - \frac{1}{2} \Paren{\E_{  x} \ell( {x})^2}^2\\
&= \frac{1}{2} \E_{  x} \Paren{\sum_{i \in [d]} \langle f, \chi_i\rangle \chi_i(  x)}^4 - \frac{1}{2} \Paren{\sum_{i \in [d]} \langle f, \chi_i\rangle^2}^2\\
&= \sum_{i \neq j \in [d]} \langle f, \chi_i\rangle^2 \langle f, \chi_j\rangle^2\\
&= \Paren{\sum_{i \in [d]} \langle f, \chi_i\rangle^2}^2 - \sum_{i \in [d]} \langle f, \chi_i\rangle^4\\
&= \Paren{1 - \Paren{ 1- \sum_{i \in [d]} \langle f, \chi_i\rangle^2}}^2 - \sum_{i \in [d]} \langle f, \chi_i\rangle^4\\
&\geq 1 - 2\Paren{ 1- \sum_{i \in [d]} \langle f, \chi_i\rangle^2}  - \sum_{i \in [d]} \langle f, \chi_i\rangle^4\\
&\geq 1-2\delta - \sum_{i \in [d]} \langle f, \chi_i\rangle^4\,.
\end{align*}

Using our upper bound on $\E_{ {x}} (\ell( {x})^2 - 1)^2$, we get that
\begin{align*}
\cP
\proves{f, \delta}{4} 1-2\delta -\sum_{i \in [d]} \langle f, \chi_i\rangle^4 \leq O(\delta)\,,
\end{align*}
and the conclusion follows by rearranging.
\end{proof}

%% file: spectral-concentration.tex
\newcommand{\Plow}{\ensuremath{P}}
\newcommand{\Phigh}{\ensuremath{Q}}

\section{Spectral concentration}\label{sec:spectral-concentration}
For ease of notation, here and in the subsequent sections we denote the unnormalized Laplacians of the cube and the sampled graph by $L$ and $L'$, respectively, corresponding to $L_{Q_d}$ and $L_{G^*}$ in the technical overview (\Cref{sec:technical-overview}).

In this section, we prove the following spectral concentration result for the percolated hypercube:

\begin{lemma}[Spectral concentration of the percolated hypercube]\label[lemma]{lem:spectral-concentration}
Fix any $\epsilon \in (0, 1/10)$, $C_1\ge 1$, and $C_2 \ge 20$.
Let $Q_d$ be the $d$-dimensional hypercube, and let $Q_d'$ be a random subgraph of $Q_d$ obtained by including each edge of $Q_d$ independently with probability $p$.
Also let $L$ be the unnormalized Laplacian of $Q_d$ and let $L'$ be the unnormalized Laplacian of $Q_d'$.
For $d \geq 1$ large enough and $\frac{C \log d}{d} \le p \le 1$ for some $C = C(\epsilon, C_1, C_2)$, the following holds with probability at least $1-1/d^{C_1}$:
There exists a set $R \subseteq \Set{\pm 1}^d$ such that $|R| \leq 2^d/d^{C_2}$ and
\begin{equation*}
    -\epsilon pL-d^2P_{>0}\Pi_R P_{>0}
    \preceq
    L'-pL
    \preceq
    \epsilon pL+d^2 P_{>0}\Pi_R P_{>0} \, ,
\end{equation*}
where $P_{>0} = I-\frac{1}{2^d} \chi_\emptyset^{\otimes 2}$ and $\Pi_R$ is the diagonal matrix with $1$ on the entries corresponding to vertices in $R$.
\end{lemma}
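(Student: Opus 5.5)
Let $\Delta = L' - pL$, and let $R$ be the set of vertices whose degree in $Q_d'$ deviates from $pd$ by more than $\epsilon pd/10$. Since $pd \ge C\log d$, a Chernoff bound gives $\Pr[v\in R]\le d^{-C_2-C_1-1}$ for $C$ large. Markov's inequality then gives $|R|\le 2^d/d^{C_2}$ with probability $1-d^{-C_1-1}$.

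We split $\R^{\{\pm1\}^d}$ by Fourier degree with threshold $k=d/100$, writing $P=P_{\le k}$ and $Q=P_{>k}$. The goal is to bound $\langle g,\Delta g\rangle$ for every $g\perp\chi_\emptyset$. Constants are invariant: $L\chi_\emptyset=L'\chi_\emptyset=0$, which is why $P_{>0}$ sandwiches $\Pi_R$. Expanding $g=Pg+Qg$ produces a low-low term, a high-high term, and cross terms, and each gets a separate bound. Throughout we use $\langle h,Lh\rangle\ge 2k\|h\|^2$ for $h=Qg$ and $\langle h,Lh\rangle\ge 2\|h\|^2$ for $h\perp\chi_\emptyset$.

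\textbf{Low-degree part.} Write $\Delta=\sum_e(\xi_e-p)L_e$ with $L_e$ the edge Laplacians. Apply matrix Bernstein (\Cref{lem:matrix-bernstein}) to the whitened sum $L^{+1/2}P\Delta P L^{+1/2}$. Each term has norm at most $\max_e\|L^{+1/2}PL_ePL^{+1/2}\|$. Equivalently, this is the maximum over unit $h$ of $\langle Ph,L_ePh\rangle/\langle Ph,LPh\rangle$ for $h$ orthogonal to constants in the range of $P$. For an edge in direction $i$ this ratio is bounded via the sup norm of low-degree derivatives, which a priori is as large as $\binom{d}{\le k}$. The dimension factor is also large, but it equals $\binom{d}{\le k}\approx 2^{H(1/100)d}$, so $\log$ of the dimension is $O(d)$.

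The bare sup-norm bound is therefore too crude, and I would refine the Bernstein step. Group the terms by vertex star or by direction. Alternatively, bound the variance proxy $\sum_e p\,\|L^{+1/2}PL_ePL^{+1/2}\|\cdot(\cdot)$, using the fact that $\sum_e L_e=L$, so the variance is at most $pK$ with $K$ the norm bound above. One then needs $K\lesssim p/\log(\text{dim}) \approx p/d$. Making this work, possibly via a trace/moment method instead of plain Bernstein, is the step I regard as the \textbf{main obstacle}. The target is $|\langle Pg,\Delta Pg\rangle|\le \tfrac\epsilon3 p\langle Pg,LPg\rangle$.

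\textbf{High-degree part.} Decompose $\Delta=\Delta_{UU}+(\text{terms touching }R)$, where $U=R^c$. Write $\Delta_{UU}=D_U-A_U$, with $D$ the degree deviation and $A$ the adjacency deviation restricted to $U$. On $U$ we have $|D|\le \epsilon pd/10$. The restricted adjacency deviation has norm $O(\sqrt{pd}+\log d)$ by a standard degree-truncated spectral norm bound for sparse random matrices. This bound is where removing high-degree vertices is needed, and it holds after adjusting $C$. Since $pd\ge C\log d$, this is at most $\tfrac\epsilon{10} pd\le\tfrac\epsilon 3 p\langle h,Lh\rangle/\|h\|^2$ for $h=Qg$. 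The terms involving $R$ are bounded by $\|\Delta\|\le 2d$ on $R$ rows. Using $2|\langle a,\Delta b\rangle|\le \eta\langle a,|\Delta|a\rangle+\eta^{-1}\langle b,|\Delta| b\rangle$ restricted to $R$, they become at most $d^2\|\Pi_R g\|^2+\tfrac{\epsilon}{3}p\langle g,Lg\rangle$. Since $P_{>0}g=g$, this is exactly the $d^2P_{>0}\Pi_RP_{>0}$ error.

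\textbf{Cross terms.} Write $\langle Pg,\Delta Qg\rangle$, again splitting $\Delta$ into its $U$ part and its $R$ part. The $U$ part is handled by Cauchy--Schwarz with the operator-norm bound on $\Delta_{UU}$ together with $\langle Pg,LPg\rangle\ge 2\|Pg\|^2$ and $\langle Qg,LQg\rangle\ge 2k\|Qg\|^2$. This yields a bound of $O(\epsilon)\,p\langle g,Lg\rangle$ because $\sqrt{pd}\ll \epsilon\sqrt{pd}\cdot\sqrt{pd}$. The $R$ part is absorbed into $d^2\|\Pi_Rg\|^2$ as before. Summing the three estimates and rescaling $\epsilon$ yields both inequalities. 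A union bound over the failure events (Chernoff, Markov, matrix concentration) gives probability $1-d^{-C_1}$.
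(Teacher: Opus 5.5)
Your skeleton matches the paper's: a bad-degree set $R$, a Fourier split at level $d/100$, matrix Bernstein on the low part, and a truncated operator-norm bound on the high part. However, the cross-term step fails, and the idea needed to fix it is missing. You bound $|\langle Pg,\Delta_{UU}Qg\rangle|\le\|\Delta_{UU}\|\,\|Pg\|\,\|Qg\|$ and absorb it "because $\sqrt{pd}\ll\epsilon\sqrt{pd}\cdot\sqrt{pd}$". This credits the low-degree part with energy $pd\|Pg\|^2$, but a low-degree function may have only $p\langle Pg,LPg\rangle=2p\|Pg\|^2$ (take $Pg=\chi_1$). Since $\langle Qg,LQg\rangle$ can be as small as about $\frac{d}{50}\|Qg\|^2$, the budget $\epsilon p(\langle Pg,LPg\rangle+\langle Qg,LQg\rangle)$ dominates $c\,\|Pg\|\|Qg\|$ only when $c\lesssim\epsilon p\sqrt d$. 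For $p=\Theta(\log d/d)$ this is $O(\epsilon C\log d/\sqrt d)=o(1)$. Yet $\|\Pi_U\Delta\Pi_U\|\ge 1-p$: test on $\ind_x\pm\ind_y$ for a retained edge $\{x,y\}$ inside $U$. So no operator-norm bound on $\Delta$ can close the cross term.

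What the paper uses instead is that $\Delta$ is small in $\ell_2$ \emph{on low-degree vectors}: $P_{\le r}(L'-pL)^2P_{\le r}\preceq 4p\,P_{\le r}LP_{\le r}$ (\Cref{lem:low-degree-squared-error}). This is proved by matrix Bernstein on the vertexwise rank-one terms $a_xa_x^\top$, with vertices split by parity so the terms are independent. It gives $|\langle \Delta Pg,Qg\rangle|\le 2\sqrt{p\langle Pg,LPg\rangle}\,\|Qg\|$, which AM--GM absorbs using $\|Qg\|^2\le\frac{50}{d}\langle Qg,LQg\rangle$ and $pd\ge C\log d$. The paper packages this as a Schur complement with $\|(QM_\sigma Q)^\dagger\|\le 1/(\gamma pd)$. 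The $R$-part of your cross term has the same defect. Bounding $\langle \Pi_UPg,\Delta\,\Pi_RQg\rangle$ by $2d\|Pg\|\|\Pi_RQg\|$ and applying AM--GM leaves an $\Omega(1)\|Pg\|^2$ term, which $\epsilon p\langle Pg,LPg\rangle$ cannot absorb.

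A second ingredient is also missing. Your high-degree estimate yields $d^2\|\Pi_RQg\|^2$, not $d^2\|\Pi_Rg\|^2$, so "$P_{>0}g=g$" does not produce the stated error term. Converting it requires that low-degree functions put almost no mass on $R$, i.e.\ $\|P_{\le r}\Pi_RP_{\le r}\|\le d^{-10}$. This cannot follow from $|R|\le n/d^{C_2}$ alone: the indicator of a codimension-$C_2\log_2 d$ subcube has degree $O(\log d)$ and lives on a set of that density. The paper proves the bound (\Cref{lem:bad-degree-projection-bound}) by controlling every Fourier coefficient $\widehat{\ind_R}(Z)$ with $1\le|Z|\le 2r$ via Bernstein under bounded dependence, followed by a union bound. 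It then uses this to control the block $d^2P\Pi_RQ$.

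Two further remarks. First, the step you call the main obstacle is not one. Let $m=\dim\spn\{\chi_S:|S|\le d/100\}\le n^{1/10}$. For $h$ in this space, $\|h\|_\infty^2\le m\|h\|^2$ in the normalized norm, and a single edge carries weight $1/n$. So each whitened edge term has norm at most $2m/n\le 2n^{-0.9}$, and plain matrix Bernstein works with enormous room; you dropped the $1/n$.

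Second, the bound $\|\Pi_U\Delta\Pi_U\|=O(\sqrt{pd}+\log d)$ that you call standard is where real work lies. The Feige--Ofek and Le--Levina--Vershynin bounds are stated in terms of $n\max_{ij}p_{ij}$, which here equals $np$. The paper proves only $\|\Pi_U\Delta\Pi_U\|\le 2\gamma pd$, in three steps:
\begin{itemize}
\item it proves a discrepancy bound for disjoint $S,T\subseteq U$, decomposing into connected components of $Q_d[S\cup T]$;
\item it handles imbalanced pairs using the degree cap, and pairs with $|A\cup B|\le d$ using edge-isoperimetry;
\item it union-bounds over the at most $n(ed)^t$ connected sets of size $t$, then applies the Bilu--Linial converse expander mixing lemma.
\end{itemize}
Finally, your threshold $\epsilon pd/10$ is too coarse: the degree deviation alone exceeds the high-degree budget $\epsilon p\cdot\frac{d}{50}$, so take $\gamma\ll\epsilon/50$.
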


The proof is organized as follows. Our set $R$ will be the set of vertices in $Q_d'$ whose degree deviates significantly from their mean $pd$.
First, in \Cref{lem:size-bad-degree-set}, we show that with high probability the claimed bound on the size of $R$ holds. Next, in \Cref{lem:good-degree-spectral-concentration}, we show that after projecting onto the complement of $R$ (i.e., onto the set of vertices whose degree is well concentrated around $pd$), the Laplacians $L'$ and $pL$ are close in spectral norm. After that, we split the space according to Fourier degree. In \Cref{lem:low-degree-spectral-approx}, we handle the contribution from the low-degree Fourier space directly by applying concentration inequalities. Then, in \Cref{lem:high-degree-spectral-lb}, we handle the contribution from the high Fourier degree space.
Finally, we  use a Schur-complement argument in the final step to handle the contribution from the cross terms.

We start by setting up notation and parameters.
Throughout this section, let $V_d=\{\pm1\}^d$ and $n=|V_d|=2^d$, and write $Q_d=(V_d,H_d)$ and $Q_d'=(V_d,E)$.
Here $Q_d'$ denotes the sampled subgraph in its original cube coordinates, before any vertex renaming.
For an edge set $F$, let $F(W)$ denote the edges with both endpoints in $W$, and let $F(S,T)$ denote the edges with one endpoint in $S$ and the other in $T$, counting each edge once even when $S$ and $T$ overlap.

Let $\epsilon$, $C_1$, $C_2$ be as in the lemma statement. Suppose that $d$ is sufficiently large, so that $d \ge \max\{C_1^2, 200\}$.
Let $\gamma = \epsilon/1000$, and let $\eta =\gamma^2/C^2_\bl$, where $C_\bl \geq 1$ denotes the constant from \Cref{lem:converse-expander-mixing}.
Let $C = \max\{240(C_1+15)/\eta^4 , 4(C_1+C_2+11)/\gamma^{2}\}$  be a constant depending on $\epsilon,C_1$ and $C_2$.
Let $\frac{C\log d}{d} \le p \le 1$ (in particular, $d$ has to be large enough to satisfy this inequality).

Given a subgraph $G = (V_d,E)$ of $Q_d$, we define $R_G = \{x \in V_d: |\deg_G(x)-pd|>\gamma pd\}$ and $U_G = V_d\setminus R_G$.
Let $R$ be the random set
$$ R \coloneqq R_{Q_d'} = \{x \in V_d: |\deg_{Q_d'}(x)-pd|>\gamma pd\}\,.$$
Additionally, write $U \coloneqq U_{Q_d'}$. We start by bounding the size of the trimmed set $R$.

\begin{lemma}[Small bad-degree set]
\label{lem:size-bad-degree-set}
    With probability at least $1-1/d^{C_1+10}$ over the sampling of $Q_d'$, one has $|R| \le n/d^{C_2}$.
\end{lemma}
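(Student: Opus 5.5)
The plan is a first-moment bound followed by Markov's inequality. For a fixed vertex $x\in V_d$, the degree $\deg_{Q_d'}(x)$ is a sum of $d$ independent Bernoulli$(p)$ variables, with mean $pd$. I will bound $\Pr[x\in R]=\Pr[|\deg_{Q_d'}(x)-pd|>\gamma pd]$ by applying Bernstein's inequality (\Cref{lem:bernstein}) to the centered variables $X_e=\ind[e\in E]-p$ for the $d$ edges $e$ incident to $x$. These variables satisfy $|X_e|\le 1$ and $\sum_e \E[X_e^2]\le pd$. With $t=\gamma pd$ and $\gamma<1$, the denominator is at most $pd+\gamma pd/3\le 2pd$. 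This gives
\[
\Pr[x\in R]\le 2\exp\!\left(-\frac{\gamma^2 (pd)^2/2}{2pd}\right)=2\exp\!\left(-\frac{\gamma^2 pd}{4}\right)\le 2\exp\!\left(-\frac{\gamma^2 C\log d}{4}\right),
\]
where the last step uses $pd\ge C\log d$.

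Next I use the choice of $C$. Since $C\ge 4(C_1+C_2+11)/\gamma^2$, the exponent is at least $(C_1+C_2+11)\log d$. Hence $\Pr[x\in R]\le 2d^{-(C_1+C_2+11)}$. I am taking $\log$ to be the natural logarithm here; if it were base $2$ the bound would only improve. By linearity of expectation,
\[
\E|R|\le 2n\,d^{-(C_1+C_2+11)}.
\]

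Finally, Markov's inequality gives
\[
\Pr\big[|R|>n/d^{C_2}\big]\le \frac{\E|R|}{n/d^{C_2}}\le 2d^{-(C_1+11)}\le d^{-(C_1+10)},
\]
where the last inequality holds since $d\ge 2$. This is exactly the claimed bound.

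No independence between the events $\{x\in R\}$ for different vertices is needed, because Markov's inequality only uses the expectation. There is no real obstacle in this argument. The only thing to check is the bookkeeping that the constant $C$ fixed earlier in the section is large enough to absorb the factor $2$ and the extra $d^{-1}$ margin, and the computation above confirms that it is.
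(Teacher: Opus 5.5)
Your proof is correct and follows essentially the same route as the paper: a per-vertex tail bound on $\deg_{Q_d'}(x)$ (the paper uses a Chernoff bound with exponent $\gamma^2 pd/3$; your Bernstein bound with exponent $\gamma^2 pd/4$ is equally valid and still fits the paper's choice $C \ge 4(C_1+C_2+11)/\gamma^2$), then linearity of expectation and Markov's inequality. Your bookkeeping of the constants is accurate.
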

\begin{proof}
A vertex $u \in V_d$ is in $R$ if its degree deviates from its mean $pd$ by more than a relative factor of $\gamma$.  By a Chernoff bound, for a fixed $u$ this happens with probability at most
\begin{equation*}
    2 \exp(-\gamma^2 pd/3) \le 2 \exp(-\gamma^2 C/3 \cdot \log d) \le 2 d^{-(C_1+C_2+11)} \le 1/d^{C_1+C_2+10} \, ,
\end{equation*}
where the second transition follows by the setting of $C$.
Thus, the expected size of $R$ is at most $n/d^{C_1+C_2+10}$. By Markov's inequality, $\Pr[|R| > n/d^{C_2} ] \leq  1/d^{C_1+10}$.
\end{proof}

Next, in \Cref{lem:good-degree-spectral-concentration} we will show that after projecting onto $U$, the matrix $L'-pL$ has small spectral norm. We start by showing in \Cref{lem:good-degree-cut-concentration}, that with high probability, for every pair of disjoint sets $S,T \subseteq U$, the bilinear form of $L'-pL$ against $\ind_S$ and $\ind_T$ is small. The proof uses the standard edge-isoperimetric inequality for the hypercube, which we recall below.

 \begin{lemma}[Hypercube subgraph isoperimetry, see \cite{Bollobas86}]
\label{lem:cube-subgraph-edge-bound}
    For any $\emptyset \neq W \subseteq V_d$, one has $|H_d(W)| \le \frac{1}{2}|W| \log_2 |W|$.
\end{lemma}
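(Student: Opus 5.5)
We argue by strong induction on $d$, using the recursive structure of the hypercube. Write $|W| = w$ and set $g(w) = \frac12 w\log_2 w$. The claim is trivial for $d=0$, and also whenever $w=1$, since then $H_d(W)=\emptyset$ and $g(1)=0$. For general $d$ and $W\subseteq V_d$, split $V_d$ into the two subcubes $V^{+}=\{x: x_d=1\}$ and $V^{-}=\{x:x_d=-1\}$. Each is isomorphic to $Q_{d-1}$, and the edges between them form the perfect matching $x\leftrightarrow x^{\sim d}$. Let $W^{\pm}=W\cap V^{\pm}$, with sizes $a\ge b$ and $a+b=w$; by the symmetry swapping the two halves we may assume this ordering.

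Every edge of $H_d(W)$ is either inside $W^+$, inside $W^-$, or a matching edge between them. The matching edges inside $W$ number at most $\min(a,b)=b$. If $b=0$, then $W=W^+$ lives in a copy of $Q_{d-1}$ and we are done by induction. Otherwise the induction hypothesis, applied to each half (with the convention $g(0)=0$), gives
\[
|H_d(W)| \le g(a)+g(b)+b .
\]
It therefore suffices to prove the elementary inequality
\[
\tfrac12 a\log_2 a+\tfrac12 b\log_2 b + b \le \tfrac12 (a+b)\log_2(a+b) \qquad \text{for all integers } a\ge b\ge 1.
\]

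This inequality is the only real step. To prove it, rearrange it as
\[
a\log_2\frac{a+b}{a} + b\log_2\frac{a+b}{b} \ge 2b .
\]
Set $t=b/a\in(0,1]$ and divide by $a$. The claim becomes $\phi(t)\ge 0$, where
\[
\phi(t) = \log_2(1+t) + t\log_2\frac{1+t}{t} - 2t .
\]
At the endpoints, $\phi(1)=1+1-2=0$ and $\phi(0^+)=0$. Differentiating,
\[
\phi'(t) = \log_2\frac{1+t}{t} - 2 ,
\]
because the two terms $\frac{1}{(1+t)\ln 2}$ and $-\frac{1}{(1+t)\ln 2}$ arising from the product rule cancel. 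Thus $\phi'$ is decreasing in $t$, so $\phi$ is concave on $(0,1]$. A concave function that vanishes at both endpoints is nonnegative on the interval, which gives $\phi\ge 0$.

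Combining this inequality with the recursion closes the induction, and the lemma follows for every nonempty $W$. I expect no obstacle beyond this calculus check. The one point to watch is the bookkeeping when $b=0$ or $a=b$; the endpoint $t=1$ is exactly the tight case, realized when $W$ is a subcube.
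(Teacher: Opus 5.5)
Your proof is correct. The induction on $d$ splits along the last coordinate and bounds the cross edges by $\min(a,b)$. It reduces to $g(a)+g(b)+b\le g(a+b)$, which you verify correctly via the concave function $\phi$ vanishing at $t\to 0^+$ and $t=1$; the derivative computation and the $b=0$ case are also right.

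The paper gives no proof of its own here and only cites Bollobás, so there is nothing to compare against; your argument is the standard textbook proof of this edge-isoperimetric bound.
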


\begin{lemma}[Concentration of good-degree cuts]
\label{lem:good-degree-cut-concentration}
    With probability at least $1-1/d^{C_1+10}$ over the sampling of $Q_d'$, for every disjoint $S,T \subseteq U$ one  has
    \begin{equation*}
        \left|\langle \ind_S,  L' \ind _T \rangle - p\langle \ind_S,  L \ind_T \rangle\right| \le \eta pd \cdot \|\ind_S\|_2 \|\ind_T\|_2 \, .
    \end{equation*}
\end{lemma}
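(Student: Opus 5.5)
The plan is to reduce to a count of edges, restrict by a decomposition to pairs $(S,T)$ whose union is connected in $Q_d$, and then split into a few size and shape regimes. Only one regime needs probability.

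\textbf{Reduction to edge counts.} Since $S,T$ are disjoint, the diagonal of each Laplacian contributes nothing to the bilinear form. Hence $\langle \ind_S, L'\ind_T\rangle = -|E(S,T)|$ and $\langle \ind_S, L\ind_T\rangle = -|H_d(S,T)|$, up to the common normalization $2^{-d}$, which cancels against the normalization of $\|\ind_S\|_2\|\ind_T\|_2$. Writing $s=|S|$ and $t=|T|$, the goal becomes
\[
\bigl|\,|E(S,T)| - p|H_d(S,T)|\,\bigr| \le \eta p d\sqrt{st}\,.
\]

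\textbf{Connectivity reduction.} Decompose $S\cup T$ into the connected components $W_1,\dots,W_m$ of $Q_d[S\cup T]$, and set $S_j=S\cap W_j$, $T_j=T\cap W_j$. Every edge of $H_d(S,T)$, and hence of $E(S,T)$, lies inside a single component. The deviation is therefore a sum over $j$. Since $\sum_j\sqrt{s_jt_j}\le\sqrt{st}$ by Cauchy--Schwarz, it suffices to prove the bound for pairs with $W=S\cup T$ connected in $Q_d$ and contained in $U$. Let $k=|W|$.

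\textbf{Case analysis.} I will split into three regimes.

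\emph{Unbalanced pairs.} Suppose $\min(s,t)\le \tfrac{\eta}{4}\sqrt{st}$. Every vertex of $U$ has $Q_d'$-degree at most $(1+\gamma)pd$ and $Q_d$-degree $d$. Hence both $|E(S,T)|$ and $p|H_d(S,T)|$ are at most $(1+\gamma)pd\min(s,t)$, and the bound holds deterministically. This is where the hypothesis $S,T\subseteq U$ is used.

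\emph{Balanced pairs.} Otherwise $\min(s,t)\ge(\eta^2/16)\max(s,t)$, so $\sqrt{st}\ge c\eta k$.

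\emph{Small balanced pairs.} For $k\le k_0:=d^{c'\eta^2 C}$, I use \Cref{lem:cube-subgraph-edge-bound} (and $p\le 1$). Both edge counts are at most $|H_d(W)|\le \tfrac12 k\log_2 k$, which is at most $c\eta^2 pdk$ as long as $\log k\lesssim \eta^2 C\log d$. Again the bound holds deterministically.

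\emph{Large balanced pairs.} For $k\ge k_0$ I union bound over fixed pairs, where Bernstein (\Cref{lem:bernstein}) applies to the independent edge indicators. The number of connected $W$ of size $k$ in a $d$-regular graph is at most $n(ed)^k$, and there are at most $2^k$ ways to split $W$ into $(S,T)$. The variance is at most $p|H_d(S,T)|\le pdk$, and the deviation required is $\tau=\eta pd\sqrt{st}\ge c\eta^2pdk$. Bernstein gives a failure probability of $\exp(-\Omega(\eta^4 pdk))\le \exp(-\Omega(\eta^4 Ck\log d))$.

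\textbf{Main obstacle: the union bound.} The delicate step is the union bound against the factor $n=2^d$. For this I need $\eta^4Ck\log d\gg d+k\log(2ed)+(C_1+10)\log d$. This holds once $k\ge k_0$, provided $k_0\gtrsim d/\log d$, i.e.\ $d^{c'\eta^2C}\gtrsim d$. Reconciling the thresholds of the two balanced regimes is thus the crux, and the choice $C\ge 240(C_1+15)/\eta^4$ is what makes them overlap. I would check the constants of the connected-set count and the Bernstein exponent with care, so that the total failure probability is at most $1/d^{C_1+10}$. Notably, this argument does not require the event that $U$ is as claimed to be independent of the edge sampling. The deterministic cases hold for any $S,T\subseteq U$, and the probabilistic case is a union bound over all pairs with $S\cup T$ connected, without reference to $U$.
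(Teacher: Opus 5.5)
Your proposal is correct and follows essentially the same route as the paper. The shared steps are: reduction to edge counts; decomposition into connected components of $Q_d[S\cup T]$ with Cauchy--Schwarz; a deterministic degree-based bound for imbalanced pairs inside $U$; a deterministic isoperimetric bound for small balanced pairs; and Bernstein with a union bound over connected sets (at most $n(2ed)^k$ choices of $W,A,B$), taken without reference to $U$ for large balanced pairs.

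The only difference is cosmetic. The paper splits the balanced regimes at $|W|=d$, which makes absorbing the factor $n=2^d$ immediate. Your larger threshold $k_0=d^{c'\eta^2 C}$ works equally well, since, as you observe, the two regimes overlap.
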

\begin{proof}
The condition $ \left|\langle \ind_S,  L' \ind _T \rangle - p\langle \ind_S,  L \ind_T \rangle\right| \le \eta pd \cdot \|\ind_S\|_2 \|\ind_T\|_2 $ can be equivalently written as
\begin{equation}\label{eq:good-degree-cut-discrepancy}
    \left||E(S,T)|-p |H_d(S,T)|\right| \leq \eta \cdot pd \sqrt{|S||T|}\,.
\end{equation}
The proof is organized as follows. First, in \Cref{claim:imbalanced-set-discrepancy}, we show that the condition \eqref{eq:good-degree-cut-discrepancy} holds deterministically for all imbalanced pairs of sets. Next, in \Cref{claim:small-set-discrepancy}, we show that it also holds deterministically for all small balanced pairs of sets. We then handle the remaining case in \Cref{claim:balanced-set-discrepancy-tail}, where we bound the probability that condition \eqref{eq:good-degree-cut-discrepancy} fails for each large balanced pair. Finally, we complete the proof by carefully applying a union bound.
\begin{claim}
\label{claim:imbalanced-set-discrepancy}
    For every edge subgraph $G=(V_d,E)$ of $Q_d$, and for every pair of disjoint $\emptyset \neq A,B \subseteq U_G$ with $|A|/|B| \notin [ \eta^2/16, 16/\eta^2]$, one has $||E(A,B)|-p |H_d(A,B)|| \le \eta \cdot pd \sqrt{|A||B|}$.
\end{claim}
\begin{proof}
    Since $A,B \subseteq U_G$, each $u \in A\cup B$ has degree at most $(1+\gamma)pd \le 2 pd$. Hence,
    \begin{equation*}
        \left ||E(A,B)|-p |H_d(A,B)|\right| \le |E(A,B)| + p |H_d(A,B)| \leq 3pd \min\{|A|,|B|\} \, .
    \end{equation*}
    Because we assume $|A|/|B| \notin [ \eta^2/16, 16/\eta^2]$, we have
    \begin{equation*}
        \sqrt{|A||B|} = \sqrt{\max\{|A|,|B|\} \min \{|A|,|B|\}}\ge \frac{4}{\eta} \cdot \min\{|A|,|B|\} \, .
    \end{equation*}
    Thus we conclude
    \begin{equation*}
               \left ||E(A,B)|-p |H_d(A,B)|\right| \le 3pd \min\{|A|,|B|\} \le \eta pd \sqrt{|A||B|}\, .
    \end{equation*}
\end{proof}

Next, we handle pairs of small balanced sets.
\begin{claim}
\label{claim:small-set-discrepancy}
    For every edge subgraph $G=(V_d,E)$ of $Q_d$, and for every pair of  $\emptyset \neq A,B \subseteq V_d$ with $|A|/|B| \in [ \eta^2/16, 16/\eta^2]$ and $|A\cup B| \le d$, one has $||E(A,B)|-p |H_d(A,B)|| \le \eta \cdot pd \sqrt{|A||B|}$.
\end{claim}
\begin{proof}
    By the assumption that $|A|/|B| \in [ \eta^2/16, 16/\eta^2]$, we have
    \begin{equation*}
    \sqrt{|A||B|} =  \sqrt{\max\{|A|,|B|\} \min \{|A|,|B|\}}\ge \frac{\eta}{4} \cdot \max \{|A|,|B|\} \ge  \frac{\eta}{8}\cdot  |A\cup B|\,.
    \end{equation*}
   By the assumption that $p \geq C \log d/d$, this gives
\begin{equation*}
    \eta pd \cdot \sqrt{|A||B|} \ge \frac{\eta^2 }{8} C |A\cup B| \log d \, .
\end{equation*}
On the other hand, we have
\begin{align*}
    \left||E(A,B)| - p |H_d(A,B)|\right| & \le |E(A,B)| + p|H_d(A,B)|\,, \\
    & \le 2|H_d(A,B)|\,, \\
    & \le |A\cup B| \log_2 |A\cup B| && \text{by \Cref{lem:cube-subgraph-edge-bound} }\,, \\
    & \le |A\cup B| \log_2 d && \text{by the assumption that $|A \cup B| \leq d$ }\,.
\end{align*}
One can check that our assumption on $C$ satisfies
\begin{equation*}
    \log_2 d \le \frac{\eta^2 }{8} C \frac{\log_2 d}{\log_2 \e} \, ,
\end{equation*}
so we conclude $||E(A,B)| - p |H_d(A,B)|| \le \eta pd \cdot \sqrt{|A||B|}$.
\end{proof}

Finally, it remains to handle large balanced pairs of sets. The following claim gives an upper bound, in terms of $|A \cup B|$, on the failure probability for any fixed pair of balanced sets $A$ and $B$.
\begin{claim}
\label{claim:balanced-set-discrepancy-tail}
    Let $\emptyset \neq A,B \subseteq V_d$ be subsets such that $|A|/|B| \in [ \eta^2/16, 16/\eta^2]$. Then
    \begin{equation*}
        \Pr\left[\left||E(A,B)| - p |H_d(A,B)|\right| > \eta pd \cdot \sqrt{|A||B|} \right]  \le \frac{1}{d^{(C_1+14)\cdot |A\cup B|}} \, .
    \end{equation*}
\end{claim}
\begin{proof}
    We apply \Cref{lem:bernstein} to the random variables $X_e = \ind(e \in E) - p$, for $e \in H_d(A,B)$.
    To do so, we take $K=1$ and $t= \eta p d \sqrt{|A||B|}$, and note that for every $e \in H_d$ we have $\E[X_e^2] = \Var [X_e] = p(1-p) \leq p$. Then, we get
    \begin{align*}
         \Pr\left[\left||E(A,B)| - p |H_d(A,B)|\right| > \eta pd \cdot \sqrt{|A||B|} \right]
        & \le 2 \exp\left(-\frac{\eta^2 p^2d^2 \cdot |A||B|}{2p|H_d(A,B)|+\eta pd \cdot \sqrt{|A||B|}}\right) \\
        & \le 2 \exp\left(-\frac{\eta^2 p^2d^2 \cdot |A||B|}{2pd |A\cup B|+\eta pd \cdot |A\cup B|}\right) \, .
    \end{align*}
    Again by the assumption that $|A|/|B| \in [ \eta^2/16, 16/\eta^2]$, we have
    \begin{equation*}
        |A||B| \ge \frac{\eta^2}{16} \max\{ |A|, |B|\} ^2 \geq \frac{\eta^2}{64} |A\cup B|^2 \, ,
    \end{equation*}
    so continuing from above we get
    \begin{align*}
         \Pr\left[\left||E(A,B)| - p |H_d(A,B)|\right| > \eta pd \cdot \sqrt{|A||B|} \right]
        & \le 2 \exp\left(-\frac{\eta^2 p^2 d^2 \cdot \frac{\eta^2}{64} |A\cup B|^2}{2pd |A\cup B|+\eta pd \cdot |A\cup B|}\right ) \\
        & \le 2 \exp \left(-\frac{1}{192} \eta^4 pd |A\cup B|\right) \\
        & \le 2 \exp \left(-\frac{1}{192} \eta^4 C |A\cup B| \log d \right)\, .
    \end{align*}
    We conclude the claim by our assumption on $C$.
\end{proof}
We are now ready to union bound over pairs of disjoint sets $S,T \subseteq U$. Given disjoint $S,T \subseteq V_d$, let us denote by $k_{S,T}$ the number of connected components of $Q_d[S\cup T]$. Furthermore, we denote these connected components by \smash{$\{C_{S,T}^{(1)}, \dots, C_{S,T}^{(k_{S,T})}\}$} and let \smash{$S_i = C_{S,T}^{(i)} \cap S$} and \smash{$T_i = C_{S,T}^{(i)} \cap T$} for each $i \in [k_{S,T}]$. With this notation, we have
\begin{align*}
    & \Pr\left[\exists \, S,T \subseteq U \text{ s.t. } S\cap T =\emptyset \text{ and } \left|\langle \ind_S,  L' \ind _T \rangle - p\langle \ind_S,  L \ind_T \rangle\right| > \eta pd \cdot \|\ind_S\|_2 \|\ind_T\|_2 \right] \\
    = & \Pr\left[\exists \, S,T \subseteq U \text{ s.t. } S\cap T =\emptyset \text{ and } ||E(S,T)|-p |H_d(S,T)|| > \eta \cdot pd \sqrt{|S||T|} \right] \\
    \le & \Pr\left[\exists \, S,T \subseteq U \text{ s.t. } S\cap T =\emptyset \text{ and } \sum_{i=1}^{k_{S,T}} ||E(S_i,T_i)|-p |H_d(S_i,T_i)|| > \eta \cdot pd \sum_{i=1}^{k_{S,T}}\sqrt{|S_i||T_i|} \right] \\
    \le & \Pr\left[\exists \, S,T \subseteq U \text{ s.t. } S\cap T =\emptyset \text{ and } \exists \, i \in [k_{S,T}] \text{ s.t. } ||E(S_i,T_i)|-p |H_d(S_i,T_i)|| > \eta \cdot pd \sqrt{|S_i||T_i|} \right] \, .
\end{align*}
Here the first inequality used that
$$||E(S,T)|-p |H_d(S,T)|| =  \left| \sum_{i=1}^{k_{S,T}} (|E(S_i,T_i)|-p |H_d(S_i,T_i)|) \right| \leq  \sum_{i=1}^{k_{S,T}} ||E(S_i,T_i)|-p |H_d(S_i,T_i)||,$$ since the $S_i \cup T_i$ are the connected components of $Q_d[S \cup T]$, and that by Cauchy-Schwarz, $$    \sum_{i=1}^{k_{S,T}} \sqrt{|S_i||T_i|}
    \le
    \sqrt{\left(\sum_{i}|S_i|\right)
          \left(\sum_{i} |T_i|\right)}
=
    \sqrt{|S||T|}\,.$$

By \Cref{claim:imbalanced-set-discrepancy}, the last probability above equals
\begin{equation*}
\Pr\left[
\begin{aligned}
&\exists\, S,T \subseteq U \text{ such that } S\cap T=\emptyset
\text{ and } \exists\, i \in [k_{S,T}] \text{ such that} \\
&\qquad S_i,T_i \text{ are non-empty}, \\
&\qquad \frac{|S_i|}{|T_i|}
    \in \left[\frac{\eta^2}{16}, \frac{16}{\eta^2}\right], \\
&\qquad \left||E(S_i,T_i)|-p |H_d(S_i,T_i)|\right|
    > \eta pd \sqrt{|S_i||T_i|}
\end{aligned}
\right].
\end{equation*}
This can in turn be upper-bounded by
\begin{equation*}
\Pr\left[
\begin{aligned}
&\exists\, W \subseteq V_d \text{ such that } Q_d[W] \text{ is connected}\\
&\text{ and } \exists\, \text{ disjoint non-empty } A,B \subseteq W \text{ such that} \\
&\qquad A \cup B = W, \\
&\qquad \frac{|A|}{|B|}\in \left[\frac{\eta^2}{16}, \frac{16}{\eta^2}\right], \\
&\qquad \left||E(A,B)| - p |H_d(A,B)|\right|
    > \eta pd \cdot \sqrt{|A||B|}
\end{aligned}
\right].
\end{equation*}
By \Cref{claim:small-set-discrepancy}, we can further upper-bound this quantity as
\begin{equation}\label{eq:large-connected-set-failure-probability}
\Pr\left[
\begin{aligned}
&\exists\, W \subseteq V_d \text{ such that } |W| \ge d, \, Q_d[W] \text{ is connected,}\\
&\text{ and } \exists\, \text{ disjoint non-empty } A,B \subseteq W \text{ such that} \\
&\qquad A \cup B = W, \\
&\qquad \frac{|A|}{|B|}\in \left[\frac{\eta^2}{16}, \frac{16}{\eta^2}\right], \\
&\qquad \left||E(A,B)| - p |H_d(A,B)|\right|
    > \eta pd \cdot \sqrt{|A||B|}
\end{aligned}
\right].
\end{equation}

In order to union bound over all valid choices of $W,A,B$, we first bound the number of such sets $W$.
\begin{claim}\label{claim:connected-set-count}
   For every positive integer $t\leq n$, the number of sets $W \subseteq V_d$ such that $Q_d[W]$ is connected with $|W|=t$ is at most $n\binom{dt}{t} \le n(\e d)^t$.
\end{claim}
\begin{proof} We prove the claim by counting the number of trees on $t$ vertices in $Q_d$.  For each vertex $x \in V_d$ and $i \in [d]$, let us denote by  $N_i(x) \in V_d$ the neighbor of $x$  in $Q_d$ with the $i$-th bit flipped.
    Let us now fix a root vertex $x_0 \in V_d$, and consider any $x_0$-rooted tree $(W,F)$, where $F \subseteq H_d$, $W \subseteq V_d$ with $x_0 \in W$ and $|W|=t$.
    We label every vertex $z$ in $W$ with a tuple of at most $t$ entries from $[d]$, as follows:
    consider the unique path from $x_0$ to $z$;
    denote this path $(\pi_0, \pi_1,\dots,\pi_{\ell(z)})$ where $\ell(z)$ is the distance of $z$ from $x_0$ in the tree $(W,F)$, $\pi_0=x_0$, and $\pi_{\ell(z)} =z$;
    we label $z$ with $\text{label}(z)=(i_1,\dots,i_{\ell(z)})$ where, for each $j \in [\ell(z)]$, the entry $i_j\in [d]$ is such that $\pi_j = N_{i_j}(\pi_{j-1})$.

    Now, we order the vertices in $W$ as follows: for two distinct vertices $z,z'$, we write $z < z'$ if and only if $\ell(z) < \ell(z')$ or $\ell(z) = \ell(z')$ and $\text{label}(z)$ comes before $\text{label}(z')$ lexicographically. Since no two vertices can have the same label, this gives a total order. We then write the elements of $W$ as $w_1,\dots, w_t$.
    Next, we define a matrix $a \in \{0,1\}^{t \times d}$ as
    \begin{equation*}
        a_{j,i} = \begin{cases}
            1 & \text{if } \{w_j, N_i(w_j)\} \in F \text{ and } \text{label}(N_i(w_j)) > \text{label}(w_j)\,, \\
            0 & \text{otherwise}\,.
        \end{cases}
    \end{equation*}
    Note that $a$ has $t-1$ non-zero entries. We have thus given an encoding of $x_0$-rooted trees with $t$ vertices into $t \times d$ boolean matrices with $t-1$ non-zero entries. One can check that this encoding is injective. Hence, for a fixed $x_0$, we have at most $\binom{dt}{t}$ choices for where to put the non-zero entries in the matrix. The claim then follows, since there are $n$ choices for the vertex $x_0$.
\end{proof}

For each set $W$, there are at most $2^{|W|}$ ways to partition $W$ into subsets $A,B$. Combining with \Cref{claim:connected-set-count}, we get that for each fixed $t$, there are at most $n (2 \e d)^t $ valid choices for the sets $A,B$ and $W$ with $|W| =t$.

In combination with \Cref{claim:balanced-set-discrepancy-tail}, we can finally upper-bound the probability in \Cref{eq:large-connected-set-failure-probability} by
\begin{align*}
    n \sum_{t=d}^n  (2 \e d)^t \frac{1}{d^{(C_1+14)t}} \le n \sum_{t=d}^n \frac{1}{d^{(C_1+12)t}} \le n^2 \frac{1}{d^{(C_1+12)d}} \le \frac{1}{d^{C_1+10}} \, .
\end{align*}

\end{proof}

We now show that after projecting onto the vertices in $U$, the Laplacians $L'$ and $pL$ are close in spectral norm. The proof uses the following lemma.

\begin{lemma}[Converse expander mixing lemma, Lemma 3.6 in \cite{BL04}]
\label{lem:converse-expander-mixing}
There exists a constant $C_\bl \ge 1$ such that the following holds. Let $B \ge \alpha >0$ and let $M \in \R^{N \times N}$ be a symmetric matrix such that $M_{ii} =0$ and $\sum_{j=1}^N |M_{ij}| \le B$ for all $i \in [N]$, and for every disjoint $S,T \subseteq [N]$ one has $|\langle \ind_S, M \ind_T\rangle| \le \alpha \cdot \|\ind_S\|_2 \|\ind_T\|_2$. Then
\begin{equation*}
    \|M\| \le C_\bl \alpha \cdot \left(1+\log\frac{B}{\alpha}\right) \, .
\end{equation*}
\end{lemma}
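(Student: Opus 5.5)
The statement is Lemma 3.6 of \cite{BL04}, so in the paper it can simply be cited; if I were to prove it, I would use the standard discretization argument of Bilu and Linial. Since $M$ is symmetric, $\|M\| = \max_{\|x\|_2=1} |x^\top M x|$, so it suffices to bound $|x^\top M x|$ for every unit vector $x$.

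\textbf{Step 1: discretize $x$.} Round each $x_u$ to a signed power of two, $y_u = \mathrm{sign}(x_u)\, 2^{-i}$ with $2^{-i} \le |x_u| < 2^{-i+1}$. First I would show that it suffices to bound $|y^\top M y|$ by $O(\alpha(1+\log(B/\alpha)))$ for all such $y$ with $\|y\|_2 \le 1$. To see this, write $x = y + r$ with $|r_u| \le |y_u|$. By bilinearity, $|x^\top M x - y^\top M y| \le 2|r^\top M y| + |r^\top M r|$, and each of these terms is at most a constant fraction of $\|M\|$. An alternative is to run the argument directly on $x$, with coordinates grouped by dyadic magnitude and ratios between $1$ and $2$ absorbed into constants. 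In the first route I would write $\|M\| \le c\,\sup_y |y^\top M y| + c'\|M\|$ with $c' < 1$ and rearrange. Getting $c'<1$ may need a finer base than $2$, e.g.\ $1+\tfrac1{10}$, which only changes constants.

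\textbf{Step 2: expand by level sets.} Let $S_i^{\pm} = \{u : y_u = \pm 2^{-i}\}$ and $s_i = |S_i^+| + |S_i^-|$, so that $\sum_i 4^{-i} s_i \le 1$. Then
\[
y^\top M y = \sum_{i,j} \sum_{\sigma,\tau \in \{\pm\}} \sigma\tau\, 2^{-i-j} \langle \ind_{S_i^\sigma}, M \ind_{S_j^\tau}\rangle .
\]
Fix $k = \lceil \log_2(B/\alpha) \rceil$ and split the pairs $(i,j)$ into near pairs ($|i-j| \le k$) and far pairs ($|i-j| > k$).

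\textbf{Near pairs.} Apply the discrepancy hypothesis together with $2^{-i-j}\sqrt{s_i s_j} \le \tfrac12(4^{-i}s_i + 4^{-j}s_j)$. Each $i$ appears in $O(k)$ such pairs, so the near-pair total is $O(\alpha k)$. Two cases need care because the hypothesis only applies to disjoint sets:
\begin{itemize}
\item For the diagonal term $\langle \ind_S, M\ind_S\rangle$ with $S = S_i^\sigma$, use $M_{uu}=0$. Averaging over a uniformly random partition $S = S' \sqcup S''$ gives $\langle \ind_S, M \ind_S\rangle = 4\,\E\langle \ind_{S'}, M \ind_{S''}\rangle$. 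This is at most $4\alpha\,\E\sqrt{|S'||S''|} \le 2\alpha|S|$.
\item For $i = j$ with opposite signs, the sets $S_i^+$ and $S_i^-$ are already disjoint.
\end{itemize}

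\textbf{Far pairs.} For $j > i + k$, use the row-sum bound: $|\langle \ind_{S_i}, M \ind_{S_j}\rangle| \le \sum_{v \in S_j}\sum_u |M_{uv}| \le B s_j$. Summing over $i < j-k$ gives
\[
\sum_{j}\sum_{i<j-k} 2^{-i-j} B s_j = \sum_j 4^{-j} s_j\, B \sum_{m>k} 2^{-m} \le B\,2^{-k} \le \alpha ,
\]
and the pairs with $i > j + k$ are symmetric.

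Combining the two parts gives $|y^\top M y| = O(\alpha(1+k)) = O(\alpha(1+\log(B/\alpha)))$, and Step 1 transfers this to $\|M\|$. The main obstacle I expect is the rounding step: making the loss from $x$ to $y$ absorbable without circularity. I would try the finer-base discretization first, and if it does not close, fall back to grouping coordinates of $x$ by dyadic magnitude and handling the bounded within-group variation directly.
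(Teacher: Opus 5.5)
The paper does not prove this lemma; it cites Bilu--Linial. Your outline is the standard Bilu--Linial scheme: dyadic discretization, level sets, and a near/far split at $k\approx\log(B/\alpha)$. Most of it checks out.
\begin{itemize}
\item The random-partition identity $\langle \ind_S, M\ind_S\rangle = 4\,\E\langle \ind_{S'},M\ind_{S''}\rangle$ is correct because $M_{uu}=0$.
\item The near-pair AM--GM count gives $O(\alpha(1+k))$.
\item Step~1 does close with base $\beta=1+\tau$. Then $|r_u|\le \tau|y_u|$, so the loss is at most $(2\tau+\tau^2)\|M\|$ and can be absorbed. With base $2$ the same estimate does not give a factor below $1$, so the finer base is genuinely needed on that route.
\end{itemize}

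There is, however, a concrete error in the far-pair step. For $j>i+k$, the set $S_j$ carries the \emph{smaller} entries $2^{-j}$. Also $2^{-i-j}=4^{-j}\,2^{\,j-i}$, not $4^{-j}\,2^{-(j-i)}$, so your displayed identity is false. More importantly, the bound $Bs_j$ is applied on the wrong side. The constraint $\sum_i 4^{-i}s_i\le 1$ allows $s_j$ up to $4^{j}$, so $2^{-i-j}Bs_j$ can be of order $B2^{j-i}$, which is large rather than small.

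The fix is to sum the row norms over the set with the larger entries. Use $|\langle \ind_{S_i},M\ind_{S_j}\rangle|\le \sum_{u\in S_i}\sum_v |M_{uv}|\le Bs_i$. This gives $\sum_i 4^{-i}s_i\,B\sum_{m>k}2^{-m}\le B2^{-k}\le \alpha$.

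In base $\beta$ the tail is $\sum_{m>k}\beta^{-m}=\beta^{-k}/(\beta-1)$. So you should take $k=\lceil \log_\beta(B/((\beta-1)\alpha))\rceil$, which is still $O(1+\log(B/\alpha))$.

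Separately, you can keep base $2$ and skip the absorption entirely. Round each $x_u$ \emph{randomly} and independently to one of its two neighbouring signed powers of two, with mean $x_u$. Since $M_{uu}=0$, independence gives $\E[y^\top My]=x^\top Mx$, and $\|y\|_2^2\le 4$. Hence some $y$ in the support satisfies $|y^\top My|\ge |x^\top Mx|$.
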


\begin{lemma}
\label{lem:good-degree-spectral-concentration}
    With probability at least $1-1/d^{C_1+10}$ over the sampling of $Q_d'$, one has
    \begin{equation*}
        \|\Pi_{U} (L'-pL) \Pi_{U} \| \le 2\gamma pd \, .
    \end{equation*}
\end{lemma}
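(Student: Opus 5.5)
We split $M=\Pi_U(L'-pL)\Pi_U$ into its diagonal and off-diagonal parts. Write $L'=D'-A'$ and $L=dI-A$, where $D'$ is the diagonal degree matrix of $Q_d'$ and $A',A$ are the adjacency matrices. Then
\[
\Pi_U(L'-pL)\Pi_U=\Pi_U(D'-pdI)\Pi_U-\Pi_U(A'-pA)\Pi_U .
\]
The diagonal part is handled directly from the definition of $U$. Every $x\in U$ satisfies $|\deg_{Q_d'}(x)-pd|\le\gamma pd$. Hence $\|\Pi_U(D'-pdI)\Pi_U\|\le\gamma pd$, and this holds deterministically.

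For the off-diagonal part we set $N=\Pi_U(A'-pA)\Pi_U$, viewed as a symmetric matrix on $U$ with zero diagonal. We apply the converse expander mixing lemma (\Cref{lem:converse-expander-mixing}) to $N$.

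\emph{Row sums.} For $x\in U$, the row sum satisfies
\[
\sum_y|N_{xy}|\le\deg_{Q_d'}(x)+pd\le(2+\gamma)pd\le 3pd,
\]
so we may take $B=3pd$.

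\emph{Discrepancy bound.} Take disjoint $S,T\subseteq U$. Since $S\cap T=\emptyset$, the diagonal terms vanish and $\langle\ind_S,L'\ind_T\rangle=-\langle\ind_S,A'\ind_T\rangle$, and likewise for $L$. Therefore
\[
|\langle\ind_S,N\ind_T\rangle|=|\langle\ind_S,(L'-pL)\ind_T\rangle|.
\]
On the event of \Cref{lem:good-degree-cut-concentration}, which has probability at least $1-1/d^{C_1+10}$, this quantity is at most $\eta pd\,\|\ind_S\|_2\|\ind_T\|_2$. We therefore take $\alpha=\eta pd$. One small point: the inner products in \Cref{lem:good-degree-cut-concentration} must be read in the same (unnormalized) convention as in \Cref{lem:converse-expander-mixing}. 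This is the convention used in that proof, since there the bound was rewritten as $||E(S,T)|-p|H_d(S,T)||\le\eta pd\sqrt{|S||T|}$.

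\emph{Conclusion.} The lemma then gives
\[
\|N\|\le C_\bl\,\eta pd\,(1+\log(3/\eta)).
\]
We have $\eta=\gamma^2/C_\bl^2$, so this is at most
\[
\frac{\gamma^2}{C_\bl}\,pd\,(1+\log(3C_\bl^2/\gamma^2)).
\]
Because $\gamma=\epsilon/1000\le 10^{-4}$, the factor $\gamma(1+\log(3C_\bl^2/\gamma^2))/C_\bl$ is at most $1$: with $C_\bl\ge1$, the logarithm grows only like $\log(1/\gamma)$ while it is multiplied by $\gamma$. Hence $\|N\|\le\gamma pd$. Adding the diagonal bound gives $\|\Pi_U(L'-pL)\Pi_U\|\le2\gamma pd$.

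The only step needing care is this last numerical check, namely that the logarithmic loss in the converse expander mixing lemma is absorbed by the extra factor of $\gamma$ built into the choice $\eta=\gamma^2/C_\bl^2$. The rest is bookkeeping, and the probability bound is inherited directly from \Cref{lem:good-degree-cut-concentration}.
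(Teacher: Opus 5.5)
Your proof is correct and follows the paper's proof. You split $\Pi_U(L'-pL)\Pi_U$ into the degree-deviation diagonal, which is at most $\gamma pd$ by the definition of $U$, and the off-diagonal part $A'-pA$ restricted to $U$, which you bound via the converse expander mixing lemma with $\alpha=\eta pd$ supplied by \Cref{lem:good-degree-cut-concentration}. The only differences are cosmetic: you take $B=3pd$ where the paper takes $4pd$, and you absorb the logarithmic factor by a direct numerical check rather than the paper's inequality $x(1+\log(4/x))\le\sqrt{x}$ for $x\in(0,0.01)$.
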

\begin{proof}
    By \Cref{lem:good-degree-cut-concentration}, we have $\left|\langle \ind_S,  L' \ind _T \rangle - p\langle \ind_S,  L \ind_T \rangle\right| \le \eta pd \cdot \|\ind_S\|_2 \|\ind_T\|_2$ with probability at least $1-d^{-(C_1+10)}$ simultaneously for every disjoint $S,T \subseteq U$. Let $A'$ and $A$ be the adjacency matrices of $(V_d,E(U))$ and $(V_d,H_d(U))$, respectively. Note that $A'$ and $p A$ have an all-zero diagonal, and their rows have $\ell_1$-norm at most $2pd$. Then, by \Cref{lem:converse-expander-mixing} applied to the matrix $A'-p A$, we have
    \begin{equation*}
        \|A'-p A\| \le C_\bl \eta pd \left(1+\log \frac{4pd}{\eta pd} \right) =  C_\bl \eta pd \left(1+\log \frac{4}{\eta } \right)  \, ,
    \end{equation*}

Using the equation $x \left(1  + \log \frac{4}{x}\right) \leq \sqrt{x} $ for all $x \in (0,0.01)$, and the setting $\eta = \gamma^2/C^2_\bl < 0.01$, we get
    \begin{equation*}
   \|A'-p A\|  \le C_\bl \sqrt{\eta}  pd =  \gamma pd\,.
    \end{equation*}
    Now we rewrite
    \begin{equation*}
        \Pi_{U} (L' -pL)\Pi_{U}  = \diag((\deg_{Q_d'}(u))_{u \in U})- pd\cdot  \diag(\ind_U)-(A'-pA) \, .
    \end{equation*}
    By definition of $U$,
      \begin{equation*}
        \|  \diag((\deg_{Q_d'}(u))_{u \in U})- pd\cdot  \diag(\ind_U)\| \le \gamma pd \, ,
    \end{equation*}
so we conclude
    \begin{equation*}
        \|\Pi_{U} (L'-pL) \Pi_{U} \| \le \|  \diag((\deg_{Q_d'}(u))_{u \in U})- pd\cdot  \diag(\ind_U)\|  + \| (A'-pA) \| \le 2\gamma pd \,.
    \end{equation*}
\end{proof}

We will now split the analysis into the high Fourier degree part and the low degree part. To this end, let
$$ r= \left \lfloor \frac{d}{100} \right\rfloor, $$
and  $k = \sum_{t = 0}^r \binom{d}{t}$ be the dimension of $\spn\{ \chi_S : |S| \leq r\}$.
\begin{remark}\label{rem:low-degree-subspace-dimension}
    Note that $k \ll n$. Indeed, we have
\begin{equation*}
\begin{aligned}
    k & \le (r+1)\binom{d}{r}
       \leq (r+1) \left(  \frac{\e d}{r}\right)^{r}
       \leq (r+1) \left(  \frac{\e d}{r}\right)^{d/100}
       \leq (r+1) \left(  \frac{100 \e \left((d/100-1)+1\right)}{d/100-1}\right)^{d/100}  \\
     & = (r+1) \left(100 \e \left(1+ \frac{1}{d/100-1}\right)\right)^{d/100}
      \leq (r+1)\left(200 \e\right)^{d/100}
     \le 2^{d/10} \\
     &= n^{1/10}\,.
\end{aligned}
\end{equation*}
\end{remark}

We start by showing that on the low Fourier degree subspace, $L'$ is well-approximated by $pL$. Since this space has small dimension, we can prove this directly using matrix concentration.
\begin{lemma}
    \label{lem:low-degree-spectral-approx}
 With probability at least $1-1/d^{C_1+10}$ over the sampling of $Q_d'$, one has
    \begin{equation*}
        - \gamma p P_{\le r} L P_{\le r} \preceq P_{\le r}(L' - pL)P_{\le r} \preceq \gamma p P_{\le r} L P_{\le r} \, .
    \end{equation*}
\end{lemma}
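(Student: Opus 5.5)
The plan is to reduce the statement to a matrix Bernstein bound on a $k$-dimensional space after normalizing by $L$. On $\spn\{\chi_S : |S|\le r\}$, $L$ is diagonal in the Fourier basis with eigenvalue $2|S|$, and it vanishes only on $\chi_\emptyset$. Both $L'$ and $L$ annihilate the constant vector, so the two-sided inequality only needs to be checked on $W=\spn\{\chi_S: 1\le |S|\le r\}$. Here $pL \succeq 2p I$, and $W$ has dimension $k-1\le n^{1/10}$ by \Cref{rem:low-degree-subspace-dimension}.

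Let $B$ be the $n\times (k-1)$ matrix whose columns are $\chi_S/\sqrt{2|S|\,n}$ for $1\le|S|\le r$. This basis satisfies $B^\top (pL) B = p I_{k-1}$ in the standard inner product. The statement is then equivalent to
\[
\bigl\|B^\top(L'-pL)B\bigr\| \le \gamma p .
\]
Write $L'-pL=\sum_{e\in H_d}(\ind(e\in E)-p)\,b_eb_e^\top$, where $b_e=\ind_x-\ind_y$ for $e=\{x,y\}$. This gives a sum of independent mean-zero symmetric matrices $X_e=(\ind(e\in E)-p)\,(B^\top b_e)(B^\top b_e)^\top$, to which I apply \Cref{lem:matrix-bernstein}.

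The two parameters come out as follows.

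\textbf{Uniform bound.} $\|X_e\|\le \|B^\top b_e\|^2=: K$. For an edge $e$ in direction $i$, $\chi_S(x)-\chi_S(y)=2\chi_S(x)\ind[i\in S]$. Hence
\[
\|B^\top b_e\|^2=\sum_{S\ni i,\,|S|\le r}\frac{4}{2|S|\,n}.
\]
I bound this crudely by $\frac{2}{n}\cdot \#\{S\ni i: |S|\le r\}\le 2k/n\le 2n^{-9/10}$.

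\textbf{Variance.} We have
\[
\sum_e \E X_e^2 \preceq p\sum_e \|B^\top b_e\|^2 (B^\top b_e)(B^\top b_e)^\top \preceq pK\, B^\top L B = K I .
\]
So the variance is at most $K$ and the uniform bound is at most $K\le 2n^{-9/10}$.

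With $t=\gamma p\ge \gamma C\log d/d$, Bernstein gives a failure probability of at most
\[
2k\exp\!\bigl(-\Omega(\gamma^2p^2/K)\bigr)\le 2n^{1/10}\exp\!\bigl(-\Omega(\gamma^2 n^{9/10}\log^2 d/d^2)\bigr),
\]
which is far below $1/d^{C_1+10}$ for $d$ large.

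The main concern is the normalization. I must check that restricting to $W$ and conjugating by $B$ (a basis orthonormal for $pL/p$) gives exactly the claimed two-sided Loewner inequality on the range of $P_{\le r}$. This holds because $P_{\le r}$ commutes with $L$, and $B$ spans the range of $P_{\le r}$ minus the constants, where both forms vanish. The other step needing care is the bound on $\|B^\top b_e\|^2$. Since it is exponentially small in $d$, there is ample slack, so I do not expect any real obstacle.
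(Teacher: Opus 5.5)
Your proof is correct and follows essentially the same route as the paper. You whiten by the inverse square root of $P_{\le r}LP_{\le r}$ on the nonconstant low-degree space: your rectangular $B$ satisfies $BB^\top=(P_{\le r}LP_{\le r})^\dagger$, so it plays the role of the paper's $M^{\dagger/2}$, with the small bonus that the Bernstein dimension factor is $k$ instead of $n$. From there you match the paper step by step: edge-wise decomposition of $L'-pL$, the per-edge bound $O(k/n)$ via $\chi_S(x)-\chi_S(x^{\sim i})=2\chi_S(x)\ind[i\in S]$, the identity $\sum_e (B^\top b_e)(B^\top b_e)^\top=I$ for the variance, and matrix Bernstein with $k\le n^{1/10}$. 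One small slip: $pK\,B^\top LB$ equals $pK\,I$, not $K\,I$. The weaker variance bound $K$ that you actually use still gives a failure probability of $\exp(-2^{\Omega(d)})$, so nothing breaks.
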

\begin{proof}
We prove the lemma using Matrix Bernstein’s inequality (\Cref{lem:matrix-bernstein}). We start by rewriting the condition from the lemma conclusion as a spectral concentration statement for a sum of independent random matrices.
The condition
    \begin{equation*}
    	  -\gamma p \cdot P_{\le r} L P_{\le r} \preceq P_{\le r}(L' - pL)P_{\le r} \preceq \gamma p \cdot P_{\le r} L P_{\le r}
    \end{equation*}
    is equivalent to
    \begin{equation*}
        \forall \, x \in \R^{V_d}, \quad \left| \langle x, P_{\le r} L' P_{\le r}x\rangle  -  p\langle x, P_{\le r} L P_{\le r}x\rangle\right| \le \gamma p \cdot \langle x, P_{\le r} L P_{\le r} x \rangle\, .
    \end{equation*}
    Let $M=P_{\le r} L P_{\le r}$. We denote by $M^\dagger$ its pseudoinverse, and denote by $M^{\dagger/2}$ the square root of $M^\dagger$. Also let $P = P_{\le r}P_{\ge 1}$. Since $\langle \chi_{\emptyset}, P_{\le r} L' P_{\le r}\chi_{\emptyset} \rangle =   p\langle \chi_{\emptyset}, P_{\le r} L P_{\le r}\chi_{\emptyset}\rangle =  \gamma p \cdot \langle \chi_{\emptyset}, P_{\le r} L P_{\le r} \chi_{\emptyset} \rangle = 0$, it is enough to prove the above inequality for $x \in \image(P_{\geq 1})$.

    Applying the change of variables $x \mapsto M^{\dagger/2}x$, we can then rewrite the condition as
    \begin{equation*}
        \forall \, x \in \R^{V_d}, \quad \left| \langle x, M^{\dagger/2} P_{\le r} \left(L'- p L\right) P_{\le r} M^{\dagger/2} x\rangle \right| \le \gamma p \cdot \langle x, P x \rangle\, .
    \end{equation*}
    For each $e \in H_d$, let for convenience $a_e = \ind(e \in E)$ be the random variable indicating whether the edge $e$ has been sampled into $G=(V_d,E)$ or not, and let $A_e = \left(a_e - p\right) \cdot M^{\dagger/2} P_{\le r} L_e P_{\le r} M^{\dagger/2}$, where $L_e$ denotes the Laplacian of $e$. Then, the above bound is equivalent to
    \begin{equation*}
    	\left\|\sum_{e \in H_d}A_e  \right\| \le \gamma p \, .
    \end{equation*}
   To apply Matrix Bernstein's inequality to the matrices $A_e$, we first bound $\|A_e\|$ and the matrix second moment $\E[A_e^2]$ for each $e \in H_d$. The following claim achieves this.
    \begin{claim}
        \label{claim:projected-edge-matrix-moments}
        For each $e \in H_d$, one has $\|A_e\| \le\frac{2k}{n}$ and $\E [A_e^2] \preceq p(1-p)\frac{2k}{n} \cdot M^{\dagger/2} P_{\le r} L_e P_{\le r} M^{\dagger/2}$.
    \end{claim}
    \begin{proof}
Fix an arbitrary orientation of \(e = (u,v)\), and let \(b_e\in \mathbb R^{V_d}\) be the vector with \(b_e(u)=1\),  \(b_e(v)=-1\), and $0$ everywhere else. Then
   \begin{align*}
   \|A_e\| & \le \| M^{\dagger/2} P_{\le r} L_e P_{\le r} M^{\dagger/2}\| && \text{since $|a_e - p| \leq 1$} \\
& = 2^d \langle b_e , (P_{\le r} M^{\dagger} P_{\le r}) b_e\rangle && \text{since $L_e = b_e^{\otimes 2}$}  \\
& \leq  \frac{1}{2} 2^d \sum_{S \subseteq [d]:  1 \le |S| \le r } \langle \chi_S, b_e\rangle^2  && \text{since $\| M^{\dagger}\| = \frac{1}{2}$. }
   \end{align*}

Since $|(\chi_S^\top b_e)| \leq 2 $ for every $ e \in H_d, S \subseteq [d]$ we get $\langle \chi_S, b_e\rangle^2 \leq4\cdot 2^{-2d}$.  Since there are $\sum_{t =1}^r{\binom{d}{t}} \leq k$ terms in the above sum, we get
       \begin{equation*}
            \|A_e\| \le 2^{d-1} \sum_{S \subseteq [d]:  1 \le |S| \le r } \langle \chi_S, b_e\rangle^2 \le \frac{2k}{n} \, .
        \end{equation*}

        Similarly,  we also get that
        \begin{equation*}
            \E [A_e^2] = \E[(a_e - p)^2](M^{\dagger/2} P_{\le r} L_e P_{\le r} M^{\dagger/2})^2 \preceq p(1-p) \cdot \frac{2k}{n} \cdot M^{\dagger/2} P_{\le r} L_e P_{\le r} M^{\dagger/2} \, .
        \end{equation*}
    \end{proof}
    Since the matrices $\E[A_e^2]$ are positive semidefinite, \Cref{claim:projected-edge-matrix-moments} gives
    \begin{align*}
        \left\| \sum_{e \in H_d} \E[A_e^2]\right\| & \le p(1-p) \cdot \frac{2k}{n} \cdot \left\| \sum_{e \in H_d} M^{\dagger/2} P_{\le r} L_e P_{\le r} M^{\dagger/2}\right\| \\
        & = p(1-p) \cdot \frac{2k}{n} \cdot \left\| M^{\dagger/2} P_{\le r} L P_{\le r} M^{\dagger/2}\right\| \\
        & =  p(1-p) \cdot \frac{2k}{n} \cdot \left\| P \right\| \\
        & = p(1-p) \cdot \frac{2k}{n}\,.
    \end{align*}
    Now, we finally apply \Cref{lem:matrix-bernstein} to the matrices $(A_e)_{e \in H_d}$ with $K=2k/n$ and $t= \gamma p$, and get
    \begin{align*}
        \Pr\left[\|\sum_{e \in H_d} A_e\| > \gamma p\right] & \le 2n \exp\left(-\frac{\gamma^2 p^2/2}{p(1-p) \cdot \frac{2k}{n}+2k/(3n)}\right) \\
        & \le 2n \exp\left(-\frac{\gamma^2p^2 n}{8k}\right) \\
        & \le 2n \exp(-\gamma^2 p^2 n^{0.9}/8) \,,
    \end{align*}
    where the last inequality used that $k \leq n^{1/10}$, by \Cref{rem:low-degree-subspace-dimension}.
    The lemma now follows since
    \begin{align*}
          2n \exp(-\gamma^2 p^2 n^{0.9}/8)  & =  2^{d+1} \exp(-\gamma^2 p^2 2^{0.9d}/8)  \leq 2^{-d} \le 2^{-(C_1+10) \log_2 d}
    \end{align*}
        for sufficiently large $d$.

\end{proof}
Next, we handle the high Fourier degree subspace.
\begin{lemma}
    \label{lem:high-degree-spectral-lb}
    With probability at least $1-1/d^{C_1+10}$ over the sampling of $Q_d'$, for both signs $\sigma \in \{\pm 1\}$ one has
    \begin{equation*}
        P_{>r} \left(\sigma\cdot \left(L'-pL\right) + \epsilon p L + d^2\cdot\Pi_R\right) P_{>r} \succeq \gamma pd P_{> r} \, .
    \end{equation*}
\end{lemma}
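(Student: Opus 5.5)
The plan is to derive the statement deterministically from \Cref{lem:good-degree-spectral-concentration}: on the high-degree subspace, the term $\epsilon pL$ supplies a margin proportional to $pd$, and the error of $L'-pL$ can be charged either to the good vertices $U$ or to the penalty $d^2\Pi_R$. We work on the event of \Cref{lem:good-degree-spectral-concentration}, which has probability at least $1-1/d^{C_1+10}$. Fix $\sigma\in\{\pm1\}$ and a vector $x\in\image(P_{>r})$. It then suffices to show
\[
\sigma\langle x,(L'-pL)x\rangle+\epsilon p\langle x,Lx\rangle+d^2\norm{\Pi_R x}^2\ \ge\ \gamma pd\norm{x}^2 .
\]

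\textbf{Step 1: the margin.} Every character $\chi_S$ with $|S|>r$ is an eigenvector of $L$ with eigenvalue $2|S|\ge 2(r+1)\ge d/50$. Hence $\epsilon p\langle x,Lx\rangle\ge \frac{\epsilon}{50}pd\norm{x}^2=20\gamma pd\norm{x}^2$, using $\gamma=\epsilon/1000$.

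\textbf{Step 2: splitting the error.} Write $x=a+b$ with $a=\Pi_U x$ and $b=\Pi_R x$, so that $\norm{a}^2+\norm{b}^2=\norm{x}^2$. Expanding the quadratic form gives
\[
\langle x,(L'-pL)x\rangle=\langle a,(L'-pL)a\rangle+2\langle a,(L'-pL)b\rangle+\langle b,(L'-pL)b\rangle .
\]
We bound the three terms as follows.
\begin{itemize}
\item By \Cref{lem:good-degree-spectral-concentration}, the first term has absolute value at most $2\gamma pd\norm{a}^2$.
\item For the other two terms we use the crude bound $\norm{L'-pL}\le\norm{L'}+p\norm{L}\le 2d+2pd\le 4d$, since both Laplacians have maximum eigenvalue at most $2d$. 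This gives $|\langle b,(L'-pL)b\rangle|\le 4d\norm{b}^2$.
\item For the cross term, AM-GM gives $8d\norm{a}\norm{b}\le \gamma pd\norm{a}^2+\frac{16d}{\gamma p}\norm{b}^2$.
\end{itemize}
Altogether, $\sigma\langle x,(L'-pL)x\rangle\ge -3\gamma pd\norm{x}^2-\bigl(4d+\tfrac{16d}{\gamma p}\bigr)\norm{b}^2$.

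\textbf{Step 3: absorbing the $R$-part.} Since $p\ge C\log d/d$, we have $\frac{16d}{\gamma p}\le \frac{16d^2}{\gamma C\log d}\le d^2/2$ by the choice of $C$ (as $C\ge 4/\gamma^2\cdot 11$ is certainly large enough). For large $d$ we also have $4d\le d^2/2$. Hence the $\norm{b}^2$ term is dominated by $d^2\norm{\Pi_R x}^2$. Combining with Step 1, the left-hand side is at least $(20\gamma-3\gamma)pd\norm{x}^2\ge \gamma pd\norm{x}^2$, as required.

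The only genuinely probabilistic input is \Cref{lem:good-degree-spectral-concentration}. The main point needing care is the cross term between $U$ and $R$. The idea there is that, because $\Pi_R$ appears with the huge weight $d^2$, we can afford to give the AM-GM weight $\frac{d}{\gamma p}$ to $\norm{b}^2$ and keep only $\gamma pd$ on $\norm{a}^2$. This works precisely because $p d\gtrsim \log d$, which makes $d/p\ll d^2$. One must also check that the constants in the margin ($\epsilon/50$ versus $3\gamma$) fit, which they do comfortably.
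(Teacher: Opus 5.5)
Your proof is correct and is essentially the paper's argument. Both use the same split $x=\Pi_U x+\Pi_R x$, apply \Cref{lem:good-degree-spectral-concentration} on the $U$-block, use the crude bound $\norm{L'-pL}\le 4d$ on the remaining terms, and take the margin from $\epsilon pL$ on Fourier degrees above $r$. The only difference is the AM--GM weighting of the cross term: you use $\gamma pd\norm{a}^2+\frac{16d}{\gamma p}\norm{b}^2$, while the paper uses $\frac{d^2}{2}\norm{f_R}^2+32\norm{f_U}^2$. Both versions need only $pd\ge 32/\gamma$, which is a constant lower bound, not one of order $\log d$ as your closing remark suggests.
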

\begin{proof}
By \Cref{lem:good-degree-spectral-concentration}, we have that with probability at least $1-1/d^{C_1+10}$ one has
\begin{equation*}
        \|\Pi_{U} (L'-pL) \Pi_{U} \| \le 2\gamma pd \, .
\end{equation*}
Hereafter fix $f \in \image(P_{> r})$ and $\sigma \in \{\pm 1\}$. We want to show that the above spectral bound implies
    \begin{equation*}
        \langle f, \left(\sigma\cdot \left(L'-pL\right) + \epsilon p L + d^2\cdot\Pi_R\right) f\rangle \ge \gamma pd \|f\|_2^2 \, .
    \end{equation*}

Let $f_R = \Pi_R f$ and $f_U = \Pi_U f$. Then
\begin{align*}
    \langle f, \left(\sigma\cdot \left(L'-pL\right) + \epsilon p L + d^2\cdot\Pi_R\right) f\rangle =
    & \hphantom{+} \langle f_R, \sigma  \left(L'-pL\right)  f_R \rangle  \\
    & + 2 \, \langle f_R, \sigma\left(L'-pL\right)  f_U \rangle \\
        & + \langle f_U, \sigma \left(L'-pL\right) f_U \rangle  \\
    & + d^2 \| f_R\|_2^2 + \langle f, \epsilon p L f\rangle  \,.
\end{align*}
Note that the matrix $\sigma \left(L'-pL\right) $ has spectral norm at most $4d$, since $L' \preceq L \preceq 2dI$. We can then bound the first two terms by Cauchy-Schwarz, obtaining
\begin{gather*}
     \langle f, \left(\sigma\cdot \left(L'-pL\right) + \epsilon p L + d^2\Pi_R\right) f\rangle \\
     \ge  (-4d+d^2) \|f_R\|_2^2 - 8d\|f_R\|_2 \|f_U\|_2 + \langle f_U, \sigma \left(L'-pL\right) f_U \rangle  + \langle f, \epsilon pL f\rangle \\
     \geq   \frac{d^2}{2} \|f_R\|_2^2 - 8d\|f_R\|_2 \|f_U\|_2+ \langle f_U, \sigma \left(L'-pL\right) f_U \rangle  + \langle f, \epsilon pL f\rangle \, .
\end{gather*}
If $\|\Pi_{U} (L'-pL) \Pi_{U} \| \le 2\gamma pd$, then we can lower-bound the third term as follows:
\begin{equation*}
    \langle f_U, \sigma \left(L'-pL\right)   f_U \rangle \ge - 2\gamma pd \|f_U\|_2^2 \, .
\end{equation*}
Since $f \in \image(P_{> r})$,  we also know that $\langle f, L f \rangle \ge   2r \|f\|_2^2 \ge \frac{d}{100} \|f\|_2^2$. To conclude the proof, it is enough to show that
\begin{equation}\label{eq:high-degree-positivity}
    \epsilon \frac{p d}{100} \|f\|_2^2 - 2\gamma pd \|f_U\|_2^2 +\frac{d^2}{2}\|f_R\|_2^2 - 8d\|f_R\|_2 \|f_U\|_2 \ge \gamma pd \|f\|_2^2 \, .
\end{equation}
To do this, we use that $\frac{d^2}{2}(\|f_R\|_2 - \frac{8}{d}\|f_U\|_2)^2 \geq 0$, and thus
\begin{equation*}
    8d \|f_R\|_2 \|f_U\|_2 \le  \frac{d^2}{2} \|f_R\|_2^2 + 32 \|f_U\|^2_2  \leq  \frac{d^2}{2} \|f_R\|_2^2 + \gamma pd \|f_U\|^2_2\, .
\end{equation*}
Here, we used that $C \geq 32/\gamma$ by the parameter assumption, and thus $32 \leq C\gamma \leq \gamma pd$, by the setting of $p \ge C \log d/d$. Thus, recalling the parameter setting $\gamma =\frac{\epsilon}{1000}$, we get that
\begin{gather*}
     \epsilon \frac{p d}{100} \|f\|_2^2 - 2\gamma pd \|f_U\|_2^2 +\frac{d^2}{2}\|f_R\|_2^2 - 8d\|f_R\|_2 \|f_U\|_2 - \gamma pd \|f\|^2_2 \\
     \geq    \epsilon \frac{p d}{100} \|f\|_2^2  -  3\gamma pd \|f_U\|_2^2 -  \gamma pd \|f\|_2^2   \\
  \geq   pd\|f\|^2_2 \left( \frac{\epsilon}{100}- 4 \gamma \right) \\
  \geq   0,
\end{gather*}
which implies \Cref{eq:high-degree-positivity}, as required.

\end{proof}

The final proof also needs to handle the cross terms between low and high Fourier degrees. We will need the following lemma.
\begin{lemma}
    \label{lem:low-degree-squared-error}
    With probability at least $1-1/d^{C_1+10}$ over $G \sim Q_d(p)$, one has
    \begin{equation*}
        P_{\le r} (L'-pL)^2 P_{\le r} \preceq 4 p \cdot P_{\le r} L P_{\le r} \, .
    \end{equation*}
\end{lemma}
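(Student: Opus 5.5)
The plan is to expand the square edge by edge, separate the diagonal part (which has the right mean) from the cross terms between adjacent edges (which have mean zero), and control both after restricting to the low-dimensional space $\image(P_{\le r})$. As in \Cref{lem:low-degree-spectral-approx}, we work in whitened coordinates there.

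\textbf{Expansion.} Write $a_e=\ind(e\in E)$ and $L_e=b_e^{\otimes 2}$. Then
\[
L'-pL=\sum_{e\in H_d}(a_e-p)L_e,
\qquad
(L'-pL)^2=\sum_{e}(a_e-p)^2L_e^2+\sum_{e\neq f}(a_e-p)(a_f-p)L_eL_f .
\]
Since $L_e^2=2L_e$, the diagonal part equals $D:=2\sum_e (a_e-p)^2L_e$, with $\E D=2p(1-p)L$. The cross term $L_eL_f$ vanishes unless $e$ and $f$ share a vertex. Hence the off-diagonal part $X:=\sum_{e\neq f}(a_e-p)(a_f-p)L_eL_f$ is a mean-zero second-order chaos supported on pairs of adjacent edges. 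It then suffices to show, with probability $1-1/d^{C_1+11}$ each, that
\[
P_{\le r}DP_{\le r}\preceq 3p\,P_{\le r}LP_{\le r}
\qquad\text{and}\qquad
P_{\le r}XP_{\le r}\preceq p\,P_{\le r}LP_{\le r}.
\]
Both statements are trivial on $\chi_\emptyset$, since $L\chi_\emptyset=L'\chi_\emptyset=0$, so we may restrict to $\image(P_{\le r}P_{\ge 1})$.

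\textbf{Diagonal part.} With $M=P_{\le r}LP_{\le r}$, the statement is equivalent to
\[
\Bigl\|\sum_e \bigl((a_e-p)^2-p(1-p)\bigr)\,2M^{\dagger/2}P_{\le r}L_eP_{\le r}M^{\dagger/2}\Bigr\|\le \gamma p .
\]
This is exactly the setting of \Cref{lem:low-degree-spectral-approx}. The summands are independent and mean zero. By \Cref{claim:projected-edge-matrix-moments}, each has norm $O(k/n)$, and the variance proxy has norm $O(pk/n)$, since $\Var((a_e-p)^2)\le p$. Matrix Bernstein (\Cref{lem:matrix-bernstein}) together with $k\le n^{1/10}$ then gives failure probability $2n\exp(-\Omega(\gamma^2p^2n^{0.9}))$. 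On this event the whitened diagonal is at most $(2p(1-p)+\gamma p)P\preceq 3p\,P$.

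\textbf{Cross terms (main obstacle).} The summands of $X$ are not independent, so Bernstein does not apply directly. I would first decouple, replacing $X$ by $\sum_{e\neq f}(a_e-p)(a'_f-p)L_eL_f$ with an independent copy $a'$; standard decoupling inequalities for operator norms cost only a constant factor. Conditionally on $a'$, this is a sum of independent matrices $(a_e-p)L_eW_e$, where
\[
W_e=\sum_{f\sim e}(a'_f-p)L_f
\]
is a local matrix, and each $W_e$ has norm $O(d)$ on the degree bound event. After symmetrising and whitening by $M^{\dagger/2}P_{\le r}$, each term has norm at most $O(d)\cdot O(k/n)$, using again that $b_e^\top P_{\le r}M^\dagger P_{\le r}b_e\le 2k/n$. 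Its variance has norm $O(p\,d^2k/n)$. Since $k/n\le n^{-0.9}=2^{-0.9d}$, matrix Bernstein at $t=p/2$ gives failure probability $2n\exp(-\Omega(p^2 2^{0.9d}/d^2))$, which is far below $d^{-C_1-11}$. The step I expect to be most delicate is making this decoupling and conditioning argument rigorous for the operator norm of a non-Hermitian product $L_eL_f$. I would handle this by symmetrising $X$ and applying a Hermitian dilation, and I would also check that the Gaussian-free decoupling constants are absolute.

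\textbf{Conclusion.} Summing the two bounds gives $P_{\le r}(L'-pL)^2P_{\le r}\preceq 4p\,P_{\le r}LP_{\le r}$, with failure probability at most $1/d^{C_1+10}$ by a union bound.
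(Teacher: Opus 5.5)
Your argument is correct, but it takes a different route from the paper.

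\textbf{Your route.} You expand $(L'-pL)^2$ over ordered pairs of edges and split it into two pieces. The diagonal part carries the mean $2p(1-p)L$. The mean-zero chaos over adjacent pairs is handled with a Banach-space decoupling inequality for tail probabilities (e.g.\ de la Pe\~na--Montgomery-Smith, which is legitimate at order $2$ with absolute constants), followed by a rectangular matrix Bernstein applied conditionally on $a'$.

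\textbf{The paper's route.} The paper groups the same expansion by vertex instead: $(L'-pL)^2=\sum_x (L'-pL)\ind_x\ind_x^\top(L'-pL)$. After whitening, this becomes $\sum_x a_xa_x^\top$, where $a_x=M^{\dagger/2}P_{\le r}(L'-pL)\ind_x$ depends only on the $d$ edges at $x$. Since $Q_d$ is bipartite, split $V_d$ into the parity classes $V_\pm$. Within a class, the $a_x$ depend on disjoint edge sets and are therefore independent. One matrix Bernstein per class then gives at most $2p$ per class, using $\|a_xa_x^\top\|\le dk/n$ and $\sum_{x\in V_\sigma}\E a_xa_x^\top\preceq p\,P_{\le r}P_{\ge 1}$. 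This handles diagonal and cross terms at once, with PSD summands, no chaos and no decoupling.

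\textbf{Comparison.} Your decomposition makes it transparent that the whole mean comes from the diagonal and that the cross terms are lower order. It also does not use bipartiteness; for a general base graph, the vertex-grouping trick would lose a factor equal to the number of colour classes. The cost is considerably heavier external machinery. Your decoupling step can in fact be removed: group your adjacent-pair terms by their unique shared vertex and split by parity, which is exactly the paper's trick.

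\textbf{One point to tighten.} The bound $\|W_e\|=O(d)$ holds deterministically, so no degree event is needed, but it does not by itself give the whitened per-term bound $O(d)\cdot O(k/n)$. The reason is that $\|P_{\le r}M^{\dagger/2}\|$ is only $O(1)$. What you actually need is the following. $W_eb_e=\sum_{f\sim e}\pm(a'_f-p)b_f$ is a combination of $2(d-1)$ edge vectors, each of whitened norm at most $\sqrt{2k/n}$, so $\|M^{\dagger/2}P_{\le r}W_eb_e\|\le 2(d-1)\sqrt{2k/n}$. For the $Z^\top Z$ side of the variance, Cauchy--Schwarz over the neighbours gives $\sum_e w_ew_e^\top\preceq 4(d-1)^2\,P_{\le r}P_{\ge 1}$. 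With these two facts, your $O(pd^2k/n)$ variance bound and the final tail estimate go through.
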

\begin{proof}
We prove the lemma using Matrix Bernstein’s inequality. We start by rewriting the condition from the lemma conclusion as a spectral concentration statement for a sum of independent random matrices.

We denote $M=P_{\le r} L P_{\le r}$ for convenience. For each $x \in V_d$, let
    \begin{align*}
        a_x & = M^{\dagger/2} P_{\le r} (L'-pL) \ind_x = M^{\dagger/2} P_{\le r} \sum_{i =1}^d \left(\ind(\{x,x^{\sim i}\} \in E) - p\right)\left(\ind_x-\ind_{x^{\sim i}}\right)\, \\
        & = \frac{1}{n}\sum_{1 \leq |S| \leq r}\frac{1}{\sqrt{2|S|}} \chi_S^{\otimes 2} \sum_{i =1}^d \left(\ind(\{x,x^{\sim i}\} \in E) - p\right)\left(\ind_x-\ind_{x^{\sim i}}\right)\,.
    \end{align*}
    The condition $P_{\le r} (L'-pL)^2 P_{\le r} \preceq 4 p \cdot P_{\le r} L P_{\le r}$ is equivalent to
    \begin{equation}\label{eq:projected-error-norm-bound}
        \left\|\sum_{x \in \{\pm 1\}^d} a_x^{\otimes 2}\right\| \le 4 p \, .
    \end{equation}
    For each $\sigma \in \{\pm 1\}$, we let $V_{\sigma} = \{x \in \{\pm 1\}^d: \, \chi_{[d]}(x)=\sigma\}$. We split the above sum according to the bipartition $V_d=V_+\cup V_-$ so that, within each $V_\sigma$, the matrices $(a_x^{\otimes 2})_{x\in V_\sigma}$ depend on disjoint edge indicators and are therefore independent.

    We now fix $\sigma = \pm 1$, and observe that the random matrices $(a_x^{\otimes 2})_{x \in V_\sigma}$ are indeed independent. This is because for every $x \in V_d$ and every $i \in [d]$, exactly one of $x$ and $x^{\sim i}$ belongs to $V_{\sigma}$.   For $x \in V_{\sigma}$, let $A_x = a_x^{\otimes 2} - \E a_x^{\otimes 2}$.  We will apply Matrix Bernstein's inequality (\Cref{lem:matrix-bernstein}) to the matrices $(A_x)_{x \in V_{\sigma}}$.

    We start by bounding  $\|a_x^{\otimes 2} \|$ for $x \in V_{\sigma}$.
    For every $x \in V_\sigma$, we have
    \begin{equation*}
        \left\|a_x^{\otimes 2}\right\| = a_x^\top a_x =  n  \sum_{S \subseteq [d]: 1\le |S| \le r} \frac{1}{2|S|}\left(\sum_{i=1}^d \left(\ind(\{x,x^{\sim i}\} \in E) - p\right)\left\langle \ind_x-\ind_{x^{\sim i}},\chi_S\right\rangle\right)^2 \,.
    \end{equation*}

For every $S$, we have that $\langle \ind_x-\ind_{x^{\sim i}},\chi_S \rangle$ is equal to $0$ if $i \notin S$, and equal to $\pm 2/n$ if $i \in S$. Hence, the inner sum $\sum_{i=1}^d \left(\ind(\{x,x^{\sim i}\} \in E) - p\right)\left\langle \ind_x-\ind_{x^{\sim i}},\chi_S\right\rangle$ has absolute value at most $2|S|/n$ for every $S$. Therefore, we get \
\begin{equation*}
        \left\|a_x^{\otimes 2}\right\| \leq n  \sum_{1 \leq |S| \leq r}\frac{4|S|^2}{2|S|n^2} \leq \frac{2rk}{n} < \frac{dk}{n}\,.
\end{equation*}
Since $a_x^{\otimes 2}$ is always PSD, we also have $0 \preceq \E [a_x^{\otimes 2}] \preceq \frac{dk}{n}I$. Thus,
\begin{equation*}
    \|A_x\| \le \frac{dk}{n}\,.
\end{equation*}

Next, we bound the spectral norm of $\sum_{x \in V_\sigma}\E[A_x^{2}]$.
We have $ \left(a_x^{\otimes 2}\right)^2 = \| a_x^{\otimes 2}\| a_x ^{\otimes 2}$, so
    \begin{equation*}
         \sum_{x \in V_\sigma}\E\left[\left(A_x\right)^2\right] = \sum_{x \in V_\sigma}\left(\E\left[\left(a_x^{\otimes 2}\right)^2\right]- \E\left[a_x^{\otimes 2}\right]^2\right) \preceq \sum_{x \in V_\sigma}\E\left[\left(a_x^{\otimes 2}\right)^2\right] \preceq \frac{dk}{n} \sum_{x \in V_\sigma}\E\left[a_x^{\otimes 2}\right] \, .
    \end{equation*}
    It then suffices to bound the spectral norm of $\sum_{x \in V_\sigma}\E[a_x^{\otimes 2}]$. To do so, let $f \in \R^{V_d}$ and $h= (M^{\dagger/2} P_{\le r})f$. Then from the definition of $a_x$, we get
    \begin{align*}
        \left\langle f, \left(\sum_{x \in V_\sigma}\E\left[a_x^{\otimes 2}\right]\right) f \right\rangle & =  n \sum_{x \in V_\sigma}\E\left[\left\langle h, (L'-pL) \ind_x \right\rangle^2\right] \, .
    \end{align*}
    Now note that for a fixed $x \in V_\sigma$ we have
    \begin{equation*}
        \left\langle h, (L'-pL) \ind_x \right\rangle = \frac{1}{n} \sum_{i =1}^d \left(\ind(\{x,x^{\sim i}\} \in E) - p\right) \left(h(x)-h(x^{\sim i})\right) \, .
    \end{equation*}
    Since these terms are independent mean-zero random variables, we have
    \begin{align*}
        \left\langle f, \left(\sum_{x \in V_\sigma}\E\left[a_x^{\otimes 2}\right]\right) f \right\rangle & \le \frac{1}{n} p  \sum_{x \in V_\sigma} \sum_{i =1}^d \left(h(x)-h(x^{\sim i})\right)^2 \\
        & = p \langle h, L h\rangle \\
        & = p \left\langle f, \left(M^{\dagger/2} P_{\le r} L P_{\le r} M^{\dagger/2}\right) f \right\rangle \\
        & = p \|P_{\geq 1} P_{\le r} f\|_2^2 \\
        & \le p \| f\|_2^2 \, .
    \end{align*}
    Thus, we have $$ \left\| \sum_{x \in V_\sigma}\E\left[A_x^2\right] \right\| \leq pdk/n $$
    Finally, by \Cref{lem:matrix-bernstein} applied with $t = p$, $K = dk/n$, we get
    \begin{align*}
        \Pr\left[\left\|\sum_{x \in V_\sigma} A_x\right\| >  p\right] \le 2n \exp\left(-\frac{p^2/2}{pdk/n+p dk/(3n)}\right) \leq 2n \exp\left(-\frac{3p n}{8dk} \right)\le \frac{1}{2d^{C_1+10}} \, ,
    \end{align*}
    where the last inequality used that $k \leq n^{1/10}$, by \Cref{rem:low-degree-subspace-dimension}.
    Taking a union bound over $\sigma = \pm 1$ gives that

    \begin{equation*}
        \left\|\sum_{x \in V_d} a_x^{\otimes 2} \right\| \le \sum_{\sigma \in \{\pm 1\}}\left\|\sum_{x \in V_\sigma} a_x^{\otimes 2} \right\| \le \sum_{\sigma \in \{\pm 1\}}\left(\left\| \sum_{x \in V_\sigma}A_x  \right\|+\left\|\sum_{x \in V_\sigma}\E\left[a_x^{\otimes 2}\right]\right\|\right)  \le 2(p+p) = 4p \, .
    \end{equation*}
\end{proof}
\begin{lemma}
    \label{lem:bad-degree-projection-bound}
    With probability at least $1-1/d^{C_1+9}$ over the sampling of $Q_d'$, one has $\|P_{\le r} \Pi_R P_{\le r}\| \le 1/d^{10}$.
\end{lemma}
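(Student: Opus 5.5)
The plan is to control $P_{\le r}\Pi_R P_{\le r}$ by matrix concentration over the random set $R$. We will not use any structural property of low-degree functions such as hypercontractivity. Hölder plus hypercontractivity only gives a factor $3^{\Theta(r)}$ against a gain of $(|R|/n)^{1/2}=d^{-\Theta(1)}$, which is useless for $r=\lfloor d/100\rfloor$.

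\textbf{Rewriting as a sum of rank-one terms.} Since $\Pi_R$ is diagonal with $0/1$ entries, $\Pi_R=\sum_{x\in V_d}\ind[x\in R]\,\ind_x^{\otimes 2}$, and therefore
\[
P_{\le r}\Pi_R P_{\le r}=\sum_{x\in V_d}\ind[x\in R]\, p_x^{\otimes 2},\qquad p_x:=P_{\le r}\ind_x .
\]
Here $P_{\le r}$ is an orthogonal projection in the standard inner product, and $\|p_x\|_2^2=\langle \ind_x,P_{\le r}\ind_x\rangle=\frac1n\sum_{|S|\le r}\chi_S(x)^2=k/n$. By \Cref{rem:low-degree-subspace-dimension}, $k/n\le n^{-0.9}$. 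Also $\sum_x p_x^{\otimes 2}=P_{\le r}$, which has norm $1$.

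\textbf{Independence via the bipartition.} As in the proof of \Cref{lem:low-degree-squared-error}, split $V_d=V_+\cup V_-$ according to $\chi_{[d]}$. The event $x\in R$ depends only on $\deg_{Q_d'}(x)$, hence only on the edges incident to $x$. Since $Q_d$ is bipartite with sides $V_\pm$, the edge sets incident to distinct $x\in V_\sigma$ are disjoint. So, for each fixed $\sigma$, the indicators $(\ind[x\in R])_{x\in V_\sigma}$ are independent. By the Chernoff bound in the proof of \Cref{lem:size-bad-degree-set}, each has mean $q_x\le q:=d^{-(C_1+C_2+10)}$.

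\textbf{Matrix Bernstein on each side.} Fix $\sigma$ and set $A_x=(\ind[x\in R]-q_x)p_x^{\otimes 2}$ for $x\in V_\sigma$. Then:
\begin{itemize}
\item $\|A_x\|\le k/n$;
\item $\sum_x\E A_x^2\preceq \sum_x q_x\,(k/n)\,p_x^{\otimes 2}\preceq q(k/n)\,P_{\le r}$, which has norm at most $qk/n$;
\item $\|\sum_x q_x p_x^{\otimes 2}\|\le q$.
\end{itemize}
Apply \Cref{lem:matrix-bernstein} with $t=\frac{1}{4d^{10}}$. The failure probability is at most
\[
2n\exp\!\Big(-\frac{t^2/2}{qk/n+kt/(3n)}\Big)\le 2n\exp\!\big(-\Omega(n^{0.9}d^{-20})\big),
\]
which is far below $\frac12 d^{-(C_1+9)}$ for large $d$. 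On the complement of the failure events for both $\sigma$, we get
\[
\|P_{\le r}\Pi_RP_{\le r}\|\le \sum_{\sigma}\Big(q+\tfrac{1}{4d^{10}}\Big)\le \tfrac{1}{d^{10}},
\]
using $q\le d^{-10}/4$. A union bound over $\sigma\in\{\pm1\}$ gives total failure probability at most $d^{-(C_1+9)}$.

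\textbf{Main obstacle.} The key step is the independence structure. The events $\{x\in R\}$ are not independent across all $x$, because neighbors share an edge. Restricting to one side $V_\sigma$ of the bipartition fixes this, and it costs only a factor $2$.

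The smallness of $k/n$ (\Cref{rem:low-degree-subspace-dimension}) is what makes the deviation term negligible. Consequently the bound is governed by the mean, roughly $q\cdot\|P_{\le r}\|\le q$, rather than by the worst-case size of $R$.
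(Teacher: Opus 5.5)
Your proof is correct, and it takes a genuinely different route from the paper.

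\textbf{The paper's route.} The paper expands in Fourier coefficients of $\ind_R$:
\[
P_{\le r}\Pi_R P_{\le r}=\sum_{|Z|\le 2r}\widehat{\ind_R}(Z)\,P_{\le r}\diag(\chi_Z)P_{\le r},
\]
and bounds each operator $P_{\le r}\diag(\chi_Z)P_{\le r}$ by $1$ in norm. The $Z=\emptyset$ term is handled by the high-probability size bound $|R|/n\le d^{-C_2}$ from \Cref{lem:size-bad-degree-set}. For each nonempty $Z$, the paper uses scalar Bernstein with bounded dependence (dependency degree $d$) to get $|\widehat{\ind_R}(Z)|\le n^{-1/4}$. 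A union bound over the at most $n^{0.2}$ sets $Z$ then gives a total fluctuation of $n^{-0.05}$.

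\textbf{Your route.} You decompose into rank-one terms $\ind[x\in R]\,p_xp_x^\top$ with $\|p_x\|_2^2=k/n$. You get independence by splitting along the parity bipartition and apply matrix Bernstein once per side. This is the same technique the paper uses for \Cref{lem:low-degree-squared-error}. The steps you need all check out: the $A_x$ on each $V_\sigma$ are independent because $R$-membership depends only on incident edges; the variance proxy is at most $qk/n$; and the mean term is at most $q\le d^{-31}$.

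\textbf{Comparison.} Your argument avoids the Fourier expansion and the union bound over characters. It needs only the per-vertex probability $q$, not the high-probability bound on $|R|$. It also gives a failure probability of order $\exp(-n^{\Omega(1)})$, far stronger than required. The paper's argument stays scalar, concentrating one Fourier coefficient at a time, and replaces your parity-splitting trick with a bounded-dependence Bernstein inequality. Both arguments hinge on the low-degree space being tiny relative to $n$: for you this enters through $\|p_x\|_2^2=k/n\le n^{-0.9}$, and for the paper through the count of at most $n^{0.2}$ characters $Z$ with $|Z|\le 2r$.
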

\begin{proof}
 For every $S, T \subseteq [d]$, we have that
 $$\chi_S^\top \Pi_R\chi_T= \sum_{x \in R} \chi_S(x) \chi_T(x) = \sum_{x \in R} \chi_{S \triangle T}(x) = n \widehat {\ind_R}(S \triangle T). $$
 Therefore, we can write
    \begin{align*}
        P_{\le r} \Pi_R P_{\le r} &= \frac{1}{n} \sum_{S,T \subseteq [d]: \, |S|,|T| \le r}\chi_S \chi_T^\top \widehat{\ind_R}(S\triangle T) \\
        & = \sum_{Z \subseteq [d]: |Z| \le 2r}  \widehat{\ind_R}(Z) \cdot \frac{1}{n} \sum_{\substack{S,T \subseteq [d]:\\|S|,|T| \le r,\\ S \triangle T = Z}}\chi_S \chi_T^\top \, .
    \end{align*}
    For each $Z \subseteq [d]$ such that $|Z| \le 2r$, note that
    \begin{align*}
        P_{\le r} \diag(\chi_Z) P_{\le r} & = \frac{1}{n^2} \sum_{S,T \subseteq [d]: \, |S|,|T| \le r} \left(\sum_{x \in V_d}\chi_S(x) \chi_T(x) \chi_{Z}(x)\right)\chi_S\chi_T^\top\\
        & = \frac{1}{n} \sum_{S,T \subseteq [d]: \, |S|,|T| \le r} \langle \chi_{S \triangle T}, \chi_Z \rangle \chi_S \chi_T^\top \\
        & = \frac{1}{n} \sum_{\substack{S,T \subseteq [d]:\\|S|,|T| \le r,\\ S \triangle T = Z}} \chi_S \chi_T^\top \, .
    \end{align*}
    Therefore, we have
    \begin{align*}
        P_{\le r} \Pi_R P_{\le r} & = \sum_{Z \subseteq [d]: |Z| \le 2r}  \widehat{\ind_R}(Z) \cdot P_{\le r} \diag(\chi_Z) P_{\le r} \, .
    \end{align*}
    Since $\|P_{\le r} \diag(\chi_Z) P_{\le r}\| \le 1$ for every summand, it is enough to bound $|\widehat{\ind_R}(Z)|$.

First, observe that for $Z=\emptyset$ we have $\widehat{\ind_R}(Z)=|R|/n$.

   To bound $|\widehat{\ind_R}(Z)|$ for $1 \leq |Z| \leq 2r$,  we will apply Bernstein's inequality with bounded dependence. We start by rewriting $|\widehat{\ind_R}(Z)|$ as a sum of random variables satisfying the required dependence condition in \Cref{lem:bernstein-bounded-dependence}.

    Let \(q\in[0,1]\) denote the common probability that a vertex belongs to $R$. That is, for every
\(x\in\{\pm1\}^d\), $q =  \Pr\bigl(|\deg_{Q_d'}(x)-pd|>\gamma pd\bigr)
   = \Pr(x\in R).$

    Fix $Z \subseteq [d]$ such that $1 \le |Z| \le 2r$. We have
    \begin{equation*}
        \widehat{\ind_R}(Z) = \frac{1}{n}\sum_{x \in \{\pm 1\}^d} \ind_R(x) \chi_Z(x) = \frac{1}{n}\sum_{x \in \{\pm 1\}^d} (\ind_R(x)-q) \chi_Z(x) \, .
    \end{equation*}
    Each summand is bounded by $1$ in absolute value. Additionally, each of the random variables $(\ind_R(x)-q) \chi_Z(x)$  is dependent only on the random variables corresponding to neighbors of \(x\) in \(Q_d\),  so the dependency graph of the random variables $((\ind_R(x)-q) \chi_Z(x))_{x \in \{\pm 1\}^d}$ has maximum degree $d$.

    Since $\ind_R(x) \sim \mathrm{Bernoulli}(q)$ for every $x \in \{\pm 1\}^d$, the sum of second moments is
    \begin{equation*}
        \sum_{x \in \{\pm 1\}^d} \E\left[\left((\ind_R(x)-q) \chi_Z(x)\right)^2\right] \le q n \, .
    \end{equation*}
    So by \Cref{lem:bernstein-bounded-dependence} applied with $\Delta  = d$, $t = n^{3/4}$, and $K = 1$,  we have
    \begin{equation*}
        \Pr\left[\left|\widehat{\ind_R}(Z)\right| > \frac{1}{n^{1/4}} \right] \le 2 \exp\left(-\frac{n^{6/4}}{4(d+1)\cdot\left(qn+n^{3/4}\right)}\right) \le \frac{1}{n^{1/2}} \le  \frac{1}{n^{0.2}} \frac{1}{d^{C_1+15}} \, .
    \end{equation*}
    The number of sets $Z \subseteq [d]$ such that $1 \leq |Z | \leq 2r$ is
    \begin{align*}
        \sum_{t = 1}^{2r}{\binom{d}{t}} \leq 2r{\binom{d}{2r}} \leq 2r \left(  \frac{ed}{2r}\right)^{2r} \leq 2r\left(50e\right)^{d/50} \leq n^{0.2}\,.
    \end{align*}

    So by a union bound, with probability at least $1-1/d^{C_1 +15}$,  $\left|\widehat{\ind_R}(Z)\right| \leq \frac{1}{n^{1/4}}$ for all $Z \subseteq [d]$ with $1 \leq |Z| \leq 2r$.

    Combining this with \Cref{lem:size-bad-degree-set}, we have with probability at least $1-1/d^{C_1+9}$ that
    \begin{align*}
        \|P_{\le r} \Pi_R P_{\le r}\| &\le \frac{|R|}{n} + \sum_{Z \subseteq [d]: 1\le |Z| \le 2r} \left|\widehat{\ind_R}(Z)\right| \\
        & \le \frac{1}{d^{C_2}}+\frac{n^{0.2}}{n^{1/4}} \\
        & = \frac{1}{d^{C_2}}+\frac{1}{n^{1/20}}  \le \frac{1}{d^{10}} \, .
    \end{align*}
\end{proof}

\begin{proof}[Proof of \Cref{lem:spectral-concentration}]
We use $R_{Q_d'}$ as the set $R$ promised in the statement. Then, for $\sigma = \pm 1$, let
$$M_\sigma = \sigma\cdot (L'-pL)+\epsilon p L +d^2 P_{\geq 1} \Pi_R P_{\geq 1}\,.$$
We will show that with probability at least $1-1/d^{C_1}$ we have $|R| \le n/d^{C_2}$ and for both $\sigma =\pm 1$ one has $M_\sigma \succeq 0$.

By \Cref{lem:size-bad-degree-set}, \Cref{lem:low-degree-spectral-approx}, \Cref{lem:high-degree-spectral-lb}, \Cref{lem:low-degree-squared-error}, and \Cref{lem:bad-degree-projection-bound}, we have that with probability at least $1-5/d^{C_1+9}$ the following hold simultaneously:
\begin{align}
    |R| & \le \frac{n}{d^{C_2}}  && \text{by \Cref{lem:size-bad-degree-set}} \label{eq:spectral-bad-degree-set-size}, \\
     - \gamma p P_{\le r} L P_{\le r}  \preceq P_{\le r}(L'&  -  pL)P_{\le r} \preceq \gamma p P_{\le r} L P_{\le r} && \text{by \Cref{lem:low-degree-spectral-approx}} \label{eq:spectral-low-degree-approximation} , \\
      \gamma pd P_{> r}&  \preceq P_{>r} M_\sigma P_{>r}   \quad \text{ for both } \sigma=\pm 1\,  &&  \text{by \Cref{lem:high-degree-spectral-lb}}, \label{eq:spectral-high-degree-lower-bound}\\
    P_{\le r} (L'-pL)^2 P_{\le r}&  \preceq 4 p \cdot P_{\le r} L P_{\le r} && \text{by \Cref{lem:low-degree-squared-error}}, \label{eq:spectral-low-degree-squared-error}\\
    \|P_{\le r} \Pi_R P_{\le r}\|&  \le \frac{1}{d^{10}} && \text{by \Cref{lem:bad-degree-projection-bound}}\label{eq:spectral-bad-degree-projection-bound}.
\end{align}

We will now show that Equations \eqref{eq:spectral-low-degree-approximation}, \eqref{eq:spectral-high-degree-lower-bound}, \eqref{eq:spectral-low-degree-squared-error} and \eqref{eq:spectral-bad-degree-projection-bound} together imply that  $M_\sigma \succeq 0$ for $\sigma \in \{\pm1\}$. We achieve this through a Schur complement argument.

Fix $\sigma =\pm 1$. For convenience, we denote
\[
\Delta = L' - p L,\qquad \Plow=P_{\le r},\qquad \Phigh=P_{>r},\qquad X = \Plow M_\sigma \Phigh  (\Phigh M_\sigma \Phigh )^\dagger \Phigh M_\sigma \Plow.
\]

Since $M_\sigma \chi_\emptyset = 0$, it is enough to verify positive semidefiniteness on $\image(P_{>0})$.  Let $f \in \image(P_{>0})$. We want to prove $\langle f, M_\sigma f\rangle \ge 0$. We can write
\begin{equation}\label{eq:schur-complement-decomposition}
\begin{aligned}
    \langle f, M_\sigma f\rangle & = \langle \Plow f, M_\sigma (\Plow f)\rangle + \langle \Phigh f, M_\sigma (\Phigh f)\rangle + 2 \langle \Plow f, M_\sigma (\Phigh f)\rangle \\
    & = \langle f, (\Plow M_\sigma \Plow  - X) f\rangle + \langle f, X f\rangle + \langle \Phigh f, M_\sigma (\Phigh f)\rangle + 2 \langle \Plow f, M_\sigma (\Phigh f)\rangle\, .
\end{aligned}
\end{equation}
Since $\Phigh M_\sigma \Phigh  \succeq \gamma pd \Phigh  \succeq 0$ by \Cref{eq:spectral-high-degree-lower-bound}, the last three terms can be written as the quadratic form of a PSD matrix: if we take
\begin{equation*}
    g = \Phigh f + (\Phigh  M_\sigma \Phigh )^{\dagger}\Phigh M_\sigma \Plow  f\,,
\end{equation*}
then
\begin{align*}
    0 & \le \langle g, (\Phigh M_\sigma \Phigh ) g \rangle \\
    & = \langle \Phigh  f, (\Phigh  M_\sigma \Phigh ) \Phigh f\rangle + \langle (\Phigh  M_\sigma \Phigh )^{\dagger}\Phigh M_\sigma \Plow  f, \Phigh M_\sigma \Plow  f\rangle + 2\langle \Phigh  f, \Phigh M_\sigma \Plow  f\rangle \\
    & = \langle \Phigh f, M_\sigma (\Phigh f)\rangle + \langle f, X f\rangle + 2\langle \Phigh  f, \Phigh M_\sigma \Plow  f\rangle \, .
\end{align*}
Plugging this back into \Cref{eq:schur-complement-decomposition} gives
$$    \langle f, M_\sigma f\rangle  \geq\langle f, (\Plow M_\sigma \Plow  - X) f\rangle\,.  $$

Therefore, to conclude it suffices to show that $ \langle f, (\Plow M_\sigma \Plow  - X) f\rangle \ge 0$.

To do this, we first note that
\begin{equation}\label{eq:high-degree-pseudoinverse-bound}
    \left\|\left(\Phigh M_\sigma \Phigh \right)^\dagger \right\| \le \frac{1}{\gamma pd} \, ,
\end{equation}
since the smallest non-zero eigenvalue of $\Phigh M_\sigma \Phigh $ is at least $\gamma pd$ by \Cref{eq:spectral-high-degree-lower-bound}.

Using this, we bound the quantity $ \langle f, X f\rangle$. Observe that $\Plow L\Phigh  = 0$  and $\Plow (P_{>0} \Pi_R P_{>0})\Phigh  = P_{>0}\Plow  \Pi_R \Phigh $.
Therefore, we have $\Plow  M_{\sigma}\Phigh  = P_{>0}\Plow (\sigma \Delta + d^2 \Pi_R)\Phigh  $, so
\begin{align*}
    \langle f, X f\rangle & = \left\langle f, \left(\Plow  ( \sigma \Delta+d^2\Pi_R) \Phigh (\Phigh M_\sigma \Phigh )^\dagger \Phigh   (\sigma \Delta+d^2\Pi_R) \Plow \right) f\right\rangle \\
    & \le 2 \left\langle f, \left(\Plow  \Delta \Phigh (\Phigh M_\sigma \Phigh )^\dagger \Phigh   \Delta \Plow \right)f\right\rangle + 2d^4 \left\langle f, \left(\Plow  \Pi_R \Phigh (\Phigh M_\sigma \Phigh )^\dagger \Phigh   \Pi_R \Plow \right) f\right\rangle \\
& \le \frac{2}{\gamma pd} \left(\langle f, (\Plow  \Delta \Phigh ^2\Delta \Plow ) f\rangle + d^4\langle f, (\Plow  \Pi_R\Phigh ^2 \Pi_R \Plow ) f \rangle \right)  \quad   \text{by \Cref{eq:high-degree-pseudoinverse-bound}}\\
    & \le \frac{2}{\gamma pd} \left(\langle f, (\Plow  \Delta^2 \Plow ) f\rangle + d^4\langle f, (\Plow  \Pi_R \Plow ) f \rangle \right) \quad \qquad \qquad  \ \text{since $\|\Phigh \| \leq 1$} \\
    & \le  \frac{2}{\gamma pd} \left( 4p \cdot \langle f, (\Plow  L \Plow ) f\rangle + \frac{d^4}{d^{10}}\|\Plow  f\|_2^2 \right) \quad \qquad \qquad \text{by \Cref{eq:spectral-low-degree-squared-error} and \Cref{eq:spectral-bad-degree-projection-bound}}\,.
\end{align*}
On the other hand, by \Cref{eq:spectral-low-degree-approximation}, we have
\begin{align*}
 \langle f, \Plow M_\sigma \Plow  f\rangle  =   \langle f, \Plow \left(  \sigma (L' -pL)  + \epsilon p L + d^2 \Pi_R\right) \Plow f\rangle  \ge ( \epsilon -\gamma) p \langle f , \Plow L \Plow  f\rangle \, .
\end{align*}
Therefore
\begin{align*}
    \langle f, (\Plow M_\sigma \Plow  - X) f\rangle & \ge ( \epsilon -\gamma) p \langle f , \Plow L \Plow  f\rangle -\frac{2}{\gamma pd} \left( 4p \cdot \langle f, (\Plow  L \Plow ) f\rangle + \frac{1}{d^{6}}\|\Plow  f\|_2^2 \right) \\
    & \ge ( \epsilon -\gamma) p \langle f , \Plow L \Plow  f\rangle -\frac{2}{\gamma pd} \left( 4p \cdot \langle f, (\Plow  L \Plow ) f\rangle + \frac{1}{d^{6}}\cdot \langle f, \Plow  L \Plow  f\rangle  \right) \\
    & \ge p\left(\epsilon - \gamma -\frac{8}{p\gamma d}-\frac{1}{d^{4}}\right) \langle f, \Plow  L \Plow  f\rangle  \\
    & \ge 0 \, ,
\end{align*}
where the last inequality uses $p \ge C\log d/d$, the setting $\gamma = \epsilon/1000$, and the assumption that $d$ is sufficiently large.
\end{proof}

%% file: coordinate-recovery.tex
\section{Coordinate cut recovery with sum-of-squares}
\label{sec:coordinate-recovery}

Below we state our main result about recovering the coordinate cuts of the percolated hypercube.

\begin{theorem}[Main result]
\label{thm:coordinate-cut-recovery}
Let $n = 2^d$.
Let $G^*$ be sampled according to $Q_d(p)$, and let $G=(V,E)$ be obtained from $G^*$ by arbitrarily renaming its vertices. Let $z : V \to \Set{\pm 1}^d$ be the bijective embedding assigning each vertex its latent Boolean vector in $G^*$.
For $d$ large enough and $p \geq \frac{C \log d}{d}$ for a large enough absolute constant $C$, there exists an algorithm that takes as input $ G$, runs in time $n^{O(1)}$, and with probability $1-o_d(1)$ outputs functions $ f_1, \ldots,  f_d : V \to \Set{\pm 1}$ such that, up to a permutation of the functions, there exists some $ \sigma \in \{\pm 1\}^d$ such that for each $i \in [d]$
\[\E_{v \sim \mathrm{Unif}(V)} (f_i(v) - \sigma_i z_i(v))^2 \leq d^{-10}\,.\]
\end{theorem}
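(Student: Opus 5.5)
The proof assembles the pieces outlined in \Cref{sec:technical-overview}. First I estimate $p$ from $G$: the average degree $2|E|/n$ concentrates around $pd$ within a factor $1\pm n^{-1/3}$, so $\hat p = 2|E|/(nd)$ satisfies $\hat p = p(1\pm d^{-20})$. Then I write the system $\cP$ in variables $\{f(v)\}_{v\in V}$: Booleanity, balance, and the cut-value constraint $\langle f, L_G f\rangle \le (1+d^{-c})2\hat p$ for a suitable constant $c$. I also add an explicit bound $\sum_v f(v)^2\le n$, which is implied by Booleanity, so \Cref{fact:pseudoexpectation-computation} applies.

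Completeness is \Cref{lem:overview-coordinate-cut-concentration}. Each coordinate cut $\hat\chi_i$ has cut value $\mathrm{Bin}(2^{d-1},p)$. By Chernoff and a union bound over $i$, it equals $p2^{d-1}(1\pm d^{-20})$, so every $\hat\chi_i$ satisfies $\cP$ once the $\hat p$ error is absorbed.

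The core step is the degree-$4$ SoS certificate $\cP\proves{f}{4}\sum_i\langle f,\hat\chi_i\rangle^4\ge 1-d^{-c'}$. Relabeling vertices by $z$ is a linear change of variables, so I work with $\cP^*$ on $G^*$. I would carry it out as follows.

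\begin{enumerate}
\item Apply \Cref{lem:spectral-concentration} to $Q_d$, and to each half $Q_{d-1}$ of every $L^{(i)}$, with $C_2$ large. Using Booleanity, $\langle g, P_{>0}\Pi_R P_{>0} g\rangle \le 2|R|/n + 2\langle g,\chi_\emptyset\rangle^2\cdot(\ldots)$. This should be boundable in degree $2$ by $O(d^{-C_2})$ plus a term controlled by $\langle g,\chi_\emptyset\rangle^2 \le 1$. Together these give the degree-$2$ two-sided comparison of \Cref{lem:overview-sos-cut-value-concentration}.
\item Combine it with the polynomial identity of \Cref{lem:overview-cut-symmetric-difference} for $g=f\circ\chi_i$. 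This gives, for each $i$, the degree-$2$ inequality of \Cref{lem:overview-refined-cut-comparison}: $(1-\epsilon)\langle f,Lf\rangle \le \tfrac1p\langle f,L'f\rangle -2\epsilon+4\epsilon(1-\langle f,\chi_i\rangle^2)+d^{-c}$.
\item Multiply the $i$-th inequality by the square $\langle f,\chi_i\rangle^2$ and sum over $i$. With $\delta := \tfrac12(\langle f,Lf\rangle-2)$, SoS Parseval and the spectrum of $L$ give $\sum_i\langle f,\chi_i\rangle^2\ge1-2\delta$, and $\delta\ge0$ follows from balance. Products such as $\delta\sum_i\langle f,\chi_i\rangle^2$ are handled via $\sum_i \langle f,\chi_i\rangle^2 \le 1$, which stays in degree $4$.
\item Invoke \Cref{thm:sos-fkn}, using that $\delta$ enters only linearly so the substitution of a quadratic polynomial for $\delta$ keeps degree $4$. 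This gives $1-\sum_i\langle f,\chi_i\rangle^4 \le O(\delta)$.
\item Feed this back. The $O(\epsilon\delta)$ term is absorbed into the left side for $\epsilon$ small, yielding $\delta\le d^{-c}$. FKN once more then gives $\sum_i\langle f,\chi_i\rangle^4\ge1-O(d^{-c})$.
\end{enumerate}

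For rounding, apply \Cref{thm:sos-rounding} with $P$ the identity (scaled by $n^{-1/2}$), $b_i=\hat\chi_i/\sqrt n$ orthonormal, and $\ell=4$. Booleanity gives $\|P(u)\|^4=1$, so the hypothesis holds with $\epsilon=O(d^{-c})$. Its Hausdorff bound on $b_i^{\otimes2}$ yields unit vectors $b'_j$ with $\min_\pm\|b_i\mp b'_{\pi(i)}\|^2\le O(\sqrt{\epsilon})$. I would need to ensure $\pi$ is a bijection. Since the $b_i^{\otimes 2}$ are pairwise at distance $\sqrt2$, each $b_i$ has a distinct nearest output, and I would output the $d$ matched vectors.

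Finally, set $f_j=\mathrm{sign}(b'_j)$. A vertex where the sign is wrong contributes at least $1/n$ to $\|b_i\mp b'\|^2$, so the fraction of wrong vertices is $O(\sqrt\epsilon)$, and $\E(f_i-\sigma_iz_i)^2\le O(\sqrt{\epsilon})\le d^{-10}$ if $c\ge 21$. I take $C_2$ and the constants large enough for this.

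\textbf{Main obstacle.} I expect the hardest part to be steps 3--5: keeping everything in degree $4$ while tracking that the error is inverse-polynomial with exponent large enough (about $21$). That requires choosing $C_2$ and the constant $c$ in the cut constraint consistently with the $\hat p$ estimation error. The bound of $\langle g,P_{>0}\Pi_RP_{>0}g\rangle$ for unbalanced $g=f\circ\chi_i$ also needs care: there I would use $\langle g,\Pi_R g\rangle=|R|/n$ exactly from Booleanity, plus $\|P_{>0}\Pi_R P_{>0}\|\le\|\Pi_R\|$-type SoS bounds.
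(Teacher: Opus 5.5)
Your proposal is correct and follows essentially the paper's route. Spectral concentration, applied to $Q_d$ and to both subcubes of each $L_{(i)}$ with the $\Pi_R$ term bounded by $4|R|/n$ via Booleanity, combines with the symmetric-difference identity to give the per-coordinate comparison; you then multiply by $\langle f,\chi_i\rangle^2$, sum, close with the SoS FKN theorem after substituting $\delta=\frac12(\langle f,Lf\rangle-2)$, and finish with Ma--Shi--Steurer rounding and sign thresholding.

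The deviations are minor:
\begin{itemize}
\item You convert the Hausdorff bound lossily to squared distance $O(\sqrt{\epsilon})$. The paper gets $O(\epsilon)$ from $\|vv^\top-uu^\top\|_{\mathrm{op}}^2=1-\langle u,v\rangle^2$.
\item You compensate with slack $d^{-c}$, $c\ge 21$. This works only if you use the true $2^{-\Omega(d)}$ relative concentration of $\hat p$ and of the coordinate-cut values, not the $d^{-20}$ you wrote. With $d^{-20}$ errors, the $\hat\chi_i$ could violate a $d^{-21}$-slack constraint.
\item You make the bijectivity of the matching explicit, where the paper simply says ``up to a permutation.''
\end{itemize}
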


\begin{remark}[From coordinate cuts to latent vectors]
\label{rem:latent-vector-recovery}
Combining \Cref{thm:coordinate-cut-recovery} with \Cref{lem:overview-latent-vector-recovery} gives \Cref{thm:intro-main-recovery}.
Indeed, since the cuts are $\{\pm1\}$-valued, the squared-error bound in \Cref{thm:coordinate-cut-recovery} implies agreement at least $1-\frac14 d^{-10}$ with each coordinate cut, up to a permutation.
The algorithm outputs $\widehat z(v)=(f_1(v),\ldots,f_d(v))$; by \Cref{lem:overview-latent-vector-recovery}, after a common coordinate permutation and sign flips, these vectors agree exactly with the hidden vectors on all but at most a $\frac14 d^{-9}$ fraction of vertices.
This postprocessing takes $O(nd)$ time and preserves the success probability $1-o_d(1)$.
\end{remark}

The key component of the proof is \Cref{lem:sos-cut-approximation}, in which we give a degree-$4$ sum-of-squares proof that balanced cuts with small value in the subsampled hypercube also have small value in the original hypercube.

\begin{lemma}[Cut approximation]
\label[lemma]{lem:sos-cut-approximation}
In the same setting as \Cref{thm:coordinate-cut-recovery}, let $L$ be the unnormalized Laplacian of $Q_d$ and let $ L'$ be the unnormalized Laplacian of $ G^*$.
Consider the following system of polynomial inequalities in variables $\{f(x)\}_{x \in \{\pm 1\}^d}$ for any $\delta \geq 0$ (possibly depending on $d$):
\[
\cP = \begin{cases}
\frac{1}{p} \langle f,  L' f \rangle \leq 2+\delta\,, \\
\sum_{x \in \Set{\pm 1}^d} f(x) = 0\,, \\
f(x)^2 = 1\,, \forall x \in \{\pm 1\}^d\,. \\
\end{cases}
\]
For $d$ large enough and $p \geq \frac{C \log d}{d}$ for some absolute constant $C > 0$, with probability $1-o_d(1)$
\[\cP \proves{f}{4} \Set{\langle f, L f\rangle \leq 2+O(\delta) + d^{-20}}\,.\]
\end{lemma}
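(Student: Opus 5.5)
We follow the outline of \Cref{lem:overview-inefficient-quartic-proxy}, replacing each non-certifiable step by its degree-$2$ or degree-$4$ counterpart. First, we fix the high-probability event of \Cref{lem:spectral-concentration} with a small constant $\epsilon$ (to be chosen), $C_2$ large (say $C_2\ge 40$), and $C_1=1$. We apply it both to $Q_d$ and, for each $i\in[d]$, to the two copies of $Q_{d-1}$ forming the graph with Laplacian $L^{(i)}$; a union bound over the $2d+1$ applications still leaves probability $1-o_d(1)$ when $C_1\ge 2$. We then convert the PSD inequalities into degree-$2$ sum-of-squares inequalities: for any $g$ with $g(x)^2=1$, the error term satisfies
\[
\langle g, d^2P_{>0}\Pi_RP_{>0}g\rangle\le d^2\langle g,\Pi_R g\rangle = d^2|R|/2^d\le d^{2-C_2},
\]
because $P_{>0}\Pi_RP_{>0}\preceq \Pi_R$ on the relevant quadratic forms after handling the rank-one correction. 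Hence Booleanity proves, in degree $2$, that $\pm(\langle g,Lg\rangle-\frac1p\langle g,L'g\rangle)\le\epsilon\langle g,Lg\rangle+d^{-30}$, and the same holds with $L^{(i)},L'^{(i)}$. This is \Cref{lem:overview-sos-cut-value-concentration}.

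Next, we use the identity of \Cref{lem:overview-cut-symmetric-difference}. It is a polynomial identity valid under $f(x)^2=1$, obtained by expanding edge terms and using $\chi_i(x)\chi_i(x^{\sim i})=-1$ on $i$-edges. We apply it to $g=f\circ\chi_i$, which is Boolean, using both comparisons. This gives, in degree $2$,
\[
\langle f,Lf\rangle-2 \le \frac1p\langle f,L'f\rangle-\frac1p\langle\chi_i,L'\chi_i\rangle+O(\epsilon)\langle g,Lg\rangle+O(d^{-30}).
\]
Two substitutions follow. First, $\frac1p\langle\chi_i,L'\chi_i\rangle=2\pm d^{-30}$ with high probability by a Chernoff bound (\Cref{lem:overview-coordinate-cut-concentration}); this is a scalar, so no sum-of-squares is needed. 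Second, $\langle g,Lg\rangle=\sum_S 2|S\triangle\{i\}|\,\langle f,\chi_S\rangle^2$. Its excess over $2\langle f,Lf\rangle$-type terms is controlled by writing $\langle g,Lg\rangle\le \langle f,Lf\rangle+4(1-\langle f,\chi_i\rangle^2)$, using $|S\triangle\{i\}|\le|S|+1$ with equality lost only at $S=\{i\}$, together with balancedness ($\langle f,\chi_\emptyset\rangle=0$). This yields the coordinate-specific inequality of \Cref{lem:overview-refined-cut-comparison}:
\[
(1-O(\epsilon))\langle f,Lf\rangle\le 2+\delta-O(\epsilon)\cdot 2+4\epsilon(1-\langle f,\chi_i\rangle^2)+O(d^{-30}).
\]

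The main obstacle is removing the dependence on the unknown $i$, which the inefficient proof handled by choosing the best $i$ and applying FKN. Instead, we multiply the degree-$2$ inequality for each $i$ by the square $\langle f,\chi_i\rangle^2$ and sum over $i$, giving degree $4$. Writing $W=\sum_i\langle f,\chi_i\rangle^2\le 1$ (Parseval), the left side becomes $(1-O(\epsilon))\langle f,Lf\rangle W$, and the error becomes $4\epsilon(W-\sum_i\langle f,\chi_i\rangle^4)$. Set $\eta:=\frac12(\langle f,Lf\rangle-2)$. In degree $2$, the argument of \Cref{lem:overview-cut-energy-fourier-concentration} gives $\eta\ge0$ and $1-W\le 2\eta$. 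Applying \Cref{thm:sos-fkn} with $\delta:=2\eta$, which is allowed since $\delta$ appears only linearly and $\eta$ is a degree-$2$ polynomial, keeps the total degree at $4$ and yields $W-\sum_i\langle f,\chi_i\rangle^4\le O(\eta)$.

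It remains to handle the left side. The product $\langle f,Lf\rangle W$ is degree $4$, and we lower-bound it by $\langle f,Lf\rangle-2d\cdot(1-W)$. That bound is too lossy, so instead we write $\langle f,Lf\rangle W=2W+2\eta W$ and use $2W\ge 2-4\eta$ and $\eta W\ge 0$; the last inequality needs a degree-$4$ sum-of-squares certificate, which holds since $\eta$ is a nonnegative combination of squared Fourier coefficients and $W$ is a sum of squares. Combining everything gives
\[
2-4\eta+2\eta(1-O(\epsilon))\cdot\text{(const)}\le 2+\delta+O(\epsilon)\eta+O(d^{-30}).
\]
We need to be careful here: the terms linear in $\eta$ must leave a positive coefficient on the left. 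To ensure this, we first record the trivial linear degree-$2$ consequence $\eta\le \delta/2+O(\epsilon)\eta+\ldots$ obtained directly from the unweighted comparison, which avoids the $-4\eta$ loss. We then feed that into the quartic bound. For $\epsilon$ a small enough constant, the $O(\epsilon)\eta$ terms are absorbed into the left side, and we conclude $\eta\le O(\delta)+d^{-20}$, which is the claim.
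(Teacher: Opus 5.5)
\textbf{There is a genuine gap in your final step.} This is where you remove the weight $W=\sum_i\langle f,\chi_i\rangle^2$ after multiplying the coordinate inequality by $\langle f,\chi_i\rangle^2$ and summing. Bounding $\langle f,Lf\rangle W\ge 2W\ge 2-4\eta$ throws away the $2W$ terms that should cancel against the right-hand side. You then repair the resulting $-4\eta$ loss with ``the trivial linear degree-$2$ consequence $\eta\le\delta/2+O(\epsilon)\eta+\dots$ of the unweighted comparison''.

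That consequence does not exist. The unweighted comparison only gives $(1-\epsilon)\langle f,Lf\rangle\le 2+\delta+O(d^{-30})$, i.e.\ $\eta\le\delta/2+O(\epsilon)$ with an \emph{additive} constant error. This is exactly the obstruction noted after \Cref{lem:overview-cut-value-concentration}. The coordinate inequality gives the same once you bound $1-\langle f,\chi_i\rangle^2\le 1$.

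Indeed, suppose a degree-$2$ certificate of $\eta\le\delta/2+O(\epsilon)\eta+O(d^{-30})$ existed. Absorbing $O(\epsilon)\eta$ would prove the lemma in degree $2$ with no FKN at all. Taking $\delta=0$ and using $W\ge 1-\eta$ from \Cref{lem:cut-energy-fourier-concentration}, this would certify $W\ge 1-O(d^{-20})$ in degree $2$. That contradicts \Cref{thm:degree-2-lower-bound} when $pd=\Theta(\log d)$.

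\textbf{The missing idea is the product step of \Cref{lem:cut-comparison-quartic-proxy}.} Write the coordinate inequality as $A\le 1-\langle f,\chi_i\rangle^2$, where
\[
A=\frac{1}{4\epsilon}\Bigl((1-\epsilon)\langle f,Lf\rangle-\frac1p\langle f,L'f\rangle+2\epsilon-d^{-40}\Bigr)
\]
does not depend on $i$. Weighting by $\langle f,\chi_i\rangle^2$ and summing gives $\sum_i\langle f,\chi_i\rangle^4\le(1-A)W$. Both $1-A\ge\langle f,\chi_i\rangle^2\ge 0$ and $1-W\ge 0$ have degree-$2$ certificates, so their product is a degree-$4$ certificate of $(1-A)W\le 1-A$. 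This yields the $i$-free inequality
\[
(1-\epsilon)\langle f,Lf\rangle\le\frac1p\langle f,L'f\rangle-2\epsilon+4\epsilon\Bigl(1-\sum_i\langle f,\chi_i\rangle^4\Bigr)+d^{-40}.
\]
From there, the cut constraint, \Cref{thm:sos-fkn} with error variable $\eta$, and absorption finish as you intended.

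\textbf{Two smaller points.}

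First, your bound $\langle g,Lg\rangle\le\langle f,Lf\rangle+4(1-\langle f,\chi_i\rangle^2)$ is missing a $-2$. Without it the coordinate inequality loses its $-2c\epsilon$ term, so in terms of $\eta$ it carries an additive $2c\epsilon$ error that nothing later removes. Your own Fourier formula plus Parseval gives the correct bound:
\[
\langle g,Lg\rangle-\langle f,Lf\rangle=2-4\sum_{S\ni i}\langle f,\chi_S\rangle^2\le -2+4\bigl(1-\langle f,\chi_i\rangle^2\bigr).
\]
The same constant $c\epsilon$ must then multiply both $\langle f,Lf\rangle$ and the $2$.

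Second, $P_{>0}\Pi_RP_{>0}\preceq\Pi_R$ is false in general. Instead, bound $\frac{1}{2^d}\sum_{x\in R}(g(x)-\E_{x'}g(x'))^2\le 4|R|/2^d$ directly from Booleanity and Jensen. Also, after dividing by $p$ the error term is $\frac{d^2}{p}P_{>0}\Pi_RP_{>0}$, not $d^2P_{>0}\Pi_RP_{>0}$.
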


Note that in the original hypercube we have $\langle f, L f\rangle \geq 2$ for every balanced cut $f$, so the conclusion of~\Cref{lem:sos-cut-approximation} for small $\delta \geq 0$ is that $\langle f, L f\rangle$ is close to minimum.
Furthermore, by some standard Fourier-analytic machinery and by the FKN theorem, $\langle f, L f\rangle$ being close to minimal implies that $f$ is close to a coordinate cut.

\paragraph{Concentration of coordinate cuts}

We first state a simple fact that we will use at multiple points in the analysis: With high probability, coordinate cuts have value in the subsampled hypercube close to their expected value.

\begin{lemma}[Coordinate cuts are approximately preserved]
\label{lem:coordinate-cut-concentration}
In the same setting as \Cref{lem:sos-cut-approximation},
with probability $1-d^{-100}$ it holds simultaneously for all $i \in [d]$ that
\[\Abs{\frac{1}{p} \langle \chi_i,  L' \chi_i \rangle - \langle \chi_i, L \chi_i \rangle} \leq 2^{-d/3}\,.\]
\end{lemma}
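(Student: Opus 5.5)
The plan is to rewrite both quadratic forms as edge counts, so that the statement becomes a Chernoff bound plus a union bound over $i$. Since $\langle g,Lg\rangle = \frac{1}{2^d}\sum_{\{x,y\}\in H}(g(x)-g(y))^2$, each edge cut by $\chi_i$ contributes $4/2^d$, and the edges cut by $\chi_i$ are exactly the $2^{d-1}$ edges in direction $i$. This gives
\[
\langle \chi_i, L\chi_i\rangle = \frac{4}{2^d}\cdot 2^{d-1} = 2, \qquad \frac1p\langle \chi_i, L'\chi_i\rangle = \frac{4}{p2^d}\, X_i ,
\]
where $X_i=\sum_{e\text{ in direction } i}\ind(e\in E)$ is a sum of $N=2^{d-1}$ independent Bernoulli$(p)$ variables with mean $p2^{d-1}$. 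The claimed bound is therefore equivalent to
\[
|X_i - p2^{d-1}| \le \tfrac{p}{4}\,2^{d}\, 2^{-d/3} = \tfrac{p}{2}\,2^{2d/3}.
\]

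To prove this for a fixed $i$, I would apply Bernstein's inequality (\Cref{lem:bernstein}) to the centered variables $\ind(e\in E)-p$, taking $K=1$, variance sum at most $p2^{d-1}$, and $t=\tfrac p2 2^{2d/3}$. The exponent is
\[
\frac{t^2/2}{p2^{d-1}+t/3}\gtrsim \frac{p^2 2^{4d/3}}{p2^{d}} = p\,2^{d/3},
\]
because the $t/3$ term is lower order than $p2^{d-1}$ (indeed $t\le p2^{d-1}$). Since $p\ge C\log d/d$, the quantity $p2^{d/3}\ge 2^{d/4}$ for large $d$, so the failure probability is at most $2\exp(-2^{d/4})$.

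A union bound over the $d$ coordinates then gives total failure probability $2d\exp(-2^{d/4})\le d^{-100}$ for $d$ large enough.

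There is no real obstacle. The only care needed is the normalization of the inner product, which carries the $1/2^d$ factor, and checking that the choice of $t$ is compatible with the $1/p$ rescaling. Since $p\ge C\log d/d$ only enters through $p2^{d/3}\to\infty$, there is ample slack.
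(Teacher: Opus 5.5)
Your proof is correct and follows the paper's route: write $\frac1p\langle\chi_i,L'\chi_i\rangle$ as $\frac{4}{p2^d}\mathrm{Bin}(2^{d-1},p)$, apply a Chernoff/Bernstein tail bound, and take a union bound over $i$. The only difference is where you set the threshold: the paper takes $t\asymp\sqrt{p2^{d}\log d}$ to get failure probability $d^{-101}$ and then checks that the deviation is at most $2^{-d/3}$, while you tune $t$ to $2^{-d/3}$ directly and get a doubly exponential tail. One small arithmetic slip: the required threshold is $\frac{p}{4}2^{2d/3}$, not $\frac{p}{2}2^{2d/3}$, since your choice only yields a deviation of $2\cdot 2^{-d/3}$. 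Redoing the exponent with $t=\frac{p}{4}2^{2d/3}$ changes only constants.
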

\begin{proof}
We have that $\frac{1}{p} \langle \chi_i,  L' \chi_i \rangle \sim \frac{4}{p 2^d}\mathrm{Bin}(2^{d-1}, p)$ and $\langle \chi_i, L \chi_i \rangle = 2$.
By a Chernoff bound, for $ X \sim \mathrm{Bin}(2^{d-1}, p)$ and $t \leq p 2^{d-1}$,
\[\mathbb{P}\Paren{\Abs{ X - p 2^{d-1}} \geq t} \leq 2\exp\Paren{-\frac{\Paren{\frac{t}{p2^{d-1}}}^2 p2^{d-1}}{3}} = 2\exp\Paren{-\frac{t^2}{3p2^{d-1}}}\,.\]
Let $t = 1000000 \sqrt{p 2^{d-1} \log d} \leq p2^{d-1}$. Then
\[\mathbb{P}\Paren{\Abs{ X - p 2^{d-1}} \geq 1000000 \sqrt{p 2^{d-1} \log d}} \leq d^{-101}\,,\]
and hence also
\[\mathbb{P}\Paren{\Abs{\frac{1}{p} \langle \chi_i,  L' \chi_i \rangle - \langle \chi_i, L \chi_i \rangle} \geq \frac{4}{p2^d} t} \leq d^{-101}\,.\]
Note that $\frac{4}{p2^d} t = O(\sqrt{\frac{\log d}{p 2^{d}}}) \leq 2^{-d/3}$.
The result follows by a union bound over all $i \in [d]$.
\end{proof}

\subsection{Cut approximation (proof of \texorpdfstring{\Cref{lem:sos-cut-approximation}}{Lemma \ref*{lem:sos-cut-approximation}})}

In this section we develop the proof of \Cref{lem:sos-cut-approximation}.
First, let us relate the value of a cut, the value of a coordinate cut, and the cut obtained by the symmetric difference of the two.

\begin{lemma}
\label[lemma]{lem:cut-symmetric-difference-identity}
In the same setting as \Cref{lem:sos-cut-approximation}, for any $i \in [d]$,
consider the subgraph of $Q_d$ formed of all of its edges $\{x, y\}$ with $x(i) = y(i)$ and let $L_{(i)}$ be its unnormalized Laplacian.
Similarly, consider the subgraph of $ G^*$ formed of all of its edges $\{x, y\}$ with $x(i) = y(i)$, and let $ L'_{(i)}$ be its unnormalized Laplacian.
Thus, $L_{(i)}$ and $L'_{(i)}$ correspond to $L_{Q_d}^{(i)}$ and $L_{G^*}^{(i)}$ in the technical overview (\Cref{sec:technical-overview}).
We have
\[\cP \proves{f}{2}\langle f, L f\rangle - \langle \chi_i, L \chi_i \rangle = \langle f \circ \chi_i, (2 L_{(i)} - L) (f \circ \chi_i)\rangle\]
and
\[\cP \proves{f}{2}\langle f,  L' f\rangle - \langle \chi_i,  L' \chi_i \rangle = \langle f \circ \chi_i, (2  L'_{(i)} -  L') (f \circ \chi_i)\rangle\,.\]
\end{lemma}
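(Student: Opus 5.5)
The identity is a purely algebraic, edge-by-edge computation. The plan is to expand both sides as sums over edges and check that they agree term by term, using only the Booleanity axioms $f(x)^2=1$. Since every step is linear in products $f(x)f(y)$ together with uses of $f(x)^2=1$, the resulting certificate is a degree-$2$ sum-of-squares derivation; in fact it is an equality that uses the axioms linearly. I will carry out the argument for a general subgraph $H$ of $Q_d$ with Laplacian $L_H$ and restricted Laplacian $L_{H,(i)}$. Taking $H=Q_d$ and $H=G^*$ then gives both claims at once.

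First write $\langle g, L_H g\rangle = \frac{1}{2^d}\sum_{\{x,y\}\in E(H)}(g(x)-g(y))^2$ for any $g$. Split the edges of $H$ into those crossing coordinate $i$ (where $x_i\neq y_i$) and those not crossing it (where $x_i=y_i$); the latter are exactly the edges of $L_{H,(i)}$. The right-hand side operator $2L_{H,(i)}-L_H$ therefore has quadratic form $\frac{1}{2^d}\big(\sum_{\text{non-crossing}}(g(x)-g(y))^2-\sum_{\text{crossing}}(g(x)-g(y))^2\big)$ at $g=f\circ\chi_i$.

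Now compare the two sides edge by edge, with $g(x)=f(x)\chi_i(x)$.
\begin{itemize}
\item \emph{Non-crossing edge.} Here $\chi_i(x)=\chi_i(y)=\pm1$, so $(g(x)-g(y))^2=(f(x)-f(y))^2$ holds as a polynomial identity, since $\chi_i$ is a constant $\pm1$. The coordinate cut contributes $(\chi_i(x)-\chi_i(y))^2=0$. The left and right contributions are therefore both $(f(x)-f(y))^2$.
\item \emph{Crossing edge.} Here $\chi_i(y)=-\chi_i(x)$, so $(g(x)-g(y))^2=(f(x)+f(y))^2$, which the right-hand side counts with a minus sign. The left-hand side contributes $(f(x)-f(y))^2-4$. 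Using the axioms $f(x)^2=f(y)^2=1$, both quantities equal $-2-2f(x)f(y)$.
\end{itemize}
Summing over all edges gives the identity modulo the ideal generated by $f(x)^2-1$. Each use of an axiom multiplies $f(x)^2-1$ by a constant, so the derivation has degree $2$.

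The main point to double-check is the crossing-edge case, in particular the sign convention and the factor $2$ in $2L_{H,(i)}-L_H$, which is chosen precisely so that non-crossing edges appear with coefficient $+1$ and crossing edges with coefficient $-1$. Everything else is bookkeeping. The balancedness constraint is not needed.
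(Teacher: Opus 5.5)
Your proposal is correct and follows the paper's proof: both verify the identity edge by edge. In $2L_{(i)}-L$, edges with $x_i=y_i$ get coefficient $+1$ and edges with $x_i\neq y_i$ get coefficient $-1$. Only the axioms $f(x)^2=1$ are used, each multiplied by a constant, and summing over the edges of $Q_d$ or $G^*$ gives both identities. The only cosmetic difference is that the paper first reduces every edge term of the left-hand side to $\frac{1}{2^d}\bigl(-2f(x)f(y)+2\chi_i(x)\chi_i(y)\bigr)$, whereas you invoke Booleanity only for the crossing edges.
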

\begin{proof}
For an edge $\{x, y\}$ in $Q_d$, denote by $L_{\Set{x,y}}$ the Laplacian of the edge $\{x, y\}$, i.e., $L_{\Set{x,y}}(x,x)=L_{\Set{x,y}}(y,y)=1$, $L_{\Set{x,y}}(x,y)=L_{\Set{x,y}}(y,x)=-1$, and $L_{\Set{x, y}}$ is zero everywhere else.
Then, for any edge $\{x, y\}$ in $Q_d$ we have
\begin{align*}
\cP \proves{f}{2}
\langle f, L_{\Set{x,y}} f\rangle - \langle \chi_i, L_{\Set{x,y}} \chi_i\rangle
&= \frac{1}{2^d}\Paren{- 2 f(x) f(y) + 2 \chi_i(x) \chi_i(y)}\,.
\end{align*}
In particular, if $x(i) = y(i)$, we have $\chi_i(x) \chi_i(y) = 1$, and hence
\begin{align*}
\cP \proves{f}{2} \langle f, L_{\Set{x,y}} f\rangle - \langle \chi_i, L_{\Set{x,y}} \chi_i\rangle
&= \frac{1}{2^d}\Paren{-2f(x)\chi_i(x)f(y)\chi_i(y) + 2}\\
&= \langle f \circ \chi_i, L_{\Set{x,y}} (f \circ \chi_i)\rangle\\
&= \langle f \circ \chi_i, (2 \cdot L_{\Set{x,y}} - L_{\Set{x,y}}) (f \circ \chi_i)\rangle\,.
\end{align*}
Else, if $x(i) \neq y(i)$, we have $\chi_i(x) \chi_i(y) = -1$, and hence
\begin{align*}
\cP \proves{f}{2} \langle f, L_{\Set{x,y}} f\rangle - \langle \chi_i, L_{\Set{x,y}} \chi_i\rangle
&= \frac{1}{2^d}\Paren{2f(x)\chi_i(x)f(y)\chi_i(y) - 2}\\
&= -\langle f \circ \chi_i, L_{\Set{x,y}} (f \circ \chi_i)\rangle\\
&= \langle f \circ \chi_i, (2 \cdot 0 -L_{\Set{x,y}}) (f \circ \chi_i)\rangle\,.
\end{align*}
Given these equalities, the first claim follows because
\[L = \sum_{\substack{\{x, y\} \text{ edge in } Q_d}} L_{\Set{x,y}}\,,\quad\quad L_{(i)} = \sum_{\substack{\{x, y\} \text{ edge in } Q_d\\ x(i) = y(i)}} L_{\Set{x,y}}\,.\]
The second claim follows because
\[ L' = \sum_{\substack{\{x, y\} \text{ edge in }  G^*}} L_{\Set{x,y}}\,,\quad\quad  L'_{(i)} = \sum_{\substack{\{x, y\} \text{ edge in }  G^*\\ x(i) = y(i)}} L_{\Set{x,y}}\,.\]
\end{proof}

Next, using \Cref{lem:cut-symmetric-difference-identity} and \Cref{lem:spectral-concentration}, we prove an upper bound on a cut in the original hypercube in terms of the value of the cut in the subsampled hypercube and the cut obtained by the symmetric difference between it and a coordinate cut.

\begin{lemma}
\label[lemma]{lem:cut-comparison-symmetric-difference}
Fix any $\epsilon > 0$ small enough.
In the same setting as \Cref{lem:sos-cut-approximation} and with $C$ large enough as a function of $\epsilon$, it holds with probability $1-o_d(1)$ simultaneously for all $i \in [d]$ that
\[\cP \proves{f}{2} \Set{\langle f, L f\rangle - \langle \chi_i, L \chi_i \rangle \leq \frac{1}{p} \Paren{\langle f,  L' f\rangle - \langle \chi_i,  L' \chi_i \rangle} + \epsilon \cdot \langle f \circ \chi_i, L (f \circ \chi_i) \rangle + d^{-80}}\,.\]
\end{lemma}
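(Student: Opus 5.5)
Set $g = f\circ\chi_i$, which is again Boolean ($g(x)^2 = 1$ follows from $f(x)^2=1$). By \Cref{lem:cut-symmetric-difference-identity}, the statement is equivalent to the degree-$2$ inequality
\[
\langle g,(2L_{(i)}-L)g\rangle \le \frac1p\langle g,(2L'_{(i)}-L')g\rangle + \epsilon\langle g,Lg\rangle + d^{-80}.
\]
I will rewrite this as
\[
2\langle g,(L_{(i)}-\tfrac1p L'_{(i)})g\rangle - \langle g,(L-\tfrac1p L')g\rangle \le \epsilon\langle g,Lg\rangle + d^{-80}
\]
and bound the two differences separately. Each bound will be a linear-in-the-matrix inequality, so it becomes a degree-$2$ SoS statement in $f$.

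\emph{Full Laplacian.} Apply \Cref{lem:spectral-concentration} with $\epsilon' = \epsilon/3$ and $C_2$ large, say $C_2 = 200$. This gives the PSD inequality
\[
pL - L' \preceq \epsilon' pL + d^2 P_{>0}\Pi_R P_{>0}.
\]
Evaluating both sides at $g$ yields a degree-$2$ SoS bound, because a PSD matrix $A$ gives $\langle g,Ag\rangle \ge 0$ as a sum of squares. Hence
\[
-\langle g,(L-\tfrac1pL')g\rangle \le \epsilon'\langle g,Lg\rangle + \tfrac{d^2}{p}\langle g,P_{>0}\Pi_RP_{>0}g\rangle.
\]
The error term should be bounded using Booleanity. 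First, $\langle g,P_{>0}\Pi_RP_{>0}g\rangle \le 2\langle g,\Pi_R g\rangle + 2\langle g,\chi_\emptyset\rangle^2\langle \chi_\emptyset,\Pi_R\chi_\emptyset\rangle$. Next, $\langle g,\Pi_Rg\rangle = |R|/2^d$ exactly under $g(x)^2=1$. Finally, $\langle g,\chi_\emptyset\rangle^2 \le 1$ by Parseval (\Cref{claim:sos-boolean-parseval}). The total error is therefore $O(d^2|R|/(p2^d)) \le O(d^{3-C_2}) \le d^{-81}$. The normalization of $\Pi_R$ relative to the inner product $\langle\cdot,\cdot\rangle$ needs a little care, but any fixed polynomial factor is absorbed by taking $C_2$ large.

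\emph{Coordinate-deleted Laplacian.} $L_{(i)}$ is the Laplacian of two disjoint copies of $Q_{d-1}$, on $\{x_i=\pm1\}$, and $L'_{(i)}$ is the corresponding percolated subgraph, which is distributed as $Q_{d-1}(p)$ independently on each copy. Apply \Cref{lem:spectral-concentration} in dimension $d-1$ to each copy with $\epsilon'$ and $C_1$ large enough to union bound over all $2d$ copies. Summing the two block-diagonal PSD inequalities gives
\[
L_{(i)} - \tfrac1pL'_{(i)} \preceq \epsilon' L_{(i)} + \tfrac{(d-1)^2}{p}\textstyle\sum_{\pm} P^{\pm}_{>0}\Pi_{R^\pm}P^{\pm}_{>0}.
\]
The error again evaluates to at most $d^{-81}$ by the same Booleanity argument. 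Moreover $L_{(i)} \preceq L$, so $\epsilon'\langle g,L_{(i)}g\rangle \le \epsilon'\langle g,Lg\rangle$ as an SoS step, since $L-L_{(i)}$ is itself a Laplacian and hence PSD. The threshold $p \ge C\log d/d$ transfers to dimension $d-1$ after slightly enlarging $C$.

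\emph{Combining.} The two parts give a left-hand side of at most $(2\epsilon'+\epsilon')\langle g,Lg\rangle + 3d^{-81} \le \epsilon\langle g,Lg\rangle + d^{-80}$. Substituting $g = f\circ\chi_i$ back in via \Cref{lem:cut-symmetric-difference-identity} gives the statement, and everything used has degree $2$. The main obstacle is the bookkeeping of the additive error $Z$. It must be controlled using only the Booleanity axioms, in particular handling the projection $P_{>0}$ through the constant Fourier coefficient of $g$, which need not vanish since $g$ is not balanced. The probability accounting over the $2d+1$ applications of \Cref{lem:spectral-concentration} follows from choosing $C_1 \ge 2$, so that the total failure probability is $o_d(1)$.
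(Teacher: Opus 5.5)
Your proposal is correct and follows the paper's proof essentially step for step. It uses the same split of the error into the full-cube difference $L-\frac1p L'$ and the coordinate-deleted difference $L_{(i)}-\frac1p L'_{(i)}$, the same application of the spectral concentration lemma to $Q_d$ and to the two $(d-1)$-dimensional subcubes (together with $L_{(i)}\preceq L$), and the same Booleanity bound of $4|R|/2^d$ on the additive $P_{>0}\Pi_R P_{>0}$ term. The differences are cosmetic: you fix $\epsilon'=\epsilon/3$ up front rather than reparametrizing at the end, and you control the $P_{>0}$ term with a matrix-level triangle inequality rather than the paper's pointwise SoS Jensen step.
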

\begin{proof}
By \Cref{lem:cut-symmetric-difference-identity}, we have
\begin{align*}
&\cP \proves{f}{2} \langle f, Lf\rangle - \langle \chi_i, L \chi_i\rangle \\
&\quad = \langle f \circ \chi_i, (2 L_{(i)} - L) (f \circ \chi_i)\rangle\\
&\quad= \langle f \circ \chi_i, \frac{1}{p} (2  L'_{(i)} -  L') (f \circ \chi_i)\rangle + \Iprod{f \circ \chi_i, \Paren{(2 L_{(i)} - L) - \frac{1}{p} (2  L'_{(i)} -  L')} (f \circ \chi_i)}\\
&\quad= \frac{1}{p}\Paren{\langle f,  L' f\rangle - \langle \chi_i,  L' \chi_i\rangle } + \Iprod{f \circ \chi_i, \Paren{(2 L_{(i)} - L) - \frac{1}{p} (2  L'_{(i)} -  L')} (f \circ \chi_i)}\,.
\end{align*}

It remains to upper bound the error term.

\begin{claim}
    \label{claim:sos-cut-value-concentration}
    With probability $1-o_d(1)$, for all $i \in [d]$ we have
    \begin{equation*}
        \Iprod{f \circ \chi_i, \Paren{2 L_{(i)} - L}  (f \circ \chi_i)} \le \frac{1}{p}\Iprod{f \circ \chi_i, \Paren{2  L'_{(i)} -  L'}(f \circ \chi_i)}+5\epsilon \cdot \Iprod{f \circ \chi_i, L (f \circ \chi_i)} + 5 \frac{1}{d^{160}} \,.
    \end{equation*}
\end{claim}
\begin{proof}
Write
\begin{align*}
&\cP \proves{f}{2} \Iprod{f \circ \chi_i, \Paren{(2 L_{(i)} - L) - \frac{1}{p} (2  L'_{(i)} -  L')} (f \circ \chi_i)}\\
&\quad= 2 \Iprod{ f \circ \chi_i, \Paren{L_{(i)} - \frac{1}{p}  L'_{(i)}} (f \circ \chi_i)} - \Iprod{ f \circ \chi_i, \Paren{L - \frac{1}{p}  L'}(f \circ \chi_i)}\,.
\end{align*}

Let us start with the term involving $L - \frac{1}{p} L'$.
Apply \Cref{lem:spectral-concentration} with $C_1, C_2$ large enough such that $|R| \leq 2^d/d^{320}$ and such that the probability of success is at least $1-d^{-100}$.
Then we have with probability $1-d^{-100}$ that
\[L - \frac{1}{p}  L' \succeq -\epsilon \cdot L - \frac{d^2}{p} P_{>0} \Pi_R P_{>0}\,,\]
so
\begin{align*}
\cP \proves{f}{2} \Iprod{ f \circ \chi_i, \Paren{L - \frac{1}{p}  L'}(f \circ \chi_i)}
&\geq -\epsilon \cdot \Iprod{f \circ \chi_i, L (f \circ \chi_i)} - \frac{d^2}{p} \Iprod{f \circ \chi_i, P_{>0} \Pi_R P_{>0} (f \circ \chi_i)}\,.
\end{align*}
For the second term, we have
\begin{align*}
\cP \proves{f}{2}\Iprod{f \circ \chi_i, P_{>0} \Pi_R P_{>0} (f \circ \chi_i)}
&= \Iprod{P_{>0} (f \circ \chi_i), \Pi_R P_{>0} (f \circ \chi_i)}\\
&= \frac{1}{2^d} \sum_{x \in R} ((P_{>0} (f \circ \chi_i))(x))^2\\
&= \frac{1}{2^d} \sum_{x \in R} \Paren{\Paren{\Paren{I - \frac{1}{2^d} \chi_{\emptyset}^{\otimes 2}} (f \circ \chi_i)}(x)}^2\\
&= \frac{1}{2^d} \sum_{x \in R} \Paren{ (f \circ \chi_i)(x) - \E_{ x'} (f \circ \chi_i)( x') }^2\,.
\end{align*}
By \Cref{fact:sos-jensen} and the Booleanity constraints, we have
\begin{align*}
\cP \proves{f}{2} ((f \circ \chi_i)(x) - \E_{ x'} (f \circ \chi_i)( x'))^2
&\leq 2 (f \circ \chi_i)(x)^2 + 2 (\E_{ x'} (f \circ \chi_i)( x'))^2\\
&\leq 2 (f \circ \chi_i)(x)^2 + 2 \E_{x'} (f \circ \chi_i)(x')^2\\
&\leq 4\,.
\end{align*}
Hence
\begin{align*}
\cP \proves{f}{2}\Iprod{f \circ \chi_i, P_{>0} \Pi_R P_{>0} (f \circ \chi_i)}\leq \frac{1}{2^d} 4|R| \leq O(d^{-320})\,,
\end{align*}
so overall
\begin{align*}
\cP \proves{f}{2} \Iprod{ f \circ \chi_i, \Paren{L - \frac{1}{p}  L'}(f \circ \chi_i)} \geq -\epsilon \cdot \Iprod{f \circ \chi_i, L (f \circ \chi_i)} - d^{-160}\,.
\end{align*}

Let us continue with the term involving $L_{(i)} - \frac{1}{p}  L'_{(i)}$.
We note that $L_{(i)}$ corresponds to a graph formed of two connected components, each a $(d-1)$-dimensional hypercube. Then we can further write $L_{(i)} = L_{(i)}^{(1)} + L_{(i)}^{(2)}$ and $ L'_{(i)} =  L_{(i)}^{'(1)} +  L_{(i)}^{'(2)}$, corresponding to the decomposition into the two $(d-1)$-dimensional hypercubes.
We can then apply the argument that we applied above to $L - \frac{1}{p}  L'$ analogously to $L_{(i)}^{(1)} - \frac{1}{p} L_{(i)}^{'(1)}$ and $L_{(i)}^{(2)} - \frac{1}{p} L_{(i)}^{'(2)}$ for all $i \in [d]$.
Specifically, we apply \Cref{lem:spectral-concentration} to the two subcubes obtained by fixing the value of the $i$-th coordinate. Taking a union bound over the probability of success for the two subcubes, and using that $L_{(i)} \preceq L$, we get with probability $1-d^{-10}$ that
\begin{align*}
\cP \proves{f}{2} \Iprod{ f \circ \chi_i, \Paren{L_{(i)} - \frac{1}{p}  L_{(i)}'}(f \circ \chi_i)} \leq 2\epsilon \cdot \Iprod{f \circ \chi_i, L (f \circ \chi_i)} + 2d^{-160}\,.
\end{align*}
Adding the bounds for $L$ and $L_{(i)}$ and taking a union bound over $i \in [d]$ gives the claim.
\end{proof}
We finally plug in the result of \Cref{claim:sos-cut-value-concentration} into our initial rewriting, so we get
\begin{align*}
&\cP \proves{f}{2} \langle f, Lf\rangle - \langle \chi_i, L \chi_i\rangle \\
&\quad\leq \frac{1}{p}\Paren{\langle f,  L' f\rangle - \langle \chi_i,  L' \chi_i\rangle } + O(\epsilon) \cdot \Iprod{f \circ \chi_i, L (f \circ \chi_i)} + O(d^{-160})\,.
\end{align*}
The final result follows, up to a reparameterization of $\epsilon$.
\end{proof}

Let us bound the term in the symmetric difference of $f$ and $\chi_i$ that appears in \Cref{lem:cut-comparison-symmetric-difference}.

\begin{lemma}
\label[lemma]{lem:symmetric-difference-energy-bound}
In the same setting as \Cref{lem:sos-cut-approximation}, it holds for all $i \in [d]$ that
\[\cP \proves{f}{2} \Set{\langle f \circ \chi_i, L (f \circ \chi_i) \rangle \leq (\langle f, Lf\rangle - \langle \chi_i, L \chi_i\rangle) + 4(1 - \langle f, \chi_i\rangle^2)}\,.\]
\end{lemma}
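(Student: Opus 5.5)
The plan is to rewrite the left-hand side using the identity of \Cref{lem:cut-symmetric-difference-identity}, and then bound the leftover term by a Fourier computation. Throughout, write $g = f\circ\chi_i$. That lemma gives, in degree $2$,
\[
\langle f, Lf\rangle - \langle \chi_i, L\chi_i\rangle = \langle g, (2L_{(i)} - L) g\rangle = \langle g, L g\rangle - 2\langle g, (L - L_{(i)}) g\rangle .
\]
Rearranging, the claim is equivalent to the degree-$2$ inequality
\[
2\langle g, (L-L_{(i)})g\rangle \le 4\bigl(1-\langle f,\chi_i\rangle^2\bigr).
\]

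\textbf{Step 1: diagonalize the cross edges.} The matrix $L - L_{(i)}$ is the Laplacian of the perfect matching of $Q_d$ along direction $i$. Its action on characters is $(L-L_{(i)})\chi_S = 2\cdot\ind[i\in S]\,\chi_S$, since $\chi_S(x)-\chi_S(x^{\sim i}) = 2\chi_S(x)$ exactly when $i\in S$. Expanding $g$ in the Fourier basis via \Cref{claim:sos-fourier-expansion} and using orthogonality (as in \Cref{claim:sos-parseval}) then gives the polynomial identity
\[
\langle g,(L-L_{(i)})g\rangle = 2\sum_{S\ni i}\langle g,\chi_S\rangle^2 .
\]
This is a degree-$2$ identity in the variables $f$, since each $\langle g,\chi_S\rangle$ is linear in $f$.

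\textbf{Step 2: translate back to $f$.} Since $\chi_i^2 = 1$, we have $\langle g,\chi_S\rangle = \langle f,\chi_{S\triangle\{i\}}\rangle$. Substituting $T = S\triangle\{i\}$ gives
\[
\sum_{S\ni i}\langle g,\chi_S\rangle^2 = \sum_{T\not\ni i}\langle f,\chi_T\rangle^2 .
\]
The index set $\{T : i\notin T\}$ does not contain $\{i\}$. Hence this sum is at most $\sum_{T\neq\{i\}}\langle f,\chi_T\rangle^2$, with the difference a sum of squares. By \Cref{claim:sos-boolean-parseval}, the Booleanity constraints give $\sum_{T\ne\{i\}}\langle f,\chi_T\rangle^2 = 1-\langle f,\chi_i\rangle^2$ in degree $2$.

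Combining the two steps yields $2\langle g,(L-L_{(i)})g\rangle \le 4(1-\langle f,\chi_i\rangle^2)$, which is the rearranged claim. Every step is either a linear identity in the Fourier coefficients or the dropping of a sum of squares, so the whole certificate has degree $2$. The balance constraint is not even needed, although it would make the $T=\emptyset$ term vanish.

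The only real point of care is Step 1. There one must check that the matching-Laplacian identity is a genuine polynomial identity in the variables $\{f(x)\}$, which it is because both sides are the same quadratic form. One must also get the constant right: the factor $2$ from the eigenvalue, combined with the factor $2$ in front of $\langle g,(L-L_{(i)})g\rangle$, is exactly what produces the $4$ in the statement. There is no probabilistic content, so the statement holds deterministically for all $i$.
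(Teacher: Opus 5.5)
Your proof is correct and follows the paper's structure: both use \Cref{lem:cut-symmetric-difference-identity} to reduce the claim to $\langle f\circ\chi_i,(L-L_{(i)})(f\circ\chi_i)\rangle \le 2(1-\langle f,\chi_i\rangle^2)$, and both get there in degree $2$ from Booleanity alone. The only difference is how that bound is checked. The paper works in vertex space, using $(a-b)^2+(a+b)^2=4$ for Boolean $a,b$ together with SoS Jensen. You instead diagonalize the matching Laplacian in the Fourier basis and drop nonnegative Fourier terms via Parseval; this is the same bound written in a different basis.
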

\begin{proof}
We have
\begin{align*}
\cP \proves{f}{2}
\langle f \circ \chi_i, L (f \circ \chi_i) \rangle
&= \langle f \circ \chi_i, (2 L_{(i)} - L) (f \circ \chi_i) \rangle + 2 \langle f \circ \chi_i, (L - L_{(i)}) (f \circ \chi_i) \rangle\\
&\stackrel{(*)}{=} (\langle f, Lf\rangle - \langle \chi_i, L \chi_i\rangle) + 2 \langle f \circ \chi_i, (L - L_{(i)}) (f \circ \chi_i) \rangle\,,
\end{align*}
where (*) follows by \Cref{lem:cut-symmetric-difference-identity}.

Let us upper bound the second term.
For any $x \in \Set{\pm 1}^d$, denote by $x^{\sim i} \in \Set{\pm 1}^d$ the vector obtained from $x$ by flipping the $i$-th coordinate.
Then, noting that the only edges corresponding to $L - L_{(i)}$ are those in which the $i$-th coordinate flips between the two endpoints, we have
\begin{align*}
\cP \proves{f}{2} \langle f \circ \chi_i, (L - L_{(i)}) (f \circ \chi_i) \rangle
&= \frac{1}{2^{d+1}} \sum_{x \in \Set{\pm 1}^d} ((f\circ \chi_i)(x) - (f \circ \chi_i)(x^{\sim i}))^2\,.
\end{align*}
We will use now the fact that
\[\cP \proves{f}{2} \sum_{x \in \Set{\pm 1}^d} ((f\circ \chi_i)(x) - (f \circ \chi_i)(x^{\sim i}))^2 + \sum_{x \in \Set{\pm 1}^d} ((f\circ \chi_i)(x) + (f \circ \chi_i)(x^{\sim i}))^2 = 2^{d+2}\]
to rewrite the equality above as
\begin{align*}
\cP \proves{f}{2} \langle f \circ \chi_i, (L - L_{(i)}) (f \circ \chi_i) \rangle
&= \frac{1}{2^{d+1}} \Paren{2^{d+2} - \sum_{x \in \Set{\pm 1}^d} ((f\circ \chi_i)(x) + (f \circ \chi_i)(x^{\sim i}))^2}\,.
\end{align*}
Using a sum-of-squares version of Jensen's inequality (\Cref{fact:sos-jensen}),
\begin{align*}
\cP \proves{f}{2} \langle f \circ \chi_i, (L - L_{(i)}) (f \circ \chi_i) \rangle
&\leq \frac{1}{2^{d+1}} \Paren{2^{d+2} - \frac{1}{2^d}\Paren{\sum_{x \in \Set{\pm 1}^d} ((f\circ \chi_i)(x) + (f \circ \chi_i)(x^{\sim i}))}^2}\\
&= \frac{1}{2^{d+1}} \Paren{2^{d+2} - \frac{1}{2^{d-2}} \Paren{\sum_{x \in \Set{\pm 1}^d}(f\circ \chi_i)(x) }^2}\\
&= 2 \Paren{1 - \Paren{\frac{1}{2^d} \sum_{x \in \Set{\pm 1}^d} (f\circ \chi_i)(x)}^2}\\
&= 2 \Paren{1 - \langle f, \chi_i\rangle^2}\,.
\end{align*}
Plugging this error bound into our main derivation, we get the desired result.
\end{proof}

At this point, let us put together \Cref{lem:cut-comparison-symmetric-difference}, \Cref{lem:symmetric-difference-energy-bound}, and also apply \Cref{lem:coordinate-cut-concentration}.

\begin{lemma}
\label[lemma]{lem:cut-comparison-coordinate-correlation}
Fix any $\epsilon > 0$ small enough.
In the same setting as \Cref{lem:sos-cut-approximation} and with $C$ large enough as a function of $\epsilon$, it holds with probability $1-o_d(1)$ simultaneously for all $i \in [d]$ that
\[\cP \proves{f}{2} \Set{(1-\epsilon) \langle f, Lf\rangle \leq \frac{1}{p} \langle f,  L' f\rangle - 2\epsilon + 4\epsilon (1 - \langle f, \chi_i\rangle^2) + d^{-40}}\,.\]
\end{lemma}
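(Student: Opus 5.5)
The plan is to chain \Cref{lem:cut-comparison-symmetric-difference}, \Cref{lem:symmetric-difference-energy-bound}, and \Cref{lem:coordinate-cut-concentration}. All three hold with probability $1-o_d(1)$, so by a union bound we may assume they all hold simultaneously for every $i\in[d]$. Fix $i$.

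\textbf{Step 1.} Start from \Cref{lem:cut-comparison-symmetric-difference}:
\[\langle f, Lf\rangle - \langle \chi_i, L\chi_i\rangle \le \frac{1}{p}\Paren{\langle f, L'f\rangle - \langle \chi_i, L'\chi_i\rangle} + \epsilon\langle f\circ\chi_i, L(f\circ\chi_i)\rangle + d^{-80}.\]

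\textbf{Step 2.} Substitute the degree-$2$ upper bound of \Cref{lem:symmetric-difference-energy-bound} for $\langle f\circ\chi_i, L(f\circ\chi_i)\rangle$, namely $(\langle f,Lf\rangle - \langle\chi_i,L\chi_i\rangle) + 4(1-\langle f,\chi_i\rangle^2)$. This substitution is valid in the proof system, since it just adds $\epsilon$ times an SoS-derivable nonnegative quantity with $\epsilon>0$. The result is
\[(1-\epsilon)\Paren{\langle f, Lf\rangle - \langle \chi_i, L\chi_i\rangle} \le \frac{1}{p}\langle f, L'f\rangle - \frac{1}{p}\langle \chi_i, L'\chi_i\rangle + 4\epsilon(1-\langle f,\chi_i\rangle^2) + d^{-80}.\]

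\textbf{Step 3.} Plug in the constants. We have $\langle\chi_i, L\chi_i\rangle = 2$ exactly, and \Cref{lem:coordinate-cut-concentration} gives $\frac1p\langle\chi_i, L'\chi_i\rangle \ge 2 - 2^{-d/3}$. Moving $(1-\epsilon)\cdot 2$ to the right-hand side, the constant terms become
\[2(1-\epsilon) - 2 + 2^{-d/3} + d^{-80} = -2\epsilon + 2^{-d/3} + d^{-80},\]
and $2^{-d/3} + d^{-80} \le d^{-40}$ for large $d$. This yields exactly the stated inequality, and every step is degree $2$ in $f$.

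\textbf{Expected obstacle.} There is essentially no analytic obstacle; the care needed is bookkeeping. First, the $\epsilon$ in the statement must match the $\epsilon$ of \Cref{lem:cut-comparison-symmetric-difference}, which was itself obtained after a reparameterization, so $C$ is chosen large enough as a function of this $\epsilon$. Second, the three events must hold simultaneously for all $i$, which the union bound above handles. Third, each step must remain a valid degree-$2$ SoS derivation from $\cP$; this holds because all substitutions are linear combinations, with nonnegative scalar coefficients, of degree-$2$ certified inequalities.
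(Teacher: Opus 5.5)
Your proposal is correct and follows the paper's proof essentially step for step: you invoke \Cref{lem:cut-comparison-symmetric-difference}, substitute the bound of \Cref{lem:symmetric-difference-energy-bound} and rearrange to get the $(1-\epsilon)$ factor, then use $\langle \chi_i, L\chi_i\rangle = 2$ and \Cref{lem:coordinate-cut-concentration} to absorb $2^{-d/3}+d^{-80}$ into $d^{-40}$. One minor remark: \Cref{lem:symmetric-difference-energy-bound} holds deterministically, so it does not need to be included in the union bound.
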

\begin{proof}
By \Cref{lem:cut-comparison-symmetric-difference}, it holds with probability $1-o_d(1)$ simultaneously for all $i \in [d]$ that
\[\cP \proves{f}{2} \langle f, L f\rangle - \langle \chi_i, L \chi_i \rangle \leq \frac{1}{p} \Paren{\langle f,  L' f\rangle - \langle \chi_i,  L' \chi_i \rangle} + \epsilon \cdot \langle f \circ \chi_i, L (f \circ \chi_i) \rangle + d^{-80}\,.\]
Plugging in \Cref{lem:symmetric-difference-energy-bound}, we further get
\[\cP \proves{f}{2} \langle f, L f\rangle - \langle \chi_i, L \chi_i \rangle \leq \frac{1}{p} \Paren{\langle f,  L' f\rangle - \langle \chi_i,  L' \chi_i \rangle} + \epsilon \cdot \Paren{(\langle f, Lf\rangle - \langle \chi_i, L \chi_i\rangle) + 4(1 - \langle f, \chi_i\rangle^2)} + d^{-80}\,,\]
so, rearranging,
\[\cP \proves{f}{2} (1-\epsilon)\Paren{\langle f, L f\rangle - \langle \chi_i, L \chi_i \rangle} \leq \frac{1}{p} \Paren{\langle f,  L' f\rangle - \langle \chi_i,  L' \chi_i \rangle} + 4\epsilon (1 - \langle f, \chi_i\rangle^2) + d^{-80}\,.\]
By \Cref{lem:coordinate-cut-concentration}, it holds with probability $1-o_d(1)$ again simultaneously for all $i \in [d]$ that $|\langle \chi_i, L \chi_i\rangle - \frac{1}{p} \langle \chi_i,  L' \chi_i\rangle| \leq 2^{-d/3}$. Also note that $\langle \chi_i, L \chi_i\rangle = 2$. Then
\[\cP \proves{f}{2} (1-\epsilon)\langle f, L f\rangle \leq \frac{1}{p} \langle f,  L' f\rangle - 2\epsilon + 4\epsilon (1 - \langle f, \chi_i\rangle^2) + d^{-40}\,.\]
\end{proof}

The result of \Cref{lem:cut-comparison-coordinate-correlation} is for a fixed coordinate cut $\chi_i$.
In the next step, we put together the results for all coordinate cuts and obtain a new result in terms of the quantity $1 - \sum_{i=1}^d \langle f, \chi_i\rangle^4$, which will later interface nicely with our sum-of-squares FKN theorem in~\Cref{thm:sos-fkn}.

\begin{lemma}
\label[lemma]{lem:cut-comparison-quartic-proxy}
Fix any $\epsilon > 0$ small enough.
In the same setting as \Cref{lem:sos-cut-approximation} and with $C$ large enough as a function of $\epsilon$, it holds with probability $1-o_d(1)$ that
\[\cP \proves{f}{4} \Set{(1-\epsilon) \langle f, Lf\rangle \leq \frac{1}{p} \langle f,  L' f\rangle - 2\epsilon + 4\epsilon \Paren{1 - \sum_{i=1}^d \langle f, \chi_i\rangle^4} + d^{-40}}\,.\]
\end{lemma}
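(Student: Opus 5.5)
The plan is to take the per-coordinate inequality of \Cref{lem:cut-comparison-coordinate-correlation}, multiply it by the nonnegative weight $\langle f,\chi_i\rangle^2$, and sum over $i \in [d]$. Write $w_i = \langle f,\chi_i\rangle^2$ and $W = \sum_i w_i$. Each $w_i$ is a square of a linear form, so for every $i$ the degree-$2$ certificate
\[
\frac{1}{p}\langle f, L'f\rangle - 2\epsilon + 4\epsilon(1-w_i) + d^{-40} - (1-\epsilon)\langle f, Lf\rangle \ge 0
\]
can be multiplied by $w_i$. This stays a valid sum-of-squares derivation: it multiplies each SoS term and each axiom-weighted term by a square of degree $2$, so the total degree is at most $4$. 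Summing over $i$ gives
\[
\cP \proves{f}{4} (1-\epsilon) W \langle f, Lf\rangle \le W\Paren{\frac1p \langle f,L'f\rangle - 2\epsilon + d^{-40}} + 4\epsilon\Paren{W - \sum_i w_i^2}.
\]

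It remains to replace $W$ by $1$ in every place except the term $W - \sum_i w_i^2 \le 1 - \sum_i w_i^4$. That last bound follows from $W \le 1$, which is SoS-Parseval (\Cref{claim:sos-boolean-parseval}), in degree $2$.

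To remove the factor $W$, I would first avoid multiplying by $w_i$ alone and instead use convex weights that sum exactly to one. For this I add $(1-W)$ times an extra inequality. Note that $1 - W = \sum_{|S|\neq 1}\langle f,\chi_S\rangle^2$ is itself SoS-certified as a sum of squares under $\cP$, again by \Cref{claim:sos-boolean-parseval}. So I need a degree-$2$ inequality of the form
\[
(1-\epsilon)\langle f,Lf\rangle \le \frac1p\langle f,L'f\rangle - 2\epsilon + 4\epsilon + d^{-40}.
\]
This is exactly the inequality of \Cref{lem:cut-comparison-coordinate-correlation} with $w_i$ replaced by $0$. It follows from that lemma for any fixed $i$ by using $1 - w_i \le 1$, which holds since $w_i \ge 0$ is a square.

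Multiplying this extra inequality by the SoS polynomial $1-W$ (degree $2$) and adding it to the weighted sum gives total weight $W + (1-W) = 1$ on the terms $(1-\epsilon)\langle f,Lf\rangle$, $\frac1p\langle f,L'f\rangle$, $-2\epsilon$ and $d^{-40}$. The remaining error term is
\[
4\epsilon\Paren{\sum_i w_i(1-w_i) + (1-W)} = 4\epsilon\Paren{1 - \sum_i w_i^2},
\]
which is exactly the claimed quartic proxy. All steps have degree at most $4$, and the probability bound is inherited from \Cref{lem:cut-comparison-coordinate-correlation}.

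The step needing the most care is the degree accounting when multiplying degree-$2$ certificates by the degree-$2$ squares $w_i$ and by $1-W$. The certificate of \Cref{lem:cut-comparison-coordinate-correlation} uses the equality axioms $f(x)^2=1$ and the balancedness constraint multiplied by degree-$\le 1$ polynomials, and equality axioms may be multiplied by arbitrary polynomials. So after multiplication the result is still a degree-$4$ proof. The main point to verify is that $1-W$ is genuinely expressible as a sum of squares modulo the Booleanity axioms with degree $2$, which Parseval provides.
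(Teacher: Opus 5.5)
Your proof is correct and is essentially the paper's. The paper also multiplies each inequality of \Cref{lem:cut-comparison-coordinate-correlation} by $\langle f,\chi_i\rangle^2$ and sums over $i$. It then removes the factor $\sum_i\langle f,\chi_i\rangle^2$ using two facts: its coefficient is nonnegative (this is your relaxed inequality with $\langle f,\chi_i\rangle^2$ replaced by $0$), and $\sum_i\langle f,\chi_i\rangle^2\le 1$. That is exactly your added term $(1-W)\cdot(\ldots)$, written as a chain of inequalities rather than one combined degree-$4$ certificate.

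One small slip: in your second paragraph, $1-\sum_i w_i^4$ should read $1-\sum_i w_i^2 = 1-\sum_i\langle f,\chi_i\rangle^4$. Your final computation has it right.
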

\begin{proof}
By rearranging the conclusion of \Cref{lem:cut-comparison-coordinate-correlation}, we get with probability $1-o_d(1)$ simultaneously for all $i \in [d]$ that
\[\cP \proves{f}{2} \frac{(1-\epsilon) \langle f, Lf\rangle - \frac{1}{p} \langle f,  L' f\rangle + 2\epsilon - d^{-40}}{4\epsilon} \leq 1 - \langle f, \chi_i\rangle^2\,.\]
For each $i \in [d]$, multiply both sides of the inequality by $\langle f, \chi_i\rangle^2$. Then
\[\cP \proves{f}{4} \frac{(1-\epsilon) \langle f, Lf\rangle - \frac{1}{p} \langle f,  L' f\rangle + 2\epsilon - d^{-40}}{4\epsilon} \langle f, \chi_i\rangle^2 \leq \langle f, \chi_i\rangle^2 - \langle f, \chi_i\rangle^4\,.\]
Summing up the inequalities over $i \in [d]$,
\[\cP \proves{f}{4} \frac{(1-\epsilon) \langle f, Lf\rangle - \frac{1}{p} \langle f,  L' f\rangle + 2\epsilon - d^{-40}}{4\epsilon} \sum_{i=1}^d \langle f, \chi_i\rangle^2 \leq \sum_{i=1}^d\langle f, \chi_i\rangle^2 - \sum_{i=1}^d\langle f, \chi_i\rangle^4\,.\]
Rearranging,
\begin{align*}
\cP \proves{f}{4} \sum_{i=1}^d\langle f, \chi_i\rangle^4
&\leq \Paren{1 - \frac{(1-\epsilon) \langle f, Lf\rangle - \frac{1}{p} \langle f,  L' f\rangle + 2\epsilon - d^{-40}}{4\epsilon}} \sum_{i=1}^d \langle f, \chi_i\rangle^2\,.
\end{align*}
Recall that we have by the conclusion of \Cref{lem:cut-comparison-coordinate-correlation} a sum-of-squares proof that the term in the parentheses on the right-hand side is non-negative.
Then, using also by \Cref{claim:sos-parseval} and \Cref{claim:sos-boolean-parseval} that $\cP \proves{f}{2} \sum_{i=1}^d \langle f, \chi_i\rangle^2 \leq 1$, it follows that
\begin{align*}
\cP \proves{f}{4} \sum_{i=1}^d\langle f, \chi_i\rangle^4
&\leq 1 - \frac{(1-\epsilon) \langle f, Lf\rangle - \frac{1}{p} \langle f,  L' f\rangle + 2\epsilon - d^{-40}}{4\epsilon}\,.
\end{align*}
Rearranging gives the desired conclusion.
\end{proof}

Let us also prove in sum-of-squares the simple fact that, if $\langle f, L f\rangle$ is close to the minimum value $2$, then most of the Fourier mass of $f$ lies on level $1$.

\begin{lemma}
\label[lemma]{lem:cut-energy-fourier-concentration}
In the same setting as \Cref{lem:sos-cut-approximation}, we have
\[\cP \proves{f}{2} \sum_{i \in [d]} \langle f, \chi_i\rangle^2 \geq 1 - \frac{1}{2}(\langle f, L f\rangle-2)\,.\]
\end{lemma}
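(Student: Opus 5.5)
The plan is to express $\langle f, L f\rangle$ in the Fourier basis as a polynomial identity of degree $2$, and then compare it termwise with the Fourier mass that sits off level $1$.

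First we derive the Fourier form of the quadratic form. The characters are eigenvectors of $L$: $L\chi_S = 2|S|\chi_S$. This holds because each edge in direction $i$ contributes $(\chi_S(x)-\chi_S(x^{\sim i}))^2$, and that is $4$ when $i\in S$ and $0$ otherwise. Substituting the sum-of-squares Fourier expansion (\Cref{claim:sos-fourier-expansion}) for $f$ and using orthogonality of characters, as in the proof of \Cref{claim:sos-parseval}, gives the degree-$2$ identity
\[
\proves{f}{2} \langle f, Lf\rangle = \sum_{S\subseteq[d]} 2|S|\,\langle f,\chi_S\rangle^2 .
\]

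Next we use the constraints. The balance constraint $\sum_x f(x)=0$ is linear, so it gives $\langle f,\chi_\emptyset\rangle = 0$. Hence $\langle f,\chi_\emptyset\rangle^2 = 0$ is derivable in degree $2$, by multiplying the linear equality by the linear polynomial $\langle f,\chi_\emptyset\rangle$. Every $S$ with $|S|\ge 2$ has $2|S|\ge 4$, so adding the nonnegative squares $(2|S|-4)\langle f,\chi_S\rangle^2$ yields
\[
\langle f, Lf\rangle \ge 2\sum_{i}\langle f,\chi_i\rangle^2 + 4\sum_{|S|\ge2}\langle f,\chi_S\rangle^2 .
\]
By Boolean Parseval (\Cref{claim:sos-boolean-parseval}), together with $\langle f,\chi_\emptyset\rangle^2=0$, we have $\sum_{|S|\ge 2}\langle f,\chi_S\rangle^2 = 1-\sum_i\langle f,\chi_i\rangle^2$. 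Substituting, we get $\langle f,Lf\rangle \ge 4 - 2\sum_i\langle f,\chi_i\rangle^2$. Rearranging gives
\[
\sum_i\langle f,\chi_i\rangle^2 \ge 2-\tfrac12\langle f,Lf\rangle = 1-\tfrac12(\langle f,Lf\rangle-2),
\]
which is exactly the claim.

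I expect no real obstacle here; every step is a linear combination of squares of linear forms and constraint multiples of degree at most $2$. The one point needing care is the eigenvector identity, which must be stated as an exact polynomial identity rather than a spectral inequality. Since it is an algebraic identity in the variables $f(x)$, that is immediate. Note that the third constraint of $\cP$, involving $L'$, is not used at all.
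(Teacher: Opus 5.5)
Your proof is correct and is essentially the paper's argument: expand $\langle f, Lf\rangle=\sum_S 2|S|\langle f,\chi_S\rangle^2$ in the eigenbasis of $L$, use $2|S|\ge 4$ for $|S|\ge 2$, and apply Boolean Parseval with $\langle f,\chi_\emptyset\rangle=0$. The only difference is cosmetic: you substitute $1-\sum_i\langle f,\chi_i\rangle^2$ for the level-$\ge 2$ mass directly, whereas the paper first isolates $\sum_{|S|\ge 2}\langle f,\chi_S\rangle^2\le\frac12(\langle f,Lf\rangle-2)$. One small nit: your edge-by-edge computation only gives the diagonal values $\langle\chi_S,L\chi_S\rangle=2|S|$, so check the eigen-equation directly as $(L\chi_S)(x)=\sum_{i}(\chi_S(x)-\chi_S(x^{\sim i}))=2|S|\chi_S(x)$, since that is what makes the cross terms vanish.
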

\begin{proof}
We use that the eigenvectors of $L$ are $\{\chi_S \mid S \subseteq [d]\}$, with the eigenvalue corresponding to $\chi_S$ being $2|S|$.
Hence, using by \Cref{claim:sos-fourier-expansion} that $\cP \proves{f}{O(1)} f(x) = \sum_{S \subseteq [d]} \langle f, \chi_S\rangle \chi_S(x)$,
\begin{align*}
\cP \proves{f}{2} \langle f, L f \rangle
&= 2 \sum_{i \in [d]} \langle f, \chi_i\rangle^2 + \sum_{\substack{S \subseteq [d]\\ |S| \geq 2}} 2|S| \langle f, \chi_S\rangle^2\\
&\geq 2 \sum_{i \in [d]} \langle f, \chi_i\rangle^2 + 4 \sum_{\substack{S \subseteq [d]\\ |S| \geq 2}} \langle f, \chi_S\rangle^2\\
&\stackrel{(*)}{=} 2 + 2 \sum_{\substack{S \subseteq [d]\\ |S| \geq 2}} \langle f, \chi_S\rangle^2\,,
\end{align*}
where in (*) we used \Cref{claim:sos-parseval} and the fact that $\cP \proves{f}{2} \langle f, \chi_{\emptyset}\rangle = \frac{1}{2^d} \sum_{x \in \Set{\pm 1}^d} f(x) = 0$.
Hence,
\begin{align*}
\cP \proves{ f}{2} \sum_{\substack{S \subseteq [d]\\ |S| \geq 2}} \langle f, \chi_S\rangle^2 \leq \frac{1}{2}(\langle f, Lf\rangle - 2)\,,
\end{align*}
so
\begin{align*}
\cP \proves{ f}{2} \sum_{i \in [d]} \langle f, \chi_i\rangle^2 \geq 1 - \frac{1}{2}(\langle f, Lf\rangle - 2)\,.
\end{align*}
\end{proof}

We are ready to prove \Cref{lem:sos-cut-approximation}.
\begin{proof}[Proof of \Cref{lem:sos-cut-approximation}]
We have by assumption that $\cP \proves{f}{2} \frac{1}{p}\langle f,  L' f\rangle \leq 2 + \delta$.
Substituting this into the conclusion of \Cref{lem:cut-comparison-quartic-proxy} with $\epsilon$ small enough, we get
\begin{align*}
\cP \proves{f}{4} (1-\epsilon) \langle f, Lf\rangle
&\leq 2+\delta - 2\epsilon + 4\epsilon \Paren{1 - \sum_{i=1}^d \langle f, \chi_i\rangle^4} + d^{-40}\\
&= (1-\epsilon)2 + \delta + 4\epsilon \Paren{1 - \sum_{i=1}^d \langle f, \chi_i\rangle^4} + d^{-40}\,.
\end{align*}
Rearranging and dividing by $1-\epsilon$,
\[\cP \proves{f}{4} \langle f, Lf\rangle - 2 \leq \frac{1}{1-\epsilon} \Paren{\delta + 4\epsilon \Paren{1 - \sum_{i=1}^d \langle f, \chi_i\rangle^4} + d^{-40}}\,.\]

We apply now the sum-of-squares FKN theorem in \Cref{thm:sos-fkn} with error variable ($\delta$ in the statement of \Cref{thm:sos-fkn}) set to $\frac{1}{2}(\langle f, L f\rangle - 2)$, which satisfies the needed hypotheses by \Cref{lem:cut-energy-fourier-concentration} and Parseval (\Cref{claim:sos-boolean-parseval}).
Then, applying \Cref{thm:sos-fkn}, we get
\[\cP \proves{f}{4} 1 - \sum_{i=1}^d \langle f, \chi_i\rangle^4 \leq O\Paren{\langle f, L f\rangle - 2}\,,\]
and substituting in our main derivation,
\[\cP \proves{f}{4} \langle f, Lf\rangle - 2 \leq \frac{1}{1-\epsilon} \Paren{\delta + O(\epsilon) \Paren{\langle f, L f\rangle - 2} + d^{-40}}\,,\]
so
\[\cP \proves{f}{4} \langle f, Lf\rangle - 2 \leq O(\delta) + d^{-20}\,.\]
This gives the desired result.
\end{proof}

\subsection{Proof of \texorpdfstring{\Cref{thm:coordinate-cut-recovery}}{Theorem \ref*{thm:coordinate-cut-recovery}}}

In this section we complete the proof of \Cref{thm:coordinate-cut-recovery} using \Cref{lem:sos-cut-approximation}, the sum-of-squares FKN theorem in \Cref{thm:sos-fkn}, and the rounding algorithm in Theorem 5.2 of~\cite{MR3631006-Ma16}.

\begin{proof}[Proof of \Cref{thm:coordinate-cut-recovery}]

The algorithm first computes $\hat{p} = \frac{|E(G)|}{d 2^{d-1}}$, where $E(G)$ is the set of edges of $G$.
By standard concentration bounds, we have that $|\hat{p} - p| \leq 2^{-\Omega(d)}$ with probability $1-o_d(1)$.
From now on we condition on this event.

Let $\hat{ L}'$ be the unnormalized Laplacian of $ G$.
Consider the following system of polynomial inequalities in variables $\Set{\hat{f}(v)}_{v \in V}$:
\[
\hat\cP = \begin{cases}
\frac{1}{\hat{p}} \langle \hat{f}, {\hat{ L}}' \hat f \rangle \leq 2+d^{-12}\,, \\
\sum_{v \in V} \hat f(v) = 0\,, \\
\hat f(v)^2 = 1\,, \forall v \in V\,. \\
\end{cases}
\]

The system $\hat\cP$ is feasible with high probability for each $\hat{f}(v) = \chi_i(z(v)) = z_i(v)$ for all $v \in V$, $i \in [d]$. This is because, for such a choice $\hat{f}$, \Cref{lem:coordinate-cut-concentration} gives with probability $1-o_d(1)$ that $\frac{1}{p}\langle \hat{f}, \hat{ L}' \hat{f}\rangle \leq 2 + 2^{-d/3}$, so also $\frac{1}{\hat{p}}\langle \hat{f}, \hat{ L}' \hat{f}\rangle \leq 2 + d^{-12}$.
The other two constraints of $\hat\cP$ are trivially satisfied.  We inspect now the implications of the system $\hat\cP$.

\begin{claim}
\label{claim:sos-quartic-proxy-certificate}
    Let $\hat \chi_i(v) = \chi_i(z(v)) = z_i(v)$ for all $i \in [d]$ and $v \in V$. Then, with probability $1-o_d(1)$ we have
    \begin{equation}
\hat \cP \proves{\hat f}{4} \sum_{i \in [d]} \langle \hat f, \hat \chi_i\rangle^4 \geq 1 - O\left(\frac{1}{d^{11}}\right)\,.
\end{equation}
\end{claim}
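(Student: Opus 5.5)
The plan is to transfer everything to the true hypercube coordinates and then combine \Cref{lem:sos-cut-approximation} with the sum-of-squares FKN theorem. Since $z$ is a bijection, the substitution $f(x) = \hat f(z^{-1}(x))$ is an invertible linear change of variables between $\{\hat f(v)\}_{v\in V}$ and $\{f(x)\}_{x\in\{\pm1\}^d}$. Under this substitution the Booleanity and balance constraints of $\hat\cP$ map to those of $\cP$. Also, $\langle \hat f,\hat L'\hat f\rangle = \langle f, L' f\rangle$ and $\langle \hat f,\hat\chi_i\rangle = \langle f,\chi_i\rangle$. Sum-of-squares proofs are preserved under linear substitutions with no change in degree, so it suffices to prove the statement in the $f$ variables.

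The first step is to replace $\hat p$ by $p$ in the cut constraint. On the conditioned event $|\hat p-p|\le 2^{-\Omega(d)}$, the constraint gives $\langle f,L'f\rangle \le \hat p(2+d^{-12})$, hence
\[
\frac1p\langle f,L'f\rangle \le \frac{\hat p}{p}\,(2+d^{-12}) \le 2+d^{-12}+2^{-\Omega(d)}/p \le 2+2d^{-12},
\]
using $p\ge C\log d/d$. This manipulation is linear in the constraint, so it is a degree-$2$ derivation. Thus $\hat\cP$ derives the system $\cP$ of \Cref{lem:sos-cut-approximation} with $\delta = 2d^{-12}$.

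Next, I invoke \Cref{lem:sos-cut-approximation}, which with probability $1-o_d(1)$ gives a degree-$4$ proof of $\langle f,Lf\rangle \le 2+O(d^{-12})+d^{-20}$. By \Cref{lem:cut-energy-fourier-concentration}, this yields
\[
\sum_i\langle f,\chi_i\rangle^2 \ge 1-\tfrac12(\langle f,Lf\rangle-2) \ge 1-O(d^{-12}).
\]
I then apply \Cref{thm:sos-fkn} with its $\delta$ set to the constant $O(d^{-12})$. That theorem notes that $\delta$ enters its proof only through degree-$1$ monomials, so substituting a nonnegative constant keeps the proof degree $4$. Combined with Booleanity, this gives $\sum_i\langle f,\chi_i\rangle^4 \ge 1-O(d^{-12})$. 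This is even stronger than the claimed $1-O(d^{-11})$, and the slack absorbs any looseness in the estimate of $\hat p$.

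The main point needing care is composing the sum-of-squares derivations while staying at degree $4$. The output of \Cref{lem:sos-cut-approximation} is a degree-$4$ consequence and is quadratic in $f$. Feeding it into \Cref{lem:cut-energy-fourier-concentration}, which is a degree-$2$ identity, and then into \Cref{thm:sos-fkn}, which only uses the hypothesis $\sum_i\langle f,\chi_i\rangle^2\ge 1-\delta$ linearly with a constant $\delta$, does not raise the degree. The remaining task is bookkeeping of the failure probabilities via a union bound: the concentration of $\hat p$, \Cref{lem:sos-cut-approximation}, and \Cref{lem:coordinate-cut-concentration} each fail with probability $o_d(1)$.
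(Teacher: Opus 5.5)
Your proposal is correct and follows the paper's proof essentially step for step: the change of variables $f(x)=\hat f(z^{-1}(x))$, then replacing $\hat p$ by $p$ to land in the system of \Cref{lem:sos-cut-approximation}, then combining the resulting degree-$4$ bound on $\langle f,Lf\rangle$ with \Cref{lem:cut-energy-fourier-concentration} and \Cref{thm:sos-fkn}. The only difference is the order of composition. The paper instantiates the FKN error variable as the polynomial $\frac12(\langle f,Lf\rangle-2)$ and substitutes the cut bound afterwards, whereas you first derive the constant bound $\sum_i\langle f,\chi_i\rangle^2\ge 1-O(d^{-12})$ and instantiate $\delta$ as a constant; both versions rely on the FKN proof using its hypothesis only linearly, which keeps the degree at $4$.
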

\begin{proof}
Define $\Set{f(x)}_{x \in \{\pm 1\}^d}$ such that $f(x) = \hat{f}(z^{-1}(x))$.
It is trivial that $\hat\cP \proves{\hat{f}}{2} \sum_{x \in \{\pm 1\}^d} f(x) = 0$ and $\hat\cP \proves{\hat{f}}{2} f(x)^2 = 1$ for all $x \in \Set{\pm 1}^d$.
Also let $L'$ be the unnormalized Laplacian of $G^*$.
Note that $ L'(x, y) = \hat{ L}'(z^{-1}(x), z^{-1}(y))$ for all $x, y \in \Set{\pm 1}^d$, and hence ${\hat\cP \proves{\hat{f}}{2} \frac{1}{\hat{p}}\langle f,  L' f \rangle = \frac{1}{\hat{p}} \langle \hat f, \hat{ L}' \hat f \rangle \leq 2 + d^{-12}}$.
Furthermore, because we are assuming that $|\hat{p} - p| \leq 2^{-\Omega(d)}$, it also follows easily that $\hat\cP \proves{\hat{f}}{2} \frac{1}{p} \langle f, L' f \rangle \leq 2+d^{-11}$.
Therefore $\hat{\cP}$ implies the system of inequalities
\[\cP = \begin{cases}
\frac{1}{p} \langle f,  L' f \rangle \leq 2+d^{-11}\,, \\
\sum_{x \in \Set{\pm 1}^d} f(x) = 0\,, \\
f(x)^2 = 1\,, \forall x \in \{\pm 1\}^d\,. \\
\end{cases}\]
By \Cref{lem:sos-cut-approximation}, we get with probability $1-o_d(1)$ that
\[\cP \proves{f}{4} \langle f, L f \rangle \leq 2+O(d^{-11})\,.\]

We apply now the sum-of-squares FKN theorem in \Cref{thm:sos-fkn} with error variable $\delta = \frac{1}{2}(\langle f, L f\rangle - 2)$, which satisfies the needed hypotheses by \Cref{lem:cut-energy-fourier-concentration} and Parseval's identity (\Cref{claim:sos-boolean-parseval}).
Hence,
\[\cP \proves{f}{4} \sum_{i\in[d]}\langle f,\chi_i\rangle^4 \geq 1-O(\langle f,Lf\rangle-2)\,.\]
Combining this with the upper bound on $\langle f, L f\rangle$ gives
\[\cP \proves{f}{4} \sum_{i\in[d]}\langle f,\chi_i\rangle^4 \geq 1-O(d^{-11})\,,\]
and, by reverting the change of variables, we conclude
\begin{equation*}
\hat \cP \proves{\hat f}{4} \sum_{i \in [d]} \langle \hat f, \hat \chi_i\rangle^4 \geq 1 -O(d^{-11})\,.
\end{equation*}
\end{proof}

By the conclusion of \Cref{claim:sos-quartic-proxy-certificate}, we can apply the rounding theorem \Cref{thm:sos-rounding} with variable $u = \hat{f}$, $P(u) = 2^{-d/2} u$, and feasible vectors $a_i = \hat\chi_i$, so that $b_i = P(a_i) = 2^{-d/2} \hat \chi_i$.
Note that $\hat \cP \proves{\hat f}{4} \Norm{P(u)}^4 = 2^{-2d} \Norm{u}^4 = 1$, where the norm is the standard Euclidean norm, as used in \Cref{thm:sos-rounding}.
As argued above, with probability $1-o_d(1)$ each of $\hat \chi_1, \ldots, \hat \chi_d$ is feasible for $\hat\cP$.
Then, by \Cref{thm:sos-rounding}, there exists an algorithm that runs in time $(2^d)^{O(1)}$ and outputs with high probability a set of unit vectors $ v_1, \ldots,  v_d \in \R^{2^d}$ such that, up to a permutation of the vectors, $\Norm{ v_i^{\otimes 2} - \Paren{\frac{1}{\sqrt{2^d}} \hat \chi_i}^{\otimes 2}}_{\mathrm{op}} \leq O(\sqrt{d^{-11}})$.

\begin{claim}
If $v,u$ are unit vectors with $\Norm{v^{\otimes 2} - u^{\otimes 2}}_{\mathrm{op}} \leq \epsilon$ for some $\epsilon \leq 1$, then either $\Norm{v - u}^2 \leq O(\epsilon^2)$ or $\Norm{v + u}^2 \leq O(\epsilon^2)$.
\end{claim}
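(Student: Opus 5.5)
The plan is to reduce everything to the single scalar $c = \langle u, v\rangle$, since for unit vectors both the operator norm of $v^{\otimes 2}-u^{\otimes 2}$ and the distances $\Norm{v\mp u}^2$ depend only on $c$. First, $\Norm{v\mp u}^2 = 2 \mp 2c$. Choosing the sign $s=\mathrm{sign}(c)$ (with either sign if $c=0$), we get
\[
\min\Set{\Norm{v-u}^2,\Norm{v+u}^2} = 2(1-|c|).
\]
So it suffices to show that $1-|c| \le O(\epsilon^2)$.

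Next I would compute the operator norm of $D = v^{\otimes 2}-u^{\otimes 2}$ exactly. This matrix has rank at most $2$, with range in $\spn\{u,v\}$. Its trace is $\Norm{v}^2-\Norm{u}^2=0$, so its two nonzero eigenvalues are $\pm\lambda$ for some $\lambda\ge 0$. Moreover
\[
\mathrm{tr}(D^2) = \Norm{v}^4 + \Norm{u}^4 - 2\langle u,v\rangle^2 = 2(1-c^2),
\]
so $2\lambda^2 = 2(1-c^2)$ and therefore $\Norm{D}_{\mathrm{op}} = \sqrt{1-c^2}$. The degenerate case where $u,v$ are parallel gives $D=0$ and $|c|=1$, which is consistent with this formula.

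Now I combine the two computations. The hypothesis gives $1-c^2\le\epsilon^2$. Since $1+|c|\ge 1$, we have
\[
1-|c| = \frac{1-c^2}{1+|c|} \le 1-c^2 \le \epsilon^2.
\]
Hence $\min\Set{\Norm{v-u}^2,\Norm{v+u}^2} = 2(1-|c|)\le 2\epsilon^2 = O(\epsilon^2)$. The assumption $\epsilon\le 1$ is not even needed, since $1-c^2\le 1$ always holds.

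There is no real obstacle in this argument. The only step that needs care is the spectral computation for the rank-$2$ traceless matrix; one could also do it by writing $v = cu + \sqrt{1-c^2}\,w$ with $w\perp u$ and computing the $2\times 2$ matrix of $D$ in the basis $\{u,w\}$.
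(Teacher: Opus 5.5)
Your proposal is correct and follows the same route as the paper: reduce to $c=\langle u,v\rangle$, use $\Norm{v^{\otimes 2}-u^{\otimes 2}}_{\mathrm{op}}^2 = 1-c^2$, and conclude $\min\{\Norm{v-u}^2,\Norm{v+u}^2\} = 2(1-|c|) \le 2\epsilon^2$. The only difference is that you justify the operator-norm identity, via the trace and $\mathrm{tr}(D^2)$ of the rank-$2$ traceless matrix $D = v^{\otimes 2}-u^{\otimes 2}$, whereas the paper states it without proof.
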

\begin{proof}
We have $\Norm{v^{\otimes 2} - u^{\otimes 2}}_{\mathrm{op}}^2 = 1 - \langle v, u\rangle^2$, so the assumption gives that $\langle v, u\rangle^2 \geq 1-\epsilon^2$, so either $\langle v, u\rangle \geq 1 - O(\epsilon^2)$ or $\langle v, u\rangle \leq -1+O(\epsilon^2)$.
In the former case we get $\Norm{v-u}^2 \leq O(\epsilon^2)$, and in the latter case we get $\Norm{v+u}^2 \leq O(\epsilon^2)$.
\end{proof}

Hence, for each $i \in [d]$, either $\Norm{ v_i - \frac{1}{\sqrt{2^d}} \hat \chi_i}^2 \leq O(d^{-11})$ or $\Norm{ v_i + \frac{1}{\sqrt{2^d}} \hat \chi_i}^2 \leq O(d^{-11})$.
Let $ \sigma_i \in \{\pm 1\}$ such that $\Norm{ v_i -  \sigma_i \frac{1}{\sqrt{2^d}} \hat \chi_i}^2 \leq O(d^{-11})$.
Then also $\Norm{\sqrt{2^d}  v_i -  \sigma_i \hat \chi_i}^2 \leq O(d^{-11})2^d$.
Let $ f_i = \mathrm{sign}(\sqrt{2^d} v_i) \in \{\pm 1\}^{V}$ for each $i \in [d]$, with entry $0$ rounded arbitrarily.

\begin{claim}
If $v, u$ are vectors and $u$ has all entries in $\Set{\pm 1}$, then $\Norm{\mathrm{sign}(v) - u}^2 \leq 4\Norm{v-u}^2$.
\end{claim}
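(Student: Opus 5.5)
The claim reduces to a coordinatewise inequality: since $\Norm{\mathrm{sign}(v)-u}^2=\sum_j (\mathrm{sign}(v_j)-u_j)^2$ and $\Norm{v-u}^2=\sum_j (v_j-u_j)^2$, it suffices to show that for every real $v_j$ and every $u_j\in\Set{\pm 1}$,
\[
(\mathrm{sign}(v_j)-u_j)^2 \le 4(v_j-u_j)^2 ,
\]
where $\mathrm{sign}(0)$ is an arbitrary element of $\Set{\pm1}$. Summing over coordinates then gives the claim.

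We split into two cases according to whether the sign is correct.

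\emph{Correct sign.} If $\mathrm{sign}(v_j)=u_j$, the left-hand side is $0$ and there is nothing to prove. This includes every $v_j\neq 0$ with the same sign as $u_j$.

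\emph{Wrong sign.} Otherwise $\mathrm{sign}(v_j)=-u_j$, so the left-hand side equals $4$. In this case either $v_j=0$, or $v_j$ has the sign opposite to $u_j$. In both situations $v_j$ and $u_j$ lie on opposite sides of $0$ (weakly), so $|v_j-u_j|=|v_j|+|u_j|\ge |u_j|=1$. Hence $4(v_j-u_j)^2\ge 4$, which matches the left-hand side.

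Both cases give the coordinatewise bound, and summing proves the claim. No step is really an obstacle; the only point requiring care is the arbitrary rounding at $0$, which the wrong-sign case already covers.
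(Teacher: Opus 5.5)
Your proof is correct and follows the same route as the paper: reduce to the coordinatewise inequality $(\mathrm{sign}(v_j)-u_j)^2 \le 4(v_j-u_j)^2$ and sum over coordinates. The paper simply asserts that this scalar inequality is easy to verify, while your case split (correct sign versus wrong sign, including the arbitrary rounding at $0$) supplies that verification explicitly.
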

\begin{proof}
We have $\Norm{\mathrm{sign}(v) - u}^2 = \sum_i (\mathrm{sign}(v_i) - u_i)^2$. It is easy to verify that $(\mathrm{sign}(v_i) - u_i)^2 \leq 4 (v_i - u_i)^2$ for all $v_i \in \R$ and $u_i \in \{\pm 1\}$. Then we also have $\sum_i (\mathrm{sign}(v_i) - u_i)^2 \leq 4 \sum_i (v_i - u_i)^2 = 4\Norm{v-u}^2$, completing the proof.
\end{proof}

Therefore $\Norm{ f_i -  \sigma_i \hat \chi_i}^2 \leq O(d^{-11})2^d \leq d^{-10} 2^d$.
Then we output $ f_1, \ldots,  f_d$.
Up to a permutation of the vectors, we have argued that these satisfy $\Norm{ f_i -  \sigma_i \hat \chi_i}^2 \leq d^{-10} 2^d$, so $\E_{v \sim \mathrm{Unif}(V)} (f_i(v) - \sigma_i \hat \chi_i(v))^2 \leq d^{-10}$. Since $\hat \chi_i(v)=z_i(v)$, this is the claimed bound.

\end{proof}

%% file: degree-2-lower-bound.tex
\section{Degree-\texorpdfstring{$2$}{2} lower bound}
\label{sec:degree-2-lower-bound}

In this section we prove that, in the regime $p = \Theta(\frac{\log d}{d})$, the degree-$2$ version of our relaxation can certify at most $\sum_{i \in [d]} \langle f, \chi_i\rangle^2 \geq 1 - 1/\polylog(d)$ instead of $1-1/\poly(d)$.
Hence, the degree-$2$ relaxation does not provide the Fourier
concentration needed by our FKN-based recovery argument.
Indeed, if the coordinate cuts are recovered with agreement $1-\varepsilon$, one can only deduce from this a feature vector recovery on a $1-d \varepsilon$ fraction of the vertices (see \Cref{lem:overview-latent-vector-recovery}). We remark that the following lower bound is not a formal proof that no algorithm based on degree-$2$ sum-of-squares can recover the latent feature vectors, but it means that our approach does need higher degree certificates.

\begin{theorem}[Degree-$2$ lower bound]
\label[theorem]{thm:degree-2-lower-bound}
In the same setting as \Cref{thm:coordinate-cut-recovery}, let $L'$ be the unnormalized Laplacian of $G^*$.
Consider the following system of polynomial inequalities in variables $\{f(x)\}_{x \in \{\pm 1\}^d}$:
\[
\cP = \begin{cases}
\frac{1}{p} \langle f, L' f \rangle \leq 2\,, \\
\sum_{x \in \Set{\pm 1}^d} f(x) = 0\,, \\
f(x)^2 = 1\,, \forall x \in \{\pm 1\}^d\,. \\
\end{cases}
\]
For $\frac{C \log d}{d} \leq p \leq 1/4$ for a large enough absolute constant $C$, with probability $1-o_d(1)$ there exists a degree-$2$ pseudoexpectation $\tilde\E$ that satisfies $\cP$ but $\tilde\E \sum_{i \in [d]}\langle f, \chi_i \rangle^2 \leq 1-\Omega(\frac{1}{pd})$.
\end{theorem}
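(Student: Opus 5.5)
The plan is to build the pseudoexpectation explicitly as a Gram matrix. A degree-$2$ pseudoexpectation with $\tilde\E f(x)=0$ is the same as a PSD matrix $X=(\tilde\E f(x)f(y))_{x,y}$. Satisfying $\cP$ at degree $2$ amounts to three conditions: $X_{xx}=1$ for all $x$; $X\mathbf 1=0$, which makes every product of $\sum_x f(x)$ with a linear form vanish; and $\frac1p\langle X,L'\rangle\le 2$, with the natural normalization of $\langle\cdot,\cdot\rangle$ and $2^{-d}$ factors. I will construct $X$ in two stages, following the sketch of \Cref{thm:overview-degree-2-lower-bound}.

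\emph{Stage 1: a solution with cut value $2-\Omega(1/(pd))$.} The baseline is the Gram matrix of the vectors $u_x=d^{-1/2}(x_1,\dots,x_d)$, i.e.\ $X_0=\frac1d\sum_i\chi_i\chi_i^\top$. It has unit diagonal, satisfies $X_0\mathbf 1=0$, and each sampled edge contributes $\|u_x-u_y\|^2=4/d$. By \Cref{lem:coordinate-cut-concentration}, its normalized cut value is $2\pm 2^{-d/3}$.

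I perturb this baseline using the local edge pattern. Set $v_x(i)=c\,x_i\,(1-t\,a_{x,i})$, where $a_{x,i}=\ind[\{x,x^{\sim i}\}\in E(G^*)]$, $t=\Theta(1/(pd))$, and $c$ is a normalization. The key property is that $a_{x,i}=a_{x^{\sim i},i}$. Hence for each $i$ the $i$-th coordinate of $v$ sums to zero over the pairs $\{x,x^{\sim i}\}$, which gives $\sum_x v_x=0$ and therefore $X\mathbf 1=0$ for $X_{xy}=\langle v_x,v_y\rangle$.

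Next I compute the contribution of an edge of $G^*$ in direction $i$:
\begin{itemize}
\item Coordinate $i$ contributes $c^2(2-2t)^2\approx\frac{4}{d}(1-2t)$.
\item Each coordinate $j\neq i$ contributes $c^2t^2$ exactly when $a_{x,j}\ne a_{y,j}$. By independence this happens for about $2pd$ values of $j$, so the total is $\approx 2pt^2$.
\end{itemize}
With $t=2/(pd)$, the total is $\frac4d-\Omega\big(\frac{1}{pd^2}\big)$ per edge, i.e.\ a relative saving of $\Omega(1/(pd))$. The normalization $c^2\approx 1/(d(1-2tp))$ changes this only by a $1+O(1/d)$ factor.

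\emph{Main obstacle: the diagonal.} The squared norm $\|v_x\|^2=c^2(d-(2t-t^2)\deg(x))$ depends on $\deg(x)$. Rescaling each $v_x$ to unit length would break $\sum_x v_x=0$. I would handle this in the following order.
\begin{enumerate}
\item Replace $a_{x,i}$ by a centered version, either $a_{x,i}-\deg(x)/d$, or the indicator scaled by $pd/\deg(x)$ on the set $U$ of typical-degree vertices from \Cref{sec:spectral-concentration}. For $x\in U$ this makes $\|v_x\|^2=1\pm O(\gamma t p)$.
\item Absorb the remaining diagonal defect by adding a nonnegative diagonal correction $D$ and then projecting with $P_{>0}$. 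This costs at most $\|D\|\cdot O(d)$ in $L'$-value per affected vertex.
\item Argue, via Chernoff bounds and \Cref{lem:size-bad-degree-set}, that vertices outside $U$ form a $d^{-C_2}$ fraction, so their effect on the cut value is negligible against the $\Omega(1/(pd))$ saving.
\item Check that the saving itself concentrates: it is a sum over edges of functions of the local edge pattern with bounded dependency, so \Cref{lem:bernstein-bounded-dependence} gives $\frac1p\langle X,L'\rangle\le 2-\Omega(1/(pd))$ with probability $1-o_d(1)$.
\end{enumerate}

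\emph{Stage 2: shifting Fourier mass off level $1$.} Take $X'=(1-\alpha)X+\alpha\,\chi_{\{1,2\}}\chi_{\{1,2\}}^\top$. The character term has unit diagonal and is orthogonal to $\mathbf 1$. By \Cref{lem:spectral-concentration}, or directly by a Chernoff bound, its normalized cut value is $4\pm o(1)$. Choosing $\alpha=\Theta(1/(pd))$ small enough keeps $\frac1p\langle X',L'\rangle\le 2$.

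Finally I compute the level-$1$ mass:
\[
\tilde\E\sum_i\langle f,\chi_i\rangle^2=2^{-2d}\sum_i\chi_i^\top X'\chi_i\le(1-\alpha)\cdot 1+\alpha\cdot 0 .
\]
The bound uses $\chi_{\{1,2\}}\perp\chi_i$ and that $\sum_i\chi_i^\top X\chi_i\le 2^{2d}\,\mathrm{tr}(X)/2^d$, which holds because the $\chi_i/2^{d/2}$ are orthonormal. This gives $1-\Omega(1/(pd))$, as claimed.
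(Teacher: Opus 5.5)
Your two-stage structure, Stage~2, and the final level-$1$ computation match the paper. Stage~1 uses a genuinely different perturbation, and its core computation is correct: the coordinatewise shrinkage $v_x(i)=c\,x_i(1-t\,a_{x,i})$ is exactly balanced because $a_{x,i}=a_{x^{\sim i},i}$, and at $t=2/(pd)$ it saves $\Omega(1/(pd^2))$ per edge.

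\textbf{The gap.} It lies in the step you flag as the main obstacle: the diagonal repair, as you budget it, does not close. A diagonal correction $D$ costs $\frac1p\langle D,L'\rangle=\frac1p\E_x[D_{xx}\deg_{G^*}(x)]$, and projecting by $P_{>0}$ does not change this since $L'\mathbf 1=0$. The entire Stage~1 saving is only $\Theta(1/(pd))=\Theta(1/\log d)$. Your stated defect is $\|v_x\|^2=1\pm O(\gamma tp)=1\pm O(\gamma/d)$ on essentially all of $U$. With that defect the correction costs $\Theta(\gamma)$, even if each vertex is charged its true degree $\approx pd$. Charging ``$O(d)$ per affected vertex'', as you propose, is worse still. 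So as written, the correction swamps the saving.

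\textbf{How to rescue your construction.} It needs a sharper computation.
\begin{itemize}
\item Under your rescaling $a_{x,i}\mapsto a_{x,i}\,pd/\deg(x)$, and likewise under the centering $a_{x,i}-\deg(x)/d$, the term linear in $t$ in $\|v_x\|^2$ no longer depends on $x$. What remains is of order $c^2t^2\deg(x)$; for the rescaling it is $c^2t^2(pd)^2/\deg(x)$. Its variation over $U$ is $O(\gamma t^2p)=O(\gamma/(pd^2))$, not $O(\gamma tp)$.
\item Charged at $\deg(x)\le(1+\gamma)pd$, this gives cost $O(\gamma/(pd))$. That is below the saving once $\gamma$ is a small enough constant. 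Rescaling $v_x\mapsto v_x/\|v_x\|$ instead of adding diagonal mass is cheaper still.
\item Both modifications destroy the identity $a_{x,i}=a_{x^{\sim i},i}$, so exact balance is lost. The imbalance is exponentially small, but the projection $P_{>0}$ then moves the diagonal again. So $D$ must be solved for after projecting, not added before.
\end{itemize}

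\textbf{How the paper avoids all of this.} Your centered vector $\big((a_{x,i}-\deg(x)/d)\,x_i\big)_i$ is exactly the projection of $(a_{x,i}x_i)_i$ orthogonal to $x$. The paper first symmetrizes $a_{x,i}$ to $a_{x,i}+a_{-x,i}$, then takes $q(x)$ to be minus the normalized projection. This $q$ is odd in $x$, which gives exact balance. It is also a unit vector orthogonal to $x$, which gives $\|F'(x)\|^2=1+r^2/2$ exactly. The paper then pads with the odd coordinate $\frac{r}{\sqrt2}\chi_S(x)$, $|S|$ odd, so that every $\|F(x)\|=1$. Feasibility is therefore exact, and there is no $U/R$ split or diagonal correction. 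The saving comes from the single cross term $\langle x,L'q\rangle$, which needs only a per-vertex Chernoff bound plus Markov.
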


\begin{remark}
In the sparse regime $p = \Theta(\frac{\log d}{d})$, the conclusion of \Cref{thm:degree-2-lower-bound} implies that\linebreak $\tilde\E \sum_{i \in [d]}\langle f, \chi_i \rangle^2 \leq 1-\Omega(\frac{1}{\log d})$.
\end{remark}

\begin{lemma}
\label[lemma]{lem:orthogonal-perturbation-exists}
In the same setting as \Cref{thm:degree-2-lower-bound}, with probability $1-o_d(1)$ there exists some $q : \Set{\pm 1}^d \to \R^d$ such that $q(x) = -q(-x)$, $\langle q(x), x\rangle =0$, and $\Norm{q(x)}^2=1$ for all $x \in \Set{\pm 1}^d$, and in addition
\[\Iprod{L'\frac{x}{\sqrt{d}},q} \leq -\frac{\sqrt{p}}{5}\,,\]
where $x: \Set{\pm 1}^d \to \R^d$ denotes the identity function (in the sense that each $x \in \{\pm 1\}^d$ is mapped to $x$ as a real-valued vector).
\end{lemma}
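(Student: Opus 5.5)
The plan is to build $q$ explicitly, pair by pair, from the local neighbourhoods of $x$ and $-x$ in $G^*$. The two constraints $q(x)=-q(-x)$ and $\langle q(x),x\rangle=0$ are imposed directly by the construction. The inequality then reduces to a Chernoff-type lower bound on a sum of independent per-coordinate indicators.

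\emph{Computing the pairing.} Write $a_i(x)=\ind(\{x,x^{\sim i}\}\in E(G^*))$. Apply $L'$ coordinatewise to the function $x\mapsto x/\sqrt d$. Since $x-x^{\sim i}=2x_ie_i$, we get
\[
\Big(L'\tfrac{x}{\sqrt d}\Big)(x)=\tfrac{2}{\sqrt d}\sum_{i} a_i(x)\,x_ie_i=:w(x),
\]
and hence $\langle L'\frac{x}{\sqrt d},q\rangle=2^{-d}\sum_x\langle w(x),q(x)\rangle$. Group the vertices into antipodal pairs $\{x,-x\}$. Once $q(-x)=-q(x)$ is imposed, the pair contributes $\langle u(x),q(x)\rangle$, where
\[
u(x)=w(x)-w(-x)=\tfrac{2}{\sqrt d}\sum_i c_i(x)\,x_ie_i,\qquad c_i(x)=a_i(x)+a_i(-x)\in\{0,1,2\}.
\]

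\emph{Definition of $q$.} Fix one representative $x$ of each pair and let $P_\perp$ be the projection onto $x^\perp$. Set $q(x)=-P_\perp u(x)/\|P_\perp u(x)\|$ and $q(-x)=-q(x)$. If $P_\perp u(x)=0$, take instead any unit vector orthogonal to $x$. With this choice, all three pointwise constraints hold, since $q(-x)\perp -x$ as well. Each pair contributes exactly $-\|P_\perp u(x)\|\le 0$. Using $\|x\|^2=d$ and $x_i^2=1$,
\[
\|P_\perp u(x)\|^2=\tfrac4d\Big(\sum_i c_i^2-\tfrac1d\big(\textstyle\sum_i c_i\big)^2\Big).
\]

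\emph{Probabilistic estimate.} For fixed $x$ and $d\ge2$, the $2d$ edges $\{x,x^{\sim i}\}$ and $\{-x,-x^{\sim i}\}$ are distinct, so the $c_i$ are i.i.d.\ $\mathrm{Bin}(2,p)$. This gives $\E\sum_ic_i=2pd$ and $\E\sum_i c_i^2=d(2p+2p^2)$. Because $pd\ge C\log d$, Chernoff bounds show that, outside probability $d^{-\Omega(C)}$, we have $\sum_ic_i^2\ge (1-\gamma)d(2p+2p^2)$ and $\sum_i c_i\le(1+\gamma)2pd$ for a small constant $\gamma$. On this event,
\[
\|P_\perp u\|^2\ge 4\big((1-\gamma)(2p+2p^2)-(1+\gamma)^24p^2\big)\ge 4p,
\]
where the last step uses $p\le 1/4$ and $\gamma$ small. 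Call a pair \emph{good} if this event holds. The expected fraction of bad pairs is $d^{-\Omega(C)}$, so by Markov's inequality at most a $1/2$ fraction of pairs is bad with probability $1-o_d(1)$. Because every pair's contribution is nonpositive, we can keep only the good pairs. There are at least $2^{d-2}$ of them, so
\[
\langle L'\tfrac{x}{\sqrt d},q\rangle\le -2^{-d}\cdot 2^{d-2}\cdot 2\sqrt p=-\tfrac{\sqrt p}{2}\le-\tfrac{\sqrt p}{5}.
\]

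The main obstacle I expect is the constant bookkeeping in the Chernoff step. It must produce $\|P_\perp u\|^2\gtrsim p$ even though the subtracted term $(\sum c_i)^2/d\approx 4p^2d$ is of the same order as $\sum_i c_i^2$ when $p$ is large. This is exactly where the hypothesis $p\le 1/4$ enters: it gives $2p-2p^2\ge 3p/2$ with enough slack to absorb the relative errors of size $\gamma$.
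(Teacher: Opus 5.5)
Your proposal is correct and takes essentially the same route as the paper. Your $q(x)=-P_\perp u(x)/\|P_\perp u(x)\|$ coincides with the paper's $-g(x)/\|g(x)\|$ (note $c_i=\eta_i$ and $u=\frac{2}{\sqrt d}h$), and the antipodal-pairing identity and the Markov step over vertices are the same. The only difference is how you lower-bound $\|g(x)\|^2$: the paper uses $\|g\|^2=\frac1d\sum_{i<j}(\eta_i-\eta_j)^2$ and counts coordinates with $\eta_i\ge 1$ and with $\eta_i=0$, whereas you apply Chernoff to $\sum_i c_i^2$ and $\sum_i c_i$ separately, which gives the slightly better constant $-\sqrt p/2$.
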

\begin{proof}
Let $E$ be the edge set of $G^*$.
For each $x \in \{\pm 1\}^d$ and $i \in [d]$, let
\begin{equation*}
    \eta_i(x) = \ind(\{x,x^{\sim i}\} \in E)+\ind(\{-x,-x^{\sim i}\} \in E) \, .
\end{equation*}
Then, we define vectors $h(x) \in \R^d$ as having $(h(x))_i = \eta_i(x) x_i$ for each $i \in [d]$.
We also define $g(x)$ to be the projection of $h(x)$ onto the subspace orthogonal to $x$, and
\begin{equation*}
    q(x) = \begin{cases}
        -\frac{g(x)}{\|g(x)\|}, & \text{if } g(x) \neq 0\,,\\
        u(x), \quad & \text{if } g(x)=0\,, \\
    \end{cases}
\end{equation*}
where $u(x)$ is any odd unit vector orthogonal to $x$, e.g., $u(x) = \frac{x_2 e_1 - x_1 e_2}{\sqrt{2}}$.
Then, it is clear that $q(x) = -q(-x)$, $\langle q(x), x\rangle =0$, and $\Norm{q(x)}^2=1$ for all $x \in \Set{\pm 1}^d$. Now we calculate
\begin{align*}
    \|g(x)\|^2 & = \sum_{i =1}^d \eta_i(x)^2 - \frac{1}{d}\left(\sum_{i=1}^d \eta_i(x)\right)^2 \\
    & = \frac{1}{2d}\sum_{i,j =1}^d \left(\eta_i(x)-\eta_j(x)\right)^2 \\
    & = \frac{1}{d}\sum_{1\le i<j \le d} \left(\eta_i(x)-\eta_j(x)\right)^2\,.
\end{align*}
For a fixed $x$, the expected number of coordinates $i \in [d]$ such that $\eta_i(x) \ge 1$ is $(2p(1-p)+p^2)d \ge pd$. Since the random variables $(\ind(\eta_i(x) \ge 1))_{i=1}^d$ are independent, we have at least $pd/2$ such coordinates with probability $1-1/d^{10}$. Similarly, the expected number of coordinates $i \in [d]$ such that $\eta_i(x) =0$ is $(1-p)^2 d \ge d/2$. Since the random variables $(\ind(\eta_i(x) = 0))_{i=1}^d$ are independent, we have at least $d/3$ such coordinates with probability $1-1/d^{10}$.

When both these events occur, we have $\|g(x)\|^2 \ge pd/6$. Hence, the probability that $\|g(x)\| < \sqrt{pd/6}$ is at most $1/d^9$. By Markov's inequality we then conclude that $\|g(x)\| \ge \sqrt{pd/6}$ for at least half of the vertices $x$ with probability at least $1-1/d^8$. In this case, we get that
\begin{equation*}
    \frac{1}{2^d}\sum_{x \in \{\pm 1\}^d} \|g(x)\| \ge \sqrt{pd/24}\,.
\end{equation*}
Finally, we conclude by observing
\begin{align*}
    (L' \chi_i)(x) & =  \sum_{j =1}^d \ind(\{x,x^{\sim j}\} \in E)\cdot \chi_i(x) - \sum_{j =1}^d \ind(\{x,x^{\sim j}\} \in E) \cdot \chi_i(x^{\sim j})  \\
    & = \sum_{j =1}^d \ind(\{x,x^{\sim j}\} \in E) \left(\chi_i(x)- \chi_i(x^{\sim j})\right) \\
    & = 2 \cdot \ind(\{x,x^{\sim i}\} \in E) \cdot \chi_i(x) \, ,
\end{align*}
so $L' x$ is the vector $v(x)$ whose $i$-th coordinate is $\ind(\{x,x^{\sim i}\} \in E) \cdot 2x_i$.
Then, we have
\begin{align*}
    \Iprod{ \frac{x}{\sqrt{d}}, L' q} & = \Iprod{ q, L' \frac{x}{\sqrt{d}}} \\
    & = \frac{1}{\sqrt{d}}\frac{1}{2^d} \sum_x \langle q(x),v(x) \rangle \\
    & = \frac{1}{\sqrt{d}}\frac{1}{2^{d+1}} \left( \sum_x \langle q(x),v(x)\rangle - \sum_x \langle q(-x),v(x) \rangle\right) \\
    & = \frac{1}{\sqrt{d}}\frac{1}{2^{d+1}} \left( \sum_x \langle q(x),v(x) \rangle - \sum_x \langle q(x),v(-x)\rangle\right) \\
    & = \frac{1}{\sqrt{d}} \frac{1}{2^{d+1}} \sum_x \langle q(x),v(x)-v(-x) \rangle   \, .
\end{align*}
Note that $\frac{1}{2}(v(x)-v(-x))=h(x)$. Moreover, $q(x)$ is already in the plane orthogonal to $x$, so $\langle q(x), h(x)\rangle = \langle q(x), g(x)\rangle$. Hence, continuing from above we conclude
\begin{align*}
    \Iprod{ \frac{x}{\sqrt{d}}, L' q} =  \frac{1}{\sqrt{d}} \frac{1}{2^{d}} \sum_x \langle q(x), g(x)  \rangle = -\frac{1}{\sqrt{d}} \frac{1}{2^{d}} \sum_x \|g(x)\| \le -\sqrt{p/24} \le -\sqrt{p}/5 \, .
\end{align*}
\end{proof}

\begin{proof}[Proof of \Cref{thm:degree-2-lower-bound}]
The degree-$2$ sum-of-squares program in \Cref{thm:degree-2-lower-bound} is equivalent to the following SDP in variable $X \in \R^{2^d \times 2^d}$:
\[\begin{cases}
\frac{1}{p} \langle X, L'\rangle \leq 2\,,\\
1^\top X 1 = 0\,,\\
\diag(X) = 1\,,\\
X \succeq 0\,,
\end{cases}\]
where the inner product is normalized by $\frac{1}{2^d}$.
Let $P_{=1} = \frac{1}{2^d} \sum_{i \in [d]} \chi_i \chi_i^\top$.
Then the conclusion of \Cref{thm:degree-2-lower-bound} is equivalent to saying that there exists some $X$ feasible for this program with $\langle X, P_{=1} \rangle \leq 1-\Omega(\frac{1}{pd})$.
Let us now construct such an $X$.

We will first construct some $F : \Set{\pm 1}^d \to \R^m$ such that $X \in \R^{2^d \times 2^d}$ defined with $X_{x,y} = \langle F(x), F(y)\rangle$ is feasible for the SDP with $\frac{1}{p} \langle X, L'\rangle \leq 2 - \Omega\Paren{\frac{1}{pd}}$.
Then, by taking a convex combination between $X$ and $\chi_{\Set{1,2}}\chi_{\Set{1,2}}^\top$, the result will follow.
We note that, for an $X$ defined as above, we have $\langle X, L'\rangle = \frac{1}{2^d} \sum_{x \sim_{G^*} y} \Norm{F(x)-F(y)}^2$.

Let $q$ be as in \Cref{lem:orthogonal-perturbation-exists}.
Let $S \subseteq [d]$ be any set with odd cardinality such that $|S| \geq d- 1$.
Then, for $r = \frac{1}{16d\sqrt{p}} \le 1/2$, we define
\[F'(x) = \frac{1}{\sqrt{d}} x + \frac{r}{\sqrt{2}}q(x) \in \R^d\]
and
\begin{equation*}
F(x)=
\frac{1}{\sqrt{1+r^2}}\begin{pmatrix}
F'(x)\\
\frac{r}{\sqrt{2}} \cdot \chi_S(x)
\end{pmatrix}  \in \R^{d+1}\, .
\end{equation*}

Let us verify the properties of $F$ and the resulting $X$.
First,
\begin{align*}
1^\top X 1 = \sum_{x, y \in \Set{\pm 1}^d} \langle F(x), F(y)\rangle = \Norm{\sum_{x \in \Set{\pm 1}^d} F(x)}^2 = 0\,,
\end{align*}
because $F(x) = -F(-x)$. Second, it is easy to verify by definition that
\begin{align*}
X_{x,x} = \Norm{F(x)}^2 = 1\,,
\end{align*}
since $q(x)$ is a unit-norm vector that is orthogonal to $x$.
It remains to understand $\frac{1}{p}\langle X, L'\rangle$.
We have
\begin{align*}
\langle X, L'\rangle
&= \frac{1}{2^d} \sum_{x \sim_{G^*} y} \Norm{F(x)-F(y)}^2\\
&\le  \frac{1}{2^d}\frac{1}{1+r^2} \sum_{x \sim_{G^*} y} \Norm{F'(x)-F'(y)}^2 + \frac{1}{2^d} \frac{2r^2}{1+r^2}|E|\,.
\end{align*}
Now we rewrite
\begin{align*}
    \frac{1}{2^d} \sum_{x \sim_{G^*} y} \Norm{F'(x)-F'(y)}^2 &  = \langle F', L' F' \rangle \\
    & = \frac{1}{d} \langle x, L' x \rangle + \frac{r^2}{2}\langle q, L' q \rangle + \frac{\sqrt{2} \cdot r} {\sqrt{d}} \langle x, L' q \rangle \\
    & \le \frac{1}{d} \langle x, L' x \rangle + {2r^2} \frac{|E|}{2^d} + \frac{\sqrt{2} \cdot r} {\sqrt{d}} \langle x, L' q \rangle\,.
\end{align*}
By the guarantee in \Cref{lem:orthogonal-perturbation-exists}, we have $\Iprod{ \frac{x}{\sqrt{d}}, L' q} \le -\frac{\sqrt{p}}{5}$ and also $\langle x, L' x \rangle = \frac{4|E|}{2^d}$.
Putting everything together, we get
\begin{align*}
    \langle X, L'\rangle \le \left(\frac{1}{d(1+r^2)}+\frac{r^2}{1+r^2}\right) \frac{4|E|}{2^d} - \frac{\sqrt{2}r\sqrt{p}}{5(1+r^2)}\,.
\end{align*}
Since $|E| \le p(1+r^2)d 2^d /2$ with high probability, we have
\begin{align*}
    \langle X, L'\rangle \le 2p+ 2dr^2p-\frac{r\sqrt{p}}{5} = 2p + \frac{1}{128 d}-\frac{1}{80d} \le 2p - 1/(1000d) \, .
\end{align*}

We define now the matrix that will give the desired conclusion: $X^* = (1-\epsilon)X + \epsilon \chi_{\Set{1,2}} \chi_{\Set{1,2}}^\top$ for some ${\epsilon \in (0, 1)}$ to be selected.
Standard concentration bounds give with probability $1-o_d(1)$ that\linebreak $\frac{1}{p}\langle \chi_{\Set{1,2}}, L' \chi_{\Set{1,2}}\rangle \leq O(1)$.
Then, for $\epsilon = \Omega\Paren{\frac{1}{pd}}$ small enough, $X^*$ is also feasible for the SDP, because $\frac{1}{p}\langle X^*, L'\rangle \leq (1-\epsilon) \Paren{2 - 1/(1000pd)} + O(\epsilon) \leq 2$ and $1^\top (\chi_{\Set{1,2}} \chi_{\Set{1,2}}^\top) 1 = 0$, $\diag(\chi_{\Set{1,2}} \chi_{\Set{1,2}}^\top) = 1$, and $\chi_{\Set{1,2}} \chi_{\Set{1,2}}^\top \succeq 0$.
Next, we claim that $\langle X, P_{=1}\rangle \leq 1$.
This follows because
\begin{align*}
\langle X, P_{=1}\rangle
&= \frac{1}{2^{2d}} \sum_{i \in [d]} \chi_i^\top X \chi_i\\
&= \frac{1}{2^{2d}} \sum_{i \in [d]} \sum_{x, y \in \Set{\pm 1}^d} \langle F(x) \chi_i(x), F(y) \chi_i(y)\rangle\\
&= \frac{1}{2^{2d}} \sum_{i \in [d]} \Norm{\sum_{x \in \Set{\pm 1}^d} F(x) \chi_i(x)}^2\\
&= \frac{1}{2^{2d}} \sum_{i \in [d]} \sum_{a \in [d+1]} \Paren{\sum_{x \in \Set{\pm 1}^d} F_a(x) \chi_i(x)}^2\\
&= \sum_{i \in [d]} \sum_{a \in [d+1]} \langle F_a, \chi_i\rangle^2\,,
\end{align*}
where $F_a = (F_a(x))_{x \in \Set{\pm 1}^d}$.
By Parseval, $\sum_{S \subseteq [d]} \langle F_a, \chi_S\rangle^2 = \E_x F_a(x)^2$, and also $\sum_{a \in [d+1]} \E_x F_a(x)^2 = \E_x \Norm{F(x)}^2 = 1$, so it also follows that $\langle X, P_{=1}\rangle \leq 1$.
Then, finally, using also that $\langle \chi_{\Set{1,2}} \chi_{\Set{1,2}}^\top, P_{=1}\rangle = 0$, we get that $\langle X^*, P_{=1} \rangle \leq 1-\epsilon \leq 1 - \Omega\Paren{\frac{1}{pd}}$.

\end{proof}